\def\draft{0}  
\newcommand{\ameya}[1]{{\color{purple}[Ameya: #1]}}
\newcommand{\mnote}[1]{{\color{red} Madhu: #1}}
\newcommand{\cnote}[1]{{\color{blue} Chi-Ning: #1}}
\newcommand{\vnote}[1]{{\color{cyan}[Santhoshini: #1]}}
\newcommand{\snote}[1]{{\color{orange} Sasha: #1}}
\newcommand{\pnote}[1]{{\color{brown} $q$-ary version: #1}}
\newcommand{\ameya}[1]{}
\newcommand{\mnote}[1]{}
\newcommand{\cnote}[1]{}
\newcommand{\vnote}[1]{}
\newcommand{\snote}[1]{}
\newcommand{\pnote}[1]{}
\newtcolorbox{reduction}[2][]
{
  colframe = gray!50,
  colback  = gray!10,
  coltitle = gray!10!black,
  before skip = 10pt,
  after skip = 10pt,
  title    = \textbf{#2},
  #1,
  breakable,
}
\newtcolorbox{game}[2][]
{
  colframe = blue!50,
  colback  = blue!10,
  coltitle = blue!10!black,
  before skip = 10pt,
  after skip = 10pt,
  title    = \textbf{#2},
  #1,
  breakable,
}
\newtcolorbox{examplebox}[2][]
{
  breakable,
  colframe = gray!50,
  colback  = gray!10,
  coltitle = gray!10!black,
  before skip = 10pt,
  after skip = 10pt,
  title    = \textbf{#2},
  #1,
}
\algnewcommand\algorithmicinput{\textbf{Input:}}
\algnewcommand\Input{\item[\algorithmicinput]}
\algnewcommand\algorithmicoutput{\textbf{Output:}}
\algnewcommand\Output{\item[\algorithmicoutput]}
\numberwithin{equation}{section}
\declaretheoremstyle[bodyfont=\it,qed=\qedsymbol]{noproofstyle}
\declaretheorem[name=Observation,numbered=no]{observation*}
\declaretheorem[numberlike=equation]{theorem}
\declaretheorem[name=Theorem,numbered=no]{theorem*}
\declaretheorem[numberlike=equation]{lemma}
\declaretheorem[name=Lemma,numbered=no]{lemma*}
\declaretheorem[numberlike=equation]{corollary}
\declaretheorem[name=Corollary,numbered=no]{corollary*}
\declaretheorem[numberlike=equation]{proposition}
\declaretheorem[name=Proposition,numbered=no]{proposition*}
\declaretheorem[name=Proposition,numbered=no]{parameter*}
\declaretheorem[numberlike=equation]{claim}
\declaretheorem[name=Claim,numbered=no]{claim*}
\declaretheorem[name=Conjecture,numbered=no]{conjecture*}
\declaretheorem[name=Question,numbered=no]{question*}
\declaretheoremstyle[bodyfont=\it]{defstyle} 
\declaretheorem[numberlike=equation,style=defstyle]{definition}
\declaretheorem[unnumbered,name=Definition,style=defstyle]{definition*}
\declaretheorem[unnumbered,name=Example,style=defstyle]{example*}
\declaretheorem[unnumbered,name=Notation=defstyle]{notation*}
\declaretheorem[unnumbered,name=Construction,style=defstyle]{construction*}
\declaretheoremstyle[]{rmkstyle} 
\newtheorem*{remark}{Remark}
\newcommand{\wt}{\mathrm{wt}}
\newcommand{\Exp}{\mathop{\mathbb{E}}}
\newcommand{\val}{\textsf{val}}
\newcommand{\Unif}{\textsf{Unif}}
\newcommand{\supp}{\textsf{supp}}
\newcommand{\F}{\mathbb{F}}
\newcommand{\Z}{\mathbb{Z}}
\newcommand{\N}{\mathbb{N}}
\newcommand{\Comp}{\mathbb{C}}
\newcommand{\veca}{\mathbf{a}}
\newcommand{\vecb}{\mathbf{b}}
\newcommand{\vecc}{\mathbf{c}}
\newcommand{\vece}{\mathbf{e}}
\newcommand{\vecj}{\mathbf{j}}
\newcommand{\vecu}{\mathbf{u}}
\newcommand{\vecv}{\mathbf{v}}
\newcommand{\vecw}{\mathbf{w}}
\newcommand{\vecx}{\mathbf{x}}
\newcommand{\vecy}{\mathbf{y}}
\newcommand{\vecz}{\mathbf{z}}
\newcommand{\IRMD}{\textsf{IRMD}}
\newcommand{\IFRMD}{\textsf{IFRMD}}
\newcommand{\one}{\mathbf{1}}
\newcommand{\cD}{\mathcal{D}}
\newcommand{\cE}{\mathcal{E}}
\newcommand{\cF}{\mathcal{F}}
\newcommand{\maxF}{\textsf{Max-CSP}(\cF)}
\newcommand{\ALG}{\mathbf{ALG}}
\newcommand{\vecsigma}{\boldsymbol{\sigma}}
\newcommand{\vecnu}{\boldsymbol{\nu}}
\newcommand{\vecdelta}{\boldsymbol{\delta}}
\newcommand{\yes}{\textbf{YES}}
\newcommand{\no}{\textbf{NO}}
\renewcommand{\epsilon}{\varepsilon}
\newcommand{\eps}{\epsilon}
\title{Linear Space Streaming Lower Bounds for Approximating CSPs} 
\author{
Chi-Ning Chou\thanks{Center for Computational Neuroscience, Flatiron Institute, New York, New York, USA. Research supported in part by the Simons Foundation, and by NSF grants DMS-2134157 and CCF-1565264, DARPA grant W911NF2010021, DOE grant DE-SC0022199. Email: \texttt{cchou@flatironinstitute.org}.}
\and Alexander Golovnev\thanks{Department of Computer Science, Georgetown University. Supported in part by the NSF CAREER award (grant CCF2338730). Email: \texttt{alexgolovnev@gmail.com}.}
\and Madhu Sudan\thanks{School of Engineering and Applied Sciences, Harvard University, Cambridge, Massachusetts, USA. Supported in part by a~Simons Investigator Award and NSF Awards CCF 1715187 and CCF 2152413. Email: \texttt{madhu@cs.harvard.edu}.}
\and Ameya Velingker\thanks{This work was done while the author was at Google Research. Email: \texttt{ameyav@gmail.com}.}
\and Santhoshini Velusamy\thanks{Toyota Technological Institute at Chicago, Illinois, USA. Supported in part by a Google Ph.D. Fellowship, a Simons 
Investigator Award to Madhu Sudan, and NSF Awards CCF 1715187, CCF 2152413, and CCF 2348475. Email: \texttt{santhoshinivelusamy@gmail.com}.}
}
\begin{document}
\date{}
\maketitle

\begin{abstract}
    We consider the approximability of constraint satisfaction problems in the streaming setting. For every constraint satisfaction problem (CSP) on $n$ variables taking values in $\{0,\ldots,q-1\}$, we prove that improving over the trivial approximability by a factor of $q$ requires $\Omega(n)$ space even on instances with $O(n)$ constraints. 
    We also identify a broad subclass of problems for which any improvement over the trivial approximability requires $\Omega(n)$ space.
    The key technical core is an optimal, $q^{-(k-1)}$-inapproximability for the \textsf{Max $k$-LIN}-$\bmod\; q$ problem, which is the Max CSP problem where every constraint is given by a system of $k-1$ linear equations $\bmod\; q$ over $k$ variables. 
    
    Our work builds on and extends the breakthrough work of Kapralov and Krachun (Proc. STOC 2019) who showed a~linear lower bound on any non-trivial approximation of the MaxCut problem in graphs. MaxCut corresponds roughly to the case of  \textsf{Max $k$-LIN}-$\bmod\; q$ with ${k=q=2}$. For general CSPs in the streaming setting, prior results  only yielded $\Omega(\sqrt{n})$ space bounds. In particular no linear space lower bound was known for an approximation factor less than $1/2$ for {\em any} CSP. Extending the work of Kapralov and Krachun to \textsf{Max $k$-LIN}-$\bmod\; q$ to $k>2$ and  $q>2$ (while getting optimal hardness results) is the main technical contribution of this work. Each one of these extensions provides non-trivial technical challenges that we overcome in this work.
\end{abstract}

\newpage
\tableofcontents
\newpage

\section{Introduction}

In this work we consider the \emph{approximability of constraint satisfaction problems (CSPs)} by \emph{streaming algorithms} with sublinear space. We give tight inapproximability results for a broad class of CSPs, while giving somewhat weaker bounds on the approximability of every CSP. We introduce these terms below.

\subsection{Background} 
We consider the general class of constraint satisfaction problems with finite constraints over finite-valued variables. A \emph{problem} in this class, denoted $\maxF$, is given by positive integers $q$ and $k$ and a family of functions $\cF \subseteq \{f:\Z_q^k \to \{0,1\}\}$. An \emph{instance} of the problem consists of $m$ constraints placed on $n$ variables that take values in the set $\Z_q = \{0,\ldots,q-1\}$, where each constraint is given by a function $f \in \cF$ and $k$ distinct indices of variables $j_1,\ldots,j_k \in [n]$. Given an instance $\Psi$ of $\maxF$, the goal is to compute the \emph{value} $\val_\Psi$ defined to be the maximum, over all assignments to $n$ variables, of the fraction of constraints satisfied by the assignment. For $\alpha\in[0,1]$, the goal of the $\alpha$-approximate version of the problem is to compute an estimate $\eta$ such that $\alpha \cdot \val_\Psi \leq \eta \leq \val_\Psi$. 

In this work we consider the space complexity of approximating $\maxF$ by a single pass (potentially randomized) streaming algorithm that is presented the instance  $\Psi$ one constraint at a time. We consider ``non-trivial'' approximation algorithms for $\maxF$, where we first dismiss two notions of ``triviality''. First note that since we only consider space restrictions but not time restrictions, one can sample $O(n)$ constraints of $\Psi$ and solve the $\maxF$ problem on the sampled constraints optimally to get a $(1-\epsilon)$-approximation algorithm for every constant $\epsilon>0$ in $\widetilde{O}(n)$ space. Thus for this paper we view non-trivial algorithms to be those that run in $o(n)$ space.\footnote{We note that there is a gap between the $o(n)$ space we allow and the $O(n \log n)$ space that is trivial, but we are not able to get sharp enough lower bounds to address this gap.} The other form of ``triviality'' we dismiss is in the approximation factor.
Given a family $\cF$, let $\rho_{\min}(\cF)$ denote the infimum, over all instances $\Psi$ of $\maxF$, of the value $\val_\Psi$. Note that the algorithm that outputs the constant $\rho_{\min}(\cF)$ is a ($O(1)$-space!) $\rho_{\min}(\cF)$-approximation algorithm for $\maxF$. Thus we consider $\rho_{\min}(\cF)$ to be the ``trivial'' approximation factor for a family~$\cF$. With these two notions of ``triviality'' in mind, we define $\maxF$ to be {\em $\alpha$-approximable} (in the streaming setting) if $\alpha$ is the largest constant such that there exists an $\alpha$-approximation algorithm for $\maxF$ using $o(n)$ space. We simply say that $\maxF$ is {\em approximable} (in the streaming setting) if it is $\alpha$-approximable for some $\alpha > \rho_{\min}(\cF)$. We define a problem to be {\em approximation-resistant} (in the streaming setting) otherwise. 

\subsection{Results}\label{ssec:intro-results}

Our first main result in this paper gives a sufficient condition for a problem to be approximation resistant in the streaming setting.
We say that $f:\Z_q^k\to \{0,1\}$ is a {\em wide} constraint if there exists $\veca \in \Z_q^k$ such that for every $i \in \Z_q$ we have $f(\veca + i^k)=1$ where $i^k = (i,i,\ldots,i)$ and addition is performed in the group $\Z_q^k$. We say that a family $\cF$ is {\em wide} if every function $f \in \cF$ is wide. 

\begin{theorem}\label{thm:intro-approx-res}
	For every $q,k$ and every wide family $\cF$, $\maxF$ is approximation-resistant.
\end{theorem}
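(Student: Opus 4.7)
My plan is to reduce the approximation-resistance of $\maxF$ to that of a specific restricted version of $\maxkeqq$, and argue that the latter is an immediate consequence of the key technical $q^{-(k-1)}$-inapproximability result for Max $k$-LIN$(q)$ proved in this paper.

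First, I reduce to the single-function case. Let $f \in \cF$ minimize $|f^{-1}(1)|/q^k$, and set $\delta := |f^{-1}(1)|/q^k$. For any $\maxF$ instance $\Psi$, linearity of expectation under a uniform assignment gives $\val(\Psi) \ge \delta$, so $\rho_{\min}(\cF) \ge \delta$; conversely, a Chernoff bound plus a union bound over the $q^n$ assignments applied to a random $\{f\}$-only instance with $m = \Theta(n)$ constraints shows $\val \le \delta + o(1)$ w.h.p., so $\rho_{\min}(\cF) = \rho_{\min}(\{f\}) = \delta$. Since every $\{f\}$-instance is also an $\cF$-instance, any $(\delta + \eps)$-approximation algorithm for $\maxF$ restricted to $\{f\}$-instances is a $(\delta + \eps)$-approximation algorithm for $\maxf$, so it suffices to prove approximation-resistance of $\maxf$. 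Let $\vecb \in \Z_q^k$ denote the wide line shift of $f$; if $\vecb = c\one$ for some $c$ then every constant assignment satisfies every constraint, forcing $\rho_{\min} = 1$ and making the theorem vacuous, so I may assume $\vecb$ is non-constant.

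Next, I design two distributions over $\maxf$ instances with $m = \Theta(n)$ constraints whose values differ by a $1 - \delta = \Omega(1)$ gap. $\cD_\yes$: sample $\sigma^\ast \in \Z_q^n$ uniformly; for each constraint, sample $c \in \Z_q$ uniformly and $k$ distinct indices $j_1, \ldots, j_k$ conditioned on $\sigma^\ast(x_{j_\ell}) = b_\ell + c$ for all $\ell$, then emit $f(x_{j_1}, \ldots, x_{j_k})$; by construction $\val \ge 1$. $\cD_\no$: for each constraint, sample $k$ distinct indices uniformly and emit $f(x_{j_1}, \ldots, x_{j_k})$; the concentration argument from the previous paragraph yields $\val \le \delta + o(1)$ w.h.p. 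A $(\delta + \eps)$-approximation algorithm for $\maxf$ would distinguish the two distributions by comparing its output against $\delta + \eps/2$, so it suffices to show $\cD_\yes$ and $\cD_\no$ are indistinguishable in $o(n)$ space.

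Indistinguishability I would prove via a per-constraint, memoryless streaming reduction from the YES/NO distributions for $\maxkeqq$ restricted to the fixed shift $\vecb$: each incoming constraint $(j_1, \ldots, j_k; \vecb)$ is emitted as $f(x_{j_1}, \ldots, x_{j_k})$, and because $\sigma^\ast$ satisfies the shift-$\vecb$ equation iff $(\sigma^\ast(x_{j_\ell}))_\ell$ lies on $f$'s wide line, the YES and NO distributions map exactly to $\cD_\yes$ and $\cD_\no$ respectively. The main obstacle I foresee is that the paper's key technical result is stated for the full $\maxkeqq$ (or Max $k$-LIN$(q)$) with arbitrary coefficients, whereas I require the hardness of its single-shift-class subfamily. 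I expect this to follow by direct inspection of the Fourier-analytic Kapralov--Krachun-style proof, which at its core is a statement about a single linear subspace (the wide line) and should be insensitive to the specific shift class; alternatively, one can invoke an averaging argument exploiting the symmetry of $\Z_q^k/\langle\one\rangle$ to convert general-shift hardness into fixed-shift hardness.
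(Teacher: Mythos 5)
Your high-level strategy --- streaming reduction from a Kapralov--Krachun-style multi-player communication game built around the wide line of $f$ --- mirrors what the paper does via \cref{thm:main-technical}. But two concrete steps in your proposal fail.

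\textbf{The reduction to a single function is wrong.} You claim $\rho_{\min}(\cF) = \rho_{\min}(\{f\}) = \delta$ where $f$ minimizes $|f^{-1}(1)|/q^k$ and $\delta := |f^{-1}(1)|/q^k$. The identity $\rho_{\min}(\{f\}) = \delta$ requires the \emph{uniform} distribution on $\Z_q$ to maximize $\Exp_{\veca\sim\cD^k}[f(\veca)]$, which fails for most wide $f$. The clean counterexample is $q=k=2$, $f = \mathrm{OR}$ (wide on the line $\{(0,1),(1,0)\}$, so $\vecb=(0,1)$ is non-constant and your escape clause does not trigger): $\delta = 3/4$, yet $\rho_{\min}(\{\mathrm{OR}\}) = 1$ since the all-ones assignment satisfies every instance. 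Your Chernoff-plus-union-bound argument that ``$\val \le \delta + o(1)$ w.h.p.'' is computing the expected fraction satisfied \emph{under a near-uniform assignment}, but the union bound must hold over all $q^n$ assignments, including skewed ones whose empirical distribution $\cD$ on $\Z_q$ gives $\Exp_{\veca\sim\cD^k}[f(\veca)]$ far above $\delta$. The correct quantity is the min-max $\rho(\cF)$ from \cref{prop:rho-equiv}, and attaining it generically requires a \emph{mixture} $\cD_\cF$ over functions in $\cF$, not a single $f$. More to the point, even if you pick the $f\in\cF$ with smallest $\rho_{\min}(\{f\})$, one has $\rho_{\min}(\cF) \le \rho_{\min}(\{f\})$ with strict inequality possible, so hardness of $\maxf$ at its own threshold does not yield hardness of $\maxF$ at $\rho_{\min}(\cF)$.

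\textbf{How the paper avoids this.} The paper proves \cref{thm:intro-approx-res} as an immediate consequence of \cref{thm:main-technical}: for wide $\cF$, $\omega(\cF)=1$, so that theorem gives $(1+\eps)\rho(\cF)$-inapproximability. In the proof of \cref{thm:main-technical}, the reduction from $\IRMD$ samples a fresh $f(i)\sim\cD_\cF$ (the minimizer in \cref{prop:rho-equiv}) for \emph{each} stream element and emits the constraint $(f(i),\vecj(i))$ only if the label $\vecz(i)$ lands in the wide-line coset $S_{f(i)}$. In the \no\ case, for any assignment $\vecnu$ with empirical distribution $\cD$, the definition of $\rho(\cF)$ bounds $\Exp_{f\sim\cD_\cF,\vecb\sim\cD^k}[f(\vecb)]\le\rho(\cF)$ uniformly over $\cD$, which is precisely the bound a union bound over all $q^n$ assignments needs and which a single fixed $f$ cannot deliver. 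This also sidesteps your worry about a fixed-shift communication problem: the paper's $\IRMD$ uses a uniformly random shift $a_\vecj$ per edge, and the filter $\vecz(i)\in S_{f(i)}$ does the shift-matching for you, so no specialization or averaging argument over $\Z_q^k/\langle\one\rangle$ is needed.
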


Many natural CSPs are wide, including Max $q$-colorability and Boolean problems such as Max $k$-SAT. Others, such as \textsf{Max $k$-LIN}-$\bmod\; q$ and the ``Unique Games'' problem,  contain wide subfamilies with the same ``trivial'' approximation factor, and thus \cref{thm:intro-approx-res} implies these are also approximation resistant. We elaborate on some of these examples in \cref{ssec:examples}. However, clearly wideness does not capture all CSPs. For general CSPs, while we do not pin down the approximability exactly, we do manage to pin it down up to a multiplicative factor of $q$. 

\begin{theorem}\label{thm:intro-q-factor}
	For every $q,k$ and every family $\cF$, if $\cF$ is $\alpha$-approximable then $\alpha \in [\rho_{\min}(\cF),q \cdot \rho_{\min}(\cF)]$.
\end{theorem}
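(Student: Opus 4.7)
The plan is to derive the theorem from the paper's central technical result --- the $q^{-(k-1)}$-inapproximability of Max $k$-LIN-$\bmod\; q$ --- via a local, streaming-preserving reduction. The lower bound $\alpha \geq \rho_{\min}(\cF)$ is trivial: the $O(1)$-space algorithm that always outputs $\rho_{\min}(\cF)$ is a valid $\rho_{\min}(\cF)$-approximation, since $\val_\Psi \geq \rho_{\min}(\cF)$ for every instance $\Psi$ by definition of $\rho_{\min}$. So I focus on $\alpha \leq q\rho_{\min}(\cF)$, i.e.\ on ruling out any $\alpha$-approximation with $\alpha > q\rho_{\min}(\cF)$ in $o(n)$ space.

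I would fix $f^* \in \cF$ achieving $\rho(f^*) = |S(f^*)|/q^k = \rho_{\min}(\cF)$, and write $s = |S(f^*)|$, $\rho = \rho_{\min}(\cF)$, and $N(w) := |S(f^*) \cap (S(f^*) - w)|$ (so $N(0) = s$ and $\sum_w N(w) = s^2$). Given an input Max $k$-LIN instance $\Phi$, the reduction produces an $\maxF$-instance $\Psi$ on the same variables, using only $f^*$-constraints, by processing each $\Phi$-constraint independently: if the constraint uses variables $x_{j_1},\ldots,x_{j_k}$ with satisfying set $L \subseteq \Z_q^k$ (a one-dimensional affine subspace of size $q$), sample $\ell \in L$ and $\vecs \in S(f^*)$ uniformly, and emit the $f^*$-constraint on $x_{j_1},\ldots,x_{j_k}$ with satisfying set $S(f^*)+(\ell - \vecs)$. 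Since only $O(\log q)$ coins are used per constraint, any $o(n)$-space streamer for $\maxF$ pulls back to one for $\Phi$. A short calculation gives the per-constraint satisfaction probability $g_i(\vecy) = (qs)^{-1}\sum_{w \in \vecy_i - L_i} N(w)$, where $\vecy_i$ denotes the projection of $\vecy$ onto constraint $i$'s variables.

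The Yes case is straightforward: a witness $\vecy^*$ satisfies a $1 - o(1)$ fraction of $\Phi$, so for each such $i$ the set $\vecy^*_i - L_i$ equals the linear direction $W_i \ni 0$ of $L_i$, giving $g_i(\vecy^*) \geq N(0)/(qs) = 1/q$ and hence $\val_\Psi \geq 1/q - o(1)$ by a Chernoff bound over the reduction's coins. For the No case I swap the order of summation: $\sum_i g_i(\vecy) = (qs)^{-1}\sum_w N(w) \cdot |\{i : \vecy_i \in L_i + w\}|$; the inner count is the $\vecy$-satisfaction count of the ``$w$-shifted'' Max $k$-LIN instance $\Phi^{(w)}$ obtained by replacing each $L_i$ with $L_i + w$, and $\Phi^{(w)}$ is again drawn from the No-distribution (the per-constraint RHS shift is absorbed into the already-random RHS). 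With high probability this count is $\leq m(q^{-(k-1)} + o(1))$ for every $w$, so combining with $\sum_w N(w) = s^2$ yields $\val_\Psi \leq \rho + o(1)$ uniformly in $\vecy$ after a further Chernoff and union bound over the $q^n$ assignments. The resulting $1/q$-vs.-$\rho$ gap rules out any $\alpha$-approximation with $\alpha > q\rho = q\rho_{\min}(\cF)$ in $o(n)$ space.

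The main obstacle I anticipate is the uniformity-in-$w$ No-case bound: for $w \notin \Z_q\vecone$ the shift is inconsistent across variable positions and $|\{i : \vecy_i \in L_i + w\}|$ does not correspond to any global satisfaction count, so one must argue that $\Phi^{(w)}$ inherits the $q^{-(k-1)}$ value upper bound from the No-distribution. I expect this to be straightforward if the main technical hardness lemma is stated with respect to an RHS-symmetric distribution --- which is natural for Max $k$-LIN --- after which a union bound over the $q^k$ values of $w$ is easily absorbed into the $\Omega(n)$-space regime. A minor secondary point is that the reduction uses only $f^*$, but this is harmless because an $\alpha$-approximation for $\maxF$ must already handle instances that use only $f^*$-constraints.
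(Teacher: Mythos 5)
Your lower-bound direction $\alpha\ge\rho_{\min}(\cF)$ is correct and matches the paper. The upper-bound argument, however, has a genuine gap in two linked places. First, there is in general \emph{no} $f^*\in\cF$ with $|{f^*}^{-1}(1)|/q^k=\rho_{\min}(\cF)$: the paper's CSP model has no literals, so $\rho_{\min}(\cF)$ is the min-max quantity of \cref{prop:rho-equiv} and can strictly exceed $\min_{f\in\cF}|f^{-1}(1)|/q^k$. Concretely, take $q=k=2$ and $\cF=\{f\}$ with $f$ the indicator of the single point $(0,0)\in\Z_2^2$. Every instance is satisfied by the all-zeros assignment, so $\rho_{\min}(\cF)=1$, yet $|f^{-1}(1)|/q^k=1/4$; if your construction were valid it would rule out approximation factors above $q\cdot\tfrac14=\tfrac12$ for a problem that is trivially $1$-approximable. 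Second, and this is where the analysis actually breaks, the constraints you emit have satisfying set $S(f^*)+(\ell-\vecs)$, which is the satisfying set of a coordinatewise \emph{shift} of $f^*$, not of $f^*$ itself and not, in general, of any $f\in\cF$. So the output $\Psi$ is not a $\maxF$ instance. What your Yes/No analysis correctly establishes is hardness for the shift-closure $\cF''$ of $\{f^*\}$, for which $\rho_{\min}(\cF'')$ does equal $|S(f^*)|/q^k$; but $\cF''\not\subseteq\cF$ in general, so monotonicity does not carry the conclusion back to $\cF$.

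The paper sidesteps both problems via \cref{thm:main-technical}. There the reduction samples $f\sim\cD_\cF$, where $\cD_\cF$ is a distribution over $\cF$ achieving the minimum in \cref{prop:rho-equiv}, and emits the \emph{unshifted} constraint $(f,\vecj(i))$ but only \emph{selectively}: namely when the IRMD label $\vecz(i)$ lands in the $f$-dependent stencil $S_f=\{\vecb_f+a^k\,|\,a\in\Z_q\}$. Every emitted constraint therefore lies in $\cF$; the Yes-case completeness is the width $\omega(\cF)$, which is $\ge 1/q$ for any nonzero function (rather than $1/q$ outright); and the No-case soundness is exactly $\rho(\cF)$ by the min-max formula. \cref{thm:intro-q-factor} is then just \cref{thm:main-technical} plus $\omega(\cF)\ge 1/q$. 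Your per-constraint-shift trick is the natural move in a CSP model with literals, where $\rho_{\min}$ would indeed equal $\min_f|f^{-1}(1)|/q^k$ and the theorem would have a different (and sharper) form, but it does not prove the statement in the paper's model.
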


Both \cref{thm:intro-approx-res,thm:intro-q-factor} follow from our more detailed \cref{thm:main-technical}. 
In \cref{ssec:examples} we give a few examples illustrating  how our theorems give tight lower bounds for some commonly studied CSPs including Max $q$-coloring, Unique Games, and Max Linear Systems.

Neither of the theorems above gives a complete classification of the approximability of CSPs in sublinear space. Contrasting with \cite{CGSV21-conference} one may have hoped that all lower bounds in \cite{CGSV21-conference} might simply extend, from ruling out $o(\sqrt{n})$-space sketching algorithms, to ruling out $o(n)$-space sketching algorithms. However subsequent work has shown that this hope is not realizable. Specifically Saxena, Singer, Sudan and Velusamy~\cite{SaxenaSSV23} have shown that the Max Dicut problem allows an $\widetilde{O}(\sqrt{n})$-space sketching algorithm that gets a $.485$ which beats the $4/9$-approximation upper bound for $o(\sqrt{n})$-space algorithms, from the work of Chou, Golovnev and Velusamy~\cite{CGV20}. Indeed there seems to be broader class of problems that might allow such improvements in $o(n)$-space. This is hinted at in the work of Singer~\cite{Singer23} who shows that for every $k\geq 2$, there is a $\widetilde{O}(n^{1-1/k})$-space algorithm for {\em bounded-degree} instances of the Max $k$-AND problem that beats the approximability upper bound given in Boyland, Hwang, Prasad, Singer and Velusamy~\cite{BoylandHPSV22} for $o(\sqrt{n})$-space sketching algorithms. (A CSP instance has bounded degree if each variable appears in $O(1)$ constraints. Note that all lower bounds in this paper and prior works are proven for bounded degree instances.)
And for the Max Dicut problem on bounded degree instances, Saxena, Singer, Sudan and Velusamy~\cite{SaxenaSSV25} gave $1/2 - \epsilon$ approximation algorithms, for every $\epsilon > 0$, using $o(n)$ space.
Their result was recently generalized to arbitrary instances by Azarmehr, Behnezhad, Ferante, and Sanneian~\cite{ABFS25}.
Thus the class of problems for which linear space upper bounds on the approximability match the performance of polylogarithmic space sketching algorithms is a strict subclass of all MaxCSPs. Finding where exactly this boundary lies remains a wide open question.

\subsection{Prior work}

There have been a number of works in the broad area of approximations for streaming constraint satisfaction problems and lower bound techniques for those~\cite{GKKRW,verbin2011streaming,KKS,assadi2016tight,KKSV17,GVV17,GT19,KK19,CGV20,assadi2020multi,assadi2021graph,CGSV21-conference,SSV21}.
Among these our work is the \emph{first work to aim to get tight inapproximability results for a broad class of CSPs for almost linear space single-pass streaming algorithms}.
Previous works either did not get tight approximation factors or were aimed at specific problems or only got $\Omega(\sqrt{n})$-space lower bounds, though some do target multi-pass streaming algorithms \cite{assadi2020multi,assadi2021graph} --- which we do not do here. We describe the state of the art prior to our work below. (More detailed descriptions of prior works can be found in \cite{CGSV21-conference}.)

On the front of general lower bounds, 
Chou, Golovnev, Sudan and Velusamy~\cite{CGSV21-conference} explored the same set of CSP problems as we do, i.e, $\maxF$ for arbitrary $q,k$ and $\cF$. Their focus is on looser space lower bounds: specifically, they focus on problems that require $n^{\Omega(1)}$ space vs. those where $n^{o(1)}$ space suffices. They give a complete dichotomy for sketching algorithms, a special class of streaming algorithms. They also give sufficient conditions for approximation resistance with respect to sub-polynomial space general streaming algorithms. Theorem 2.9 in their paper shows that families $\cF$ where the satisfying assignments of every function in the class support a one-wise independent distribution are approximation resistant. This theorem is incomparable with our \cref{thm:intro-approx-res} in that they give approximation resistance for a broader collection of problems (all wide families support one-wise independence) but the space lower bound is weaker --- they give an $\Omega(\sqrt{n})$ lower bound and we get $\Omega(n)$ lower bounds for wide families. 
\cite{CGSV21-conference} does not give an analogue of our \cref{thm:intro-q-factor}, though such a result (with the weaker $\Omega(\sqrt{n})$ space lower bound) can be derived from their theorems equally easily. Indeed, our \cref{sec:streaming} is based on their work. 

Turning to linear space lower bounds, the breakthrough work here is due to Kapralov and Krachun~\cite{KK19}, who show that approximating Max Cut (which translates in our setting to $\maxF$ for $\cF = \{\oplus_2\}$ where $\oplus_2:\{0,1\}^2 \to \{0,1\}$ is the binary XOR function) to within a factor $\frac12+\epsilon$ requires $\Omega(n)$ space for every $\epsilon >0$. Indeed, our work builds on their work and we compare our techniques later.
Prior to the work of Kapralov and Krachun, there was a weaker result due to Kapralov, Khanna, Sudan and Velingker~\cite{KKSV17} showing that there exists $\epsilon > 0$ such that  ($1-\epsilon$)-approximation for Max Cut requires linear space. Finally, Chou, Golovnev and Velusamy~\cite{CGV20} get a tight
inapproximability for Max Exact 2-SAT (corresponding to $\maxF$ for $\cF = \{\vee_2\}$, where $\vee_2:\{0,1\}^2\to\{0,1\}$ is the binary OR function) for linear space algorithms, by a reduction from Max Cut. 

Thus, prior to our work it was conceivable (though of course extremely unlikely) that every $\maxF$ allowed a $1/2$-approximating streaming algorithm using $o(n)$ space. Our work is the first to prove inapproximability $\alpha \leq 1/2$ for any $\maxF$. Indeed, we get inapproximabilities going to $0$ either as $q \to \infty$ (e.g., for the Unique Games problem) or as $k \to \infty$ (e.g., for the Max $k$-equality problem with $q = 2$ as defined later in \cref{ssec:intro-tech}).

The main contribution of our work is to extend the techniques of \cite{KK19} to problems beyond Max Cut.  Indeed the bulk of our proof takes the tour-de-force proof in \cite{KK19} and finds the correct replacements in our setting. In the process, we arguably even present cleaner abstractions of their work. We elaborate on this further in the next section but first comment on why we feel the extensions are not straightforward given \cite{KK19}. First we note that the exact class of problems we are able to deal with in \cref{thm:intro-approx-res} is not the fullest extension one may hope for. At the very least we have expected to cover the same set of problems as \cite[Theorem 2.9]{CGSV21-conference}, i.e., families supporting one-wise independent distributions, but this remains open. Indeed to get our extensions we have to formulate a new communication problem which generalizes the one in \cite{KK19} and is different from the many variations considered in \cite{CGSV21} and \cite{CGSV21-conference}. In particular we are forced to work with a less expressive set of communication problems that already forces a ``linear-algebraic'' restriction on the core problems we work with. (We do believe a slight extension of our results to ``families containing  one-wise independent cosets of $\Z_q^k$'' should be more feasible.) Having identified the right set of problems, carrying out the proof of Kapralov and Krachun is still non-trivial. In particular one has to be careful to ensure that the improvement in the exponent of the space bound (from $n^{1/2}$ to $n$) is by a full factor of $2$ and not a factor of $k/(k-1)$, which is what one natural extension would lead to! We comment on these improvements in greater detail in the following.

\subsection{Techniques and new contributions}\label{ssec:intro-tech}

There are two lines of previous work that seem relevant to this work and we discuss our technical contributions relative to those here. We start with quick comparison with the previous work~\cite{CGSV21-conference} that gives $\Omega(\sqrt{n})$ lower bounds for a broader subset of problems than those addressed in this paper. We then move on to the work \cite{KK19} which is much closer to our work and needs more detailed comparison.

\paragraph{Comparison with~\cite{CGSV21-conference}.}  While there is some obvious overlap in the set of problems considered in \cite{CGSV21-conference}  and this paper (and also in the set of authors) we claim that, beyond this aspect, the overlap in techniques is minimal. Both papers do use lower bounds on communication problems to establish lower bounds on streaming CSPs (which is standard in the context of streaming lower bounds). But the exact set of communication problems is different, and the tools used to establish the lower bounds are also different. In particular, \cite{CGSV21-conference} create roughly a new communication problem for every $\gamma,\beta$ and $\cF$ and the main technical contributions there are lower bounds for these problems achieved mainly through a rich set of reductions among these communication problems. In our work we essentially work with one communication problem (once we fix $k$ and $q$) and the core of our work is proving a lower bound for this problem. (This lower bound is based on extending \cite{KK19} and we will elaborate on this later.) We use this one problem to get hardness for many different $\gamma,\beta$ and $\cF$ --- this part is arguably related to the work of \cite{CGSV21-conference} but we feel this is the obvious part of their work as well as our work. Finally, turning to the communication problems, the natural communication problems used to analyze streaming complexity involves one way communication among a large constant number of players. The exact problem of this type that we focus on is different from the ones considered in \cite{CGSV21-conference} due to a concept we call ``folding''. Folding makes our problems too restrictive to work for \cite{CGSV21-conference} (i.e., would prevent them for addressing every $(\gamma,\beta)-\maxF$), whereas we do not know how to get our lower bounds without folding. We also note that \cite{CGSV21-conference} derive their multiplayer lower bounds from lower bounds for a corresponding 2-player game and all their reductions work only for these 2-player games, which are inherently limited to yielding $\Theta(\sqrt{n})$ space lower bounds.

We now turn to the more significant comparison, with \cite{KK19}. We start with a quick review of the main steps of \cite{KK19} and then describe our analysis and conclude with a summary of the differences/new contributions relative to \cite{KK19}. 

\paragraph{Summary of \cite{KK19}.}
Kapralov and Krachun~\cite{KK19} work with a distributional $T$-player one-way communication game for some constant $T$. The game also has a parameter $\alpha > 0$. In instances of length $n$ of this game,  $T$ players $P_1,\ldots,P_T$ get partial matchings $M_1,\ldots,M_T$ on the vertex set $[n]$ along with respective binary labels $\vecz_1,\ldots,\vecz_T$ on the edges of the matchings, i.e., player $t$ receives input $(M_t,\vecz_t)$. Each partial matching contains $\alpha n$ edges, while each corresponding label $\vecz_t$ is an element of $\{0,1\}^{\alpha n}$. In the communication game, the players sequentially broadcast messages as follows. Player $t \in [T-1]$ computes a small message $c_t$ which is a function of $M_t,\vecz_t$ and all ``previous messages'' $c_1,\ldots,c_{t-1}$,\footnote{For technical reasons the lower bounds are proved in the stronger model where player $t$ gets $M_1,\ldots,M_{t-1}$ as well, but this difference is not crucial for the current discussion.} after which the $T$-th player outputs a single $0/1$ bit that is said to be the output of the communication protocol. The complexity of the protocol is the maximum over $t\in [T]$ of the message length $c_t$, and the goal of the players is to distinguish input instances drawn according to a \yes\ distribution from those drawn according to a \no\ distribution, defined as follows.

In instances chosen from the \no\ distribution, the matchings $M_1,\ldots,M_T$ are chosen uniformly and independently from the set of matchings containing $\alpha n$ edges on the vertex set $[n]$. Furthermore, the vectors $\vecz_1,\ldots,\vecz_T$ are chosen uniformly and independently from $\{0,1\}^{\alpha n}$. In the \yes\ distribution, the matchings are chosen as in the \no\ distribution, but in order to generate $\vecz_1,\ldots,\vecz_T$, we choose a common hidden vector $\vecx^* \in \{0,1\}^n$ uniformly at random and set each $\vecz_t$ as $\vecz_t(e) = x^*_a \oplus_2 x^*_b$ for every edge $e = (a,b)$. Thus, the label $\vecz_t$ can be viewed as specifying which edges of the $i$-th matching cross the cut determined by $\vecx^*$. If $T \gg \frac1\alpha$ then it can be seen that the \yes\ and \no\ distributions are very far. The key theorem shows that for every $\alpha > 0$ and $T$, any protocol distinguishing \yes\ instances from \no\ instances with constant advantage requires $\Omega(n)$ space. With this lower bound a space lower bound on Max Cut is straightforward.

Turning to the communication lower bound, the focus of the analysis are the sets $B_1,\ldots,B_T \subseteq \{0,1\}^n$ corresponding to the purported hidden vector $\vecx^*$ that are consistent with the messages $c_1,\ldots,c_T$. Specifically for $t \in [T]$, $B_t$ is the set of all vectors $\vecx^*$ that are consistent with the first $t$ matchings $M_{1:t}$ and the first $t$ messages $c_{1:t}$. Kapralov and Krachun~\cite{KK19} argue that the sets $B_t$ are not shrinking too fast (in either the \yes\ case or the \no\ case) using a property that they term ``$C$-boundedness,'' defined by the Fourier spectrum of the indicator function of $B_t$ (the function from $\{0,1\}^n$ to $\{0,1\}$ that is $1$ on $B_t$). We do not give the exact definition of boundedness here but roughly describe it as follows: Given an arbitrary set $B$ of size $S$ and a Fourier weight~$w$, the total Fourier mass (strictly the $\ell_1$-mass)  of the $w$-th level Fourier coefficients of $B$ is well-known (by classical Fourier analysis) to be bounded by some amount $U(w) = U_{S,n}(w)$. For $C$-bounded sets, the corresponding Fourier mass is required to be at most $C^w U(w/2)$. The factor of two gained here in the argument of $U$ is the crux to improvement in the space lower bound from $\sqrt{n}$ to $n$. (If the right hand side had been of the form $C^w U(\alpha w)$ then the space lower bound would be $\Omega(n^{1/(2\alpha)})$.) This factor of two, in turn, is attributable to the fact that the $\vecz_t$ only contain information about pairs of bits of $\vecx^*$.
Their analysis shows that, for every $t$, $B_t$ is $C_t$-bounded for some constant $C_t$. (The proof is inductive on $t$ but the inductive hypothesis is complex and we won't reproduce it here.) They further show that if $B_T$ is $C$-bounded for some constant $C$, then the distinguishing probability is at most $o(1)$.

\paragraph{Our Analysis.}
The core of our paper essentially focuses on the setting posed by one problem for every given $q$ and $k$, which we call \textsf{Max $k$-LIN}-$\bmod\; q$. This is the MaxCSP problem where every constraint is a conjunction of $k-1$ linear equations on $k$ variables. Our main lower bound aims to prove a tight $q^{-(k-1)}+\epsilon$-inapproximability of this problem for every $q$, $k$ and $\epsilon > 0$. (See \cref{thm:main IFRMD} and the following remark.) 
We formally prove this in approximability in Example 4 in \cref{ssec:examples} where we consider an even broader set of problems $\textsf{Max-Lin}_{k,r,q}$ whose constraints are conjunctions of $r$ linear equations over $k$ variables and give a tight $q^{-r}+\epsilon$ inapproximability for this problem for every $1 \leq r \leq k-1$.

To study this problem we introduce a $T$-player communication problem that we call the ``Implicit Randomized Mask Detection Problem'' (IRMD) described as follows: There are $T$ players each of whom receives an $\alpha n$ $k$-hypermatching $M_t$ (i.e., a set of $\alpha n$ $k$-uniform hyperedges on $[n]$ that are pairwise disjoint). Additionally, the players receive a label in $\Z_q^{k}$ for every hyperedge they see. Thus the $i$-th player's input is $(M_t,\vecz_t)$ where $\vecz_t \in (\Z_q^{k})^{\alpha n}$. In the \no\ distribution the $\vecz_t$'s are drawn uniformly. In the \yes\ distribution a vector 
$\vecx^* \in [q]^n$ is drawn uniformly and the label associated with an edge $\vecj = (j_1,\ldots,j_k)$ is $(x^*_{j_1} + a_{\vecj},\ldots,x^*_{j_k} + a_{\vecj})$ where $a_{\vecj} \in [q]$ is chosen uniformly and independently for each edge in each matching. The goal of the players is to distinguish between the \yes\ and \no\ distributions with minimal communication (with ``one-way'' communication from the $t$-th player to all higher numbered players, as before).

To lower bound the communication complexity of IRMD we consider a  folded version of the problem we call IFRMD where the labels associated with an edge are from 
$\Z_q^{k-1}$ and obtained by mapping an IRMD label $\vecz = (z^{(1)},\ldots,z^{(k)}) \in \Z_q^k$ to the label $\tilde{\vecz} = (z^{(2)} - z^{(1)},\ldots,z^{(k)}-z^{(1)})$.
With this folding we recover the same communication problem as \cite{KK19} for the case of $k=q=2$ and the main focus of our work is proving lower bounds for higher $k$ and $q$. 

Our analysis of the communication complexity of IFRMD follows the same sequence of steps (with imitation even within the steps) as \cite{KK19}. In particular we also use the same sets $B_1,\ldots,B_T$ and use the same notion of boundedness. 

Turning to the induction and the analysis of boundedness of $B_t$ for general $t$,  we are able to extract a clean lemma (\cref{lem:induction step}) that makes the induction completely routine. To explain this contribution note that $B_t$ is the intersection of $B_{t-1}$ with a set say $A_t$ where $A_t$ is of the same type as $B_t$ (both are obtained by looking at the vector $\vecx^*$ projected to a matching followed by some folding). Thus both $B_{t-1}$ and $A_t$ are bounded sets. To complete the induction it would suffice to prove that the intersection of bounded sets is bounded, but alas this is not true! To get that $B_t$ is bounded, we need to use the fact that the matching $M_t$ is random and chosen independently of $B_{t-1}$ but it turns out that that is all that is needed. This is exactly what we show in \cref{lem:induction step} --- and of course this only happens with high probability over the choice of $M_t$.

\paragraph{Incremental contribution over~\cite{KK19}.}
Given that our result closely follows~\cite{KK19} we now focus on some key differences, and why these contributions are conceptually significant. 
\begin{enumerate}
    \item 
    The analysis of \cite{KK19} is intricate and it is not a~priori clear what problems it may extend~to. Our choice of \textsf{Max $k$-LIN}-$\bmod\; q$ is not the obvious choice, and was not our first choice. More natural choices would be to go for more general linear systems, or even functions supporting ``one-wise independence'', but we are unable to push the analysis to more general cases. Our choice reflects an adequate one to get coarse bounds on the approximability of every problem while getting tight ones for many natural ones.
    \item The choice of the communication problems to work with is also not obvious: Indeed working with both IRMD and IFRMD seems necessary for our approach --- the former is more useful for our final inapproximability results whereas the latter is the one we are able to analyze. 
    \item The exact notion of boundedness that is necessary and sufficient for our results is also not completely obvious. It is only in hindsight, after carrying out the entire analysis, does it become clear that the notion that works is exactly the same as the one in \cite{KK19}. Part of the challenge is that in the inductive proof of boundedness even the base case (which is quite simple in \cite{KK19}) is not obvious in our case, and nor is the inductive step. 
    \begin{itemize}
       \item With respect to the base case we note that if we had adopted a weaker notion of boundedness allowing $w$-th level Fourier mass to grow roughly as $U((k-1)w/k)$ boundedness would have been easier to prove but the result would not be optimal. Getting a bound of $U(w/2)$ is not technically hard, but involves a non-trivial randomization in the choice of folding purely for analysis purposes. (So there is an implicit passing back and forth between the IRMD and IFRMD problems in this technical step.)
        \item We also feel that it is important that we are able to extract an induction lemma (\cref{lem:induction step}) that clearly separates the (Fourier and combinatorial) analytic ingredients from the probabilistic setup. We believe the lemma is clarifying even when applied to the proof of \cite{KK19}. 
    \end{itemize}
    \item Finally we note that the underlying combinatorics are made significantly more intricate due to the need to work with $k > 2$. A conceptual difference from \cite{KK19} here is that whereas they explore the distribution of the number of edges in a random matching that intersect with a fixed set of vertices, we have to explore the distribution of edges that have an odd intersection (or non-zero mod $q$ intersection) with a random hypermatching. Indeed this part is clarifying the role of some of the quantities explored in the previous work. Additionally, we note that the number of parameters we have to track is much larger (and indeed it is fortunate that the number of parameters remains a constant independent of $k$), and managing these in our inequalities is a non-trivial technical challenge (even given the heavy lifting in \cite{KK19}). 
\end{enumerate}

\subsection{Organization of the rest of the paper} \label{ssec:intro-org}
We start with some background material in \cref{sec:prelim}. We introduce our communication problems (\IRMD\ and \IFRMD) in \cref{sec:comm} and state our lower bounds for these. We use these lower bounds on communication problems to prove our streaming lower bounds in \cref{sec:streaming}, and turn to proving the communication lower bounds in \cref{sec:proof overview}. To do so, \cref{sec:proof overview} introduces the notion of bounded sets, states three lemmas on the properties of bounded sets, and proves the lower bound  assuming these lemmas on the boundedness of sets encountered by the protocol.
Finally \cref{sec:reduced bounded} proves these lemmas on boundedness, concluding the proofs.


\section{Preliminaries} \label{sec:prelim}

We use the following notations throughout the paper. Let $\N=\{1,\dots\}$ denote the set of natural numbers and let $[n]=\{1,2,\dots,n\}$. For a discrete set $X$ and a function $f:X\rightarrow\mathbb{R}$, we denote $\|f\|_p=(\sum_{x\in X}|f(x)|^p)^{1/p}$ for every $p>0$ and $\|f\|_0=\sum_{x\in X}\mathbf{1}_{f(x)\neq 0}$. For a sequence of objects $O_1,O_2,\dots,O_T$, we define $O_{1:t}=\{O_1,O_2,\dots,O_t\}$ for every $t\in[T]$.

\subsection{Total variation distance}
In our analysis we will use the total variation distance between probability distributions, and several bounds on it presented in this section.

\begin{definition}[Total variation distance of discrete random variables]
	Let $\Omega$ be a finite probability space and $X,Y$ be random variables with support $\Omega$. The total variation distance between $X$ and $Y$ is defined as follows.
	\[
	\|X-Y\|_{tvd} :=\frac{1}{2}
	\sum_{\omega\in\Omega}\left|\Pr[X=\omega]-\Pr[Y=\omega]\right| \, .
	\]
\end{definition}
We will use the triangle and data processing inequalities for the total variation distance.
\begin{proposition}[E.g.,{\cite[Claim~6.5]{KKS}}]\label{prop:tvd properties}
	For random variables $X, Y$ and $W$:
	\begin{itemize}
		\item (Triangle inequality) $\|X-Y\|_{tvd}\geq\|X-W\|_{tvd}-\|Y-W\|_{tvd}$.
		\item (Data processing inequality) If $W$ is independent of both $X$ and $Y$, and $f$ is a function, then  $\|f(X,W)-f(Y,W)\|_{tvd}\leq\|X-Y\|_{tvd}$.
	\end{itemize}
\end{proposition}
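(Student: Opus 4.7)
Both statements are standard consequences of the pointwise definition of total variation distance, so the strategy is to unfold the definition and reduce each one to a scalar inequality that can be applied term-by-term before summing over $\omega \in \Omega$. I would prove the triangle inequality first since the data processing inequality will re-use the same style of argument combined with a change-of-variables step.

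For the triangle inequality, I would start from the definition
\[
\|X-W\|_{tvd} \;=\; \frac{1}{2}\sum_{\omega\in\Omega}\bigl|\Pr[X=\omega]-\Pr[W=\omega]\bigr|,
\]
insert and subtract $\Pr[Y=\omega]$ inside the absolute value, and apply the ordinary scalar triangle inequality $|a-c|\le |a-b|+|b-c|$ pointwise in $\omega$. Summing over $\omega$ and dividing by $2$ gives $\|X-W\|_{tvd} \le \|X-Y\|_{tvd} + \|Y-W\|_{tvd}$, which is the stated inequality after rearrangement. This step is routine and contains no real obstacle.

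For the data processing inequality, my plan is to factor the argument into two clean sub-steps. First, I would show that independence of $W$ from $X$ and from $Y$ implies $\|(X,W)-(Y,W)\|_{tvd} = \|X-Y\|_{tvd}$: by independence, the joint probability of $(\omega,w)$ factors as $\Pr[X=\omega]\cdot \Pr[W=w]$ (and similarly for $Y$), so
\[
\tfrac12 \sum_{\omega,w} \bigl|\Pr[X=\omega]-\Pr[Y=\omega]\bigr|\cdot \Pr[W=w]
\;=\; \tfrac12 \sum_{\omega} \bigl|\Pr[X=\omega]-\Pr[Y=\omega]\bigr|,
\]
using that $\sum_w \Pr[W=w]=1$. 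Second, I would show that for any deterministic function $g$ and any pair of random variables $U,V$ on a common space, $\|g(U)-g(V)\|_{tvd}\le \|U-V\|_{tvd}$. The cleanest way is to write $\Pr[g(U)=y]=\sum_{u\in g^{-1}(y)}\Pr[U=u]$, so
\[
\bigl|\Pr[g(U)=y]-\Pr[g(V)=y]\bigr| \;\le\; \sum_{u\in g^{-1}(y)} \bigl|\Pr[U=u]-\Pr[V=u]\bigr|,
\]
and summing over $y$ (noting that the preimages partition the ground set) yields the bound. Applying this contraction with $U=(X,W)$, $V=(Y,W)$, and $g=f$ and combining with the first sub-step completes the proof.

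The only subtlety to watch is that ``independent of both $X$ and $Y$'' must be interpreted as $W$ being drawn from the same marginal and independently in both coupling choices; otherwise the factorization in the first sub-step does not apply. No deeper obstacle is expected, and the overall argument should take only a few lines in final form.
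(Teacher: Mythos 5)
Your proof is correct and complete; both arguments are the standard ones. Note, however, that the paper itself does not prove \cref{prop:tvd properties}: it is stated as a known fact and attributed to \cite[Claim~6.5]{KKS}, so there is no in-paper argument to compare against. Your two-step factoring of the data-processing inequality into (i) the identity $\|(X,W)-(Y,W)\|_{tvd}=\|X-Y\|_{tvd}$ under independence of $W$, and (ii) the contraction of total variation distance under a deterministic map via preimage partitions, is exactly the textbook route, and your parenthetical caveat about $W$ having a fixed marginal and being independent in both pairings is the right thing to flag.
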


\begin{lemma}\label{lem:statistical_test}
	Let $X,~Y,~W$ be random variables and let $f$ be a function. If there exists $\delta>0$ such that for every fixed $x$ in the support of $X$, we have
	\[\|f(x,Y)-f(x,W)\|_{tvd}\le \delta\, , \] then the following holds:
	\[\|(X,f(X,Y))-(X,f(X,W))\|_{tvd}\le \delta\, .\]
\end{lemma}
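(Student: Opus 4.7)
The plan is to expand the total variation distance of the joint distributions directly from the definition and reduce it, via conditioning on $X$, to a convex combination of the per-$x$ distances that are controlled by hypothesis. Specifically, I would write
\[
\|(X, f(X,Y)) - (X, f(X,W))\|_{tvd} \;=\; \frac{1}{2}\sum_{x,z}\bigl|\Pr[X=x,\,f(X,Y)=z]-\Pr[X=x,\,f(X,W)=z]\bigr|,
\]
and then factor each joint probability as $\Pr[X=x]\cdot\Pr[f(x,Y)=z]$ (and similarly with $W$). This is the natural reading of the hypothesis, in which $Y$ and $W$ are understood to be sampled from their marginals and independent of $X$, matching the standard convention used in \cref{prop:tvd properties}.

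Once the joint probability is factored this way, I would pull $\Pr[X=x]$ out of the inner sum over $z$, obtaining
\[
\|(X, f(X,Y)) - (X, f(X,W))\|_{tvd} \;=\; \sum_{x}\Pr[X=x]\cdot\frac{1}{2}\sum_{z}\bigl|\Pr[f(x,Y)=z]-\Pr[f(x,W)=z]\bigr|.
\]
The inner sum, together with its $\tfrac{1}{2}$ prefactor, is exactly $\|f(x,Y)-f(x,W)\|_{tvd}$, which by hypothesis is at most $\delta$ for every fixed $x$ in the support of $X$. Therefore the right-hand side is at most $\delta\cdot\sum_{x}\Pr[X=x]=\delta$, which is the desired bound.

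There is really no substantial obstacle: the entire argument is a reformulation of the TVD as a convex combination of conditional TVDs, each of which the hypothesis bounds by $\delta$. The only points worth care are (i) the factorization step, which implicitly requires that $Y$ and $W$ be interpreted as independent of $X$ (the standard convention encoded by the notation $f(x,Y)$), and (ii) carrying the $\tfrac{1}{2}$ from the definition of TVD through to match up with the $\tfrac{1}{2}$ appearing in the hypothesis, so that the final bound comes out as $\delta$ rather than $2\delta$.
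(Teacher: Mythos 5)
Your proposal is correct and takes essentially the same approach as the paper: both proofs decompose the total variation distance by conditioning on $X=x$ and then apply the per-$x$ hypothesis pointwise, concluding by convexity. The only presentational difference is that you expand the TVD directly from its definition (a sum of absolute differences) and factor the joint probabilities, whereas the paper uses the equivalent characterization of TVD via the supremum over statistical tests $T$ and writes the conditioning as $\mathbb{E}_{x\sim X}[\mathbb{E}_{y\sim Y\mid X=x}[T(x,f(x,y))]]$. Your remark about the factorization requiring $Y,W$ to be read as independent of $X$ is the right thing to flag; the paper's version sidesteps this by conditioning explicitly, so its argument really bounds the conditional TVD $\|f(x,Y)|_{X=x} - f(x,W)|_{X=x}\|_{tvd}$, which is what is actually invoked in the later applications (where, e.g., $\vecx^*$ is drawn from a posterior set that depends on the other inputs). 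Under the independence convention your argument is identical in substance.
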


\begin{proof}
	Consider any statistical test\footnote{That is, $T(X,Z)$ is a Boolean function that aims to maximize $\Exp_{(X,Z) \sim (X,f(X,Y)}[T(X,Z)] - \Exp_{(X,Z) \sim (X,f(X,W)}[T(X,Z)]$. Note that $\|(X,f(X,Y))-(X,f(X,W))\|_{tvd} = \max_T \{\Exp_{(X,Z) \sim (X,f(X,Y)}[T(X,Z)] - \Exp_{(X,Z) \sim (X,f(X,W)}[T(X,Z)]\}$.} $T$ distinguishing the joint distributions $(X,f(X,Y))$ and $(X,f(X,W))$. It suffices to prove that
	\[\mathbb{E}_{X,Y}[T(X,f(X,Y))]-\mathbb{E}_{X,W}[T((X,f(X,W)))]\le \delta \, .\]
	We have 
	\begin{align*}
		&\mathbb{E}_{X,Y}[T(X,f(X,Y))]-\mathbb{E}_{X,W}[T((X,f(X,W)))] \\
		&=  \mathbb{E}_{x\sim X}\left[\mathbb{E}_{y\sim Y\mid X=x}[T(x,f(x,y))]\right] - \mathbb{E}_{x\sim X}\left[\mathbb{E}_{w\sim W\mid X=x}[T(x,f(x,w))]\right]\\
		&= \mathbb{E}_{x\sim X}\left[\mathbb{E}_{y\sim Y\mid X=x}[T(x,f(x,y))]-\mathbb{E}_{w\sim W\mid X=x}[T(x,f(x,w))]\right]\\
		&\le \mathbb{E}_{x\sim X}[\delta] = \delta \, ,
	\end{align*}
	where the last step follows from the hypothesis that for every fixed $x$, we have \[\|f(x,Y)-f(x,W)\|_{tvd}\le \delta\, . \]
\end{proof}
We will also need the following lemma from \cite{KK19}.

\begin{lemma}[{\cite[Lemma B.2]{KK19}}]\label{lem:KKsubstitutionlemma}
	Let $X^1,X^2$ be random variables taking values on finite sample space~$\Omega_1$. Let $Z^1,Z^2$ be random variables taking values on finite sample space $\Omega_2$, and suppose that $Z^2$ is independent of $X^1,X^2$. Let $f:\Omega_1\times\Omega_2 \rightarrow \Omega_3 $ be a function. Then
	\[
	\|(X^1,f(X^1,Z^1))-(X^2,f(X^2,Z^2)) \|_{tvd} \le \|(X^1,f(X^1,Z^1))-(X^1,f(X^1,Z^2)) \|_{tvd} + \|X^1-X^2\|_{tvd} \, .
	\]
	
\end{lemma}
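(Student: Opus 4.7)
The plan is to prove this by splicing in the ``hybrid'' distribution $(X^1, f(X^1, Z^2))$ and applying the two standard properties of total variation distance from \cref{prop:tvd properties}.

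First, I would apply the triangle inequality with this hybrid as the intermediate point, which immediately gives
\[
\|(X^1,f(X^1,Z^1))-(X^2,f(X^2,Z^2))\|_{tvd} \le \|(X^1,f(X^1,Z^1))-(X^1,f(X^1,Z^2))\|_{tvd} + \|(X^1,f(X^1,Z^2))-(X^2,f(X^2,Z^2))\|_{tvd}.
\]
The first term on the right is already the first term appearing in the conclusion of the lemma, so it remains to bound the second term by $\|X^1-X^2\|_{tvd}$.

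For that bound, I would invoke the data processing inequality. The key hypothesis is that $Z^2$ is independent of both $X^1$ and $X^2$, so the deterministic map $g : \Omega_1 \times \Omega_2 \to \Omega_1 \times \Omega_3$ defined by $g(x,z) = (x, f(x,z))$ can be applied with $Z^2$ as the auxiliary randomness independent of both inputs. Then the data processing inequality yields
\[
\|(X^1, f(X^1, Z^2)) - (X^2, f(X^2, Z^2))\|_{tvd} = \|g(X^1, Z^2) - g(X^2, Z^2)\|_{tvd} \le \|X^1 - X^2\|_{tvd}.
\]
Combining these two bounds finishes the proof.

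There is no real obstacle here: both steps are immediate applications of \cref{prop:tvd properties}. The only subtlety is recognizing that the hypothesis ``$Z^2$ is independent of $X^1, X^2$'' is precisely what is required to apply the data processing inequality in the second step --- and crucially, no independence assumption on $Z^1$ is needed, because the coupling between $X^1$ and $Z^1$ is absorbed entirely into the first term on the right-hand side of the conclusion.
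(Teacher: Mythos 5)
Your proof is correct. The paper cites this lemma from \cite{KK19} without reproducing a proof, but your two-step argument --- triangle inequality through the hybrid $(X^1, f(X^1,Z^2))$, then the data processing inequality applied to $g(x,z)=(x,f(x,z))$ with $Z^2$ as the independent auxiliary randomness --- is exactly the standard (and intended) derivation, and your closing remark that no independence hypothesis on $Z^1$ is needed is also accurate.
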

\subsection{Concentration inequality}

We will use the following concentration inequality from \cite{KK19} which is essentially an Azuma-Hoeffding style inequality for submartingales.

\begin{lemma}[\protect{\cite[Lemma~2.5]{KK19}}]\label{lem:our-azuma}
	Let $X=\sum_{i\in[N]}X_i$ where $X_i$ are Bernoulli random variables such that for every $k\in[N]$, $\Exp[X_k \, |\, X_1,\dots,X_{k-1}]\leq p$ for some $p\in(0,1)$. Let $\mu=Np$. For every $\Delta>0$, we have:
	\[
	\Pr\left[X\geq\mu+\Delta\right]\leq\exp\left(-\frac{\Delta^2}{2\mu+2\Delta}\right) \, .
	\]
\end{lemma}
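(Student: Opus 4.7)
The plan is the standard exponential moment method, with the submartingale hypothesis playing the role of independence only inside one carefully chosen step. For any $\lambda > 0$, Markov's inequality applied to $e^{\lambda X}$ gives
\[
\Pr[X \geq \mu + \Delta] \;\leq\; e^{-\lambda(\mu+\Delta)} \cdot \Exp[e^{\lambda X}],
\]
so it suffices to bound the moment generating function $\Exp[e^{\lambda X}]$ and then optimize over $\lambda$.

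To bound the MGF I would peel off one factor at a time via the tower property. Writing $\mathcal{F}_{k-1}$ for the $\sigma$-algebra generated by $X_1,\dots,X_{k-1}$, one has $\Exp[e^{\lambda X}] = \Exp\!\bigl[e^{\lambda(X_1+\cdots+X_{N-1})} \cdot \Exp[e^{\lambda X_N}\mid\mathcal{F}_{N-1}]\bigr]$. Since $X_N$ is $\{0,1\}$-valued with conditional mean at most $p$ and $\lambda > 0$, a direct two-point computation gives $\Exp[e^{\lambda X_N}\mid\mathcal{F}_{N-1}] = 1 + \Exp[X_N\mid\mathcal{F}_{N-1}](e^\lambda - 1) \leq 1 + p(e^\lambda - 1) \leq \exp(p(e^\lambda - 1))$. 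The right-hand side is a deterministic constant, so it can be pulled out of the outer expectation; iterating $N$ times yields $\Exp[e^{\lambda X}] \leq \exp(\mu(e^\lambda-1))$, precisely the bound one would obtain if the $X_i$ were genuinely independent Bernoullis with mean $p$. This is the only place where the submartingale hypothesis is invoked, and full independence is never required.

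Substituting back gives $\Pr[X \geq \mu+\Delta] \leq \exp(\mu(e^\lambda - 1) - \lambda(\mu+\Delta))$. The canonical optimizing choice $\lambda = \ln(1 + \Delta/\mu)$ reduces the exponent to $-\mu\,\psi(\Delta/\mu)$, where $\psi(t) := (1+t)\ln(1+t) - t$. The proof then concludes with the elementary inequality $\psi(t) \geq t^2/(2+2t)$ for all $t \geq 0$, which after clearing denominators becomes $2(1+t)^2\ln(1+t) \geq 2t + t^2$; this can be verified by differentiating once (the derivative of the difference simplifies to $4(1+t)\ln(1+t) \geq 0$) and noting that both sides agree at $t=0$. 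Plugging the inequality back in converts $-\mu\,\psi(\Delta/\mu)$ into $-\Delta^2/(2\mu+2\Delta)$, yielding the stated tail bound. The main obstacle, in my view, is the MGF step: it is tempting to pretend the $X_k$ are independent, and the tower-property argument above is specifically structured to avoid that. Everything downstream is routine Bernstein/Chernoff calculus, and the slightly suboptimal constant $2\mu+2\Delta$ (rather than the sharper $2\mu+\Delta$ one could squeeze out) comes precisely from the weaker of the standard inequalities for $\psi$.
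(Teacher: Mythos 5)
The paper does not prove this lemma itself; it cites it as \cite[Lemma~2.5]{KK19}, so there is no in-paper proof to compare against. Your approach — Markov on $e^{\lambda X}$, peeling off one conditional factor at a time via the tower property to get $\Exp[e^{\lambda X}]\le \exp(\mu(e^\lambda-1))$, then optimizing $\lambda=\ln(1+\Delta/\mu)$ and invoking a Bernstein-type lower bound on $\psi(t)=(1+t)\ln(1+t)-t$ — is the standard and correct route, and the submartingale hypothesis is used exactly where it should be.

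There is, however, an algebra slip in the final step that leaves a real gap. Clearing denominators in $\psi(t)\ge t^2/(2+2t)$ gives
\[
2(1+t)^2\ln(1+t)-2t(1+t)\ge t^2
\quad\Longleftrightarrow\quad
2(1+t)^2\ln(1+t)\ge 3t^2+2t,
\]
not $2t+t^2$ as you wrote (you dropped the $2t^2$ from expanding $2t(1+t)$). Your derivative computation — that the derivative of $2(1+t)^2\ln(1+t)-(2t+t^2)$ equals $4(1+t)\ln(1+t)$ — is correct, but it verifies only the \emph{weaker} inequality with right-hand side $2t+t^2$, which does not imply $\psi(t)\ge t^2/(2+2t)$. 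For the correct cleared inequality, set $g(t)=2(1+t)^2\ln(1+t)-3t^2-2t$; then $g(0)=0$, $g'(t)=4(1+t)\ln(1+t)-4t$, and $g'(0)=0$. One needs one more step: either observe $\ln(1+t)\ge t/(1+t)$ (so $g'(t)\ge 0$ directly), or differentiate once more to get $g''(t)=4\ln(1+t)\ge 0$. With that patch the argument is complete and gives exactly the stated $\exp\bigl(-\Delta^2/(2\mu+2\Delta)\bigr)$ bound.
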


\subsection{Fourier analysis}\label{sec:fourier}
In this paper, we will use Fourier analysis over $\Z_q$ (see, for instance,~\cite{o2014analysis,GT19}).
For a function $f: \Z_q^n \to \Comp$, its Fourier coefficients are defined by $\widehat{f}(\vecu) = \frac{1}{q^n}\sum_{\veca\in\Z_q^n}f(\veca)\cdot\overline{\omega^{\vecu^\top\veca}}$, where $\vecu\in\Z_q^n$ and $\omega=e^{2\pi i/q}$ is the primitive $q$-th root of unity. In particular, for every $\veca$,  $f(\veca)=\sum_{\vecu\in\Z_q^n}\widehat{f}(\vecu)\cdot\omega^{\vecu^\top\veca}$.
Later we will use the three following important tools.
Note that here we define the $p$-norm of $f$ as $\|f\|_p^p=\sum_{\vecx\in\Z_q^n}|f(\vecx)|^p$ rather than the standard definition which uses expectation. This is for future notational convenience.

\begin{lemma}[Parseval's identity]\label{prop:parseval}
	For every function $f:\Z_q^n\to\Comp$, 
	\[
	\|f\|_2^2=\sum_{\veca\in\Z_q^n}f(\veca)^2=q^n\sum_{\vecu\in\Z_q^k}\widehat{f}(\vecu)^2 \, .
	\]
\end{lemma}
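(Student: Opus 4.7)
The plan is to prove Parseval's identity by directly substituting the Fourier inversion formula into the sum $\sum_{\veca} f(\veca)^2$ and using the orthogonality of the characters $\veca \mapsto \omega^{\vecu^\top \veca}$ on the group $\Z_q^n$. This is a classical computation; the only things that need care are the complex conjugation (since $f$ is $\Comp$-valued, the ``square'' must be read as $f(\veca)\overline{f(\veca)}$, as the right-hand side with the stated definition of $\widehat f$ is otherwise not manifestly real) and the bookkeeping of the $q^n$ factor given the paper's convention of placing $q^{-n}$ in the definition of $\widehat f$ and using an un-normalized $\ell_2$ norm.

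First I would establish the character orthogonality relation: for every $\vecu, \vecv \in \Z_q^n$,
\[
\sum_{\veca \in \Z_q^n} \omega^{(\vecu - \vecv)^\top \veca} \;=\; q^n \cdot \mathbf{1}[\vecu = \vecv].
\]
This factors as a product over the $n$ coordinates, and for each coordinate $\sum_{a \in \Z_q} \omega^{(u_i - v_i)\, a}$ is a geometric sum. Since $\omega = e^{2\pi i / q}$ is a primitive $q$-th root of unity, $\omega^{u_i - v_i} \neq 1$ unless $u_i \equiv v_i \pmod q$, in which case the sum is $q$; otherwise it is $0$.

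Next I would substitute the Fourier inversion formula $f(\veca) = \sum_{\vecu \in \Z_q^n} \widehat{f}(\vecu)\, \omega^{\vecu^\top \veca}$ into
\[
\sum_{\veca \in \Z_q^n} f(\veca)\,\overline{f(\veca)} \;=\; \sum_{\veca}\sum_{\vecu}\sum_{\vecv} \widehat{f}(\vecu)\,\overline{\widehat{f}(\vecv)}\; \omega^{(\vecu-\vecv)^\top \veca},
\]
exchange the order of summation so that the sum over $\veca$ is innermost, and apply the orthogonality identity to collapse $\vecv$ onto $\vecu$. Every surviving term contributes $q^n \cdot \widehat{f}(\vecu)\overline{\widehat{f}(\vecu)} = q^n |\widehat{f}(\vecu)|^2$, producing exactly $q^n \sum_{\vecu \in \Z_q^n} |\widehat{f}(\vecu)|^2$, which matches the right-hand side of the stated identity.

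Since this is a textbook fact, I do not foresee any real obstacle; the exercise is purely bookkeeping. The one place where one has to be slightly careful is that the paper's normalization (with $q^{-n}$ inside the definition of $\widehat f$ and the un-normalized $\ell_2$ norm $\|f\|_2^2 = \sum_{\veca}|f(\veca)|^2$) is what forces the $q^n$ factor to appear on the Fourier side rather than on the physical side, exactly as claimed.
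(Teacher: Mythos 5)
Your proof is correct, and it is the standard argument (Fourier inversion plus character orthogonality over $\Z_q^n$); the paper itself states this lemma without proof, treating it as a classical fact, so there is nothing to compare against beyond noting that your derivation is the canonical one. You are also right to flag, and correct, two small imprecisions in the paper's statement: the ``squares'' $f(\veca)^2$ and $\widehat f(\vecu)^2$ must be read as $|f(\veca)|^2$ and $|\widehat f(\vecu)|^2$ for a $\Comp$-valued $f$ (indeed the paper's own use of Parseval, as in \cref{lem:parseval} and equation~\eqref{eq:dist}, takes this for granted), and the index set in the right-hand sum should be $\Z_q^n$ rather than $\Z_q^k$.
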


Note that for every distribution $f$ on $\Z_q^n$, $\widehat{f}(0^n)=q^{-n}$. For the uniform distribution $U$ on $\Z_q^n$, $\widehat{U}(\vecu)=0$ for every $\vecu\neq0^n$. Thus, by \autoref{prop:parseval}, for any distribution $f$ on $\Z_q^n$:
\begin{align}\label{eq:dist}
	\|f-U\|_2^2=q^n\sum_{\vecu\in\Z_q^n}\left(\widehat{f}(\vecu)-\widehat{U}(\vecu)\right)^2=q^n\sum_{\vecu\in\Z_q^n\backslash\{0^n\}}\widehat{f}(\vecu)^2 \, .
\end{align}

We now introduce some standard facts about how convolutions interact with the Fourier transform operation. For functions $f,g\colon \Z_q^n \to \Comp$, their convolution $f\star g\colon \Z_q^n \to \Comp$ is defined as $(f\star g)(\veca)=\sum_{\vecv\in\Z_q^n} f(\vecv) g(\veca-\vecv)$. The first lemma is the so-called ``convolution theorem,'' which essentially states that, up to normalization factors, the Fourier transform of the convolution of two functions is equal to the pointwise product of the individual Fourier transforms.
\begin{lemma}[Convolution Theorem]\label{lem:convthm}
	For $f,g: \Z_q^n \to \Comp$, we have
	\[
	\widehat{f\star g}(\vecu) = q^n \cdot \widehat{f}(\vecu)\cdot\widehat{g}(\vecu).
	\]
	for all $\vecu \in \Z_q^n$.
\end{lemma}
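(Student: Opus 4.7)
The plan is to prove this via a direct computation expanding both sides using the Fourier definition and the convolution definition, and then applying a change of variables to decouple the double sum into a product.

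First I would write out $\widehat{f\star g}(\vecu) = \frac{1}{q^n}\sum_{\veca \in \Z_q^n} (f\star g)(\veca)\, \overline{\omega^{\vecu^\top \veca}}$ and then substitute the definition $(f\star g)(\veca) = \sum_{\vecv \in \Z_q^n} f(\vecv) g(\veca - \vecv)$. This produces a double sum over $\veca$ and $\vecv$. The key move is then the change of variables $\vecb := \veca - \vecv$ in the inner sum (with $\vecv$ fixed), which is a bijection on $\Z_q^n$ and lets me rewrite $\overline{\omega^{\vecu^\top \veca}} = \overline{\omega^{\vecu^\top \vecv}} \cdot \overline{\omega^{\vecu^\top \vecb}}$ using the additive character property of $\omega^{\vecu^\top(\cdot)}$ (recall $\omega = e^{2\pi i/q}$ and the exponent is taken mod $q$, so this identity holds).

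After that change of variables the sum factors as $\frac{1}{q^n} \left(\sum_{\vecv} f(\vecv) \overline{\omega^{\vecu^\top \vecv}}\right)\left(\sum_{\vecb} g(\vecb)\overline{\omega^{\vecu^\top \vecb}}\right)$. Each parenthesized factor equals $q^n \widehat{f}(\vecu)$ and $q^n \widehat{g}(\vecu)$ respectively, by the definition of the Fourier transform used in Section~\ref{sec:fourier}. Multiplying everything out yields $q^n \widehat{f}(\vecu) \widehat{g}(\vecu)$, as claimed.

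There is no real obstacle here; the only mild subtlety is keeping track of the normalization constant $q^n$, which differs from the expectation-based convention because the paper defines $\widehat{f}(\vecu) = q^{-n} \sum_{\veca} f(\veca) \overline{\omega^{\vecu^\top \veca}}$ and the norm $\|\cdot\|_p$ without expectation. The factor $q^n$ on the right-hand side of the lemma is exactly what is needed to reconcile one $q^{-n}$ from the outer Fourier normalization with two $q^{-n}$'s that would otherwise appear if each inner sum were written as $q^n \widehat{f}(\vecu)$; verifying this bookkeeping is the only thing to do carefully before concluding.
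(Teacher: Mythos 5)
Your proof is correct and follows essentially the same computation as the paper's: expand $\widehat{f\star g}$, substitute the convolution definition, split the character $\overline{\omega^{\vecu^\top\veca}}$ via the substitution $\vecb=\veca-\vecv$, and factor the double sum into the product of the two Fourier coefficients with the appropriate $q^n$ normalization. The bookkeeping of normalization constants you flag is also handled identically in the paper.
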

\begin{proof}
	For every $\vecu$,
	\begin{align*}
		\widehat{f\star g}(\vecu) &= \frac{1}{q^n} \sum_{\veca \in \Z_q^n} (f\star g)(\veca) \cdot \overline{\omega^{\vecu^\top \veca}}\\
		&= \frac{1}{q^n} \sum_{\veca \in \Z_q^n} \left(\sum_{\vecv\in\Z_q^n} f(\vecv) g(\veca-\vecv)\right)  \overline{\omega^{\vecu^\top \veca}}\\
		&= \frac{1}{q^n} \sum_{\veca \in \Z_q^n} \sum_{\vecv\in\Z_q^n} f(\vecv) \overline{\omega^{\vecu^\top \vecv}} \cdot g(\veca-\vecv) \overline{\omega^{\vecu^\top (\veca-\vecv)}} \\ 
		&= \frac{1}{q^n} \sum_{\vecv\in\Z_q^n} f(\vecv) \overline{\omega^{\vecu^\top \vecv}} \cdot \sum_{\veca \in \Z_q^n} g(\veca-\vecv) \overline{\omega^{\vecu^\top (\veca-\vecv)}}\\
		&= q^n \cdot \frac{1}{q^n}\sum_{\vecv\in\Z_q^n} f(\vecv) \overline{\omega^{\vecu^\top \vecv}} \cdot \frac{1}{q^n}\sum_{\veca \in \Z_q^n} g(\veca) \overline{\omega^{\vecu^\top \veca}}\\
		&= q^n \cdot \widehat{f}(\vecu) \cdot \widehat{g}(\vecu),
	\end{align*}
	as desired.
\end{proof}

We will also need the following lemma, which states that the Fourier transform of the \emph{product} of two functions is given by the convolution of the individual Fourier transforms.
\begin{lemma}[Fourier transform of product of functions]\label{lem:convolution}
	For every $f,g:\Z_q^n\to\Comp$, and $\vecu\in\Z_q^n$, we have
	\[
	\widehat{f\cdot g}(\vecu) = \sum_{\vecu'\in\Z_q^n}\widehat{f}(\vecu')\cdot\widehat{g}(\vecu-\vecu')\,.
	\]
	Furthermore, for every $h\in[n]$,
	\[
	\sum_{\substack{\vecu\in\Z_q^n\\ \|\vecu\|_0=h}}\widehat{f\cdot g}(\vecu) = \sum_{\vecu\in\Z_q^n}\sum_{\substack{\vecu'\in\Z_q^n\\ \|\vecu+\vecu'\|_0=h}}\widehat{f}(\vecu)\cdot\widehat{g}(\vecu')\,.
	\]
\end{lemma}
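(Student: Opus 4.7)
The plan is to establish the first identity by direct computation from the definition of the Fourier transform together with the inversion formula, and then obtain the second identity by a straightforward change of summation variables.

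For the first identity I would start by applying Fourier inversion to write $f(\veca) = \sum_{\vecu_1 \in \Z_q^n} \widehat{f}(\vecu_1) \omega^{\vecu_1^\top \veca}$ and $g(\veca) = \sum_{\vecu_2 \in \Z_q^n} \widehat{g}(\vecu_2) \omega^{\vecu_2^\top \veca}$. Plugging these into the definition of $\widehat{f \cdot g}(\vecu)$ yields
\[
\widehat{f\cdot g}(\vecu) \;=\; \frac{1}{q^n}\sum_{\veca\in\Z_q^n} f(\veca)\,g(\veca)\,\overline{\omega^{\vecu^\top\veca}} \;=\; \frac{1}{q^n}\sum_{\vecu_1,\vecu_2}\widehat{f}(\vecu_1)\widehat{g}(\vecu_2)\sum_{\veca\in\Z_q^n}\omega^{(\vecu_1+\vecu_2-\vecu)^\top\veca}.
\]
The inner character sum over $\veca$ equals $q^n$ if $\vecu_1+\vecu_2\equiv\vecu\pmod{q}$ and $0$ otherwise, by the standard orthogonality of characters of $\Z_q^n$. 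This collapses the double sum to $\sum_{\vecu'}\widehat{f}(\vecu')\widehat{g}(\vecu-\vecu')$, giving the first identity. (Alternatively, one could invoke \cref{lem:convthm} applied to the Fourier transforms of $f$ and $g$ and then use Fourier inversion to translate between convolution and product; the direct computation seems cleaner.)

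For the second identity I would substitute the first identity into the left-hand side and perform the change of variables $\vecu'' := \vecu - \vecu'$. This gives
\[
\sum_{\substack{\vecu\in\Z_q^n\\\|\vecu\|_0=h}}\widehat{f\cdot g}(\vecu) \;=\; \sum_{\substack{\vecu\in\Z_q^n\\\|\vecu\|_0=h}}\sum_{\vecu'\in\Z_q^n}\widehat{f}(\vecu')\widehat{g}(\vecu-\vecu') \;=\; \sum_{\vecu',\vecu''\in\Z_q^n:\,\|\vecu'+\vecu''\|_0=h}\widehat{f}(\vecu')\widehat{g}(\vecu''),
\]
and renaming $(\vecu',\vecu'')\mapsto(\vecu,\vecu')$ matches the statement.

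There is no real obstacle here: the only subtle ingredient is the orthogonality of characters of $\Z_q^n$, which is the usual fact that $\sum_{\veca\in\Z_q^n}\omega^{\vecw^\top\veca}$ equals $q^n$ when $\vecw\equiv 0\pmod q$ and vanishes otherwise. Everything else is a mechanical rearrangement of finite sums, and the change of variables in the second part is just the observation that $(\vecu,\vecu')\mapsto(\vecu',\vecu-\vecu')$ is a bijection on $\Z_q^n\times\Z_q^n$ that preserves $\|\vecu\|_0=\|\vecu'+\vecu''\|_0$ after relabeling.
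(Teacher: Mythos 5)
Your proof is correct and takes essentially the same approach as the paper. The only cosmetic difference is that for the first identity you expand both $f$ and $g$ via Fourier inversion and invoke orthogonality of characters to collapse the double sum, whereas the paper expands only $f$ and directly recognizes the remaining inner sum over $\veca$ as $\widehat{g}(\vecu-\vecu')$; the second identity is handled by the same change of variables in both.
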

\begin{proof}
	For every $\vecu\in\Z_q^n$, we have
	\begin{align*}
		\widehat{f\cdot g}(\vecu) &= \frac{1}{q^n}\sum_{\veca\in\Z_q^n}f(\veca)\cdot g(\veca)\cdot\overline{\omega^{\vecu^\top\veca}} \\
		&= \frac{1}{q^n}\sum_{\veca\in\Z_q^n} \left(\sum_{\vecu'\in\Z_q^n}\widehat{f}(\vecu')\cdot\omega^{\vecu'^\top\veca}\right)\cdot g(\veca)\cdot\overline{\omega^{\vecu^\top\veca}} \\
		&= \sum_{\vecu'\in\Z_q^n}\widehat{f}(\vecu')\cdot \left(\frac{1}{q^n}\sum_{\veca\in\Z_q^n}g(\veca)\cdot \overline{\omega^{(\vecu-\vecu')^\top\veca}}\right) \\
		&= \sum_{\vecu'\in\Z_q^n}\widehat{f}(\vecu')\cdot\widehat{g}(\vecu-\vecu')\, .
	\end{align*}
	Next, for every $h\in[n]$,
	\begin{align}
		\sum_{\substack{\vecu\in\Z_q^n\\ \|\vecu\|_0=h}}\widehat{f\cdot g}(\vecu) &= \sum_{\substack{\vecu\in\Z_q^n\\ \|\vecu\|_0=h}}\sum_{\vecu'\in\Z_q^n}\widehat{f}(\vecu')\cdot\widehat{g}(\vecu-\vecu') \, .
        \label{eq:FourierProduct}
    \end{align}
    Letting $\vecw=\vecu-\vecu'$ and switching the order of the summations, the right-hand side of \eqref{eq:FourierProduct} becomes
    \begin{align*}
		\sum_{\vecu'\in\Z_q^n}\sum_{\substack{\vecw\in\Z_q^n\\ \|\vecw+\vecu'\|_0=h}}\widehat{f}(\vecu')\cdot\widehat{g}(\vecw) \, ,
	\end{align*}
	which, after renaming variables, proves the furthermore part of the lemma.
\end{proof}

{
Next we state a hypercontractivity statement from \cite{o2014analysis}. Let $(\Omega,\pi)$ be a finite probability space with $|\Omega|\ge 2$ and assume $\pi$ has full support. We denote by $L^2(\Omega,\pi)$ the inner product space of square-integrable functions $\Omega\to \mathbb{R}$ with inner product $\langle f,g\rangle = \mathbb{E}_{x\sim \pi}[f(x) g(x)]$.
 
\begin{lemma}\cite[Chapter 10, General Hypercontractivity Theorem, page 283]{o2014analysis}\label{lem:odonnell}~\\
Let $(\Omega_1,\pi_1),\dots,(\Omega_n,\pi_n)$ be finite probability spaces, in each of which every outcome has probability at least $\lambda$. Let $f\in L^2(\Omega_1\times \cdots \times \Omega_n,\pi_1\otimes\cdots\otimes\pi_n)$. Then for any $p'>2$, and $0\le \rho \le \frac{1}{\sqrt{p'-1}}\lambda^{1/2-1/{p'}}$, \[\Vert T_\rho f\Vert_2 \le \Vert f\Vert_{p}\, , \]where $p$ is the Hölder's conjugate of $p'$, and $T_\rho$ is the noise operator defined by \[T_\rho f(\vecx)=\sum_{\vecu\in\Z_q^n}\widehat{f}(\vecu)\rho^{\|\vecu\|_0}\omega^{\vecu^\top\vecx}\, .\]
\end{lemma}

We now state the following consequence in our language:

\begin{lemma}\label{lem:hypercontractivity}
Let $f:\Z_q^n\to\mathbb{R} \in L^2(\Z_q^n,\mathsf{Unif}(\Z_q^n))$. Then for any $1<p<2$, and  $0\le \rho \le {\sqrt{p-1}}\cdot (1/q)^{1/p-1/2}$,    \[\Vert T_\rho f\Vert_2 \le \Vert f\Vert_{p}\, .\]
\end{lemma}

\begin{proof}
The lemma follows from \cref{lem:odonnell} by letting $\Omega_i = \mathbb{Z}_q$ and $\pi_i$ be the uniform distribution on $\Omega_i$ yielding $\lambda=1/q$ and substituting 
$p' = p/(p-1)$.
\end{proof}
}

Next, we prove the following consequence of the hypercontractivity theorem.

\begin{lemma}\label{lem:hyper}
	For every $q\in \N$, there exists $\zeta_q$ such that for every $f:\Z_q^n\to[-1,1]$ and $B=\{\veca\in\Z_q^n\, |\, f(\veca)\neq0\}$, the following holds: If $|B|\geq q^{n-b}$ for some $b\in\N$, then for every $\vecv\in\Z_q^n$ and every $h\in\{1,\dots,4b\}$, we have
	\[
	\frac{q^{2n}}{|B|^2}\sum_{\substack{\vecu\in\Z_q^n\\\|\vecu+\vecv\|_0=h}}|\widehat{f}(\vecu)|^2\leq\left(\frac{\zeta_q\cdot b}{h}\right)^h \, .
	\]
\end{lemma}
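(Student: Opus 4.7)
The plan is to apply the hypercontractivity inequality (Lemma~\ref{lem:hypercontractivity}) after a Fourier shift that eliminates $\vecv$. Concretely, I would define $g:\Z_q^n\to\Comp$ by $g(\vecx)=f(\vecx)\cdot\omega^{\vecv^\top\vecx}$. Since $|\omega|=1$, the function $g$ is still supported exactly on $B$ and satisfies $|g(\vecx)|\leq 1$ for all $\vecx$, while a direct calculation of Fourier coefficients shows $\widehat{g}(\vecu)=\widehat{f}(\vecu-\vecv)$. Reindexing the sum via $\vecu\mapsto\vecu-\vecv$ then gives
\[
\sum_{\vecu:\|\vecu+\vecv\|_0=h}|\widehat{f}(\vecu)|^2 \;=\; \sum_{\vecu:\|\vecu\|_0=h}|\widehat{g}(\vecu)|^2,
\]
so it suffices to bound the right-hand side, reducing the problem to the case $\vecv=0$.

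For a parameter $p\in(1,2)$ to be chosen later, set $\rho=(p-1)^{-1/2}(1/q)^{1/2-1/p}$ so that Lemma~\ref{lem:hypercontractivity} gives $\|T_\rho g\|_2\leq\|g\|_p$. The Fourier coefficients of $T_\rho g$ equal $\rho^{\|\vecu\|_0}\widehat{g}(\vecu)$, so Parseval (Lemma~\ref{prop:parseval}), together with dropping the contributions of Fourier levels other than $h$, yields
\[
q^n\rho^{2h}\sum_{\|\vecu\|_0=h}|\widehat{g}(\vecu)|^2 \;\leq\; q^n\sum_\vecu\rho^{2\|\vecu\|_0}|\widehat{g}(\vecu)|^2 \;=\; \|T_\rho g\|_2^2 \;\leq\; \|g\|_p^2.
\]
The pointwise bound $|g|\leq 1$ on $B$ gives $\|g\|_p^p=\sum_\vecx|g(\vecx)|^p\leq|B|$ and hence $\|g\|_p^2\leq|B|^{2/p}$. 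Multiplying by $q^{2n}/|B|^2$, using the hypothesis $|B|\geq q^{n-b}$ (so that $|B|^{2/p-2}\leq q^{(n-b)(2/p-2)}$ since $2/p-2<0$), and expanding $\rho^{-2h}=(p-1)^h q^{-h(2/p-1)}$, a routine computation yields
\[
\frac{q^{2n}}{|B|^2}\sum_{\|\vecu\|_0=h}|\widehat{g}(\vecu)|^2 \;\leq\; (p-1)^h\cdot q^{-n+2b+h+(2/p)(n-b-h)}.
\]

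The final step is to optimize $p$, mirroring the classical proof of the level-$h$ (Bonami--Beckner) inequality on the Boolean hypercube. The correct choice is $p-1=\Theta(h/(b\ln q))$: with this $p$, a first-order Taylor expansion of $2/p$ around $p=1$ shows the exponent $-n+2b+h+(2/p)(n-b-h)$ reduces to $O(h/\ln q)$ (the leading-order $n$-dependent terms cancel), so that $q^{-n+2b+h+(2/p)(n-b-h)}$ contributes only a factor $e^{O(h)}$. The remaining factor $(p-1)^h=(\Theta(h/(b\ln q)))^h$ is then inverted into the numerator, producing $(\Theta(b\ln q)/h)^h$, which gives the claimed bound $(\zeta b/h)^h$ for a suitable constant $\zeta=O(\ln q)$ (treated as a constant since $q$ is fixed). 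The main obstacle is the delicate algebra of this optimization: one must track the second-order Taylor remainders in $2/p$, verify the cancellation of $n$-dependent contributions to the $q$-exponent, and check that the constraint $h\leq 4b$ places $p-1$ in the valid range $(0,1)$ so that hypercontractivity may be invoked and higher-order corrections absorbed into $\zeta$.
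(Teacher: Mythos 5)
Your overall strategy — phase-shift to eliminate $\vecv$, invoke hypercontractivity with a carefully chosen $p$ near $1$, drop all but level $h$, and bound $\|g\|_p$ by the support size — is exactly the paper's. But there is a genuine gap in how you apply hypercontractivity, and it breaks the argument. You combine Parseval in the unnormalized form $\|T_\rho g\|_2^2 = q^n\sum_\vecu\rho^{2\|\vecu\|_0}|\widehat{g}(\vecu)|^2$ with the bound $\|T_\rho g\|_2^2\le\|g\|_p^2\le|B|^{2/p}$, reading $\|\cdot\|_p$ as $\big(\sum_\vecx|g(\vecx)|^p\big)^{1/p}$. The hypercontractivity theorem, however, is a statement about \emph{expectation} norms: $\Exp_\vecx[|T_\rho g|^2]^{1/2}\le\Exp_\vecx[|g|^p]^{1/p}$. (The paper's own proof silently uses the normalized version; observe the line $\|f\|_p^2=(q^{-n}\sum_\vecx|f(\vecx)|^p)^{2/p}$.) The two readings differ by a factor of $q^{n(2/p-1)}$: your chain yields $\frac{q^{2n}}{|B|^2}\sum_{\|\vecu\|_0=h}|\widehat{g}(\vecu)|^2\le\frac{q^n}{\rho^{2h}}|B|^{2/p-2}$, whereas the normalized version gives $\frac{1}{\rho^{2h}}(q^n/|B|)^{2-2/p}$, i.e.\ your bound carries a spurious extra factor of $q^{n(2/p-1)}$, which is $\approx q^n$ when $p$ is near $1$. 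This is precisely the $-n+(2/p)n$ piece of your exponent, and it does \emph{not} cancel: the coefficient of $n$ in $-n+2b+h+(2/p)(n-b-h)$ is $-1+2/p$, which vanishes only at $p=2$ (where $\rho=1$ and the inequality is vacuous) and is $\Theta(1)$ in the regime $p-1=\Theta(h/b)\ll 1$ that your optimization needs. At $p=1$ the exponent is $n-h$, not $0$, so "the leading-order $n$-dependent terms cancel" is simply false in your calculation. The normalized computation removes the $n$-dependence identically, leaving the clean exponent $(1-2/p)h+(2-2/p)b$, after which the optimization over $p$ is the easy part.

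Two further points, secondary but relevant. First, Lemma~\ref{lem:hypercontractivity} as stated contains a typo — the admissible noise range should be $\rho\le\sqrt{p-1}\,(1/q)^{1/2-1/p}$, not $\rho\le(p-1)^{-1/2}(1/q)^{1/2-1/p}$ (a noise parameter must go to $0$, not $\infty$, as $p\to 1$). You inherit the typo, which is why you expand $\rho^{-2h}=(p-1)^{+h}q^{\cdots}$ and then have to hand-wave about "inverting into the numerator''; with the correct $\rho$ one gets $\rho^{-2h}=(p-1)^{-h}q^{\cdots}$, so that for $p-1=h/(\zeta b)$ the factor $(p-1)^{-h}=(\zeta b/h)^h$ \emph{is} the desired main term and the remaining task is to show the $q$-exponent is $\le 0$ — which is how the paper closes out the proof with $\zeta=6$. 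Second, the lemma's $\zeta$ is quantified before $q$ and must be absolute; $\zeta=O(\ln q)$ does not satisfy the statement, and indeed is unnecessary — in the corrected calculation the $q$-dependence never needs to be absorbed into $\zeta$.
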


\begin{proof}

    We will prove the lemma for $\zeta_q = 6 q^{4/3}$.
	Let $\vecv = 0^n$ and $f:\Z_q^n\to[-1,1]$. We choose $p = 1 + \frac{h}{6 b}$ and $\rho = {\sqrt{p-1}}\cdot (1/q)^{1/p-1/2}$. Assume $|B|\geq q^{n-b}$.
	
	The choices of $p$ and $\rho$ satisfy the preconditions of \cref{lem:hypercontractivity}, and so applying \cref{lem:hypercontractivity} we have
	\[
	\sum_{\vecu\in\Z_q^n}\rho^{2\|\vecu\|_0}|\widehat{f}(\vecu)|^2 = \|T_\rho f\|_2^2  \leq \|f\|_p^2 = \left(\frac{1}{q^n}\sum_{\vecx\in\Z_q^n}|f(\vecx)|^{p}\right)^{2/p} \leq \left(\frac{|B|}{q^n}\right)^{2/p},
	\]
    where the last inequality uses the fact that $f(x) \in [-1,1]$ for all $x$.
    
	Now, suppose $h\in \{1,\dots, 4b\}$. Noting that $\rho^{2h}\sum_{\substack{\vecu\in\Z_q^n\\\|\vecu\|_0=h}}|\widehat{f}(\vecu)|^2 \leq \sum_{\vecu\in\Z_q^n} \rho^{2\|\vecu\|_0} |\widehat{f}(\vecu)|^2$, we have 
	\begin{align}
		\frac{q^{2n}}{|B|^2}\sum_{\substack{\vecu\in\Z_q^n\\\|\vecu\|_0=h}}|\widehat{f}(\vecu)|^2 &\leq \frac{1}{\rho^{2h}}\left(\frac{q^n}{|B|}\right)^{2-2/p} \nonumber \\
		&\leq \frac{1}{\rho^{2h}} q^{(2-2/p)b} \nonumber\\
		&= \frac{q^{\left(-1+\frac{2b}{h}+\frac{2}{p}-\frac{2b}{hp}\right)h}}{(p-1)^h} \nonumber\\
		&= \left(\frac{6 b}{h} \cdot  q^{-1+\frac{2b}{h}+\frac{2}{p}-\frac{2b}{hp}}\right)^h, \label{eq:qbratio}
	\end{align}
	where the first equality above is by our choice of $\rho$ and the second by our choice of $p$.

	Observe that the exponent of $q$ in the final expression above can be bounded as follows:
	\begin{align}
		-1 + \frac{2b}{h}+\frac{2}{p}-\frac{2b}{ph} &= -1 + \frac{2b}{h} + \frac{2\left(1-\frac{b}{h}\right)}{1+\frac{h}{6 b}} \nonumber\\
		&= \left(1+\frac{h}{6 b}\right)^{-1} \left(\frac{2}{6} - \frac{h}{6 b} + 1\right) \nonumber\\
		& \le 4/3\, .\label{eq:qexp}
	\end{align}


    The expression from \eqref{eq:qbratio} can now be bounded from above by $\left(\frac{\zeta_q b}{h}\right)^h$, where $\zeta_q = 6 q^{4/3}$, implying that
	\[
	\frac{q^{2n}}{|B|^2}\sum_{\substack{\vecu\in\Z_q^n\\\|\vecu\|_0=h}}|\widehat{f}(\vecu)|^2 \leq \left(\frac{\zeta_q b}{h}\right)^h\, .
	\]
    
	In order to extend the above to sums over translational shifts, i.e., $\vecu$ such that $\|\vecu+\vecv\|_0 = h$ for an arbitrary $\vecv\in\Z_q^n$, consider the function $g(\vecx)=f(\vecx)\cdot{\omega^{\vecx^\top\vecv}}$. We have for every $\vecx\in\Z_q^n$,
	\[
	\widehat{g}(\vecu)=q^{-n}\sum_{\veca\in\Z_q^n}g(\veca)\overline{\omega^{\veca^\top\vecu}}=q^{-n}\sum_{\veca\in\Z_q^n}f(\veca)\overline{\omega^{\veca^\top(\vecu-\vecv)}}=\widehat{f}(\vecu-\vecv) \, .
	\]
	By applying the above analysis on $g$, we have
	\[
	\frac{q^{2n}}{|B|^2}\sum_{\substack{\vecu\in\Z_q^n\\\|\vecu+\vecv\|_0=h}}|\widehat{f}(\vecu)|^2 = \frac{q^{2n}}{|B|^2}\sum_{\substack{\vecu\in\Z_q^n\\\|\vecu\|_0=h}}|\widehat{g}(\vecu)|^2 \leq\left(\frac{\zeta_q b}{h}\right)^h,
	\]
	as desired.
\end{proof}

\section{Communication problems} \label{sec:comm}

Throughout this paper, we will be dealing with $k$-hypermatchings on vertices from the set $[n]$, i.e., a set of edges $e_1,\ldots,e_m$ where $e_i \subseteq [n]$, $|e_i| = k$ and $e_i \cap e_j = \emptyset$ for every $i \ne j \in [m]$. We let $e_i = \{(e_i)_1,\ldots,(e_i)_k\}$. The direct encoding of a matching $M = \{e_1,\ldots,e_m\}$ will be given by a \emph{hypermatching matrix} $A \in \{0,1\}^{km \times n}$ where $A_{k(i-1)+\ell,j} = 1$ if and only if $j = (e_i)_\ell$, for $i\in [m], \ell\in [k]$. (Thus, $A$ is a matrix with row sums being $1$ and column sums being at most $1$. Note that $A$ also depends on the ordering of $e_1, e_2, \dots, e_m$ as well as the ordering of the nodes within each $e_i$.) 

We will also find it convenient to refer to edges by their indicator vectors in $\Z_q^n$. For an edge $e_i$, we will use the boldface notation $\vece_i \in \Z_q^n$ to refer to this vector, i.e., $(\vece_i)_j = 1$ if $j = (e_i)_\ell$ for some $\ell \in [k]$, while $(\vece_i)_j = 0$ otherwise.

We are now ready to define the communication game, which we term the Implicit Randomized Mask Detection (IRMD) problem:

\begin{definition}[Implicit Randomized Mask Detection (IRMD) Problem]\label{def:IRMD}
	
	Let $q,k,n,T\in\mathbb{N}$ and $\alpha\in(0,1/k)$ be parameters. Let $\cD_Y$ and $\cD_N$ be distributions over $\Z_q^k$.
	In the $(\cD_Y,\cD_N)$-$\IRMD_{\alpha,T}$ game, there are $T$ players and a hidden $q$-coloring encoded by a random $\vecx^*\in\Z_q^n$. The $t$-th player has two inputs: (a.) $A_t\in\{0,1\}^{\alpha kn\times n}$, the hypermatching matrix (see above) corresponding to a random hypermatching $M_t$ of size $\alpha n$ and (b.) a vector $\vecz_t\in\Z_q^{\alpha kn}$ that can be generated from one of two different distributions:
	\begin{itemize}
		\item (Yes) $\vecz_t=A_t \vecx^*+\vecb_t$ where $\vecb_t\in\Z_q^{\alpha kn}$ is of the form $\vecb_t=(\vecb_{t,1},\dots,\vecb_{t,\alpha n})$ and each $\vecb_{t,i}\in\Z_q^k$ is sampled from $\cD_Y$.
		\item (No) $\vecz_t=A_t \vecx^*+\vecb_t$ where $\vecb_t\in\Z_q^{\alpha kn}$ is of the form $\vecb_t=(\vecb_{t,1},\dots,\vecb_{t,\alpha n})$ and each $\vecb_{t,i}\in\Z_q^k$ is sampled from $\cD_N$.
	\end{itemize}

    This is a one-way game where the $t$-th player broadcasts a message to all other players after receiving messages from players $1,\ldots,t-1$. 
    The goal is for the $T$-th player to be able to decide whether the $\{\vecz_t\}$ have been chosen from the ``Yes'' distribution or ``No'' distribution. The advantage of a protocol (in which the $T$-th player outputs either ``Yes'' or ``No'') is defined as $|\Pr_{\cD_Y}[\text{the }T\text{-th player outputs Yes}]-\Pr_{\cD_N}[\text{the }T\text{-th player outputs Yes}]|$.
\end{definition}

\begin{remark}
	We remark that the inputs to the $T$ players in the \IRMD\ problem can be viewed as a stream $\vecsigma = \vecsigma^{(1)} \circ \cdots \circ \vecsigma^{(T)}$, where the $t$-th player's input $(A_t,\vecz_t)$ is converted to a stream $\vecsigma^{(t)}=(\sigma^{(t)}(i) | i \in [\alpha n])$ where the elements of the stream are of the form $\sigma^{(t)}(i) = (\vecj^{(t)}(i),\vecz^{(t)}(i))$ with $\vecj^{(t)}(i) \in [n]^k$ is a sequence of $k$ distinct elements of $[n]$ and $\vecz^{(t)}(i) \in \Z_q^k$. This ``streaming'' representation will be used when we relate the complexity of \IRMD\ to the approximability of various $\maxF$ problems in \cref{thm:main-technical}. 
\end{remark}

We suppress the subscripts $\alpha$ and $T$ when they are clear from context. Furthermore, 
we simply use $\IRMD$ to refer to $(\cD_Y,\cD_N)$-$\IRMD$ with $\cD_Y$ being the uniform distribution over $\{0^k,1^k,\dots,(q-1)^k\}$ and $\cD_N$ being the uniform distribution over $\Z_q^k$. 
The following theorem shows that in this special case, the $\IRMD$ problem requires linear communication. 
We remark that the theorem could hold for other pairs of distributions and leave the question of when such a lower bound holds as an interesting open problem.

\begin{theorem}[Linear lower bound for $\IRMD$]\label{thm:main IRMD}
	For every $q,k\in\mathbb{N}$ and $\delta\in(0,1/2)$, there exists $\alpha_0 \in (0,1/k)$ such that for every $\alpha\in(0,\alpha_0]$ and $T\in\mathbb{N}$, there exists $n_0 \in \mathbb{N}$ and  $\tau\in(0,1)$ such that the following holds. If $\cD_Y$ is the uniform distribution over $\{0^k,1^k,\dots,(q-1)^k\}$, $\cD_N$ is the uniform distribution over $\Z_q^k$, and $n \geq n_0$ then every protocol for $(\cD_Y,\cD_N)$-$\IRMD_{\alpha,T}$ with advantage $\delta$ requires $\tau n$ bits of communication.
\end{theorem}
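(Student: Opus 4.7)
The plan is to follow the general framework of Kapralov--Krachun, suitably adapted to the $k$-uniform hypergraph setting with $\Z_q$-labels, with the folded problem $\IFRMD$ serving as the technical workhorse. First I would reduce $\IRMD$ to its folded variant $\IFRMD$, defined so that an edge-label $\vecz = (z^{(1)},\ldots,z^{(k)}) \in \Z_q^k$ is replaced by $\tilde \vecz = (z^{(2)}-z^{(1)},\ldots,z^{(k)}-z^{(1)}) \in \Z_q^{k-1}$. Under this deterministic map, the $\yes$-distribution of $\IRMD$ (where each $\vecb_{t,i}$ is a constant vector $c \cdot 1^k$) becomes an $\IFRMD$ instance whose labels satisfy $\tilde z_{t,i,\ell} = x^*_{(e_{t,i})_\ell} - x^*_{(e_{t,i})_1}$, while the $\no$-distribution becomes uniform on $\Z_q^{(k-1)\alpha n}$. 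Because folding commutes with the protocol (each player can fold its own input before processing), the data processing inequality from \cref{prop:tvd properties} reduces \cref{thm:main IRMD} to the same lower bound for $\IFRMD$.

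Next, I would set up the Fourier-analytic machinery for $\IFRMD$ exactly as in the paper's roadmap. For a fixed transcript prefix, let $B_t \subseteq \Z_q^n$ denote the set of hidden $\vecx^*$ consistent with the matchings $M_1,\ldots,M_t$ and messages $c_1,\ldots,c_t$ under the $\yes$-distribution, and define $C$-boundedness of $B_t$ through the requirement that the $\ell_1$-mass of its level-$w$ Fourier coefficients is at most $C^w \cdot U(w/2)$, where $U(w) = U_{|B_t|,n}(w)$ is the standard Parseval-type upper bound (see \cref{prop:parseval,lem:hyper}). The factor $w/2$ rather than $w \cdot (k-1)/k$ is what will ultimately convert the trivial $\sqrt{n}$ bound into a linear one, and obtaining it requires randomizing the folding (equivalently, passing between $\IRMD$ and $\IFRMD$ in the base case) so that the support of $B_t$ is effectively hit by each edge in two free coordinates rather than $k-1$.

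The core of the argument is then an induction showing that with high probability over $M_t$, the set $B_t = B_{t-1} \cap A_t$ remains $C_t$-bounded, where $A_t$ is the ``instantaneous'' constraint set imposed by the $t$-th matching and message. I would isolate this as an induction lemma (the paper's \cref{lem:induction step}) asserting that the intersection of a bounded set with a random ``matching-induced'' bounded set is again bounded. The proof expands the Fourier spectrum of the indicator of $B_t$ via \cref{lem:convolution} into a double sum over Fourier coefficients of $B_{t-1}$ and of $A_t$; bounding the resulting terms reduces to a combinatorial estimate on the distribution of the number of hyperedges of a random $k$-hypermatching that have non-zero mod-$q$ intersection with a fixed vector $\vecu$, controlled via the submartingale tail bound \cref{lem:our-azuma}. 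The message length $s$ enters only through an extra factor of $2^s$ from a union bound over possible messages, and this is where the $\Omega(n)$ space bound emerges: the boundedness hypothesis survives as long as $s = \tau n$ for a small enough constant $\tau = \tau(q,k,\alpha,T,\delta)$.

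Finally, once $B_T$ is shown to be $C$-bounded for a constant $C$, I would close the loop by showing that the protocol's output distribution under $\yes$ and $\no$ are within $o(1)$ in total variation: using \cref{lem:statistical_test} and \cref{lem:KKsubstitutionlemma}, this reduces to bounding $\|\vecz_T^{\yes} - \vecz_T^{\no}\|_{tvd}$ conditioned on a bounded $B_{T-1}$, which follows from expressing the distribution's squared $\ell_2$-distance to uniform via \eqref{eq:dist} and the Fourier-mass bound implicit in $C$-boundedness. The main obstacle throughout will be managing the combinatorial parameters in the induction step, in particular ensuring that the extra degrees of freedom introduced by $k > 2$ and $q > 2$ (and the auxiliary level parameters tracking how hyperedges interact with the Fourier support) do not degrade the crucial $w/2$ factor, while verifying that the total number of such parameters stays constant independent of $k$ so that the induction closes.
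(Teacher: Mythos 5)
Your proposal aims at the same overall structure as the paper: prove the $\IFRMD$ lower bound (\cref{thm:main IFRMD}) via the Fourier-analytic bounded-set induction, and then derive the $\IRMD$ lower bound from it. Your sketch of the $\IFRMD$ machinery is broadly aligned with the paper's (one minor slip: the combinatorial control in the induction step comes from direct counting over random hypermatchings, \cref{lem:q-bound} and \cref{lem:q-simplify}, not from the submartingale inequality \cref{lem:our-azuma}, which only appears in the streaming reduction \cref{thm:main-technical}). However, the reduction in your first paragraph --- which is in fact the \emph{entire} content of the paper's proof of \cref{thm:main IRMD}, given \cref{thm:main IFRMD} --- is stated in the wrong direction and does not close the argument as written.

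You correctly observe that the deterministic fold sends the $\IRMD$ \yes\ (resp.\ \no) distribution onto the $\IFRMD$ \yes\ (resp.\ \no) distribution. But appealing to the data-processing inequality on this fold does not bound the advantage of an $\IRMD$ protocol: such a protocol receives the \emph{unfolded} labels, and nothing you have said yet rules out that it exploits the extra coordinate. The reduction must be run in the opposite direction, as in \cref{lem:reduction IFRMD to IRMD}: given an $\IFRMD$ input $(A_{t,\vecc_t},\vecw_t)$, player $t$ privately samples a fresh uniform offset $a_{t,i}\in\Z_q$ per hyperedge and \emph{unfolds} into an $\IRMD$ label $\vecz_t$ via $(z_t)_{(i-1)k+j}=(w_t)_{(i-1)k+j}+a_{t,i}$ for $j\in[k-1]$ and $z_{t,ik}=a_{t,i}$. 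The crucial check --- which replaces your data-processing appeal --- is that this reconstructed input, jointly with $\vecx^*$, has \emph{exactly} the $\IRMD$ \yes\ (resp.\ \no) distribution when the $\IFRMD$ input is \yes\ (resp.\ \no). Consequently any $s$-bit $\IRMD$ protocol with advantage $\delta$ simulates as an $s$-bit $\IFRMD$ protocol with advantage $\delta$, and \cref{thm:main IRMD} follows by contraposition from \cref{thm:main IFRMD}. As written, ``folding commutes with the protocol'' is not a reduction; the argument needs the unfold-and-match-distributions step spelled out.
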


\cref{thm:main IRMD} is proved at the end of this section. Its proof uses \cref{thm:main IFRMD} and \cref{lem:reduction IFRMD to IRMD} which we state below.

We prove the hardness of $\IRMD$ by showing the hardness of a \emph{folded} version of $\IRMD$ defined below.
In the folded version of the communication problem, we augment each hyperedge with an associated \emph{center} $c \in e$. Given a $k$-hypermatching $M = (e_1,\ldots,e_m)$ and a sequence of centers $\vecc = (c_1,\ldots,c_m)$ with $c_i \in e_i = \{(e_i)_1,\ldots,(e_i)_k\}\subseteq [n]$, the $\vecc$-centered folded encoding of $M$ is the matrix $A_{\vecc}\in \Z_q^{(k-1)m \times n}$ whose columns are indexed by the vertex set $[n]$ and rows are indexed by pairs $(i,\ell)$ with $i \in [m]$ and $\ell \in [k]\setminus \{j\}$ where $j \in [k]$ is the index of the center, i.e., $(e_i)_j = c_i$ with entries of $A_{\vecc}$ given by
\begin{equation}
(A_\vecc)((i,\ell),u)  = \left\{\begin{array}{ll}
	1     &  \text{ if }u = (e_i)_\ell \\
	-1     & \text{ if }u=c_i \\
	0       & \text{ otherwise.}
\label{eqn:Ac-defn}\end{array}
\right.
\end{equation}

See \cref{fig:folded matrices p} for an example. We define the folded version of the $\IRMD$ problem below (note that all the arithmetic is over $\Z_q$):

\begin{definition}[Implicit Folded Randomized Mask Detection (IFRMD) Problem]\label{def:IFRMD}
	Let $q,k,n,T\in\mathbb{N}$ and $\alpha\in(0,1/k)$ be parameters.
	In the \IFRMD\ game, there are $T$ players and a hidden $q$-coloring encoded by a random $\vecx^*\in\Z_q^n$. The $t$-th player has a pair of inputs $(A_{t,\vecc_t},\vecw_t)$ given as follows. $A_{t,\vecc_t}\in\Z_q^{\alpha(k-1)n\times n}$ gives a $\vecc_t$-centered folded encoding of a hypermatching $M_t$ of size $\alpha n$ where $M_t$ is chosen uniformly at random and $\vecc_t$ is chosen uniformly from all possible centers for $M_t$. And $\vecw_t\in\Z_q^{\alpha(k-1)n}$ is a vector that can be generated from two different distributions:
	\begin{itemize}
		\item (\yes) $\vecw_t=A_{t,\vecc_t} \vecx^*$.
		\item (\no) $\vecw_t$ is uniform over $\Z_q^{\alpha(k-1)n}$.
	\end{itemize}

    This is a one-way game where the $t$-th player broadcasts a message to all other players after receiving messages from players $1,\ldots,t-1$. 
    The goal is to decide (by the $T$-th player) whether the $\{\vecw_t\}$ are coming from the \yes\ distribution or the \no\ distribution. The advantage of a protocol is defined as $$\left|\Pr_{(A_{t,c_t},\vecw_t)_{t\in T} \sim \yes}[\text{the }T\text{-th player outputs Yes}]-\Pr_{(A_{t,c_t},\vecw_t)_{t\in T} \sim \no}[\text{the }T\text{-th player outputs Yes}]\right|.$$
\end{definition}

The main technical theorem of this paper is the following $\Omega(n)$ communication lower bound for $\IFRMD$.

\begin{theorem}[Linear lower bound for $\IFRMD$]\label{thm:main IFRMD}
	For every $q,k\in\mathbb{N}$ and $\delta\in(0,1/2)$, there exists $\alpha_0>0$ such that for every $\alpha\in(0,\alpha_0]$ and $T\in\mathbb{N}$, there exists $n_0\in\N$ and $\tau\in(0,1)$ such that the following holds. When $n\geq n_0$, any protocol for $\IFRMD$ with advantage $\delta$ requires $\tau n$ bits of communication.
\end{theorem}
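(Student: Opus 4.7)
The plan is to follow the Kapralov--Krachun framework, adapted to the hypermatching/modular setting. For a transcript $(M_1,\vecc_1,\vecw_1,c_1),\dots,(M_t,\vecc_t,\vecw_t,c_t)$ with $c_i$ the message sent by the $i$th player, I would associate to each round $t$ the set $B_t \subseteq \Z_q^n$ of ``candidate hidden colorings'' $\vecx^*$ that are consistent with $A_{1,\vecc_1}\vecx^*,\dots,A_{t,\vecc_t}\vecx^*$ equaling the observed $\vecw_1,\dots,\vecw_t$ (in the \yes\ case) together with the messages $c_1,\dots,c_t$. The key invariant I would maintain is that $B_t$, viewed through its indicator function, is $C_t$-bounded in the Fourier sense of the proof overview: for $|B_t| \ge q^{n-b}$ and every shift $\vecv$, the level-$h$ Fourier mass at Hamming weight $h$ is bounded by $(C_t\cdot b/h)^{h/2}$-type quantities, with the crucial fact that the $U(\cdot)$ envelope is evaluated at $w/2$ rather than $(k-1)w/k$.

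The induction on $t$ proceeds via \cref{lem:induction step}, which says that intersecting an already bounded set $B_{t-1}$ with the constraint set $A_t$ arising from a fresh, independently sampled hypermatching $M_t$ preserves boundedness (with a mild blow-up in $C$), with high probability over $M_t$. The base case requires showing that the constraint set of a single round is itself bounded; here I would use \cref{lem:hyper} (our consequence of hypercontractivity over $\Z_q$) applied to the Fourier transform of the indicator of ``labels that certify $A_{t,\vecc_t}\vecx^*=\vecw_t$.'' To get the optimal exponent of $w/2$ in the base case, the centers $\vecc_t$ should be re-randomized inside the analysis; this is where implicitly passing between \IRMD\ and \IFRMD\ matters, since a uniformly random center lets one symmetrize the Fourier support so that the effective Hamming-weight argument is halved. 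The inductive step itself combines \cref{lem:convolution}, a combinatorial estimate on the number of hyperedges of $M_t$ whose intersection pattern with a fixed Fourier character $\vecu\in\Z_q^n$ contributes non-trivially (obtained via \cref{lem:our-azuma} applied to the submartingale counting such hyperedges), and the hypothesis that $B_{t-1}$ is $C_{t-1}$-bounded.

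Given that $B_T$ is $C$-bounded for some constant $C=C(q,k,T)$, I would close the argument by showing that conditioned on any fixed transcript of total length $\le \tau n$, the total variation distance between the \yes\ and \no\ distributions over $\vecw_T$ (or, equivalently, between the induced distributions on the final player's view) is $o(1)$. This uses \eqref{eq:dist}: the squared $\ell_2$-distance from uniform is controlled by the Fourier mass of the indicator of $B_T$ away from $\vec 0$, which the boundedness estimate summed over levels $h=1,\dots,4b$ makes $o(1)$ as long as the effective ``budget'' $b$ (the communication divided by $\log q$, suitably adjusted) is at most a small enough constant times $n$. Applying \cref{lem:statistical_test} and \cref{lem:KKsubstitutionlemma} to iterate this bound across the $T$ rounds then yields the claimed advantage bound.

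The main obstacle I anticipate is proving the induction lemma itself in the hypermatching, modulo-$q$ regime: the combinatorial quantity one must control is the distribution of hyperedges $e\in M_t$ whose indicator $\vece$ has nonzero inner product mod $q$ (rather than merely odd parity, as in \cite{KK19}) with a fixed character $\vecu$, and simultaneously tracking the Fourier shift $\vecv$ coming from previous constraints. Managing the interaction of this $q$-ary combinatorics with the Hamming-weight decomposition inside \cref{lem:convolution}, without losing the factor of $2$ in the $U(w/2)$ envelope, is the delicate point; here I would isolate a constant number of parameters (analogous to those in \cite[KK19]{KK19} but enlarged to account for the center vertex and the residue class of the intersection) and verify by a careful but routine case analysis that each step of the recursion preserves boundedness with only a multiplicative constant loss.
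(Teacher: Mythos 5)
Your proposal matches the paper's approach closely: the posterior sets $B_t$, the $(C,s)$-boundedness invariant with the $U(w/2)$ envelope, re-randomization of the center $\vecc$ in the base case (via hypercontractivity over $\Z_q$), the induction lemma (\cref{lem:induction step}) for preserving boundedness when intersecting with a fresh matching's constraint set, and the final passage from boundedness to near-uniformity of $A_\vecc \vecx$ combined with the TVD machinery of \cref{lem:statistical_test} and \cref{lem:KKsubstitutionlemma} are all exactly the ingredients in the paper's proof. One small misattribution worth flagging: the combinatorial estimate on the distribution of hyperedges of a random hypermatching with nonzero mod-$q$ inner product against a fixed character $\vecu$ (the quantity $p_q$ of \cref{def:pcomb}, bounded in \cref{lem:q-bound}) is obtained by directly counting hypermatchings with a prescribed intersection pattern, not by the Azuma-style submartingale inequality (\cref{lem:our-azuma}, which the paper invokes only in the streaming-to-communication reduction of \cref{sec:streaming}).
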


An instance of the $\IFRMD$ problem can be viewed as giving $\alpha n T$ constraints  on $n$ variables $X_1,\ldots,X_n$ where each constraint is of the form $(c; i_i,\ldots,i_{k-1}; v_1,\ldots,v_{k-1})$ with $c,i_1,\ldots i_{k-1} \in [n]$ and $v_1,\ldots,v_{k-1} \in \Z_q$ with the constraint requiring 
$\wedge_{j=1}^{k-1} (X_{i_j} - X_c = v_j)$. Thus each instance of $\IFRMD$ specifies an instance of the aforementioned \textsf{Max $k$-LIN}-$\bmod\; q$ problem where $\cD_Y$ is supported on instances which are always satisfiable (by setting $X =\vecx^*$). It turns out $\cD_N$ is supported on roughly random instances and thus it is unlikely to have a solution satisfying more than $q^{-(k-1)}$ fraction of the constraints. (This is implicit in the proof of \cref{thm:main-technical}.) The indistinguishability result in \cref{thm:main IFRMD} thus effectively implies a $q^{-(k-1)}+\epsilon$-inapproximability for this problem. This is formally proved in Example 4 in \cref{ssec:examples}.

The proof of~\cref{thm:main IFRMD} is given in \cref{ssec:proof-IFRMD-thm}. We now establish a reduction from $\IFRMD$ to $\IRMD$ that preserves the communication complexity. By this reduction,~\autoref{thm:main IRMD} will be an immediate corollary of~\autoref{thm:main IFRMD}.

\begin{lemma}\label{lem:reduction IFRMD to IRMD}
	Let $n,k,\alpha$ be the parameters. Suppose there exists a protocol for $\IRMD$ using at most $s$ bits of communication with advantage $\delta$, then there exists a protocol for $\IFRMD$ using at most $s$ bits communication with advantage $\delta$.
\end{lemma}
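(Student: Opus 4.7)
The plan is a local, black-box reduction: each IFRMD player takes its folded input, uses private randomness to inflate it into an IRMD input, then runs the assumed IRMD protocol without any additional communication. Since no extra bits are sent, the message-length bound $s$ carries over verbatim; what remains is to check that this inflation maps the IFRMD \yes\ (resp.\ \no) distribution onto the IRMD \yes\ (resp.\ \no) distribution, so that the advantage $\delta$ is also preserved.

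Concretely, player $t$ receives $(A_{t,\vecc_t},\vecw_t)$ and reads off the hypermatching $M_t$ together with the center $c_i$ of each edge $e_i\in M_t$ from $A_{t,\vecc_t}$: in the block of $k-1$ rows indexed by $e_i$, the unique column carrying entries $-1$ identifies $c_i$, and the columns carrying entries $1$ identify the remaining $k-1$ vertices of $e_i$. The player then draws i.i.d.\ uniform $r_i\in\Z_q$ per edge from its private coins, orders each edge so that $c_i$ sits in the last position, forms the IRMD matching matrix $A_t$, and sets the $i$-th length-$k$ block of $\vecz_t$ to $(w_{t,i,1}+r_i,\ldots,w_{t,i,k-1}+r_i,\,r_i)$. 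The players then simulate the given IRMD protocol on $(A_t,\vecz_t)_{t\in[T]}$ and output its verdict.

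The main check is distributional. In the IFRMD \yes\ case, $w_{t,i,\ell}=x^*_{(e_i)_\ell}-x^*_{c_i}$, and setting $b_{t,i}:=r_i-x^*_{c_i}$ the $i$-th block of $\vecz_t$ equals the $i$-th block of $A_t\vecx^*$ plus the diagonal shift $b_{t,i}\cdot\mathbf{1}^k$; because $r_i$ is uniform and independent of $\vecx^*$ and across edges and players, the resulting $\vecb_t$ is distributed as $\cD_Y=\mathrm{Unif}(\{0^k,\ldots,(q-1)^k\})^{\otimes\alpha n}$ and is independent of $\vecx^*$, which is exactly the IRMD \yes\ distribution. In the IFRMD \no\ case, $\vecw_t$ is uniform on $\Z_q^{\alpha(k-1)n}$ and independent of $\vecx^*$, so each constructed block $(w_{t,i,1}+r_i,\ldots,w_{t,i,k-1}+r_i,r_i)$ is uniform on $\Z_q^k$, independent across edges and of $\vecx^*$, which coincides with the IRMD \no\ distribution in which $\vecz_t=A_t\vecx^*+\vecb_t$ with $\vecb_t\sim\cD_N^{\otimes\alpha n}$.

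No real obstacle is expected; the only point of care is the ordering of vertices within each edge of $A_t$. Placing the IFRMD center at the $k$-th position is canonical, and if the IRMD matching distribution treats within-edge orderings symmetrically one may additionally permute the first $k-1$ positions uniformly at random using private randomness, leaving the distributional match above intact. Consequently, any $s$-bit IRMD protocol achieving advantage $\delta$ yields an $s$-bit IFRMD protocol achieving advantage $\delta$, which is exactly what the lemma asserts.
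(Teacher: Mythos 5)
Your proposal is correct and takes essentially the same route as the paper: extract the matching and the per-edge center from $A_{t,\vecc_t}$, add an independent uniform shift $r_i$ to the $k-1$ folded entries and place $r_i$ in the center position, then run the assumed IRMD protocol, checking that this maps the IFRMD \yes/\no\ distributions onto the IRMD \yes/\no\ distributions (with $b_{t,i}=r_i-x^*_{c_i}$ uniform in the \yes\ case and the whole block uniform in the \no\ case). Your signs are actually the ones consistent with the paper's definition of $A_\vecc$, and the remark about within-edge orderings, while not needed, is a reasonable sanity check.
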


\begin{proof}

	Suppose we have an instance of $\IFRMD$ with input $(A_{t,\vecc_t},\vecw_t)$ to the $t$-th player. We need to transform this to an input $(\Pi_t,\vecz_t)$ to the IRMD problem (while respecting the right distributions). (Furthermore the transformation $(A_{t,\vecc_t},\vecw_t) \mapsto (\Pi_t,\vecz_t)$ should be locally computable by the $t$th player.)
    
    Let $m = \alpha n$.  Let $e^{(t)}_1, e^{(t)}_2, \dots, e^{(t)}_m$ be the hyperedges corresponding to $A_{t,\vecc_t}$. For $i \in [m]$ let us write $e^{(t)}_i = \{(e^{(t)}_i)_1,\ldots, (e^{(t)}_i)_k\} \subseteq [n]$.\footnote{Note that the choice of ordering of vertices within an edge is arbitrary. Altering this will only (simultaneously) permute the rows of $\Pi_t$ and $\vecz_t$.} Further let $j(i)\in [k]$ be the unique index so that $(e_i^{(t)})_{j(i)} = c_{t,i}$.

    For each $t$, the $t$-th player performs the following computations on his/her input:
	\begin{enumerate}
		\item We index the columns of $\Pi_t$  by the vertex set $[n]$ and the rows  by $[m]\times  [k]$. We set $\Pi_t((i,\ell),u) = 1$ if $(e^{(t)}_i)_\ell = u$ and $0$ otherwise. 
		\item For each $i\in[m]$, sample $a_{t,i}\in\Z_q$ uniformly at random.  Again we assume the coordinates of $\vecz_t \in \Z_q^{km}$ are indexed by pairs $(i,\ell) \in [m] \times [k]$. We set $(\vecz_t)(i,\ell) = a_{t,i}$ if $\ell = j(i)$ and $(\vecz_t)(i,\ell) = (\vecw_t)(i,\ell) + a_{t,i}$ otherwise. 
	\end{enumerate}
	We claim that the inputs $(A_t, \vecz_t)$ correspond to an instance of $\IRMD$. It suffices to show that if $(\{(A_{t,\vecc_t},\vecw_t)\}_{t\in[T]},\vecx^*)$ follows the \yes{} (resp. \no{}) distribution of $\IFRMD$, then $(\{(A_t,\vecz_t)\}_{t\in[T]},\vecx^*)$ follows the \yes{} (resp. \no{}) distribution of $\IRMD$.	
	The \no{} case is easy to see: $\Pi_t$ encodes a random $k$-hypermatching of size $m$ and $\vecz_t$ is uniform over $\Z_q^{km}$ since $\vecw_t \in \Z_q^{(k-1)m}$ and $a_{i,t}\in \Z_q$ are uniform and independent of each other and of $A_{t,\vecc_t}$.

	We now turn to the \yes~case. Fix $i \in [m]$. For $\ell = j(i)$, we have
	\begin{align*}
		(\vecz_t)(i,\ell) = a_{t,i} = \vecx^*_{(\vecc_t)_i} + (-\vecx^*_{(\vecc_t)_i} + a_{t,i}) = \vecx^*_{(e^{(t)}_i)_\ell} + (-\vecx^*_{(\vecc_t)_i} + a_{t,i}).
	\end{align*}
    For $\ell \in [k]\setminus \{j(i)\}$, we have 
    \begin{align*}
		(\vecz_t)(i,\ell) &= (\vecw_t)(i,\ell) + a_{t,i} = \vecx^*_{(e_i^{(t)})_\ell} - \vecx^*_{(\vecc_t)_i)} + a_{t,i} = \vecx^*_{(e_i^{(t)})_\ell} + (-\vecx^*_{(\vecc_t)_i} + a_{t,i}),
	\end{align*}
    where the second equality uses $\vecw = A_{t,\vecc_t}\cdot \vecx^*$ in the \yes\ case.

	Thus, it follows that $\vecz_t = \Pi_t \vecx^* + \vecb_t$, where $\vecb_t = (\vecb_{t,1}, \dots, \vecb_{t,m})$ is given by $\vecb_{t,i} = (-\vecx^*_{c_{t,i}} + a_{t,i})\cdot\mathbf{1}_k$ where $\mathbf{1}_k$ is the all $1$ vector of length $k$. Thus for every $t,i$, $\vecb_{t,i} \in \Z_q^k$ is a uniformly chosen constant vector independent of $\vecx^*$ (and of other $\vecb_{t,i'}$ for $i'\ne i$) as required in the \yes\ case and thus showing that $(\Pi_t,\vecz_t)$ are distributed according to $\cD_Y$. 

\end{proof}

\begin{proof}[Proof of~\cref{thm:main IRMD} (assuming~\cref{thm:main IFRMD})]
	For the sake of contradiction, suppose there exists a protocol for IRMD with advantage $\delta$ using fewer than $\tau n$ bits of communication. Then by~\autoref{lem:reduction IFRMD to IRMD} there exists a protocol for IFRMD with advantage $\delta$ using fewer than $\tau n$ bits of communication, which contradicts \autoref{thm:main IFRMD}. This completes the proof of~\autoref{thm:main IRMD}.
\end{proof}

In the following section we show how \cref{thm:main IRMD} yields the claimed hardness of streaming problems. In the rest of this paper, we focus on the proof of~\autoref{thm:main IFRMD}, i.e., the linear communication lower bound for IFRMD.

\section{Streaming problems and hardness} \label{sec:streaming}

In this section we state our main technical theorem establishing linear space lower bounds for the approximability of many CSPs. We also prove these lower bounds assuming \cref{thm:main IFRMD} and in particular its corollary \cref{thm:main IRMD}.

Below we define the two crucial constants associated with a family $\cF$ which lay out the ``trivial'' approximability, and the inapproximability that we prove. In particular we define the notion of a width
$\omega(\cF) \in [1/q,1]$ for every family $\cF$. The notion of a wide family from \cref{thm:intro-approx-res} corresponds to a family with maximum width, i.e., $\omega(\cF)=1$.

\begin{definition}[Minimum value, Width of $\cF$]
	For a family $\cF$, we define its minimum value $\rho_{\min}(\cF)$ to be the infimum over all instances $\Psi$ of $\maxF$ of $\val_\Psi$. For $\vecb \in \Z_q^k$ and $f:\Z_q^k \to \{0,1\}$ we define $\vecb$-width of $f$, denoted $\omega_{\vecb}(f)$ to be the quantity $\frac{|\{a\in\Z_q\, |\, f(\vecb + a^k)=1\}|}q$. The {\em width} of $f$, denoted $\omega(f)$, is given by $\omega(f) = \max_{\vecb \in \Z_q^k} \{\omega_{\vecb}(f)\}$. Finally for a family $\cF$, we define its {\em width} to be $\omega(\cF) =\min_{f\in\cF} \{\omega(f)\}$. 
    We say that a family $\cF$ is {\em wide} if $\omega(\cF) = 1$. 
\end{definition}

As described above $\rho_{\min}(\cF)$ may not even be computable given $\cF$, but as pointed out in \cite{CGSV21-conference} it is a computable function. Key to this assertion is the following equivalent definition of $\rho_{\min}(\cF)$ which follows from Definition 2.4 and Proposition 2.5 of \cite{CGSV21-conference}. 

\begin{proposition}[\protect{\cite[Proposition 2.5]{CGSV21-conference}}]\label{prop:rho-equiv}
	For every $k,q, \cF \subseteq \{f:\Z_q^k \to \{0,1\}\}$ we have
	$$\rho_{\min}(\cF) = \rho(\cF) := \min_{\cD_{\cF} \in \Delta(\cF)} \left\{ \max_{\cD \in \Delta([q])} \left\{ \Exp_{f \sim \cD_{\cF},  \veca \sim \cD^k} [ f(\veca) ]         \right\} \right\}.$$
\end{proposition}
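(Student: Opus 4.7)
The plan is to prove $\rho_{\min}(\cF) = \rho(\cF)$ by showing the two inequalities separately. The direction $\rho_{\min}(\cF) \ge \rho(\cF)$ will come from showing that every instance admits a ``good'' assignment obtained by randomly sampling each coordinate from the distribution $\cD$ supplied by $\rho(\cF)$. The reverse direction $\rho_{\min}(\cF) \le \rho(\cF)$ will follow by exhibiting a random instance built from the optimal $\cD_\cF$ on which, with positive probability, no assignment beats $\rho(\cF)$ by more than $\epsilon$.

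For $\rho_{\min}(\cF) \ge \rho(\cF)$, fix an arbitrary instance $\Psi$ with $m$ constraints on $n$ variables and let $\cD_\cF^{\Psi} \in \Delta(\cF)$ be the empirical distribution over constraint functions. The definition of $\rho(\cF)$, applied to the specific input $\cD_\cF^{\Psi}$, supplies a distribution $\cD^{*} \in \Delta([q])$ with $\Exp_{f \sim \cD_\cF^{\Psi},\,\veca \sim (\cD^{*})^{k}}[f(\veca)] \ge \rho(\cF)$. Sample $\vecx \in \Z_q^n$ with each coordinate drawn i.i.d.\ from $\cD^{*}$. Since every constraint involves $k$ \emph{distinct} indices, for each constraint $(f, j_1, \dots, j_k)$ the tuple $(x_{j_1}, \dots, x_{j_k})$ has exactly the distribution $(\cD^{*})^{k}$, so it is satisfied with probability $\Exp_{\veca \sim (\cD^{*})^{k}}[f(\veca)]$. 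Averaging over the $m$ constraints, the expected fraction satisfied equals $\Exp_{f \sim \cD_\cF^{\Psi},\,\veca \sim (\cD^{*})^{k}}[f(\veca)] \ge \rho(\cF)$, so some realization of $\vecx$ attains $\val_\Psi \ge \rho(\cF)$.

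For $\rho_{\min}(\cF) \le \rho(\cF)$, fix $\epsilon > 0$ and let $\cD_\cF^{*} \in \Delta(\cF)$ attain the outer minimum in the definition of $\rho(\cF)$, which exists by compactness of $\Delta(\cF)$ and continuity of the inner $\max$ (the max of a continuous function over the compact set $\Delta([q])$ is continuous). Draw a random instance $\Psi_m$ on $n$ variables by sampling $m$ constraints i.i.d., each of the form $(f, j_1, \dots, j_k)$ with $f \sim \cD_\cF^{*}$ and $(j_1, \dots, j_k)$ a uniformly random ordered $k$-tuple of distinct indices. For a fixed assignment $\vecx \in \Z_q^n$ with empirical distribution $\cD_{\vecx} \in \Delta([q])$, the probability that a single random constraint is satisfied by $\vecx$ equals $\Exp_{f \sim \cD_\cF^{*},\,\veca \sim \cD_{\vecx}^{k}}[f(\veca)]$ up to an additive $O(k^2/n)$ correction that bounds the total variation between drawing $k$ indices with versus without replacement; by the definition of $\rho(\cF)$ this probability is at most $\rho(\cF) + O(k^2/n)$. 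Hoeffding's inequality then gives that the fraction of constraints of $\Psi_m$ satisfied by $\vecx$ exceeds $\rho(\cF) + \epsilon$ with probability at most $\exp(-\Omega(\epsilon^{2} m))$, once $n$ is large enough that the $O(k^2/n)$ slack is below $\epsilon/2$. A union bound over all $q^n$ assignments, combined with the choice $m = \Theta(n \log q / \epsilon^{2})$, makes the failure probability strictly less than $1$, so some realization of $\Psi_m$ satisfies $\val_{\Psi_m} \le \rho(\cF) + \epsilon$. Sending $\epsilon \to 0$ concludes.

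The only real technical delicacy is in the upper bound direction: $m$ must be chosen large enough (relative to $n \log q$) that the exponential concentration from Hoeffding beats the $q^{n}$ union bound, while $n$ must be large enough that the sampling-without-replacement correction $O(k^2/n)$ is absorbed into the $\epsilon$-slack. The main conceptual point is to keep straight the two different ``empirical distribution'' objects in play: the empirical distribution $\cD_\cF^{\Psi}$ over $\cF$, which feeds the outer $\min$ of $\rho$ in the first direction, and the empirical distribution $\cD_{\vecx}$ over $[q]$, which feeds the inner $\max$ of $\rho$ in the second.
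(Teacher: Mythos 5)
Your proof is correct. The paper itself does not reprove this proposition --- it is cited from \cite{CGSV21-conference} --- but your argument is the standard one: the lower bound $\rho_{\min}\ge\rho$ via the probabilistic method (random i.i.d.\ assignment with coordinates drawn from the inner-maximizing $\cD^*$ applied to the empirical constraint distribution $\cD_\cF^\Psi$), and the upper bound $\rho_{\min}\le\rho$ via a random instance with constraints sampled from the outer-minimizing $\cD_\cF^*$, controlled by Hoeffding plus a union bound over all $q^n$ assignments, absorbing the $O(k^2/n)$ with-versus-without-replacement slack into the $\epsilon$. Both directions are set up correctly, including the point that distinctness of the $k$ indices makes the projected assignment exactly product in one direction and only approximately product in the other, and the compactness/continuity justification for the outer minimum being attained.
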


We are now ready to prove the main theorem of the paper on the approximability of CSPs by applying \cref{thm:main IRMD}.

\begin{theorem}[Linear Space Inapproximability of CSPs]\label{thm:main-technical} For every $k,q, \cF \subseteq \{f:\Z_q^k \to \{0,1\}\}$ and every $\epsilon > 0$ we have the following: Every randomized single-pass streaming $\left(1+\epsilon\right)\cdot \frac{\rho(\cF)}{\omega(\cF)}$-approximation algorithm for $\maxF$ requires $\Omega(n)$ space.
\end{theorem}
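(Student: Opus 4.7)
\medskip

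\noindent\emph{Proof plan.} The plan is to reduce the streaming approximation of $\maxF$ to the $\IRMD$ communication problem and invoke \cref{thm:main IRMD}. Specifically, I will show that a single-pass streaming $\alpha$-approximation algorithm for $\maxF$ with $\alpha = (1+\epsilon)\rho(\cF)/\omega(\cF)$, using $s$ space, yields a $T$-player one-way communication protocol for $\IRMD_{\alpha_0,T}$ of constant advantage using $O(s)$ bits per message. For suitable choices of $\alpha_0,T$, \cref{thm:main IRMD} then forces $s=\Omega(n)$.

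First I would fix a distribution $\cD^*$ on $\cF$ attaining the minimum in the min-max characterization of $\rho(\cF)$ given by \cref{prop:rho-equiv}, and for each $f$ in the support of $\cD^*$ a width-maximizer $\vecb^*_f\in\Z_q^k$ with $\omega_{\vecb^*_f}(f)=\omega(f)\geq\omega(\cF)$. The reduction processes each stream element $(\vecj,\vecz)$ of the $\IRMD$ input independently: sample $f\sim\cD^*$, and if $\vecz-\vecb^*_f\in\Z_q\cdot\one$ then append the constraint $f(x_{\vecj})$ to the $\maxF$ stream, otherwise skip. Each player applies this rule to its $\IRMD$ input, feeding the resulting constraints into the streaming algorithm and forwarding its state (of size $s$) to the next player.

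For the \yes\ analysis, in the \yes\ distribution $\vecz=x^*_{\vecj}+a\cdot\one$ with $a$ uniform, so $\vecz-\vecb^*_f\in\Z_q\cdot\one$ if and only if $x^*_{\vecj}\in\vecb^*_f+\Z_q\cdot\one$, a ``keep'' event of probability $q^{-(k-1)}$ over the uniform $x^*$. Conditional on keeping and on $f$, $x^*_{\vecj}$ is uniform on the coset $\vecb^*_f+\Z_q\cdot\one$, so the constraint $f(x^*_{\vecj})$ is satisfied with probability exactly $\omega_{\vecb^*_f}(f)\geq\omega(\cF)$. Evaluating at $x=x^*$ and bounding fluctuations via \cref{lem:our-azuma} shows that $\val_\Psi\geq\omega(\cF)-o(1)$ with high probability in the \yes\ case.

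For the \no\ analysis, $\vecz$ is uniform on $\Z_q^k$ and independent of $(\vecj,f)$, so the keep event has probability $q^{-(k-1)}$ independently of $(f,\vecj)$. For any fixed assignment $x\in\Z_q^n$, the expected fraction of kept constraints satisfied by $x$ equals $\Exp_{f\sim\cD^*,\vecj}[f(x_{\vecj})]$, which up to $O(k^2/n)$ sampling corrections equals $\Exp_{f\sim\cD^*,\veca\sim\cD_x^k}[f(\veca)]$ for $\cD_x$ the empirical distribution of $x$; by the defining property of $\cD^*$ from \cref{prop:rho-equiv} this is at most $\rho(\cF)$ for every $x$. Fixing $\alpha_0\in(0,1/k)$ and choosing $T$ sufficiently large (but still constant in $n$) so that the total number of kept constraints is $\Omega(n\log q/\epsilon^2)$, Chernoff concentration via \cref{lem:our-azuma} combined with a union bound over all $q^n$ assignments yields $\val_\Psi\leq\rho(\cF)+o(1)$ with high probability in the \no\ case.

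Combining, the algorithm's output $\eta$ satisfies $\eta\geq(1+\epsilon)\rho(\cF)-o(1)$ in \yes\ and $\eta\leq\rho(\cF)+o(1)$ in \no, so thresholding $\eta$ near $(1+\epsilon/2)\rho(\cF)$ yields a constant-advantage distinguisher for $\IRMD_{\alpha_0,T}$, contradicting \cref{thm:main IRMD} unless $s=\Omega(n)$. The main obstacle I anticipate is the exponential union bound over assignments in the \no\ analysis; I expect to overcome it using the freedom in \cref{thm:main IRMD} to take $T$ large enough (as a function of $q,k,\epsilon$ but not $n$) that the Chernoff exponent dominates $n\log q$, which is compatible with the desired $\Omega(n)$ space lower bound since the constants in \cref{thm:main IRMD} depend on $T$.
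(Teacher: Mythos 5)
Your reduction is the same one the paper uses: sample $f\sim\cD_\cF$ per stream item, keep the constraint iff $\vecz$ lies in the diagonal coset through the width-maximizer $\vecb^*_f$, simulate $\ALG$ on the filtered stream, argue concentration of the number of kept constraints and of the satisfied fraction in both cases, union-bound over $q^n$ assignments by taking $T$ large, and invoke \cref{thm:main IRMD}. One caveat: in the \no\ analysis, the sampling correction you quote as $O(k^2/n)$ is actually $O(k^2\alpha)$ --- the entries of $\vecj(i)$ are not i.i.d.\ uniform but are constrained to avoid earlier edges of the same hypermatching, so the TV distance between $\vecnu_{\vecj(i)}$ (conditioned on the block history) and $\cD_x^k$ is governed by $\alpha$ rather than $1/n$, which is why the paper sets $\alpha = \epsilon/(100k^2q)$; a parallel care (the $\gamma$-goodness of $x^*$) is needed to justify the conditional expectation bounds in the \yes\ case before applying \cref{lem:our-azuma}.
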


\begin{proof}
	We assume $0 < \epsilon \leq 1/10$, since the theorem only gets weaker for larger $\epsilon$.  Given $k$, $q$, $\cF$ we let $\alpha = \min\{\alpha_0,\epsilon/(100k^2q)\}$ where $\alpha_0$ is the constant from \cref{thm:main IRMD} with $\delta=1/6$. We now set $T$ to be some large enough constant that only depends on $q,k,\cF,\epsilon,\alpha$ (but not $n$). 
    
	Let $\ALG$ be a space $s$ algorithm distinguishing instances from the set $\{\Psi ~|~ \val_\Psi \geq (1-\epsilon/3)\omega(\cF)\}$ from instances from the set $\{\Psi ~|~ \val_\Psi \leq (1+\epsilon/3)\rho(\cF)\}$ with success probability at least $2/3$. We show how to use $\ALG$ to devise an $s$-bit
	communication protocol for $\IRMD = \IRMD_{\alpha,T}$ with advantage at least $1/6$.

	For $f\in \cF$, let $\vecb_f \in \Z_q^k$ be a sequence maximizing $\omega_{\vecb_f}(f)$ and let $S_f = \{\vecb_f + a^k\, |\, a \in \Z_q\}$. Further let $\cD_\cF \in \Delta(\cF)$ be a distribution achieving the minimum in the equivalent definition of $\rho(\cF)$ from \cref{prop:rho-equiv}. 
	Let $\vecsigma = (\sigma_1,\ldots,\sigma_m)$ be an instance of $\IRMD$ with $T$ players, so that $m = T \alpha n$ and $\sigma_i = (\vecj(i), \vecz(i))$ where $\vecj(i) \in [n]^k$ is a sequence of $k$ distinct elements of $[n]$ and $\vecz(i) \in \Z_q^k$. For each $\sigma_i$ we either generate $0$ or $1$ constraint of $\maxF$ as follows: We sample $f(i)\sim \cD_\cF$ and output the constraint $(f(i),\vecj(i))$ if $\vecz(i) \in S_{f(i)}$ and output no constraint otherwise.
	Applying this step independently to each $\sigma_i$ generates an instance $\Psi$ of $\maxF$ with $\tilde{m} \leq m$ constraints on $n$ variables. We make the following claims about $\Psi$.
	
	\begin{enumerate}[label=(\arabic*)]
		\item $\Pr_{\yes}[\tilde{m}>(1+\epsilon/10)\cdot q^{-(k-1)}\cdot m]=o(1)$ and $\Pr_{\no}[\tilde{m}<(1-\epsilon/10)\cdot q^{-(k-1)}\cdot m]=o(1)$, i.e., the number of constraints $\tilde{m}$ does not deviate (in the wrong direction) from its expectation $q^{-(k-1)}\cdot m$ with too high a  probability.\footnote{In these claims the $o(1)$ term goes to zero as $n \to \infty$. In fact, the proof will show that these terms go to zero exponentially fast in $n$ but we won't need this additional fact.}
		\item If $\vecsigma$ is generated from the $\yes$ distribution with hidden vector $\vecx^*$ then with high probability the number of constraints of $\Psi$ satisfied by $\vecx^*$ is  at least $\omega(\cF)(1-\varepsilon/10) \cdot q^{-(k-1)}\cdot m$. 
		In particular, 

		$\Pr_{\yes}[\val_\Psi\leq(1-\epsilon/3)\cdot\omega(\cF)]=o(1)$.
		\item If $\vecsigma$ is generated from the $\no$ distribution with hidden vector $\vecx^*$ then with high probability for every $\vecnu$ the number of constraints of $\Psi$ satisfied by $\vecnu$ is  at most $\rho(\cF)(1+\varepsilon/10) \cdot q^{-(k-1)}\cdot m$. In particular, 
		$\Pr_{\no}[\val_\Psi\geq(1+\epsilon/3)\cdot\rho(\cF)]=o(1)$.
	\end{enumerate}

	With the above claims in hand, it is straightforward to convert $\ALG$ into an $O(s)$-bit communication protocol for $\IRMD$ with advantage at least $1/6$ --- the $t$-th player gets the state of $\ALG$ after processing constraints corresponding to the first $t-1$ blocks from the $(t-1)$-th player; generates the constraints corresponding to the $t$-th block of the stream $\vecsigma$, and simulates $\ALG$ on this part of the stream corresponding to $\Psi$, and passes the resulting state on to the $(t+1)$-th player.
	The $T$-th player outputs $1$ if $\ALG$ outputs $1$ and $0$ otherwise. It is straightforward to see that if $\ALG$ is correct on every input with probability $2/3$ and Claims (1)-(3) above hold, then the resulting communication protocol achieves advantage at least $1/3 - o(1) \geq 1/6$ on $\IRMD$. Finally, we invoke~\autoref{thm:main IRMD} and conclude that $s=\Omega(n)$.
	
	We thus turn to proving claims (1)-(3). Given $\sigma_1,\ldots,\sigma_m$ and $\vecnu \in \Z_q^n$, we create a collection of related variables as follows: For $i \in [m]$, let $V_i = 1$ if $\sigma_i$ results in a constraint and $0$ otherwise.  Further, let  $Y_i(\vecnu)=1$ if $V_i=1$ and the resulting constraint is satisfied by the assignment $\vecnu$. (Note all these are random variables depending on $\vecsigma$). 
	Below, we bound the expectations of the sums of these random variables in the \yes\ and \no\ cases, and also argue that these variables are close to their expectations (or at least give bounds on deviating from the expectation in one direction). This will suffice to prove claims (1)-(3) and thus the theorem.

	\paragraph{Proof of Claim (1).}
	We start with $\tilde{m} = \sum_{i=1}^m V_i$
	in the \no\ case: In this case $\Exp[V_i] = |S_f|/q^k = q^{-(k-1)}$ (note that $|S_f| = q$ for every $f$). Furthermore the $V_i$'s are independent since $\vecz(i)$'s are uniform and independent of each other. Thus $\tilde{m}$ is sharply concentrated around $q^{-(k-1)}\cdot m$ and we get that $\Pr_{\no}[\tilde{m} \not\in (1 \pm \epsilon/10)\cdot q^{-(k-1)}\cdot m] = o(1)$. 
	
	Turning to the \yes\ case, since $\vecz(i)$'s are no longer independent, the $V_i$'s are correlated. To enable the analysis, we define a vector $\vecx^*$ to be {\em $\gamma$-good} for $\gamma > 0$ if for every $\tau \in \Z_q$ we have $\Pr_{i\in[n]}[\vecx^*_i = \tau] \in (1\pm \gamma)(1/q)$. Note that for every constant $\gamma>0$, the probability that $\vecx^*$ is not $\gamma$-good is $o(1)$. Fix $\vecx^*$ that is $\gamma$-good. We claim that in this case,  $\Exp[V_i\, |\, V_{1:i-1}] \leq q^{-(k-1)} \cdot (1+\gamma+\alpha qk)^k$. To see this note that the effect of conditioning on $V_{1:i-1}$ only affects $V_i$ due to the fact that now
	$\vecj(i)$ is chosen from a smaller set of variables and not all of $[n]$. Let $t\in[T]$ denote the block containing $i$ (i.e., $i \in ((t-1)\alpha n,t \alpha n]$).  Let $S$ denote the set of variables that do not participate in the edges $\vecj((t-1)\alpha n + 1),\ldots,\vecj(i-1)$. Note $|S| \geq (1 - k\alpha)n$ and so for every $\tau \in \Z_q$ we have $\Pr_{\ell \in S}[\vecx^*_\ell = \tau] \leq (1+\gamma+\alpha k q)/q$. We conclude that the probability $\Pr[\vecx^*|_{\vecj(i)} \in S_f\, |\, V_{1:i-1}] \leq |S_f|\cdot ((1+\gamma+\alpha k q)/q)^k = q^{-(k-1)} \cdot (1+\gamma+\alpha qk)^k$. Setting $\gamma = \epsilon/(100k)$ and 
	using $\alpha \leq \epsilon/(100k^2q)$, we conclude $\Exp[V_i\, |\, V_{1:i-1}] \leq q^{-(k-1)} \cdot (1+\epsilon/(50k))^k \leq q^{-(k-1)} \cdot (1+\epsilon/20)$ (where we use $\epsilon \leq 1/10$ to get $(1+\epsilon/(50k))^k (1+\epsilon/20)$).  Applying \cref{lem:our-azuma} we conclude that here again we get that $\Pr_{\yes}[\tilde{m} = \sum_i V_i > (1+\epsilon/10) q^{-(k-1)}m] = o(1)$. (Note that the $o(1)$ term goes to zero exponentially fast with $m$.)

	
	\paragraph{Proof of Claim (2).}
	Now we analyze the number of satisfiable constraints of the resulting instance $\Psi$ in the \yes\ case, where we argue that $\vecx^*$ satisfies a large fraction of constraints with high probability. Again with probability $1-o(1)$ we have that $\vecx^*$ is $\gamma$-good. Now an argument similar to the one in the analysis of $X$ in the \yes\ case shows that for every $\vecb\in\Z_q^k$, $\Pr[\vecx^*|_{\vecj(i)} = \vecb\, |\, Y_{1:i-1}] \geq (1-\epsilon/50)\cdot q^{-k}$.  Fix $f(i)$ and let $W = S_{f(i)} \cap f(i)^{-1}(1)$. Note by definition of $\omega(\cF)$ that $|W| \geq \omega(\cF)\cdot q$. 
	The event that the $i$-th constraint is satisfied by $\vecx^*$ is equivalent to the event that $\vecx^*_{\vecj(i)} \in T$ and the probability of this event, conditioned on $Y_{1:i-1}$ is at least $|W| \cdot (1-\epsilon/50)\cdot q^{-k} \geq (1-\epsilon/50) \cdot \omega(\cF) \cdot q^{-(k-1)}$. Using \cref{lem:our-azuma} we conclude again that 
	$\Pr[Y(\vecx^*) = \sum_{i=1}^m Y_i(\vecx^*) \leq (1-\epsilon/10)\cdot \omega(\cF)\cdot q^{-(k-1)}\cdot m] = o(1)$. Combining this with the lower bound on $\tilde{m}$ from Claim (1) we conclude that $\Pr[\val_\Psi \leq (1-\epsilon/3)\cdot \omega(\cF)] = o(1)$. 
	
	\paragraph{Proof of Claim (3).}
	Finally we analyze the number of satisfiable constraints in the \no\ case. Fix $\vecnu \in \Z_q^k$ and let $\cD\in\Delta(\Z_q)$ be the distribution obtained by sampling a uniformly random $\ell\in[n]$ and outputting $\vecnu_\ell$. By \cref{prop:rho-equiv} we have that $\Exp_{f \sim \cD_\cF, \vecb \sim \cD^k}[f(\vecb)] \leq \rho(\cF)$. We use this to prove that for every $i \in [m]$, $\Exp[Y_i(\vecnu)|Y_{1:i-1}(\vecnu)] \leq (1+\epsilon/50)\cdot \rho(\cF)\cdot q^{-(k-1)}$. 
	
	First, as in the proof for Claim (2) we have that the total variation distance between $\vecb \sim \cD^k$ and $\{\vecnu_{\vecj(i)}| Y_{1:i-1}(\vecnu)\}$ is at most
	$k^2 \alpha$. (In particular, this is upper bounded by the probability that $k$ uniformly and independently chosen elements of $[n]$ either collide or fall in a set of size at most $k (\alpha n - 1)$.) We conclude that the probability that the $i$-th ``potential constraint'' (given by $(f(i),\vecj(i))$) is satisfied is at most $\rho(\cF) + k^2\alpha$. Next, note that the event $X_i = 1$ (i.e., the $i$-th constraint is chosen in $\Psi$) is independent of $Y_i(\vecnu)$ since in the \no\ case $\vecz(i)\in \Z_q^k$ is uniform and independent of all other random variables. We conclude that $\Exp[Y_i(\vecnu)|Y_{1:i-1}(\vecnu)] \leq (1+\epsilon/50)\cdot \rho(\cF)\cdot q^{-(k-1)}$. Finally, we apply \cref{lem:our-azuma} again to conclude that 
	$\Pr[Y(\vecnu) = \sum_{i=1}^m Y_i(\vecnu) > (1 + \epsilon/10)\cdot \rho(\cF)\cdot q^{-(k-1)}\cdot m] \leq c^{-m}$ where $c>1$ depends on $q,k,\cF,\alpha,\epsilon$ but not on $T$ or $n$. Thus by setting $T$ large enough, we can bound $c^{-m} \leq q^{-2n}$. This allows us to use the union bound to conclude that the probability that there exists $\vecnu \in \Z_q^n$ such that $Y(\vecnu) > (1+\epsilon/10)\cdot\rho(\cF)\cdot q^{-(k-1)}\cdot m$ is at most $q^{-n} = o(1)$. Combining with the lower bound on $\tilde{m}$ from Claim (1) we get that with probability $1 - o(1)$ we have $\val_{\Psi} \leq (1+\epsilon/3)\cdot\rho(\cF)$ in this case.\\
	
	This concludes the proofs of the claims and thus the proof of~\autoref{thm:main-technical}.
	
\end{proof}

\cref{thm:intro-approx-res,thm:intro-q-factor} follow immediately from \cref{thm:main-technical} as we show below.

\begin{proof}[Proof of \cref{thm:intro-approx-res}] 
	The theorem follows from the fact that for a wide family $\omega(\cF) = 1$ and in this case \cref{thm:main-technical} asserts that a $\rho(\cF) + \epsilon$ approximation requires linear space.
\end{proof}

\begin{proof}[Proof of \cref{thm:intro-q-factor}] 
	The theorem follows from the fact that for every non-zero function $f$ we have $\omega(f) \geq 1/q$ and 
	so for every family $\cF$ also we have $\omega(\cF) \geq 1/q$. Thus \cref{thm:main-technical} asserts that a $\rho(\cF)\cdot q + \epsilon$ approximation requires linear space, where $\rho(\cF)$ approximation is trivial.
\end{proof}

\subsection{Some examples}\label{ssec:examples}

We now give some examples illustrating the power of \cref{thm:main-technical}. Our first example is the familiar $q$-coloring problem.

\begin{examplebox}{Example 1 ($\textsf{Max-}q\textsf{Col}$).}
	Let $k=2$ and $q\geq2$. Let $\cF=\{f:\Z_q^2\to\{0,1\}\}$ where $f(u,v)=1$ if and only if $u \ne v$. The ``Max $q$-Coloring'' problem is defined to be  $\textsf{Max-}q\textsf{Col}=\textsf{Max-CSP}(\cF)$. It is easy to verify $\rho(\cF) = 1 - 1/q$ and $\omega(\cF) = 1$. We thus conclude by \cref{thm:intro-approx-res} that $\textsf{Max-}q\textsf{Col}$ is approximation resistant.
\end{examplebox}

Next we turn to the Unique Games Problem.

\begin{examplebox}{Example 2 ($\textsf{Max-}q\textsf{UG}$).}
	Let $k=2$ and $q\geq2$. Let $\cF=\{f:\Z_q^2\to\{0,1\}\, |\, f^{-1}(1)\text{ is a bijection}\footnote{We consider a set $S\subseteq \Z_q^2$ to be a bijection if for every $a\in \Z_q$, there exists a unique $a'\in Z_q$ such that $(a,a')\in S$ and there exists a unique $a''\in Z_q$ such that $(a'',a)\in S$.}\}$. The ``$q$-ary Unique Games'' problem is defined to be  $\textsf{Max-}q\textsf{UG}=\textsf{Max-CSP}(\cF)$. We show below that $\rho(\cF)=1/q$. We also show that there exists $\cF' \subseteq \cF$ such that
	$\rho(\cF') = 1/q$ and $\omega(\cF') = 1$. Applying \cref{thm:intro-approx-res} to $\cF'$ we get that $1/q + \epsilon$ approximating $\textsf{Max-CSP}(\cF')$ requires linear space and the same holds for $\textsf{Max-}q\textsf{UG}=\textsf{Max-CSP}(\cF)$ by monotonicity.
	
	We define the family $\cF'$ to be $\cF' = \{f_a | a \in \Z_q\}$ where $f_a(u,v) = 1$ if and only if $u=v+a$. Let $\cD = \textsf{Unif}(\Z_q)$. For every $f \in \cF$ we have that
	$\Exp_{(u,v) \sim \cD^2}[f(u,v)] = 1/q$. So for every $\cD_\cF \in \Delta(\cF)$ we have $\Exp_{f \sim \cD_\cF} \Exp_{(u,v) \sim \cD^2}[f(u,v)] = 1/q$.
	This proves $\rho(\cF),\rho(\cF') \geq 1/q$. To get the upper bound we let $\cD_\cF$ be uniform over $\cF'$. For every $(u,v) \in \Z_q^2$ we have $\Exp_{f \sim \cD_\cF} [f(u,v)] = 1/q$ and so for every distribution $\cD \in \Delta(\Z_q^k)$ (which is more than we need) we have $\Exp_{f \sim \cD_\cF} \Exp_{(u,v) \sim \cD}[f(u,v)] \leq 1/q$. This proves $\rho(\cF'),\rho(\cF) = 1/q$ (since $\cD_\cF$ is supported on $\cF'$).
	
	Now turning to $\omega(\cF')$, note that for every $f_a \in \cF'$ we have $\{(b+a,b) | b \in \Z_q\} \subseteq f_a^{-1}(1)$. Thus $\omega(f_a) \geq \omega_{(a,0)}(f_a) = 1$. It follows that $\omega(\cF') = 1$. 
	
\end{examplebox}

Our third example talks about constraints that are simple equalities.

\begin{examplebox}{Example 3 ($\textsf{Max-$k$-All-Equal}_{q}$).}
	For $k\geq 2$ and prime $q$ , we define $\textsf{Max-$k$-All-Equal}_{q}$ to be the $\maxF$ for $\cF = \{f_{\mathrm{All-EQ}}\}$ where $f_{\mathrm{All-EQ}}(x_1,\ldots,x_k)= 1$ if and only if $x_1 = \cdots = x_k$. 
	It is easy to verify that for every $k,r,q$, $\rho(\cF) \geq q^{-1}$ (In particular $\cD_\cF$ is trivial since $|\cF| = 1$.)  Since every $\vecb \in \\F_q^k$ has width $1$ it follows that $\omega(\cF)=1$ and so $\maxF$ can not be approximated to within $\frac1q(1+\epsilon)$-factor in $o(n)$ space (and is thus approximation resistant.
\end{examplebox}

Our next example generalizes the above to all  linear systems.

\begin{examplebox}{Example 4 ($\textsf{Max-Lin}_{k,r,q}$).}
	For $k\geq 2$ and prime $q$ and $0 \leq r < k$, we define $\textsf{Max-Lin}_{k,r,q} = \maxF$ for $\cF = \cF_{k,r,q} = \{f_{A,\vecb}:\Z_q^k\to\{0,1\} | A \in \Z_q^{r\times k},\vecb \in \Z_q^{k}\}$ where $f_{A,\vecb}(x) = 1$ if and only if $Ax = Ab$. (Thus constraints are systems of satisfiable linear equations with solutions of dimension at least $k-r$.)
    Note that the \textsf{Max $k$-LIN}-$\bmod\; q$ problem mentioned in the abstract and \cref{ssec:intro-tech} is the special case where $r=k-1$. We show below that $\textsf{Max-Lin}_{k,r,q} = \maxF$ is approximation-resistant for every $1 \leq r \leq k-1$. 
	Let $\cF'_{k,r,q} = \{f_{r,k}\}$ where $f_{r,k}(x_1,\ldots,x_k) = 1$ if and only if $x_1 = \cdots = x_{r+1}$. It is easy to see that $q^{-r} \leq \rho(\cF) \leq \rho(\cF') = q^{-r}$. Furthermore $\omega(\cF') = 1$ (as argued in Example 3). 
    Thus, applying \cref{thm:intro-approx-res} to $\cF'$ we get that $\textsf{Max-CSP}(\cF')$ is approximation-resistant. The same holds for $\textsf{Max-Lin}_{k,r,q}=\textsf{Max-CSP}(\cF)$ by monotonicity.\footnote{We believe this system is not approximation resistant for $r=k$. This is proved for $q=2$ in \cite[Lemma 2.14]{CGSV21}. The case of general $q$ may not have been explicitly resolved in previous work.} 
\end{examplebox}

Finally we mention one more problem. This problem arises in the work of Singer, Sudan and Velusamy~\cite{SSV21} who use it to show the approximation resistance of the ``maximum acyclic subgraph'' problem to $o(\sqrt{n})$ space algorithms. We suspect the improved space lower bound should improve their work to rule out $o(n)$ space algorithms. 

\begin{examplebox}{Example 5 ($\textsf{Max-Less-Than}_{q}$).}
	For $k=2$ and $q\geq 2$ we define $\cF = \{<_q\}$ where $<_q : \Z_q^2 \to \{0,1\}$ is given by $<_q(u,v) = 1$ if and only if $u < v$. 
	It is possible to show $\rho(\cF) = \frac12(1-1/q)$. Also $\omega_{(0,1)}(<_q) = 1-1/q$ and this can be used to show that $\omega(\cF) = 1 - 1/q$. By \cref{thm:main-technical} it follows that $1/2+\epsilon$-approximating $\maxF$ requires linear space.
\end{examplebox}

\section{Lower bound on the communication complexity}\label{sec:proof overview}

In this section we prove a linear lower bound on the communication complexity of \IFRMD\ (\cref{thm:main IFRMD}). Our proof is via a hybrid argument which starts with all players receiving inputs from the \no\ distribution, and switching the players' input distributions one at a time, starting with Player~1, to the \yes\ distribution.
We state a key ``hybrid lemma'' (\cref{lem:hybrid}) which asserts that any one step of switching does not alter the distribution of the message output by the switched player.

To state our lemma we recall some notations and set up a few new ones. Let $\alpha,n,k,q,T,m=\alpha n\in\N$ denote the usual parameters of \IFRMD. Recall that the player $t$ gets as input a matrix $A_{t,\vecc_t}\in\Z_q^{(k-1)m\times n}$ corresponding to a $k$-uniform hypermatching $M_t$ consisting of $m$ hyperedges folded over the center vector $\vecc_t$ and a vector $\vecw_t \in \Z_q^{(k-1)m}$. For notational convenience, we will separate the input $A_{t,c_t}$ into a matrix $A_t \in \Z_q^{(k-1)m\times n}$ and the center $\vecc_t$. The message $S_t$ sent by the $t$-th player is a function of $A_{1:t},\vecc_{1:t},\vecw_t$ and $S_{1:t-1}$.\footnote{Note that even though the $t$-th player does not have access to $A_{1:t-1},\vecc_{1:t-1}$, and $S_{1:t-2}$, allowing them to see these only makes our lower bound stronger.} Next, note that by Yao's principle \cite{yao1977probabilistic}, we may assume that the messages sent by the players in $\IFRMD$ are all deterministic. 
Namely, a protocol for $\IFRMD$ can be specified by deterministic message functions $r_1,r_2,\dots,r_T$ so that $S_t=r_t(A_{1:t},\vecc_{1:t},S_{1:t-1},\vecw_t)$ denotes the message sent by the $t$-th player. The communication complexity of a protocol is defined as the largest output length of $r_t$.
When $(A_{1:T},\vecc_{1:T},\vecw_{1:T})$ is drawn from the \yes\ distribution (resp. the \no\ distribution), we denote by $S_{1:T}^Y$ (resp. $S_{1:T}^N$) the resulting messages. Without loss of generality $S_T$ is just a bit ```Yes/No'' indicating the output of the protocol. Thus, to prove~\autoref{thm:main IFRMD} we need to show that $S^Y_T$ and $S^N_T$ are close in total variation distance. For the induction we prove the much stronger statement that  $(A_{1:T},\vecc_{1:T},S_{1:T}^Y)$ and $(A_{1:T},\vecc_{1:T},S_{1:T}^N)$ are close in total variation distance, i.e.,
\[
\|(A_{1:T},\vecc_{1:T},S_{1:T}^Y)-(A_{1:T},\vecc_{1:T},S_{1:T}^N)\|_{tvd}\leq\delta \, .
\]
The following lemma provides the key step in this analysis. Roughly it says that if the first $t-1$ players' inputs are according to the \yes\ distribution then the $t$-th player's output on the \yes\ input is typically distributed very similarly to the output on the \no\ distribution (even conditioned on all previously announced hypermatchings, centers and messages). Formally, the lemma identifies a sequence of events $\cE_1\supset\cE_2\supset\cdots\supset\cE_T$ such that (i) $\cE_t$ enforces a ``typicality'' restriction on the messages and inputs that the $t$-th player receives and (ii) if the messages and input received by the $t$-th player are typical then the player cannot distinguish whether its input is sampled from the \yes\ distribution or the \no\ distribution (assuming all previous players' inputs were from the \yes\ distribution).


\paragraph{The Probability Space:} 
In what follows in the rest of this section (and indeed in the rest of this paper), the underlying probability space will be that of describing all the inputs in the communication problem. Specifically, we let $\Omega = \Omega_{k,q,\alpha,n,T}$ be the distribution over tuples $(\vecx^*, A_{1:T}, \vecc_{1:T}, \vecw_{1:T})$ where $\vecx^*\sim \Unif(\Z_q^n)$, $A_t \in \{0,1\}^{\alpha k n \times n}$ is the incidence matrix of a uniform random $k$-hypermatching on $[n]$ with $\alpha n$ edges, $\vecc_t \in [n]^{\alpha n}$ is a uniform choice of centers consistent with $A_t$, and $\vecw_t \in \Z_q^{\alpha(k-1)n}$ is a uniform vector, for every $t \in [T]$. These variables along with a deterministic protocol given by $r_1,\ldots,r_T$ specify additional random variables that are determined by $(\vecx^*, A_{1:T}, \vecc_{1:T}, \vecw_{1:T})$ including $B_{1:t}$, $B_{r,1:t}$, $S^Y_t$. 
Thus when we write a probability expression of the form $\Pr[X]$ without specifying the random variables we intend the space to be $\Omega$. Furthermore an expression of the form $\forall Y$, $\Pr[X|Y]$ is shorthand for $\forall y$, $\Pr[X|Y=y]$.

\begin{restatable}[Hybrid lemma]{lemma}{hybrid}\label{lem:hybrid}
	For every $q,k\in\N$, there exists $\alpha_0>0$ such that for every $T\in\N$, and $\delta\in(0,1)$, there exists $\tau\in(0,1)$ and $n_0 <\infty$  such that  the following holds for every $n \geq n_0$:
	
    Let $\Pi=(r_1,\dots,r_T)$ be a deterministic protocol for $\IFRMD$ where each message function $r_t$ outputs a message of at most $\tau n$ bits. Let $(\vecx^*,A_{1:T},\vecc_{1:T},\vecw_{1:T}) \sim \Omega$. 
	Then there exists a sequence of events $\{\cE_t\}_{t\in[T]}$ and non-negative $\delta_1,\ldots,\delta_T$ with $\sum_{t=1}^T \delta_t \leq \delta/2$  
 such that:
    \begin{enumerate}[label=(\roman*)]
        \item $\cE_1$ holds with probability $1$. For $t \geq 2$,  $\cE_t$ only depends on $(A_{1:t},\vecc_{1:t})$ and $S^Y_{1:t-1}$ (with $S_{1:0}$ denoting an empty set of variables). 
        \item For every $t \geq 2$,  $\cE_{t} \Rightarrow \cE_{t-1}$ and $\Pr[\overline{ \cE_{t}}\, |\, \cE_{t-1}]\leq\delta_t$.
        \item For every fixed $(A_{1:t},\vecc_{1:t})$ and $S^Y_{1:t-1}$ satisfying $\cE_t$, one has
	\begin{equation}\label{eq:hybrid}
		\|S^Y_t-r_t(A_{1:t},\vecc_{1:t},S^Y_{1:t-1},U)\|_{tvd}\leq\delta_t,
	\end{equation}
	where $U\sim\Unif(\Z_q^{(k-1)\alpha n})$.
    \end{enumerate}
\end{restatable}

\cref{thm:main IFRMD} follows almost immediately from \cref{lem:hybrid} as shown in \cref{ssec:proof-IFRMD-thm}. 
In the rest of this paper we prove \cref{lem:hybrid}. In this section we introduce some new notions and state three key lemmas that together suffice to prove \cref{lem:hybrid}. This (conditional) proof is given in~\cref{sec:hybrid}. In the following sections we prove the key lemmas.
First we give an overview of the proof of \cref{lem:hybrid} that explains the nature of these key lemmas. 

The general idea behind the proof of \cref{lem:hybrid} is to argue that information about $\vecx^*$ ``leaked'' by the messages of the first $t-1$ players (i.e., $S_{1:t-1}$) is not sufficient for the $t$-th player to distinguish between the case where $\vecw_t = A_{t,c_t} \vecx^*$ (the \yes\ case) and the case where $\vecw_t$ is uniform. The earlier proofs of this type (in particular as in \cite{KKS}) simply counted the total information gleaned about $\vecx^*$ which is bounded by the total communication. Such proofs are inherently limited to achieving only a $\sqrt{n}$ lower bound. To go further \cite{KK19} introduced the approach of reasoning about the structure of the information learned about $\vecx^*$.  Note in particular that no player sees $\vecx^*$ directly, and the $t'$-th player only sees $A_{t',c_{t'}}\cdot \vecx^*$. (In particular no coordinate of $\vecx^*$ is revealed directly, though the sum of many pairs of coordinates are directly revealed.) Thus the information about $\vecx^*$ comes from a ``reduced space'' and we would like to capture and exploit the structural restriction imposed by this restriction. Information-theoretic tools seem to fail to capture this restriction and the key to the work of \cite{KK19} is to give a Fourier analytic condition, that they call ``boundedness'', that captures this restriction.

The boundedness condition applies to what we call the ``posterior distribution'' of $\vecx^*$, i.e., the distribution of $\vecx^*$ conditioned on the first $t$ messages. This distribution turns out to be the uniform distribution over a set $B_t \subseteq \Z_q^n$ (see \cref{lem:conditional prob}). The boundedness condition places restrictions on the Fourier spectrum of the indicator function of this set. (See \cref{def:boundedness}.) To use this condition we need three ingredients elaborated below, which we abstract as lemma statements in this section and prove in later sections. Given these three lemmas the proof of \cref{lem:hybrid} follows and is given in \cref{sec:hybrid}. 

The first ingredient we need is that boundedness of $B_{t-1}$ does imply that the $t$-th player is unable to distinguish between its input being from the \yes\ distribution or the \no\ distribution. This is stated as \cref{lem:boundedness implies uniform}. Next we need to show that given information about $A_{t,c_t}\vecx^*$, the posterior distribution of $\vecx^*$ is indeed bounded, and we assert this in \cref{lem:boundedness base case}. 
Finally we argue that if $B_{t-1}$ is bounded, then for most pairs of matchings $A_{t}$ and  centers $\vecc_t$ the resulting set $B_t$ is bounded. This is asserted in \cref{lem:induction step}. See also~\cref{fig:hybrid} for a pictorial overview of the proof structure of~\cref{lem:hybrid}.

In the rest of this section, after showing that \cref{lem:hybrid} implies \cref{thm:main IFRMD} in \cref{ssec:proof-IFRMD-thm}, we introduce the posterior sets and discuss their basic properties in \cref{ssec:posterior}, we introduce boundedness and state the three lemmas above in \cref{ssec:bounded-def}, and finally conclude with the proof of \cref{lem:hybrid} in \cref{sec:hybrid}.

\subsection{Proof of \texorpdfstring{\cref{thm:main IFRMD}}{Theorem~\ref{thm:main IFRMD}}}\label{ssec:proof-IFRMD-thm}

We now show how the lemma suffices to prove~\autoref{thm:main IFRMD}. The proof is analogous to the proof of Lemma 6.3 in \cite{KK19}. We remark that the lemma is not immediate and effectively depends on the fact that players can jointly sample from the \no\ distribution on their own. (Note the players can't jointly sample from the $\yes$ distribution since these samples are correlated by the hidden vector $\vecx^*$. So the proof is inherently asymmetric via the treatment of the \yes\ and \no\ distributions.)

\begin{proof}[Proof of~\cref{thm:main IFRMD}]

    {
     For the sake of contradiction, assume that there exists a protocol $\Pi=(r_1,\dots,r_T)$ that solves IFRMD with advantage more than $\delta$ and less than $\tau n$ bits of communication for some $n\geq n_0$. In what follows, we will show that $\|(A_{1:T},\vecc_{1:T},S^Y_{1:T})-(A_{1:T},\vecc_{1:T},S^N_{1:T})\|_{tvd}\leq\delta$, which implies that the advantage of the protocol cannot be greater than $\delta$, hence producing a contradiction.

     Let $\cE_1 \supset \cE_2 \supset \cdots \supset \cE_T$ be the sequence of events guaranteed by \autoref{lem:hybrid} such that $\Pr\left[\overline{\cE_t}\, |\, \cE_{t-1}\right]\leq\delta_t$ for $t\geq2$. Note that by \autoref{lem:hybrid}, we also have 
    \[\|S^Y_t-r_t(A_{1:t},\vecc_{1:t},S^Y_{1:t-1},U)\|_{tvd,\cE_t}\leq\delta_t\] 
    for all $t\in[T]$, where $\|\cdot\|_{tvd,\cE_t}$ denotes the total variation distance, conditioned on $\cE_t$.
    We inductively show that for every $t\in [T]$,
    \[
	\|(A_{1:t},\vecc_{1:t},S^Y_{1:t})-(A_{1:t},\vecc_{1:t},S^N_{1:t})\|_{tvd}\leq  \sum_{1\leq j\leq t}\left( \delta_{j} +  \Pr[\overline{\cE_j}|\cE_{j-1}]\right)\,  \tag{Induction hypothesis}
	\] 
    where $\cE_0$ is the trivial event that is always true.

	First, we prove the base case $t=1$. Recalling that $S_0^Y=S_0^N$, we have
	\begin{align*}
		\|(A_1,\vecc_1,S_1^Y)-(A_1,\vecc_1,S_1^N)\|_{tvd,\cE_1}&=\|(A_1,\vecc_1,S_1^Y)-(A_1,\vecc_1,r_1(M_1,\vecc_1,S_0^N,U_1))\|_{tvd,\cE_1}\\
		&=\|(A_1,\vecc_1,S_1^Y)-(A_1,\vecc_1,r_1(M_1,\vecc_1,S_0^Y,U_1))\|_{tvd,\cE_1} \, .
	\end{align*}
	Observe that for every fixed $A_1,\vecc_1$ and $S^Y_0$ satisfying $\cE_1$, we have 
    $\|S_1^Y-r_1(M_1,\vecc_1,S_0^Y,U_1)\|_{tvd}\le \delta_1$%
    , where the randomness is over $S_1^Y$ and $U_1$. It follows from \cref{lem:statistical_test} that 
    \[\|(A_1,\vecc_1,S_1^Y)-(A_1,\vecc_1,r_1(M_1,\vecc_1,S_0^Y,U_1))\|_{tvd,\cE_1} \le \delta_1.\]
    Therefore,
    \begin{align*}
       \|(A_1,\vecc_1,S_1^Y)-(A_1,\vecc_1,r_1(M_1,\vecc_1,S_0^Y,U_1))\|_{tvd} &\le \|(A_1,\vecc_1,S_1^Y)-(A_1,\vecc_1,r_1(M_1,\vecc_1,S_0^Y,U_1))\|_{tvd,\cE_1} + \Pr[\overline{\cE_1}]\\
       & \le \delta_1 + \Pr[\overline{\cE_1}] \, , 
    \end{align*} 
    which completes the base case.

    Next, we prove the inductive step. For every $t=2,\dots,T$, we have 
	\begin{multline*}
		\| (A_{1:t},\vecc_{1:t},S^Y_{1:t})-(A_{1:t},\vecc_{1:t},S^N_{1:t})\|_{tvd} \\
		= \|(A_{1:t},\vecc_{1:t},S^Y_{1:t-1},r_t(A_{1:t},\vecc_{1:t},S^Y_{t-1},A_{t } \vecx^*))-(A_{1:t},\vecc_{1:t},S^N_{1:t-1},r_t(A_{1:t},\vecc_{1:t},S^N_{t-1},U))\|_{tvd}\, .
	\end{multline*}
	Let us define $Q_{t-1}^Y = (A_{1:t-1},\vecc_{1:t-1},S^Y_{1:t-1})$  and $Q_{t-1}^N = (A_{1:t-1},\vecc_{1:t-1},S^N_{1:t-1})$. 
	Then, we can rewrite the above expression for total variation distance in terms of the new notation as follows:
	\begin{multline}\label{eqn:tvd_substitution}
		\|(A_{1:t},\vecc_{1:t},S^Y_{1:t-1},r_t(A_{1:t},\vecc_{1:t},S^Y_{t-1},A_{t ,\vecc_t} \vecx^*))-(A_{1:t},\vecc_{1:t},S^N_{1:t-1},r_t(A_{1:t},\vecc_{1:t},S^N_{t-1},U))\|_{tvd} \\
		= \|(Q_{t-1}^Y,A_t,\vecc_t,r_t(Q_{t-1}^Y,A_t,\vecc_t,A_{t ,\vecc_t} \vecx^*))-(Q_{t-1}^N,A_t,\vecc_t,r_t(Q_{t-1}^N,A_t,\vecc_t,U))\|_{tvd} \, .
	\end{multline}
	We now apply \autoref{lem:KKsubstitutionlemma} to \autoref{eqn:tvd_substitution}. Applying this lemma with $X^1 = Q_{t-1}^Y$, $X^2 = Q_{t-1}^N$, $Z^1=(A_t,\vecc_t,A_{t ,\vecc_t} \vecx^*)$, $Z^2=(A_t,\vecc_t,U)$, and $f$ as the function that maps the tuple $(X,(B,C))$ to $(B,r_t(X,B,C))$, we get
	\begin{align}\label{eqn:conditionalTVD}
		&\|(Q_{t-1}^Y,A_t,\vecc_t,r_t(Q_{t-1}^Y,A_t,\vecc_t,A_{t ,\vecc_t} \vecx^*))-(Q_{t-1}^N,A_t,\vecc_t,r_t(Q_{t-1}^N,A_t,\vecc_t,U))\|_{tvd} \nonumber \\
		\le\ &\|Q_{t-1}^Y- Q_{t-1}^N\|_{tvd} + \| (Q_{t-1}^Y,A_t,\vecc_t,r_t(Q_{t-1}^Y,A_t,\vecc_t,A_{t ,\vecc_t} \vecx^*))-(Q_{t-1}^Y,A_t,\vecc_t,r_t(Q_{t-1}^Y,A_t,\vecc_t,U))\|_{tvd} \, .
	\end{align}

	Now, by applying the induction hypothesis, we have that 
    \begin{equation}\label{eqn:induction}\|Q_{t-1}^Y- Q_{t-1}^N\|_{tvd} \le \sum_{j=1}^{t-1} \left(\delta_j + \Pr[\overline{\cE_j}|\cE_{j-1}]\right).
	\end{equation}
	Next, we bound the second term on the right hand side of \eqref{eqn:conditionalTVD}, i.e.,
	\[\| (Q_{t-1}^Y,A_t,\vecc_t,r_t(Q_{t-1}^Y,A_t,\vecc_t,A_{t ,\vecc_t} \vecx^*))-(Q_{t-1}^Y,A_t,\vecc_t,r_t(Q_{t-1}^Y,A_t,\vecc_t,U))\|_{tvd},\] 
	by applying condition (iii) from \cref{lem:hybrid}. According to this condition, for every \emph{fixed} $(A_{1:t},\vecc_{1:t})$ and $S^Y_{1:t-1}$ satisfying $\cE_t$, we have
    \begin{equation*}
		\|r_t(A_{1:t},\vecc_{1:t},S^Y_{1:t-1},A_{t ,\vecc_t} \vecx^*)-r_t(A_{1:t},\vecc_{1:t},S^Y_{1:t-1},U)\|_{tvd}\leq\delta_t,
	\end{equation*}
	where $U\sim\Unif(\Z_q^{(k-1)\alpha n})$. Thus, by \autoref{lem:statistical_test}, it follows that 
    \begin{equation}\label{eqn:tvd}\| (Q_{t-1}^Y,A_t,\vecc_t,r_t(Q_{t-1}^Y,A_t,\vecc_t,A_{t ,\vecc_t} \vecx^*))-(Q_{t-1}^Y,A_t,\vecc_t,r_t(Q_{t-1}^Y,A_t,\vecc_t,U))\|_{tvd,\cE_t} \le \delta_t\, .\end{equation}
	Combining \cref{eqn:tvd_substitution,eqn:conditionalTVD,eqn:induction,eqn:tvd}, we have 
    \[
	\|(A_{1:t},\vecc_{1:t},S^Y_{1:t})-(A_{1:t},\vecc_{1:t},S^N_{1:t})\|_{tvd}\le \sum_{j=1}^{t} \left(\delta_j + \Pr[\overline{\cE_j}|\cE_{j-1}]\right) ,
	\]
	which completes the induction.
    
    Thus,
    \[|(A_{1:T},\vecc_{1:T},S^Y_{1:T})-(A_{1:T},\vecc_{1:T},S^N_{1:T})\|_{tvd}\le \sum_{j=1}^T\left( \delta_j + \Pr[\overline{\cE_j}|\cE_{j-1}]\right) \le 2\cdot \sum_{j=1}^T \delta_j  \le \delta\, .\]
    This implies that $\Pi$ cannot have advantage more than $\delta$,
    which contradicts the assumptions of the theorem statement. Therefore, we conclude that any protocol for $\IFRMD$ with advantage $\delta$ requires $\tau n$ bits of communication, as desired.
    }
\end{proof}

\subsection{Posterior sets and functions}\label{ssec:posterior}

The main challenge in proving~\autoref{lem:hybrid} lies in the condition (iii), i.e., requiring the closeness of the Yes message (i.e., $S^Y_t=r_t(A_{1:t},\vecc_{1:t},S^Y_{1:t-1},A_{t,\vecc_t}\vecx^*)$) and the hybrid No message (i.e., $r_t(A_{1:t},\vecc_{1:t},S^Y_{1:t-1},U)$). 
Intuitively, if $\vecx^*\sim\Unif(\Z_q^n)$ and is independent of the other arguments, then $A_{t,\vecc_t}\vecx^*$ is uniformly distributed over $\Z_q^{(k-1)\alpha n}$ and hence $S^Y_t$ follows the same distribution as $r_t(A_{1:t},\vecc_{1:t},S^Y_{1:t-1},U)$. However, $\vecx^*$ is correlated\footnote{In particular, $\vecx^*$ has to be consistent with the previous messages $S^Y_{1:t-1}$.} with the previous messages $S^Y_{1:t-1}$ so the above ideal situation would not happen in general. Nevertheless, we are able to analyze the conditional distribution of $A_{t,\vecc_t}\vecx^*$ on the previous messages by explicitly characterizing the \textit{posterior distribution} of $\vecx^*$ after receiving the messages from the first $t-1$ players. That is, the conditional distribution of $A_{t,\vecc_t}\vecx^*$ can be described by first sampling $\vecx^*$ from the posterior distribution and then applying $A_{t,\vecc_t}$.

For every fixed $A_{1:t},\vecc_{1:t}$ and $S_{1:t}$, we would like to identify a distribution $\cD_t$ over $\Z_q^n$ such that $\cD_t$ is the conditional distribution of $\vecx^*$ given messages $S_{1:t}$. Note that by the choice of the No case, the conditional distribution of $\vecx^*$ given messages $S_{1:t}$ is simply the uniform distribution over $\Z_q^n$. Thus, we only need to worry about the Yes case.

\begin{definition}[Posterior sets and functions]\label{def:set and pdf}
	Under the setting described above, for each $t$ and fixed $A_{1:t}$, $\vecc_{1:t}$, and $S_{1:t}$, define
	\begin{itemize}
		\item (Reduced posterior set) $B_{r,t}\subseteq\Z_q^{(k-1)m}$ be the set of possible values of $z_t=A_{t,\vecc_t}\vecx$ that leads to message $S_t$; Note that $B_{r,t}$ should be thought of as a function on $A_t$, $\vecc_t$, and $S_t$ in the sense that $B_{r,t}=g_t^{-1}(S_t)$ where $g_t(\cdot)=r_t(A_{1:t},\vecc_{1:t},S_{1:t-1},\cdot)$. Let $\mathbf{1}_{B_{r,t}}$ be the indicator function of $B_{r,t}$. 
		\item (Posterior set and function) Let
		\[
		B_t := \{\vecx\in\Z_q^n\, |\, A_{t,\vecc_t}\vecx\in B_{r,t}\} \, .
		\]
		Also, let $\mathbf{1}_{B_t}:\Z_q^n\to\{0,1\}$ be the indicator function of $B_t$.
		\item (Aggregated posterior set and function) Let
		\[
		B_{1:t}:=\{\vecx\in\Z_q^n\, |\, A_{t',\vecc_{t'}}\vecx\in B_{r,t'},\ \forall t'=1,\dots,t\}=\bigcap_{t'=1}^tB_{t'} \, .
		\]
		Also, let $\mathbf{1}_{B_{1:t}}:\Z_q^n\to\{0,1\}$ be the indicator function of $B_{1:t}$. Namely, $\mathbf{1}_{B_{1:t}}=\prod_{t'=1}^t\mathbf{1}_{B_{t'}}$.
	\end{itemize}
\end{definition}

Now, we show that $\one_{B_{1:t}}$ captures the posterior distribution (i.e., the conditional distribution) of $\vecx$ given messages $S_1, S_2, \dots, S_t$:

\begin{lemma}[Posterior function $\one_{B_{1:t}}$ captures the posterior distribution.]\label{lem:conditional prob}
	For every $t\in[T]$, $\veca \in \F_q^n$, $A_{1:t}$, $\vecc_{1:t}$ and  $S_{1:t}$,   
    \[\Pr[\vecx^*=\veca| A_{1:t}, \vecc_{1:t}, S^Y_{1:t}] = \one_{B_{1:t}}(\veca)/|B_{1:t}| \, .\]
	In particular, for fixed $A_{1:t}$, $\vecc_{1:t}$, and $S_{1:t-1}^Y$, we have $S^Y_t= r_t(A_{1:t},\vecc_{1:t},S_{1:t-1}^Y,A_{t,\vecc_t}\vecx^*)$, where $\vecx^*\sim\Unif(B_{1:t-1})$.
\end{lemma}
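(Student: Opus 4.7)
The plan is to prove the claim by induction on $t$, using essentially Bayes' rule together with the fact that, in the \yes{} distribution, each message $S_{t'}^Y$ is a deterministic function of $A_{t',\vecc_{t'}}\vecx^*$ once $A_{1:t'},\vecc_{1:t'},S_{1:t'-1}^Y$ are fixed. The key structural input is that the matchings and centers $(A_{1:T},\vecc_{1:T})$ are sampled independently of $\vecx^*$, so conditioning on them does not change the distribution of $\vecx^*$ (it remains uniform over $\Z_q^n$).

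For the base case $t=0$, set $B_{1:0}=\Z_q^n$; then $\vecx^*\sim\Unif(\Z_q^n)=\Unif(B_{1:0})$ by the setup of \IFRMD, even after conditioning on $A_{1:0},\vecc_{1:0}$ (vacuous). For the inductive step, assume the claim holds for $t-1$: given $A_{1:t-1},\vecc_{1:t-1},S_{1:t-1}^Y$, the distribution of $\vecx^*$ is $\Unif(B_{1:t-1})$. Since $(A_t,\vecc_t)$ is drawn independently of $\vecx^*$ (and of all previously revealed quantities, given $A_{1:t-1},\vecc_{1:t-1}$), further conditioning on $(A_t,\vecc_t)$ keeps the posterior equal to $\Unif(B_{1:t-1})$.

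Now fix any $(A_{1:t},\vecc_{1:t},S_{1:t-1}^Y)$, and let $g_t(\cdot):=r_t(A_{1:t},\vecc_{1:t},S_{1:t-1}^Y,\cdot)$ as in \cref{def:set and pdf}. By the definition of the \yes{} distribution in \IFRMD, we have $\vecw_t=A_{t,\vecc_t}\vecx^*$, hence $S_t^Y=g_t(A_{t,\vecc_t}\vecx^*)$. Thus for any value $s_t$ in the range of $g_t$,
\[
\{S_t^Y=s_t\} \;=\; \{g_t(A_{t,\vecc_t}\vecx^*)=s_t\} \;=\; \{A_{t,\vecc_t}\vecx^*\in g_t^{-1}(s_t)\} \;=\; \{A_{t,\vecc_t}\vecx^*\in B_{r,t}\} \;=\; \{\vecx^*\in B_t\}.
\]
Intersecting with the inductive event $\{\vecx^*\in B_{1:t-1}\}$ yields $\{\vecx^*\in B_{1:t-1}\cap B_t\}=\{\vecx^*\in B_{1:t}\}$, so by Bayes' rule applied to the uniform prior on $B_{1:t-1}$, the posterior of $\vecx^*$ given $(A_{1:t},\vecc_{1:t},S_{1:t}^Y)$ is $\Unif(B_{1:t})$, which is exactly $\one_{B_{1:t}}(\vecx)/\|\one_{B_{1:t}}\|_1$.

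For the ``in particular'' statement, simply apply the inductive conclusion at step $t-1$: given the fixed $(A_{1:t},\vecc_{1:t},S_{1:t-1}^Y)$, we have $\vecx^*\sim\Unif(B_{1:t-1})$ (interpreting the statement's $B_{1:t}$ as a typo for $B_{1:t-1}$, since $B_{1:t}$ would depend on the as-yet-unspecified $S_t^Y$), and by definition $S_t^Y=r_t(A_{1:t},\vecc_{1:t},S_{1:t-1}^Y,A_{t,\vecc_t}\vecx^*)$. There is no substantial technical obstacle here: the only thing to be careful about is the independence of $(A_{1:t},\vecc_{1:t})$ from $\vecx^*$, which lets us freely condition on these quantities without reshaping the conditional distribution of $\vecx^*$; everything else is a bookkeeping exercise with conditional probabilities.
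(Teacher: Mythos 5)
Your proof is correct and is essentially the same argument as the paper's: both boil down to an application of Bayes' rule, yours organized as a step-by-step induction and the paper's written as a single computation conditioning on all $T$ messages at once. Your inductive bookkeeping --- in particular noting that $(A_t,\vecc_t)$ is independent of $\vecx^*$ and of all previously revealed quantities, so that conditioning on it does not reshape the posterior --- is the same structural input the paper relies on when it factors the joint probability into a product over $t$.

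Your observation about the ``in particular'' clause is also apt: as written, ``$\vecx^*\sim\Unif(B_{1:t})$'' cannot literally hold after fixing only $(A_{1:t},\vecc_{1:t},S_{1:t-1}^Y)$, since $B_{1:t}=\bigcap_{t'\le t} B_{t'}$ depends on $S_t$ through $B_{r,t}=g_t^{-1}(S_t)$. The paper's own proof of the corollary (\cref{cor:reduce hybrid condition}) makes the same slip (``hence $B_{1:t}$ is also fixed''), and the intended reading is exactly the one you give: conditioned on $(A_{1:t},\vecc_{1:t},S_{1:t-1}^Y)$, the law of $\vecx^*$ is $\Unif(B_{1:t-1})$ and $S_t^Y$ is determined from it by applying $r_t$. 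Reading $B_{1:t}$ as $B_{1:t-1}$ in those places makes both the statement and the downstream usage (e.g., the data processing step in the corollary, and conditions (i) and (iv) of \cref{lem:hybrid}) internally consistent.
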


\begin{proof} 
    Recall that $S^Y_t = r_t(A_{1:t},\vecc_{1:t},S_{1:t-1}^Y,A_{t,\vecc_t}\vecx^*)$ by definition, \\ $B_t = \{\vecb \in \F_q^n \mid r_t(A_{1:t},r_t(A_{1:t},\vecc_{1:t},S_{1:t-1}^Y,A_{t,\vecc_t}\vecb) = S^Y_t\},$
    and $B_{1:t} = B_1 \cap B_2 \cap \cdots \cap B_t$. 
    It follows that if $\veca \not\in B_{1:t}$ then there must exists a smallest index such that $S^Y_i \ne r_i(A_{1:i},r_i(A_{1:i},\vecc_{1:i},S_{1:i-1}^Y,A_{i,\vecc_i}\veca)$ and so the probability that $\vecx^* = \veca$ conditioned on $r_i(A_{1:i},\vecc_{1:i},S_{1:i-1}^Y,A_{i,\vecc_i}\vecx^*) = S^Y_i$ is zero. For $\veca \in B_{1:t}$, we simply note that $\vecx^*$ is a priori uniformly distributed over $\Z_q^n$ and conditioning on any event (in our case that $\vecx^*\in B_{1:t}$) its distribution is uniform on the subset of $\Z_q^n$ for which the event holds.
\end{proof}

Now that we have a characterization of the posterior distribution of $\vecx^*$, the following corollary shows that~\autoref{eq:hybrid} (i.e., the condition (iii) of~\cref{lem:hybrid}) can be simplified to bounding the total variation distance between the posterior distribution and the uniform distribution.
\begin{corollary}[Reducing~\cref{eq:hybrid}]\label{cor:reduce hybrid condition}
	Let $r_t,S^Y_{1:t-1},A_{1:t},\vecc_{1:t},B_{1:t},U$ be defined as before, we have
	\[
	\|r_t(A_{1:t},\vecc_{1:t},S^Y_{1:t-1},A_{t,\vecc_t}\vecx^*)-r_t(A_{1:t},\vecc_{1:t},S^Y_{1:t-1},U)\|_{tvd}\leq\|(A_{t,\vecc_t}\vecx^*)-U\|_{tvd}
	\]
	where $\vecx^*\sim\Unif(B_{1:t})$.
\end{corollary}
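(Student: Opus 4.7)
The plan is to observe that this corollary is essentially the data processing inequality in disguise. Fix $A_{1:t}$, $\vecc_{1:t}$, and $S^Y_{1:t-1}$ (as we are allowed to, since all of these appear as shared arguments on both sides of the claimed inequality). Define the deterministic map
\[
g_t : \Z_q^{(k-1)\alpha n} \to \{0,1\}^{*}, \qquad g_t(\vecw) := r_t(A_{1:t},\vecc_{1:t},S^Y_{1:t-1},\vecw).
\]
Then the left-hand side of the claim is precisely $\|g_t(A_{t,\vecc_t}\vecx^*) - g_t(U_t)\|_{tvd}$, while the right-hand side is $\|A_{t,\vecc_t}\vecx^* - U_t\|_{tvd}$ where $\vecx^* \sim \Unif(B_{1:t})$ and $U_t \sim \Unif(\Z_q^{(k-1)\alpha n})$.

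The data processing inequality in \cref{prop:tvd properties} is stated for functions $f(X,W)$ with $W$ independent of $X,Y$; applying it with $W$ taken to be a trivial (constant) random variable and $f = g_t$ gives the standard special case that for any deterministic function $g_t$ and any two random variables $V_1, V_2$ on its domain,
\[
\|g_t(V_1) - g_t(V_2)\|_{tvd} \le \|V_1 - V_2\|_{tvd}.
\]
Setting $V_1 = A_{t,\vecc_t}\vecx^*$ and $V_2 = U_t$ yields the corollary immediately. Note that \cref{lem:conditional prob} justifies using $\vecx^* \sim \Unif(B_{1:t})$ here: it is exactly the posterior distribution of $\vecx^*$ conditioned on the fixed $A_{1:t},\vecc_{1:t},S^Y_{1:t-1}$, which is the context in which the left-hand side should be interpreted.

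There is no real obstacle to overcome; the entire content of the corollary is the recognition that once $A_{1:t}, \vecc_{1:t}, S^Y_{1:t-1}$ are fixed, the message produced by the $t$-th player is a deterministic post-processing of the input $\vecw_t$, and so comparing the two message distributions reduces to comparing the two input distributions $A_{t,\vecc_t}\vecx^*$ and $U_t$. The significance of the corollary lies not in its proof but in the reduction it effects: it reframes the Yes/No indistinguishability in condition (iv) of \cref{lem:hybrid} as a statement purely about how close the distribution of $A_{t,\vecc_t}\vecx^*$ (for $\vecx^* \sim \Unif(B_{1:t-1})$, say, after factoring in the structure from \cref{lem:conditional prob}) is to uniform on $\Z_q^{(k-1)\alpha n}$, which is the Fourier-analytic quantity the subsequent ``boundedness'' machinery is designed to control.
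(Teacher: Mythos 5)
Your proof is correct and matches the paper's own argument: both invoke \cref{lem:conditional prob} to justify $\vecx^* \sim \Unif(B_{1:t})$ and then apply the data processing inequality from \cref{prop:tvd properties} to the deterministic map $\vecw \mapsto r_t(A_{1:t},\vecc_{1:t},S^Y_{1:t-1},\vecw)$ with the shared arguments fixed. The only addition you make is spelling out that $W$ can be taken trivial in the stated form of the inequality, which the paper leaves implicit.
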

\begin{proof}
	By~\cref{lem:conditional prob}, we have
	\[
	S^Y_t=r_t(A_{1:t},\vecc_{1:t},S^Y_{1:t-1},A_{t,\vecc_t}\vecx^*)
	\]
	where $\vecx^*\sim\Unif(B_{1:t})$. Note that when we fix $A_{1:t}$, $\vecc_{1:t}$, and $S^Y_{1:t-1}$ (hence $B_{1:t}$ is also fixed), by data processing inequality (see item 2 of~\autoref{prop:tvd properties}) we have
	\[
	\|r_t(A_{1:t},\vecc_{1:t},S^Y_{1:t-1},A_{t,\vecc_t}\vecx^*)-r_t(A_{1:t},\vecc_{1:t},S^Y_{1:t-1},U)\|_{tvd}\leq\|(A_{t,\vecc_t}\vecx^*)-U\|_{tvd} \, .\qedhere
	\]
\end{proof}
Namely,~\autoref{eq:hybrid} (i.e., the condition (iii) of~\autoref{lem:hybrid}) can be replaced with $\|(A_{t,\vecc_t}\vecx^*)-U\|_{tvd}\leq\delta/T$, i.e., after applying a random folded hypermatching matrix $A_{t,\vecc_t}$ to the posterior distribution $\Unif(B_{1:t})$, the distribution of the resulting string is close to the uniform distribution $\Unif(\Z_q^{(k-1)\alpha n})$.

Finally, the following lemma shows that when the amount of communication is small, the posterior set is large with high probability.
\begin{lemma}[Posterior set is large]\label{lem:posterior set is large}
	Let $\Pi=(r_1,\dots,r_T)$ be a deterministic protocol for $\IFRMD$ where each message function $r_t$ outputs a message of length at most $s$ bits for some $1\leq s\leq n$. Let $B_t$ be the posterior set defined in~\autoref{def:set and pdf} for every $t\in[T]$. For every $\delta\in(0,1)$ and $t\in[T]$, we have $|B_t|\geq \delta\cdot q^{n-s}$ with probability at least $1-\delta$ over the randomness of $\vecx\in\Z_q^n$.
\end{lemma}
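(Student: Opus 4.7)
\emph{Plan.} I would use a direct pigeonhole / counting argument on the transcript. Fix the matchings $A_{1:t}$ and centers $\vecc_{1:t}$ arbitrarily (the probability is taken over $\vecx\sim\Unif(\Z_q^n)$ only) and regard the full transcript $\Phi(\vecx):=(S_1(\vecx),\ldots,S_t(\vecx))$ as a deterministic function from $\Z_q^n$ into a set of size at most $2^{st}$. The key inclusion I would exploit is
$\Phi^{-1}(\Phi(\vecx))\subseteq B_t(\vecx)$: any $\vecy$ that produces the same entire transcript as $\vecx$ satisfies in particular $r_t(A_{1:t},\vecc_{1:t},S_{1:t-1}(\vecx),A_{t,\vecc_t}\vecy)=S_t(\vecx)$, so by definition of $B_t$ it lies in $B_t(\vecx)$. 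Hence a lower bound on $|\Phi^{-1}(\Phi(\vecx))|$ yields the same lower bound on $|B_t(\vecx)|$.

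\emph{Counting step.} For a deterministic map $\Phi$ whose image has size at most $M$, a standard pigeonhole computation gives
\[
\Pr_{\vecx\sim\Unif(\Z_q^n)}\!\bigl[\,|\Phi^{-1}(\Phi(\vecx))|<\tau\,\bigr]
\;=\;\sum_{\sigma:\,|\Phi^{-1}(\sigma)|<\tau}\frac{|\Phi^{-1}(\sigma)|}{q^n}
\;\le\;\frac{\tau M}{q^n}.
\]
In our setting $M\le 2^{st}\le q^{st}$, so setting $\tau=\delta\cdot q^{n-st}$ makes the failure probability at most $\delta$, which after substituting back yields $|B_t(\vecx)|\ge |\Phi^{-1}(\Phi(\vecx))|\ge \delta\cdot q^{n-st}$ with probability at least $1-\delta$. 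Since $T$ is a fixed constant in the \IFRMD\ setup, absorbing the $t\le T$ factor into the communication-length bound (i.e., interpreting $s$ as the total transcript length through round $t$) recovers exactly the stated form $|B_t|\ge \delta\cdot q^{n-s}$.

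\emph{Main obstacle.} The only subtlety is matching the precise constant in the exponent: the naive pigeonhole produces $q^{n-st}$, whereas the stated bound reads $q^{n-s}$. To recover $q^{n-s}$ directly (rather than by absorbing $T$ into constants) one would instead condition first on the prefix $\sigma_{1:t-1}$: for each fixed prefix the function $g_{\sigma_{1:t-1}}(\vecy):=r_t(A_{1:t},\vecc_{1:t},\sigma_{1:t-1},A_{t,\vecc_t}\vecy)$ partitions $\Z_q^n$ into at most $2^s$ cells, one of which is $B_t$, so conditionally the pigeonhole loss is only $2^s$. The difficulty is then bounding the number of populated prefixes, which can be as large as $2^{s(t-1)}$; handling the resulting double sum carefully is the main technical annoyance. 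However, in the $T=O(1)$ regime in which the lemma is invoked, this is cosmetic and does not affect the downstream use in the hybrid argument.
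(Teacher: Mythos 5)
Your argument via the full-transcript map $\Phi$ is internally correct but proves only the weaker bound $|B_t|\geq\delta q^{n-st}$; the inclusion $\Phi^{-1}(\Phi(\vecx))\subseteq B_t$ that you use is lossy, since $\Phi^{-1}(\Phi(\vecx))$ is exactly the aggregate posterior set $B_{1:t}$ of \cref{def:set and pdf}, which is typically much smaller than $B_t$. The paper's proof avoids this entirely because it never invokes the full transcript. The observation you make in your final paragraph, before dismissing it, is in fact the whole proof: by \cref{def:set and pdf}, $B_t$ is a level set (pulled back along $A_{t,\vecc_t}$) of the single map $g_t(\cdot)=r_t(A_{1:t},\vecc_{1:t},S_{1:t-1},\cdot)$, where $S_{1:t-1}$ is \emph{part of the data defining $B_t$}, not a random quantity to be averaged over. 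For each fixed $(A_{1:t},\vecc_{1:t},S_{1:t-1})$, the map $\vecx\mapsto g_t(A_{t,\vecc_t}\vecx)$ takes at most $2^s$ values and its level sets partition $\Z_q^n$; $B_t$ is the cell containing $\vecx$. The paper then computes $\Exp_{\vecx\sim\Unif(\Z_q^n)}\bigl[1/|P(\vecx)|\bigr]=(\text{number of nonempty cells})/q^n\leq 2^s/q^n\leq q^{s-n}$ and finishes with Markov. There is no ``double sum over populated prefixes'' to handle, and hence no ``main technical annoyance'': one never sums over values of $S_{1:t-1}$.

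As a secondary point, your claim that $q^{n-st}$ ``does not affect the downstream use'' is asserted rather than checked. With the paper's setting $b=\tau n+\log_q(10/\delta')$ in the proof of \cref{lem:hybrid}, the required inequality $\delta'q^{n-st}\geq q^{n-b}$ fails for $t\geq 2$ once $n$ is large, so you would have to redefine $b$ (say as $\tau n T+\log_q(10/\delta')$) and re-verify $2Tb<s$ and the hypotheses of \cref{lem:boundedness base case}. That can be made to work since $T$ is a constant, but the tight $q^{n-s}$ is available in the same two lines, so there is no reason to settle for the weaker estimate.
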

\begin{proof}
	Fix a hypermatching $M$ and centers $\vecc$, the $t$-th message function induces a partition $P_1\cup P_2\cup\cdots\cup P_{2^{s}}$ of $\Z_q^n$. For each $\vecx\in\Z_q^n$, we define $P(\vecx)$ to be the part that contains $\vecx$, i.e, if $\vecx\in P_i$, then $ P(\vecx)=P_i$. Note that
	\[
	\Exp_{\vecx\in\Z_q^n}\left[\frac{1}{|P(\vecx)|}\right] = \sum_{i=1}^{2^s}\frac{\Pr_{\vecx\in\Z_q^n}[\vecx\in P_i]}{|P_i|} = \sum_{i=1}^{2^{s}}\frac{|P_i|\cdot q^{-n}}{|P_i|}=\frac{2^{s}}{q^n}\leq q^{s-n} \, .
	\]
	By Markov's inequality, we have $|P(\vecx)|<\delta\cdot q^{n-s}$ with probability at most $\delta$ as desired.
\end{proof}

\subsection{Fourier analytic conditions}\label{ssec:bounded-def}
In this subsection, we define and analyze Fourier-analytic properties of the posterior set $B$ and show that these properties are sufficient for the condition (iii) (i.e.,~\autoref{cor:reduce hybrid condition}) of~\autoref{lem:hybrid}.

\subsubsection{Three key definitions}
Recall that given a matching $M = (e_1,\ldots,e_m)$ and centers $\vecc = (c_1,\ldots,c_m)$, $A_\vecc$ is the $\vecc$-centered folded encoding of $M$. We are going to define three properties for sets $B$ in $\Z_q^n$.
First, we say a set $B\subseteq\Z_q^n$ is $(M,\vecc)$-restricted if $B$ is a union of cosets  (affine shifts) of the null space of $A_{\vecc}$.

\begin{restatable}[Restricted set]{definition}{restricted}\label{def:resitrctness}
	Let $M$ be a $k$-hypermatching of size $m$ and $\vecc$ be centers. We say a set $B\subseteq\Z_q^n$ is $(M,\vecc)$-restricted if there exists a (``reduced'') set $B_r \subseteq \Z_q^{(k-1)m}$ such that $B = \{\vecx \in \Z_q^n \, |\, A_{\vecc}\vecx \in B_r\}$.
\end{restatable}

Note that the posterior set (\cref{def:set and pdf}) of round $t$ is $(M_t,\vecc_t)$-restricted.

Next, we introduce the notion of a subset of $\Z_q^n$ being \textit{(strongly/weakly) bounded}. These notions are similar to those in \cite[Definition 4.3]{KK19}. They say that a set $B$ is bounded if the Fourier spectrum of the indicator function $\mathbf{1}_B$ can be appropriately bounded in terms of the $\ell_1$ norm on various Hamming levels. 

First, we introduce some notation. Note that for every set $B \subseteq \Z_q^n$, the $\mathbf{0}$-th Fourier coefficient of the indicator function $\mathbf{1}_B$ is $|B|/q^{n}$. In what follows we study the Fourier coefficients of $\mathbf{1}_B$ after scaling by $q^n/|B|$ so that the $\mathbf{0}$-th Fourier coefficient after scaling has value $1$. In what follows we define weak and strong bounding functions for the $\ell_1$ norm of the Fourier coefficients based on Hamming weight. Not all functions will satisfy the desired bounds, but indicators of posterior sets turn out to satisfy these bounds and this is the crux of our (and \cite{KK19}'s) analysis. 

\begin{equation}
	W_{C,s}(h) := \begin{cases}
		1, \quad &h = 0, \\
		\left(\frac{C\sqrt{s n}}{h}\right)^{h/2},     &1\leq h\leq s, \\
		\infty,     &h>s.
	\end{cases} \label{eq:weak-ucs-def}
\end{equation}
\begin{equation}
	U_{C,s}(h) := \begin{cases}
		W_{C,s}(h),     &0\leq h\leq s, \\
		\min\left\{W_{C,h}(h), \left(\frac{2q^2e^2 n}{h}\right)^{h/2}\right\},     &h>s.
	\end{cases} \label{eq:strong-ucs-def}
\end{equation}

(Above, $U$ stands for Upper bound, while $W$ stands for a Weak upper bound.)

\begin{restatable}[(Strongly/weakly) Bounded set]{definition}{bounded}\label{def:boundedness}
	Let $n,q\in\mathbb{N}$, $0\leq s\leq n$, $C>0$, and $B\subset\Z_q^n$. We say $B$ (as well as its indicator function $\mathbf{1}_B$) is $(C,s)$-(strongly)-bounded if, for every $h \in [n]$,
	\begin{equation}\label{eq:boundedness}
		\sum_{\substack{\vecu\in\Z_q^n\\\|\vecu\|_0=h}}\frac{q^{n}}{|B|}\left|\widehat{\mathbf{1}_B}(\vecu)\right|\leq U_{C,s}(h).
	\end{equation}
	We say that $B$ is {\em $(C,s)$-weakly-bounded} if the bound on the RHS above is replaced by $W_{C,s}(h)$. (Unless otherwise specified we use ``bounded'' to mean ``strongly bounded''.)
\end{restatable}

\begin{remark}
	As we keep track of posterior sets that are inductively refined, we will need the \emph{entire} Fourier spectrum of the corresponding indicator functions to be bounded from above by the function $U_{C,s}$ (for appropriate $C,s > 0$). The notion of boundedness is such that it allows us to show that $A_\vecc \vecx$ is close to the uniform distribution on $\Z_q^{(k-1)\alpha n}$ when $\vecx$ is drawn from a bounded posterior set $B \subset \Z_q^n$ (see \cref{lem:boundedness implies uniform}). Our upper bounds typically establish only the weak bound $W_{C,s}(h)$ (particularly,~\cref{lem:induction step exp} and~\cref{lem:q-simplify}), we usually prove this holds for every $s$ in some large interval and this allows us along with standard Fourier analysis (see \cref{lem:weak-vs-strong}) to establish the stronger bound for a slightly worse choice of constant $C$. 
\end{remark}

We describe some non-trivial properties of boundedness in \cref{sec:reduced bounded} but we start with some elementary assertions. 

\begin{proposition}\label{prop:basic-boundedness}
\begin{enumerate}

    \item If $B\subseteq \Z_q^n$ is $(C,s)$(-strongly/weakly)-bounded then it is also $(C',s)$(-strongly/weakly)-bounded for every $C' \geq C$.
    \item The set $B_0 = \Z_q^n$ is $(C,s)$-strongly-bounded for every $C \geq 0$ and every $0 \leq s \leq n$. 
\end{enumerate}
\end{proposition}

\begin{proof}
    Part (1) follows from the fact that $W_{C,s}(h) \leq W_{C',s}(h)$ and $U_{C,s}(h) \leq U_{C',s}(h)$ for every $s,h$ and $C' \geq C$. 
    Part (2) follows from the fact that $\widehat{\mathbf{1}_B}(\vec0) = 1$ and $\widehat{\mathbf{1}_B}(\vecu) = 0$ for all non-zero $\vecu \in \Z_q^n$ and so $B_0$ is $(0,s)$-strongly-bounded for every $0 \leq s \leq n$. Combining with Part (1) now yields the claim for every $C \geq 0$. 
\end{proof}

Finally, in what follows we will show that the intersection of a bounded set with a ``restricted set'' is also bounded and this will be the core of our induction. To do this we need to understand the Fourier behavior of restricted sets. It turns out that restricted sets satisfy a property stronger than being bounded, which we term ``reduced''-ness below. 


\begin{restatable}[(Weakly/Strongly) Reduced set]{definition}{reduced}\label{def:reducedness}
	Let $n,q\in\mathbb{N}$, $0\leq s\leq n$, $C>0$, and $B\subset\Z_q^n$. Let $M$ be a $k$-hypermatching. We say $B$ (as well as its indicator function $\mathbf{1}_B$) is $(M,C,s)$-(strongly)-reduced if the following hold. 
	\begin{itemize}
		\item For every $\vecu\in\Z_q^n$, if there exists $i\in[n]$ such that $u_i=1$ but $i$ is not contained~\footnote{We use ``contained in'' and ``touched by'' interchangeably as in some later contexts it makes more sense to use ``touched by'' when working with a set of vertices or hyperedges.} in $M$ (i.e., none of the hyperedges of $M$ contains $i$), then $\widehat{\mathbf{1}_B}(\vecu)=0$.
		\item For every $\vecu\in\Z_q^n$, if there exists a hyperedge $e_i$ of $M$ such that $\langle \vecu, \vece_i \rangle \not\equiv 0 \bmod{q}$, then $\widehat{\mathbf{1}_B}(\vecu)=0$.
		\item 
        For every $h\in\{1,\dots,n\}$ and $\vecv\in\Z_q^n$,
		\[
		\sum_{\substack{\vecu\in\Z_q^n\\\|\vecu+ \vecv\|_0=h}}\frac{q^{n}}{|B|}\left|\widehat{\mathbf{1}_B}(\vecu)\right|\leq U_{C,s}(h) \, .
		\]
	\end{itemize}
	If the bound in the RHS is replaced by the weaker $W_{C,s}(h)$ bound, then we say that $B$ is a {\em weakly-reduced} set. (Again we usually suppress the word ``strongly'' and simply refer to strongly-reduced sets as reduced set.)
\end{restatable}

As a remark, the first two conditions in the definition of reducedness are motivated by the Fourier analytic properties of posterior sets (e.g.,~\cref{lem:fourier coeff of set}, and~\cref{claim:g hat combo}). The third condition is a strengthening of boundedness. In particular the third condition applied with $\vecv=\mathbf{0}$ implies that every $(M,C,s)$-(strongly/weakly)-reduced set $B$ is also $(C,s)$-(strongly/weakly)-bounded.
That is why we say reducedness is the intersection of restrictedness and boundedness.

In summary, restrictedness (\cref{def:resitrctness}) is a certain posterior property and boundedness is a certain Fourier analytic condition while reducedness is the intersection of the two. By~\cref{def:set and pdf}, we immediately have that each posterior set $B_t$ is $(M_t,\vecc_t)$-restricted and in the lemmas stated below we will establish that $B_t$ is $(M_t,C_0,s)$-reduced and the aggregated posterior set $B_{1:t}$ is $(C_t,s)$-bounded with high probability (for some choices of parameters $s,C_0,C_1,\dots,C_T$). See also~\cref{fig:hybrid} for a pictorial view of these definitions.

\subsubsection{Three key lemmas}

There are three key lemmas about these Fourier analytic conditions. 
The first lemma establishes the ``large'' enough restricted sets are reduced. We typically apply this lemma to the sets $B_t$.

\begin{restatable}[Posterior set]{lemma}{boundednessbasecase}
	\label{lem:boundedness base case}
	For every $q,k\geq2$, there exist constants $\epsilon_0 > 0$ and $C_0 < \infty$ such that 
	for every  sufficiently large $n$, every $k$-hypermatching $M$ on vertex set $[n]$, every pair of integers $b,s$ satisfying $0<b\leq s\leq \epsilon_0 \cdot n$ the following holds. If $B\subseteq\Z_q^n$ satisfies (i) there exists a sequence of centers $\vecc$ such that $B$ is $(M,\vecc)$-restricted, and (ii) $|B| \geq q^{n-b}$, then $B$ is $(M,C_0,s)$-reduced.
\end{restatable}

The proof of~\cref{lem:boundedness base case} is given in \cref{sec:reduced bounded base case}.

Recall from~\autoref{cor:reduce hybrid condition} that the condition (iii) in~\autoref{lem:hybrid} is implied by showing $A_{\vecc}\vecx$ is close to the uniform distribution over $\Z_q^{(k-1)m}$ with high probability over the choice of $A_{\vecc}$ where $\vecx$ is sampled uniformly from the posterior set $B_{1:t}$. The second key lemma shows that $A_{\vecc}\vecx^*$ is indeed close to uniform when the posterior set is bounded.

\begin{restatable}[Boundedness implies (closeness to) uniformity]{lemma}{boundednessuniform}\label{lem:boundedness implies uniform}
	For every $q,k\geq2$ there exists $\alpha_0 = \alpha_0(k,q)$ such that for every $\delta \in (0,1/2)$ and $C< \infty$, there exists $\tau=\tau(q,k,\delta,C)>0$ and $s_0 = s_0(\delta) < \infty$ such that the following holds for every sufficiently large $n$:
	
	Let $B\subset \Z_q^n$ be a $(C,s)$-bounded set with $|B| \geq q^{n-b}$, for $s_0\leq b \leq s\leq \tau n$. Let $M$ be a random $k$-hypermatching of size $m \leq \alpha_0 n$ and $\vecc$ be a uniformly random sequence of centers for $M$ and let $A_{\vecc}$ denote the $\vecc$ centered folded encoding of $M$. Then, with probability at least $1-\delta$ over the choice of $M$ and $\vecc$, for every $\vecz_0\in\Z_q^{(k-1)m}$, we have that
	\[
	1-\delta<q^{(k-1)m}\Pr_{\vecx\sim\Unif(B)}[A_{\vecc}\vecx=\vecz_0] < 1+\delta \, .
	\]
	As a consequence, we also have (with probability at least $1-\delta$ over the choice of $(M,\vecc)$):
	
	\begin{enumerate}
		\item  $\|(A_{\vecc}\vecx)-U\|_{tvd}\leq\delta$ where $\vecx\sim\Unif(B)$ and $U\sim\Unif(\Z_q^{(k-1)m})$.
		\item For every non-negative function $f:\Z_q^{(k-1)m}\to \mathbb{R}^{\geq 0}$, 
		\[
		(1-\delta)\le \frac{\mathbb{E}_{\vecx\sim \Unif(B)}\left[f(A_c\vecx)\right]}{\mathbb{E}_{\vecz\sim \Unif(\Z_q^{(k-1)m})}\left[f(\vecz)\right]} \le (1+\delta) \, .
		\]
	\end{enumerate}

\end{restatable}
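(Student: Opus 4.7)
The plan is to Fourier-analyze the posterior distribution $\Unif(B)$, bound the pointwise deviation of $\Pr_{\vecx\sim\Unif(B)}[A_{\vecc}\vecx=\vecz_0]$ from $q^{-(k-1)\alpha n}$ by a quantity depending only on $M$ (not on $\vecz_0$ or $\vecc$), and conclude via Markov's inequality over the random matching. Using the character expansion $\one\{A_{\vecc}\vecx=\vecz_0\}=q^{-(k-1)\alpha n}\sum_{\vecy}\omega^{\vecy^{\top}(A_{\vecc}\vecx-\vecz_0)}$ and the identity $\sum_{\vecx\in B}\omega^{\vecv^{\top}\vecx}=q^n\,\overline{\widehat{\one_B}(-\vecv)}$, a short calculation gives
\[
q^{(k-1)\alpha n}\Pr_{\vecx\sim\Unif(B)}[A_{\vecc}\vecx=\vecz_0]-1 \;=\; \frac{q^n}{|B|}\sum_{\vecy\neq\mathbf{0}}\overline{\widehat{\one_B}(-A_{\vecc}^{\top}\vecy)}\,\omega^{-\vecy^{\top}\vecz_0}.
\]
Taking absolute values, the left-hand side is bounded by $X_M:=\frac{q^n}{|B|}\sum_{\vecy\neq\mathbf{0}}\bigl|\widehat{\one_B}(A_{\vecc}^{\top}\vecy)\bigr|$, so it suffices to prove $X_M\leq\delta$ with probability at least $1-\delta$ over $M$.

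Next I would exploit the structure of $A_{\vecc}$. Its $(k-1)\alpha n$ rows split into blocks of $k-1$ linearly independent rows with disjoint column supports, so the map $\vecy\mapsto A_{\vecc}^{\top}\vecy$ is injective with image $\mathcal{I}_M:=\{\vecu\in\Z_q^n:\supp(\vecu)\subseteq V(M),\ \sum_{j\in e}u_j\equiv 0\!\!\pmod{q}\text{ for every }e\in M\}$, a set that depends only on $M$ and not on the center vector $\vecc$. The zero-sum condition automatically rules out weight-$1$ vectors. Taking expectation over $M$ and switching the order of summation,
\[
\Exp_M[X_M] \;=\; \sum_{h\geq 2}\sum_{\|\vecu\|_0=h}\frac{q^n}{|B|}\bigl|\widehat{\one_B}(\vecu)\bigr|\cdot\Pr_M[\vecu\in\mathcal{I}_M].
\]

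The key combinatorial estimate will be $\Pr_M[\vecu\in\mathcal{I}_M]\leq(c_{k,q}\alpha h/n)^{h/2}$ for any fixed $\vecu$ of weight $h$, obtained by summing over all partitions of $\supp(\vecu)$ into parts of size at least $2$ (the zero-sum condition forces each hyperedge of $M$ meeting $\supp(\vecu)$ to contain at least two support vertices) and all extensions of each part to a hyperedge lying in the random matching. Combined with the $(C,s)$-boundedness hypothesis $\sum_{\|\vecu\|_0=h}\frac{q^n}{|B|}|\widehat{\one_B}(\vecu)|\leq U_{C,s}(h)=(C\sqrt{sn}/h)^{h/2}$ in the range $h\leq s$, and with the automatic high-weight bound noted in the remark following \cref{def:boundedness} (which applies because $|B|\geq q^{n-s}$) in the range $h>s$, the contribution of weight $h$ is at most $(c_{k,q}C\alpha\sqrt{s/n})^{h/2}\leq(c_{k,q}C\alpha\sqrt{\tau})^{h/2}$. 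Choosing $\alpha_0$ and $\tau_0$ small enough that $c_{k,q}C\alpha\sqrt{\tau}\leq 1/4$, the resulting geometric series gives $\Exp_M[X_M]\leq\delta^2$, whence $\Pr_M[X_M>\delta]\leq\delta$ by Markov. The TVD bound $\|(A_{\vecc}\vecx)-U\|_{tvd}\leq\delta$ follows from $\sum_{\vecz_0}|\Pr[A_{\vecc}\vecx=\vecz_0]-q^{-(k-1)\alpha n}|\leq\delta$, and the ratio of expectations follows directly from the pointwise multiplicative bound $(1-\delta)\leq q^{(k-1)\alpha n}\Pr[A_{\vecc}\vecx=\vecz_0]\leq(1+\delta)$.

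The main obstacle is the combinatorial estimate $\Pr_M[\vecu\in\mathcal{I}_M]\leq(c_{k,q}\alpha h/n)^{h/2}$ with the sharp $h/2$ exponent rather than the naive $h/k$ that one would obtain by merely requiring $\supp(\vecu)\subseteq V(M)$; the $h/2$ exponent exploits the zero-sum constraint on each occupied hyperedge, and it is precisely what cancels the $h^{h/2}$ growth in $U_{C,s}(h)$ and converts a $\sqrt{n}$-space lower bound into a linear one. Carefully summing over partitions of $\supp(\vecu)$ into parts of size at least $2$, tracking the polynomial-in-$n$ factors across partitions of different arities simultaneously, and separately handling the high-weight regime $h>s$ using only the cruder hypothesis $|B|\geq q^{n-s}$, will form the technical crux of the argument.
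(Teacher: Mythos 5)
Your proposal is correct and follows essentially the same route as the paper's proof: the character expansion you use is the same Fourier-inner-product identity the paper derives via the convolution theorem (\cref{lem:convthm}); your identification of the image $\mathcal{I}_M$ of $A_{\vecc}^\top$ (supported on $V(M)$, zero sum on each hyperedge, independent of the center $\vecc$) is exactly the content of the paper's Claim~\ref{claim:g hat combo} on the vanishing pattern of $\widehat{g}$; and the remainder --- bounding the deviation by an $M$-measurable quantity uniform in $\vecz_0$, taking $\Exp_M$, invoking a matching-probability estimate (\cref{lem:matching prob folded}), splitting by Fourier weight into the low regime $h\leq s$ and high regime $h>s$, and applying Markov's inequality --- is the same argument the paper packages as \cref{lem:boundedness implies uniform calculation}. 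Two small points of care: your stated bound $\Pr_M[\vecu\in\mathcal{I}_M]\leq (c_{k,q}\alpha h/n)^{h/2}$ carries a sharper $\alpha$-dependence than the $\alpha^{2/k}$ in \cref{lem:matching prob folded} (both suffice here), and in the final parameter choices the high-weight regime $h>s$ forces $\alpha_0$ to be small as a function of $(k,q)$ alone so as to beat the $(2q^2e^2n/h)^{h/2}$ growth of $U_{C,s}(h)$, while only the low-weight regime requires $\tau_0$ to shrink with $C$ and $\delta$ --- so these cannot be collapsed into a single condition $c_{k,q}C\alpha\sqrt{\tau}\leq 1/4$ as your sketch suggests.
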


The proof of~\cref{lem:boundedness implies uniform} is postponed to~\cref{sec:bounded implies uniform}.

Our final lemma of this section asserts that if $\one_{B_{1:t}}$ is $(C,s)$-bounded, then $f_{1:t+1}$ is $(O(C),s)$-bounded with high probability. 

\begin{restatable}[Induction step]{lemma}{leminductionstep}
\label{lem:induction step}

    For every $q,k\in\N$ there exist $\alpha_0>0$ and $C_0< \infty$ such that for every $C\geq C_0$, and $\delta\in(0,1/2)$, there exist  $\tau_0 = \tau_0(q,k,\delta,C) \in(0,1)$ and $C' = C'(q,k,\delta,C) >0$ 
	such that the following holds.
	For every $n,b,b',s,m\in \N$, satisfying $m \leq \alpha_0 n$, $0<b,b',s<\tau_0 n$ and every $(C,s)$-bounded set $B\subset\Z_q^n$ satisfying $|B| \geq q^{n-b}$, we have that with probability at least $1-4\delta$ over a uniformly random $k$-hypermatching $M$ of size $m$ and every $(M,C_0,s)$-reduced set $B'\subset\Z_q^n$ satisfying $|B'|\geq q^{n-b'}$ and $|B\cap B'|\geq(1-\delta)\cdot|B|\cdot|B'|/q^n \geq q^{n-s}$, we have $B\cap B'$ is $(C',s)$-bounded.

\end{restatable}

\cref{lem:induction step} is proved in \cref{sec:induction}. In our inductive application of the lemma above,
we  set $B\gets B_{1:t-1}$ and $B'\gets B_t$ for every $t\in\{2,3,\dots,T\}$ to get that all the $B_t$'s are bounded and this is the core of the proof of \cref{lem:hybrid}.

\newpage 

\subsection{Proof of Lemma~\ref{lem:hybrid}}\label{sec:hybrid}

	\begin{figure}[ht]
		\centering
		\includegraphics[width=15cm]{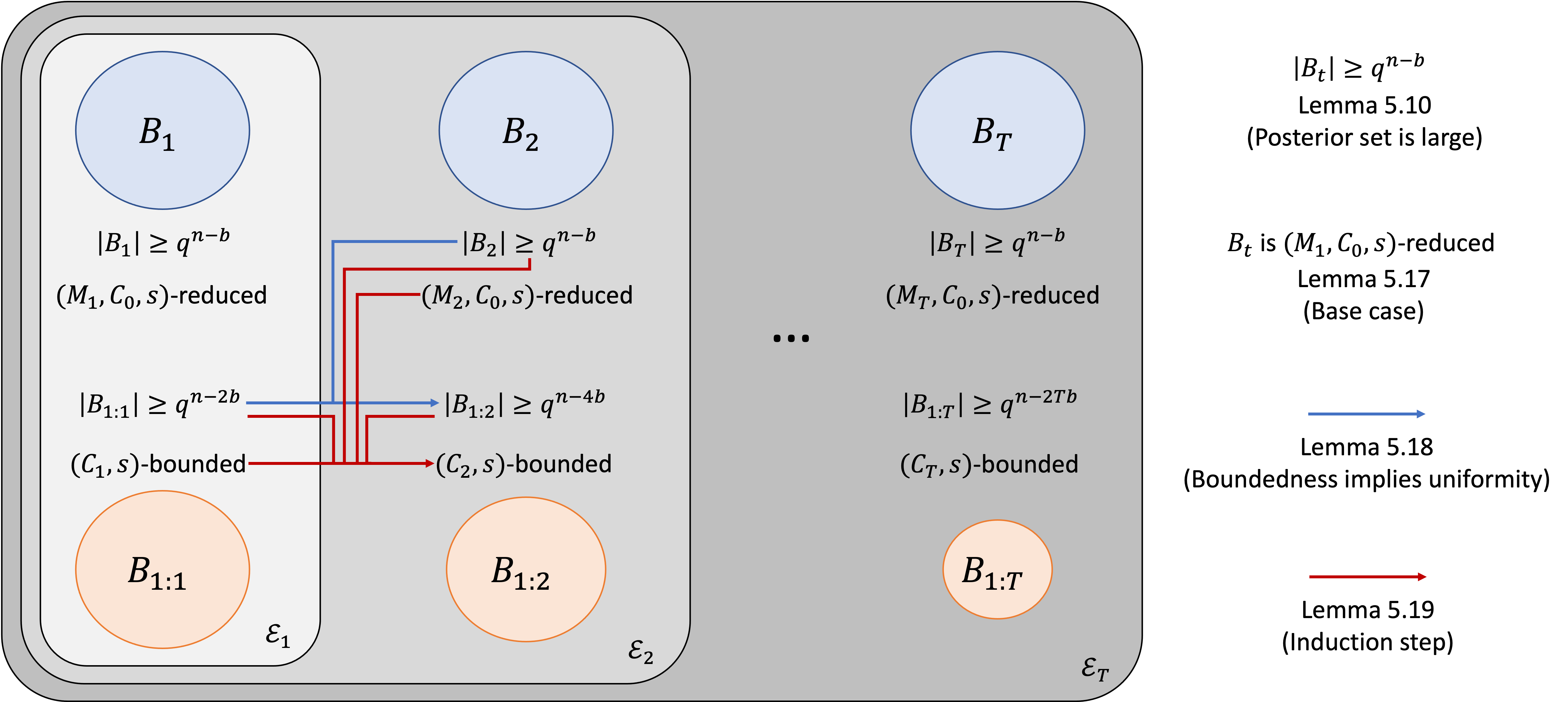}
		\caption{A pictorial overview of the proof of~\autoref{lem:hybrid}. Each posterior set $B_t$ (the blue sets) is both $(M_t,\vecc_t)$-restricted (followed from~\cref{def:set and pdf}) and $(M_t,C_0,s)$-reduced (followed from~\cref{lem:boundedness base case}). Each aggregated posterior set $B_{1:t}$ (the orange sets) is $(C_t,s)$-bounded (followed from~\cref{lem:induction step}).}
		\label{fig:hybrid}
	\end{figure}
\hybrid*

\begin{proof}[Proof of~\cref{lem:hybrid}]~\\


     \paragraph{Overview of proof:}  (See~\cref{fig:hybrid} for a pictorial overview of the proof.) The rough overview of the proof is to show that for an appropriate choice of the constants $C_1,\ldots,C_T$, for every $t \in [T]$, the posterior set $B_{1:t}$ is $(C_t,s)$-bounded. Once we have this, we can apply ``boundedness implies uniformity'' lemma (\cref{lem:boundedness implies uniform}) to conclude that the messages sent by the $t$th player on the YES and NO distributions are indistinguishable. To show the boundedness condition for $B_{1:t}$ we use induction to deduce that $B_{1:t-1}$ is bounded, and then reason about $B_t$ to conclude that it is large, $(M,C_0,s)$-reduced (for appropriate $C_0$), and crucially that it is roughly independent of $B_{1:t-1}$. 
     Proving the above involves, among other reasoning, another application of the boundedness implies uniformity lemma (on $B_{1:t-1}$). With these ingredients in place the induction step lemma (\cref{lem:induction step}) yields the boundedness of $B_{1:t}$. We give the details below.

       
    \paragraph{Setting of parameters:} We note that in addition to the parameters $\alpha_0$, $\tau$,  $n_0$, $\{\delta_t\}_{t\in [T]}$ required by the lemma statement, we also need to specify the constants $\{C_t\}_{t \in [T]}$ alluded to in the overview. Additionally we also specify three integer parameters: $s^*$ which specifies the length of the message, $b$ which quantifies largeness of various posterior sets, and $s$ which quantifies the boundedness of posterior functions. 
 
    Given $q$ and $k$, let $\alpha_{0,1}$ be the $\alpha_0(k,q)$ from \cref{lem:boundedness implies uniform} and $\alpha_{0,2}$ be the $\alpha_0(k,q)$ from \cref{lem:induction step}. We set $\alpha_0 = \min\{\alpha_{0,1},\alpha_{0,2}\}$. 
    Let $\epsilon_0 = \epsilon_0(k,q)$ and $C_0 = C_0(k,q)$ be the constants from \cref{lem:boundedness base case}. 
    Now given $T$ and $\delta$, we need to specify $\tau > 0$ and $n_0 < \infty$. We first set a large number of intermediate parameters that will be used in the rest of the proof. Recall that \cref{lem:induction step} takes as input parameters $q,k,\delta$ and $C \geq C_0$ and gives constants $C' = C'(q,k,\delta,C)$ and $\tau_0(q,k,\delta,C)$ for which the lemma holds. 
    We let $\delta' = \delta/(12\cdot 2^T)$ and $\delta_1 =6 \delta'$ and $\delta_{t+1} = 2\delta_t$ for $1 \leq t \leq T-1$. (Note these settings satisfy $\sum_{t=1}^T \delta_t \leq \delta/2$, as required in the conclusion of the lemma, and $\delta_{t+1} \geq \sum_{i=1}^t \delta_i + 6\delta'$ as required in the proof below.) 
    For $t \in [T]$ we set $C_t = C'(q,k,\delta',C_{t-1})$ where $C'(\cdots)$ is the aforementioned function from \cref{lem:induction step}. 
    Next for every $t \in [T]$ we set $\gamma_t = \tau_0(q,k,\delta',C_t)$. 
    Further, let $\tau(q,k,\delta,C)$ and $s_0(\delta)$ be the functions from \cref{lem:boundedness implies uniform}. For $t \in [T]$, let $\rho_t = \tau(q,k,\delta',C_t)$ and $s_{0} = s_0(\delta')$.  Let $\zeta = \min\{\epsilon_0, \min_{t \in [T]} \{\gamma_t\}, \min_{t \in [T]}\{\rho_t\}\}$. Let $\nu = \zeta/(2T)$ and let $\tau = \nu/2$. 

    Finally we let $n_0 = \max\{\frac2\nu\log_q(1/\delta'), s_0/\nu\}$.
    \footnote{The reader may notice that several of the terms in the parameter settings obviously dominate the others and we could skip the mins and maxes thereby simplifying the expressions. But we keep them separate for easier verifiability in the proof. We follow this practice through most of this paper.}  Finally, given $n\geq n_0$ we set $s = \zeta n$, $b = \nu n$ and $s^* = \tau n$.

    Note that these settings ensure $b\geq 1$, $s^* \leq b/2 \leq b-\log_q(1/\delta')$, $2tb \leq s \leq \epsilon_0 n$ for every $t \in [T]$, $s \leq \gamma_t n = \tau_0(q,k,\delta',C_{t}) n$ for every $t \in [T]$ and $s_0 \leq b \leq 2tb \leq s \leq \rho_t n = \tau(q,k,\delta',C_t)n$ for every $t\in [T]$.  These inequalities will be used in the proof below.

    \paragraph{The events $\{\cE_t\}_{t\in [T]}$:}
    Recall the notion of posterior sets $B_t$ and aggregate posterior sets $B_{1:t}$ for $t \in [T]$ from \cref{def:set and pdf}. 
    Let $B_0 = \Z_q^n$. 
    We define $\cE_1$ to be the event that $B_{0}$ is $(C_0,s)$-bounded and large i.e., $|B_0|\geq q^{n}$. 
    For $2\le t \le T$, let $\cE^1_t$ denote the event that $B_{t-1}$ is $(M_{t-1},C_0,s)$-reduced and the aggregated posterior set $B_{1:t-1}$ is $(C_{t-1},s)$-bounded and large i.e., $|B_{1:t-1}|\geq q^{n-2(t-1)b}$. 
    We also define $\cE^2_t$ to be the event that $\|A_{t,\vecc_t}\vecx^*-U\|_{tvd}\leq\delta_{t}$, where $\vecx^*\sim\Unif(B_{1:{t-1}})$ and $U\sim\Unif(\Z_q^{(k-1)\alpha n})$. Finally, let $\cE_t = \cE_{t-1} \cap \cE^1_t \cap \cE^2_t$. We now turn to proving conditions (i)-(iii) hold for this choice of events.

    \paragraph{Causality and Indistinguishability:} It is immediate from the definition that $\cE_{t}\implies \cE_{t-1}$ and $\cE_t$ only depends on $A_{1:t},\vecc_{1:t}$, and $S^Y_{1:t-1}$. 
    This establishes condition (i). 
    Next we note that conditioned on $\cE_t$ we have condition (iii). 
    In particular, by the definition of $\cE_t$, we have $\|(A_{t,\vecc_t}\vecx^*)-U\|_{tvd}\leq\delta_{t}$ where $\vecx^*\sim\Unif(B_{1:t-1})$ and $U\sim\Unif(\Z_q^{(k-1)\alpha n})$. As $S_t^Y=r_t(A_{1:t},\vecc_{1:t},S^Y_{1:t-1},A_{t,\vecc_t}\vecx^*)$ where $\vecx^*\sim\Unif(B_{1:t-1})$ (by \cref{lem:conditional prob}), by the data processing inequality we have 
    $$\|S^Y_t-r_t(A_{1:t},\vecc_{1:t},S^Y_{1:t-1},U)\|_{tvd}\leq \| A_{t,c_t}\vecx^* - U\| \leq \delta_{t}$$ 
    as desired for condition (iii). It remains to prove (ii) which we do from now on.

    \paragraph{Probability of bad events:} 
    By definition of $B_0 = \Z_q^n$, we have it is large (i.e., $|B_0| \geq q^n$). Further by Part (2) of \cref{prop:basic-boundedness} we have that $B_0$ is $(C_0,s)$-bounded. Thus it follows that $\cE_1$ holds with probability $1$. 
    We now analyze $\cE^1_{t+1}$ for $t \ge 1$ (and then turn to $\cE^2_{t+1}$).
    We show that, conditioned on $\cE_{t}$, $\cE^1_{t+1}$ holds with probability at least $1 - 5\delta'-\sum_{i=1}^t \delta_i$.  The main part of it is proving that $B_{1:t}$ is large, which we do in the claim below. (Proving boundedness is then a straightforward application of \cref{lem:induction step}, as we show later.)

	\begin{claim}\label{clm:hybrid}
        Let $2\leq t \leq T$. Let $M_{t}$ and $\vecc_{t}$ be chosen uniformly conditioned on $\cE_{t}$. Then with probability at least $(1-\delta'-\sum_{i=1}^t \delta_i)$ the posterior set $B_{1:t}$ satisfies
		\[|B_{1:t}|\geq(1-\delta')\cdot|B_{1:t-1}|\cdot|B_{t}|/q^n\, .\]
	\end{claim}
	\begin{proof}
    Fix some $B_{1:t-1}$ that is $(C_{t-1},s)$-bounded and satisfies $|B_{1:t-1}| \geq q^{n-2(t-1)b}$ and consider a uniform choice of $A_t$ and $\vecc_t$. 
    We now apply \cref{lem:boundedness implies uniform}. Note this lemma takes four ``parameters'' $B,b,\delta$ and $C$. We apply the lemma with $(B,b,\delta,C)_{\mathrm{\cref{lem:boundedness implies uniform}}}= (B_{1:t-1},2(t-1)b,\delta',C_{t-1})$.
    Note that the parameter settings ensure $|B_{1:t-1}| \geq q^{n-2(t-1)b}$, $s_0 \leq 2(t-1)b \leq s \leq \tau(q,k,\delta',C_{t-1}) n$ and so the preconditions of \cref{lem:boundedness implies uniform} are satisfied. By Part(2) of the lemma we get: 
    	\begin{equation}\label{eqn:close_to_uniform}
			(1-\delta')\le \frac{\mathbb{E}_{x\sim \Unif(B_{1:t-1})}\left[f(A_{t,\vecc_t}x)\right]}{\mathbb{E}_{z\sim \Unif(\Z_q^{(k-1)\alpha n})}\left[f(z)\right]} \le (1+\delta')\,
	   \end{equation}
	for every non-negative function $f$ over $\Z_q^{(k-1)\alpha n}$, with probability at least $1 - \delta'$  (over a \emph{uniform} choice of $M_t$ and $\vecc_t$).
	Setting $f$ to be the indicator function of $B_{r,t}$ (recall that $B_{r,t}$ is the ``reduced posterior set'' from \cref{def:set and pdf}) and applying~\cref{eqn:close_to_uniform}, we have 
	\[ \mathbb{E}_{x\sim \Unif(B_{1:t-1})}\left[\one_{B_{r,t}}(A_{t,\vecc_t}x)\right] = \frac{|B_{1:t}|}{|B_{1:t-1}|}\, ,\]and
	\[ \mathbb{E}_{z\sim \Unif(\Z_q^{(k-1)\alpha n})}\left[\one_{B_{r,t}}(z)\right] = \frac{|B_{r,t}|}{q^{(k-1)\alpha n}} = \frac{|B_t|}{q^n}\, .\]
		
	We have
		\begin{align*}
			\frac{|B_{1:t}|}{q^n} &= \frac{|B_{1:{t-1}}|}{q^n} \cdot  \mathbb{E}_{x\sim \Unif(B_{1:t-1})}\left[\one_{B_{r,t}}(A_{t,\vecc_t}x)\right] \\
			&\ge (1-\delta') \frac{|B_{1:{t-1}}|}{q^n} \cdot \mathbb{E}_{z\sim \Unif(\Z_q^{(k-1)\alpha n})}\left[\one_{B_{r,t}}(z)\right]\\
			&= (1-\delta') \frac{|B_{1:{t-1}}|}{q^n} \cdot \frac{|B_t|}{q^n} \, .
		\end{align*}
    We conclude that for every $B_{1:t-1}$, with probability at least $1-\delta'$ over a uniform choice of $A_t$ and $\vecc_t$, we have that if $B_{1:t-1}$ is $(C_{t-1},s)$-bounded and satisfies $|B_{1:t-1}| \geq q^{n-2(t-1)b}$, then
   \[ \frac{|B_{1:t}|}{q^n} \geq (1-\delta') \frac{|B_{1:{t-1}}|}{q^n} \cdot \frac{|B_t|}{q^n} \, .\]

    Now we condition on the event $\cE_t$. Doing so, alters the distribution of $(A_t,\vecc_t)$ (since $\cE_t$ depends on $A_t,\cE_t$) but the total variation distance is bounded by $\Pr[\overline{ \cE_t}] \leq \sum_{i=1}^t \delta_i$. We thus have that for every $B_{1:t-1}$, with probability at least $1-\delta'-\sum_{i=1}^t \delta_i$ over choice of $A_t$ and $\vecc_t$ conditioned on $\cE_t$, we have that if $B_{1:t-1}$ is $(C_{t-1},s)$-bounded and satisfies $|B_{1:t-1}| \geq q^{n-2(t-1)b}$
   \[ \frac{|B_{1:t}|}{q^n} \geq (1-\delta') \frac{|B_{1:{t-1}}|}{q^n} \cdot \frac{|B_t|}{q^n} \, .\]
   But finally note that $\cE_t$ implies $B_{1:t-1}$ is $(C_{t-1},s)$-bounded and satisfies $|B_{1:t-1}| \geq q^{n-2(t-1)b}$, and so we simply get that with probability at least $1-\delta'-\sum_{i=1}^t \delta_i$ we have, over the choice of $B_{1:t-1}, A_t$ and $\vecc_t$ conditioned on $\cE_t$,  
   \[ \frac{|B_{1:t}|}{q^n} \geq (1-\delta') \frac{|B_{1:{t-1}}|}{q^n} \cdot \frac{|B_t|}{q^n} \, .\qedhere\]
	\end{proof}
     
    To use the claim above we now analyze $|B_t|$. By the ``posterior set is large'' lemma (i.e.,~\cref{lem:posterior set is large}) we have $|B_t|\geq q^{n-b}$ (again using $s \leq b - \log_q(1/\delta')$). When $B_t$ is large, then by \cref{lem:boundedness base case} we have $B_t$ is $(M_t,C_0,s)$-reduced using $b \leq s\leq \epsilon_0 n$. Furthermore if $|B_t| \geq q^{n-b}$, then combined with the inductive bound that $|B_{1:t-1}| \geq q^{n-2(t-1)b}$ (implied by $\cE_{t}$), \cref{clm:hybrid} implies 
    \[|B_{1:t}|=|B_{1:t-1}\cap B_t|\geq(1-\delta')\cdot|B_{1:t-1}|\cdot|B_t|/q^n \geq (1-\delta')q^{n-(2t-2)b}q^{-b} \geq q^{n-2tb}\, ,\]
    where the final inequality uses the very crude (but true) inequality $1 - \delta' \geq \frac12 \geq q^{-b}$.

    Conditioned on $B_t$ being large and reduced, we can finally invoke the ``induction step'' lemma (i.e.,~\cref{lem:induction step}) with $B = B_{1:t-1}$ and $B' = B_t$ with $C=C_{t-1}$ to get that $B_{1:t}=B_{1:t-1}\cap B_t$ is $(C_t,s)$-bounded with probability at least $1-4\delta'$ where we use $C_t = C'(q,k,\delta',C_{t-1})$. We note this application requires $\max\{b,2(t-1)b,s\} \leq \gamma_{t-1} n := \tau_0(q,k,\delta',C_{t-1})n$ which is ensured by our setting of parameters.
    
    Taking the union bound over the three error events, namely (a) $B_t$ not being large, (b) $B_{1:t}$ not being large conditioned on $B_t$ being large and (c) $B_{1:t}$ not being bounded condition on being large, we get that $\cE^1_{t+1}$ holds with probability at least $1 - 5\delta'-\sum_{i=1}^t \delta_i$ conditioned on $\cE_{t}$. 

    Finally we turn to bounding $\cE^2_{t+1}$ conditioned on $\cE^1_{t+1}$. Since $|B_{1:t}| \geq q^{n-2tb}$ and $B_{1:t}$ is $(C_t,s)$-bounded, we   apply \cref{lem:boundedness implies uniform} to analyze $\|(A_{t+1,\vecc_{t+1}}\vecx^*)-U\|_{tvd}$, for $\vecx^*\sim \Unif(B_{1:t})$. The application  requires $s_0  \leq 2(t+1)b \leq s \leq \rho_{t+1} n := \tau(q,k,\delta',C_{t+1})n$ which we do have with our setting of parameters. 
    We conclude that $\|(A_{t+1,\vecc_{t+1}}\vecx^*)-U\|_{tvd}\leq\delta'$ with probability at least $1 - \delta'$ for every fixing of $A_{1:t,\vecc_{1,t}}$ and $S_{1:t-1}^Y$ (over the choice of $M_{t+1}$ and $\vecc_{t+1}$). We thus have that $\cE^2_{t+1}$ holds with probability at least $1 - \delta'$ conditioned on $\cE^1_{t+1}$. 

    Putting the two together we get for every $t \in [T]$ (including $t=1$), we have $\cE_t$ holds with probability at least $1 - 6\delta' - \sum_{i=1}^t \delta_i \geq 1 - \delta_{t+1}$ as required for condition (ii). 

    This completes the proof of~\autoref{lem:hybrid}. 
\end{proof}

\section{Analysis of bounded functions}\label{sec:reduced bounded}

In this section, we prove three important lemmas from~\cref{sec:proof overview}: the ``posterior set'' lemma (i.e.,~\cref{lem:boundedness base case}), the ``boundedness implies uniformity'' lemma (i.e.,~\cref{lem:boundedness implies uniform}), and the ``induction step'' lemma (i.e.,~\cref{lem:induction step}).
We first establish useful structure on the Fourier coefficients of restricted sets (posterior set is a special case of restricted set) in~\autoref{sec:fourier coeff posterior function}. Next, we prove useful properties for the Fourier analytic conditions in~\autoref{sec:reduced bounded definition}. Finally, we prove the three lemmas in~\cref{sec:reduced bounded base case},~\cref{sec:bounded implies uniform}, and~\cref{sec:induction} respectively.

\subsection{Fourier coefficients of the posterior function}\label{sec:fourier coeff posterior function}
Given a $k$-hypermatching $M = (e_1,\ldots,e_m)$ and centers $\vecc = (c_1,\ldots,c_m)$ we say that a set $B \subseteq \Z_q^n$ is $(M,\vecc)$-restricted if there exists a (``reduced'') set $B_r \subseteq \Z_q^{(k-1)m}$ such that $B = \{\vecx \in \Z_q^n | A_{\vecc}\vecx \in B_r\}$, where $A_\vecc$ is the $\vecc$-centered folded encoding of $M$. In this section we aim to prove that large restricted sets are bounded.
Recall that given a $k$-hypermatching $M = (e_1,\ldots,e_{m})$ on vertex set $[n]$ with $m=\alpha n$ edges and sequence of centers $\vecc = (c_1,\ldots,c_{m})$ with $c_i \in e_i \subseteq [n]$, the $\vecc$-centered folded representation of $M$ was denoted $A_{\vecc} \in \Z_q^{(k-1)m \times n}$. Recall that a set $B \subseteq \Z_q^n$ is said to be $(M,\vecc)$-restricted if there exists a (``reduced'') set $B_r \subseteq \Z_q^{(k-1)m}$ such that $B = \{\vecx \in \Z_q^n | A_{\vecc}\vecx \in B_r\}$. For our next lemma we will also need a variant of this matrix named the $\vecc$-centered projection induced by  $M$, which we denote $\widetilde{A}_\vecc \in \Z_q^{(k-1)m \times n}$, which is simply the matrix $A_\vecc$ with the columns corresponding to $c_1,\ldots,c_{m}$ zeroed out. (In $A_\vecc$ each of these columns has $(k-1)$ $-1$'s. See \autoref{fig:folded matrices p}.) With this definition in place we can now relate the Fourier coefficients of the indicator of a restricted set to its image.

\begin{figure}[ht]
	\centering
	\includegraphics[width=12cm]{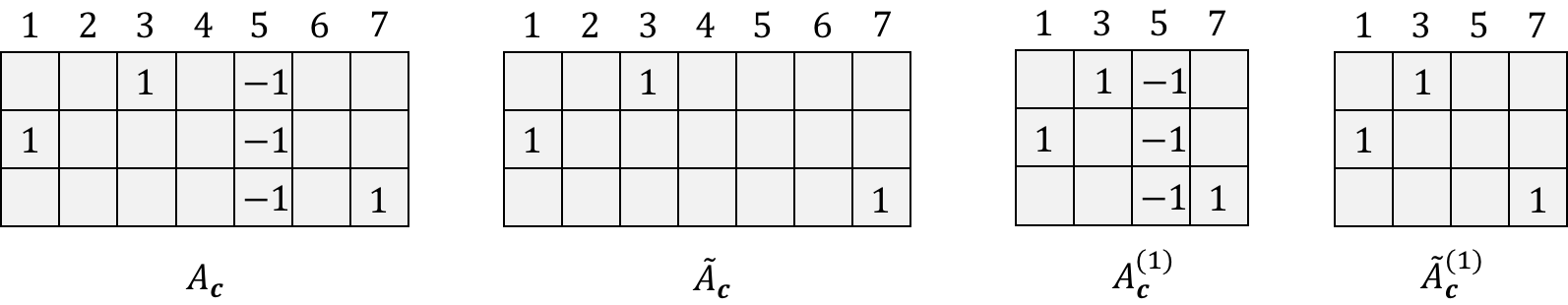}
	\caption{An example of $A_\vecc,\tilde{A}_\vecc,A^{(1)}_\vecc,\tilde{A}^{(1)}_\vecc$ with $m=1$, $n=7$, $k=4$, $e_1=\{1,3,5,7\}$ and $c_1=\{5\}$.}
	\label{fig:folded matrices p}
\end{figure}
Recall that we use $\vece_i \in \Z_q^n$ to denote the indicator vector of hyperedge $e_i$ (see \autoref{sec:comm}).

\begin{lemma}[Fourier coefficients of the posterior function]\label{lem:fourier coeff of set}
	Let $M$ be a $k$-hypermatching of size $m$ and $\vecc$ be a sequence of centers. Let $A_\vecc$ be the folded representation of $M$ and $\widetilde{A}_\vecc$ be the  projection induced by $M$. Furthermore, let $B \subseteq \Z_q^n$ be an $(M,\vecc)$-restricted set with $B_r \subseteq \Z_q^{(k-1)m}$ satisfying $B = \{\vecx \in \Z_q^n | A_\vecc \vecx \in B_r\}$. Let $\one_B$ denote the indicator function of $B$. Then for every $\vecu \in \Z_q^n$ we have: 
	\[
	\widehat{\one_B}(\vecu)=\begin{cases}
		0     , \quad &\text{if $\vecu$ contains a node not in $M$.}\\
		0     ,  \quad &\text{if $\exists i \in [m]$ such that $\langle \vecu, \vece_i \rangle \not\equiv 0 \pmod{q}$}\\
		\widehat{\mathbf{1}}_{B_r}(\widetilde{A}_\vecc  \vecu), \quad &\text{otherwise.}
	\end{cases}.
	\]
\end{lemma}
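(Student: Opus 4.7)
The plan is a direct unfolding of the Fourier transform definition, exploiting the very explicit product structure of the restricted set $B$. By definition,
\[
\widehat{\one_B}(\vecu) = \frac{1}{q^n} \sum_{\veca : A_\vecc \veca \in B_r} \overline{\omega^{\vecu^\top \veca}}.
\]
The key is to parametrize $B$ by splitting the $n$ coordinates into three disjoint groups: (i) coordinates $j \in [n] \setminus V$, where $V = \bigcup_i e_i$, which are completely free; (ii) the $m$ center coordinates $c_1,\dots,c_m$, which are also free; and (iii) the $(k-1)m$ non-center coordinates, where for each $i \in [m]$ and $\ell \in [k-1]$ we have $\veca_{(e_i)_\ell} = \veca_{c_i} + b_{i,\ell}$ subject to $(\vecb_1,\dots,\vecb_m) \in B_r$ with $\vecb_i = (b_{i,1},\dots,b_{i,k-1})$.

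Under this parametrization, the linear form $\vecu^\top \veca$ decomposes cleanly as
\[
\vecu^\top \veca \;=\; \sum_{j \notin V} u_j \veca_j \;+\; \sum_{i=1}^m \Bigl(\veca_{c_i}\,\langle \vecu, \vece_i\rangle + \sum_{\ell=1}^{k-1} u_{(e_i)_\ell}\, b_{i,\ell}\Bigr),
\]
using that $\langle \vecu, \vece_i\rangle = u_{c_i} + \sum_{\ell=1}^{k-1} u_{(e_i)_\ell}$. Substituting this back and factoring the multidimensional sum, $\widehat{\one_B}(\vecu)$ becomes a product of three types of sums: (a) for each $j \notin V$, a sum $\sum_{\veca_j \in \Z_q} \overline{\omega^{u_j \veca_j}}$, which equals $q$ if $u_j = 0$ and $0$ otherwise; (b) for each $i$, a sum $\sum_{\veca_{c_i} \in \Z_q} \overline{\omega^{\veca_{c_i}\langle \vecu, \vece_i\rangle}}$, which equals $q$ if $\langle \vecu, \vece_i\rangle \equiv 0 \pmod q$ and $0$ otherwise; and (c) a residual sum $\sum_{\vecb \in B_r} \overline{\omega^{\vecw^\top \vecb}}$, where $\vecw \in \Z_q^{(k-1)m}$ has $(i,\ell)$-th entry equal to $u_{(e_i)_\ell}$.

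The first two factor types immediately give the two vanishing cases in the lemma statement: whenever $\vecu$ has a nonzero coordinate outside $V$, or whenever some $\langle \vecu, \vece_i\rangle \not\equiv 0 \pmod q$, one of the factors is zero and so is $\widehat{\one_B}(\vecu)$. In the remaining case, group (a) contributes $q^{n-km}$ and group (b) contributes $q^m$, so combined with the $q^{-n}$ out front we obtain a prefactor of $q^{-(k-1)m}$ multiplying the residual sum; by the definition of the Fourier transform over $\Z_q^{(k-1)m}$, the residual sum equals $q^{(k-1)m}\cdot \widehat{\one_{B_r}}(\vecw)$, so the prefactors cancel. Finally, observing from the construction of $\widetilde{A}_\vecc$ (which is $A_\vecc$ with center columns zeroed out) that row $(i,\ell)$ picks out precisely the coordinate $u_{(e_i)_\ell}$, we get $\vecw = \widetilde{A}_\vecc \vecu$, completing the identification. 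There is no real technical obstacle: the only thing to be careful about is keeping the exponent bookkeeping ($q^{-n}\cdot q^{n-km}\cdot q^m = q^{-(k-1)m}$) straight and correctly identifying $\vecw$ with $\widetilde{A}_\vecc \vecu$ rather than $A_\vecc \vecu$, since the $\veca_{c_i}$ contribution has already been absorbed into the $\langle \vecu, \vece_i\rangle = 0$ constraint.
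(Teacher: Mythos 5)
Your proof is correct and takes essentially the same approach as the paper's: both parametrize the fiber over $B_r$ coordinate-wise (you via free variables, the paper via a direct-sum module decomposition of $\Z_q^n$), factor the Fourier sum into per-edge and outside-$M$ contributions, identify the two vanishing conditions as character-sum orthogonality, and recognize the residual sum as $q^{(k-1)m}\widehat{\one_{B_r}}(\widetilde{A}_\vecc\vecu)$. The exponent bookkeeping and the identification $\vecw = \widetilde{A}_\vecc\vecu$ both check out.
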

\begin{proof}
	From the definition of the Fourier coefficient we have $\widehat{\one_B}(\vecu) = \tfrac{1}{q^n}\sum_{\vecx \in \Z_q^n} \one_B(\vecx) \omega^{\vecu^\top \vecx}$ where $\omega=e^{2\pi i/q}$ being the primitive $q$-th root of unity. Using the fact that $B$ is restricted, we get 
	\[
	\widehat{\one_B}(\vecu)= \frac{1}{q^n} \sum_{\vecx \in \Z_q^n} \one_B(\vecx) \cdot\omega^{\vecu^\top \vecx}
	= \frac{1}{q^{n}} \sum_{\substack{\vecz\in \Z_q^{(k-1)m}}}\sum_{\substack{\vecx\in\Z_q^n\\A_\vecc \vecx = \vecz}} \one_{B_r}(\vecz) \cdot \omega^{\vecu^\top \vecx}
	= \frac{1}{q^{n}} \sum_{\substack{\vecz\in B_r}}\sum_{\substack{\vecx\in\Z_q^n\\A_\vecc \vecx = \vecz}} \omega^{\vecu^\top \vecx}.
	\]
	
	We now fix $\vecz \in B_r$ and explore the final term $\sum_{{\vecx\in\Z_q^n, A_\vecc \vecx = \vecz}} \omega^{\vecu^\top \vecx}$. Let $\vecz = (\vecz^{(1)},\ldots,\vecz^{(m)})$ where $\vecz^{(i)} \in \Z_q^{k-1}$. Also let $A_\vecc^{(1)},\ldots,A_\vecc^{(m)} \in \Z_q^{(k-1)\times n}$ denote the blocks of $A_\vecc$ corresponding to the $m$ edges. Now, think of $\Z_q^n$ as a free module over $\Z_q$ and consider the direct sum decomposition $\Z_q^n = W^{(0)} \oplus \cdots \oplus W^{(m)}$ where for $i \in [m]$,  $W^{(i)}$ is the sub-module of $\Z_q^n$ generated by $e_i$, $W^{(0)}$ is the sub-module generated by $[n]-(\cup_i e_i)$, and ``$\oplus$'' denotes the direct sum of modules. Let us write $\vecx = \vecx^{(0)} + \cdots + \vecx^{(m)}$ where for $i \in \{0,\ldots,m\}$, $\vecx^{(i)} \in W^{(i)}$.
	Similarly write $\vecu = \vecu^{(0)} + \cdots + \vecu^{(m)}$. 
	Since $(\vecu^{(i)})^\top \vecx^{(j)} = 0$ if $i \ne j$ we have $\vecu^\top \vecx = \sum_{i=0}^m (\vecu^{(i)})^\top\vecx^{(i)}$.
	Note also that $\vecz = A_\vecc \vecx$ if and only if $\vecz^{(i)} = A_\vecc^{(i)} \vecx^{(i)}$ for every $i \in [m]$. Using this notation, we have 
	\[
	\sum_{\substack{\vecx\in\Z_q^n\\A_\vecc \vecx = \vecz}} \omega^{\vecu^\top \vecx} = \left(\sum_{\vecx^{(0)}\in W^{(0)}} \omega^{(\vecu^{(0)})^\top \vecx^{(0)}}\right)\cdot \prod_{i=1}^m 
	\left(\sum_{\substack{\vecx^{(i)}\in W^{(i)} \\ A_\vecc^{(i)}\vecx^{(i)} = \vecz^{(i)}}} \omega^{(\vecu^{(i)})^\top \vecx^{(i)}}\right) \, . \]
	Now note that if $\vecu^{(0)} = 0$ then the first term is $|W^{(0)}| = q^{n-km}$, else it is zero. Similarly for $i\in [m]$, there are exactly $q$ vectors $\vecx^{(i)}\in W^{(i)}$ such that $A_\vecc^{(i)} \vecx^{(i)} = \vecz^{(i)}$ (which are additive shifts of each other on coordinates in $e_i$). Concretely, these two solutions are of the form $(\tilde{A}_\vecc^{(i)})^\top\vecz^{(i)}+a^k$ for some $a\in\Z_q$. So we have
	\[
	\sum_{\vecx^{(i)}\in W^{(i)}:A_\vecc^{(i)}\vecx^{(i)} = \vecz^{(i)}} \omega^{(\vecu^{(i)})^\top \vecx^{(i)}}=\sum_{ a\in\Z_q}\omega^{(\vecu^{(i)})^\top (\tilde{A}_\vecc^{(i)})^\top\vecz^{(i)}+(\vecu^{(i)})^\top a^k}=\omega^{(\vecu^{(i)})^\top (\tilde{A}_\vecc^{(i)})^\top\vecz^{(i)}}\sum_{ a\in\Z_q}\omega^{a\cdot\|\vecu^{(i)}\|_1} \, .
	\]
	Moreover,
	\[
	\sum_{ a\in\Z_q}\omega^{a\cdot\|\vecu^{(i)}\|_1}  = \begin{cases}
		0     &  \text{if }\|\vecu^{(i)}\|_1\not\equiv0\bmod{q}\\
		q     & \text{otherwise}.
	\end{cases}.
	\]

	Putting all the above together we get 
	\[
	\sum_{{\vecx\in\Z_q^n\\A_\vecc \vecx = \vecz}} \omega^{\vecu^\top \vecx}=\begin{cases}
		0     \quad &\text{if $\vecu$ contains a node not in $M$.}\\
		0     \quad &\text{if $\exists i \in [m]$ such that $\langle \vecu, \vece_i \rangle \not\equiv 0 \bmod{q}$}\\
		q^{n-(k-1)m}\cdot\omega^{(\widetilde{A}_\vecc  \vecu)^\top \vecz} \quad &\text{otherwise.}
	\end{cases}.
	\]
	Summing up over all $\vecz \in \Z_q^{(k-1)m}$ and normalizing yields the lemma.

\end{proof}

\subsection{Basic properties of large weakly-bounded sets}\label{sec:reduced bounded definition}

In this section, we relate weakly bounded sets to strongly bounded ones, and also show that the notion of restricted-ness of sets is independent of the choice of centers. These help us prove boundedness in the base case.

We start by stating an immediate consequence of Parseval's lemma applied to our indicator functions.

\begin{lemma}\label{lem:parseval}
	For every $B \subseteq \Z_q^n$ we have $\sum_{\vecv \in \Z_q^n} \widehat{\mathbf{1}_B}(\vecv)^2 \leq |B|/q^n$.
\end{lemma}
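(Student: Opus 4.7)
The proof will be an almost immediate application of Parseval's identity, stated earlier in the paper, specialized to the indicator function $\mathbf{1}_B$. The key observation is that, because $\mathbf{1}_B$ takes values in $\{0,1\}$, its $\ell_2$-norm squared equals its cardinality, which lets us convert Parseval's equality into the desired bound on the sum of squared Fourier coefficients.

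Concretely, the plan is as follows. I would first apply Parseval's identity to $f = \mathbf{1}_B$, which yields
\[
\sum_{\veca \in \Z_q^n} \mathbf{1}_B(\veca)^2 \;=\; q^n \sum_{\vecv \in \Z_q^n} \bigl|\widehat{\mathbf{1}_B}(\vecv)\bigr|^2.
\]
Since $\mathbf{1}_B(\veca)\in\{0,1\}$ we have $\mathbf{1}_B(\veca)^2 = \mathbf{1}_B(\veca)$, so the left-hand side equals $|B|$. Dividing by $q^n$ I would conclude
\[
\sum_{\vecv \in \Z_q^n} \bigl|\widehat{\mathbf{1}_B}(\vecv)\bigr|^2 \;=\; \frac{|B|}{q^n},
\]
which in particular yields the stated upper bound (with equality, in fact).

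There is no real obstacle here; the only thing worth flagging is the notational convention. The lemma as stated writes $\widehat{\mathbf{1}_B}(\vecv)^2$ rather than $|\widehat{\mathbf{1}_B}(\vecv)|^2$, matching the convention already used in the paper's statement of Parseval's identity for complex-valued functions. Under this convention the above derivation goes through verbatim, and the inequality $\leq$ is a (slightly slack) rephrasing of the Parseval equality. No additional machinery — no hypercontractivity, convolution identities, or properties of restricted/reduced sets — is needed.
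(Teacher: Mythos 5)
Your proof is correct and matches the paper's one-line justification: both apply Parseval's identity (\cref{prop:parseval}) to $f = \mathbf{1}_B$ and use that $\sum_{\veca}\mathbf{1}_B(\veca) = |B|$ to conclude. The observation that equality actually holds is accurate and consistent with the paper stating a (slack) inequality for convenience.
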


(\cref{lem:parseval} follows from \cref{prop:parseval} by noticing that $\sum_{\veca \in \Z_q^n} \one_{B}(\veca) = |B|$.)
Recall that the $(C,s)$-bounded criterion bounds the sum of Fourier coefficients with a fixed weight at most $s$. As we also need to bound the sum of Fourier coefficients of high weight, this can be guaranteed from Parseval's inequality as shown in the following lemma.
\begin{lemma} \label{lem:highregime}
	Suppose $B \subseteq \Z_q^n$ satisfies $|B| \geq q^{n-b}$ for some $b\in\N$. Then, for every $\vecv\in\Z_q^n$ and $b < h \leq n$, we have
	\[
	\sum_{\substack{\vecu\in\Z_q^n\\\|\vecu+\vecv\|_0=h}} \frac{q^n}{|B|}\left|\widehat{\one_B}(\vecu)\right| \leq \left(\frac{2q^2e^2 n}{h}\right)^{h/2}.
	\]
\end{lemma}
\begin{proof}
	From \cref{lem:parseval}, we have that
	$\sum_{\|\vecu+\vecv\|_0=h}\left|\widehat{\mathbf{1}_B}(\vecu)\right|^2 \leq |B|/q^n.$
	Using 
	\[
	\left|\left\{\vecu\in\Z_q^n \, |\, \|\vecu+\vecv\|_0=h\right\}\right|\leq (q-1)^h\cdot\binom{n}{h}\leq\left(\frac{qen}{h}\right)^h
	\]
	and the Cauchy-Schwarz inequality we get that
	\begin{align*} 
		\sum_{\substack{\vecu\in\Z_q^n\\\|\vecu+\vecv\|_0=h}}\frac{q^n}{|B|}\left|\widehat{\one_B}(\vecu)\right| 
		& \leq  \frac{q^n}{|B|} \sqrt{(|B|/q^n) \cdot (qen/h)^h} \\
		& =  \sqrt{(q^n/|B|) \cdot (qen/h)^h} \\
		& \leq \sqrt{q^b  (qen/h)^h} & (\because |B|\geq q^{n-b}) \\
		& \leq \sqrt{q^h  (qen/h)^h} & (\because h>b)\\
		& \leq (2q^2e^2 n/h)^{h/2}.
	\end{align*}

\end{proof}

Next, we note a basic monotonicity property of the notion of boundedness which will be useful in the future. Recall the function $W_{C,s}(\cdot)$ used in the notion of  {\em weakly-bounded} sets from \cref{def:boundedness}.

\begin{lemma}[Monotonicity of boundedness]\label{lem:ucs-monotonic} The following monotonicities hold for $W$ and $U$.
\begin{enumerate}
    \item If $h \leq s \leq s'$ then $W_{C,s}(h) \leq W_{C,s'}(h)$. 
    \item If $s \leq s'$ then for every $h$, $U_{C,s}(h) \leq U_{C,s'}(h)$. Consequently, if a set $B$ is $(C,s)$(-strongly)-bounded then it is also $(C,s')$(-strongly)-bounded.
    \item 	If $C > e$, then $W_{C,s}(h)$ and $U_{C,s}(h)$ are monotonically increasing in $h \in [1,s]$.
\end{enumerate}
\end{lemma}
\begin{proof} The first two monotonicities are definitional, whereas the third one requires some calculations. Details are provided below.
\begin{enumerate}
    \item The inequality holds trivially for $h=0$. For $1\leq h \leq s \leq s'$, we have that $W_{C,s}(h) = (C\sqrt{sn}/h)^{h/2} \leq  (C\sqrt{s'n}/h)^{h/2} = W_{C,s'}(h)$. 
    \item Here we consider three possible ranges for $h$. For $h \leq s \leq s'$, $U_{C,s}(h) = W_{C,s}(h)$ and $U_{C,s'}(h) = W_{C,s'}(h)$ and the inequality follows from Item (1). For $h > s'$, we have $U_{C,s}(h)  = \min\{W_{C,h}(h),(2q^2e^2n/h)^{h/2}\}= U_{C,s'}(h)$ yielding the desired inequality as an equality.
    For $s < h \leq s'$, we have $U_{C,s}(h)  = \min\{W_{C,h}(h),(2q^2e^2n/h)^{h/2}\} \leq W_{C,h}(h) \leq W_{C,s'}(h) = U_{C,s'}(h)$, where the second inequality again follows from the Item (1). Thus in all cases we have $U_{C,s}(h) \leq U_{C,s'}(h)$ and so if a set $B$ is $(C,s)$-(strongly-)bounded then it is also $(C,s')$-(strongly-)bounded.
    \item Recall that the function $f(x) = x^{1/x}$ is decreasing in the interval $(e,\infty)$  (since $f'(x) = x^{1/x} \cdot \frac{1-\ln x}{x^2}$ is negative for $x > e$). Note that for $h \in [1,s]$, $W_{C,s}(h) = U_{C,s}(h) = x^{\frac{C\sqrt{sn}}{2x}}$ for $x = \frac{C\sqrt{sn}}{h}$. Note that $x$ is a strictly decreasing function of $h$. Moreover, for $h\leq s$, we have $x \geq \frac{C\sqrt{sn}}{s} \geq C > e$. Hence, it follows from monotonically decreasing property of $f$ that $W_{C,s}(h)$ and $U_{C,s}(h)$ are monotonically increasing in the described interval, as desired.
\end{enumerate}
\end{proof}

We now show how weak boundedness of a large set in an entire regime of $s$  implies it is strongly bounded.

\begin{lemma}[From weak-boundedness to strong-boundedness]\label{lem:weak-vs-strong}
For every $q, C$, and $\epsilon_0$ there exists $C'$ s.t. for all $n$ and $s \leq \epsilon_0 n$ the following holds:
If $B\subseteq\Z_q^n$ with $|B| \geq q^{n-s}$ is $(C,s')$-weakly-bounded for every $s \leq s' \leq \epsilon_0 n$ then $B$ is $(C',s)$-strongly bounded. Similarly, if $B\subseteq\Z_q^n$ with $|B| \geq q^{n-s}$ is $(C,s')$-weakly-reduced for every $s \leq s' \leq \epsilon_0 n$ then $B$ is $(C',s)$-strongly reduced.
\end{lemma}

\begin{proof}

    We prove the lemma for $C' := \max\{C, 2q^2e^2/\epsilon_0^{1/2}\}$. We prove the reducedness condition (and the boundedness follows similarly). Fix $B \subseteq \Z_q^n$ with $|B| \geq q^{n-s}$ and $\vecv \in \Z_q^n$. Let 
    \[
	\wt(h) := \sum_{\substack{\vecu\in\Z_q^n\\\|\vecu+\vecv\|_0=h}} \frac{q^n}{|B|}\left|\widehat{\one_B}(\vecu)\right|,
	\]
	and let $\tilde{W}(h) = (2q^2e^2n/h)^{h/2}$. 
    Our goal is to prove that for $h \leq s$, $\wt(h) \leq U_{C',s}(h) = W_{C',s}(h)$, and for $h > s$ that $\wt(h) \leq \min\{W_{C',h}(h),\tilde{W}(h)\}$. For $h \leq s$, by the fact that $B$ is  $(C,s)$-weakly-reduced, we have $\wt(h) \leq W_{C,s}(h) \leq W_{C',s}(h)$ where the second inequality follows from the definition of $W_{C,s}$ which is monotone in $C$. For every $h > s$ we have that $\wt(h) \leq \tilde{W}(h)$ by \cref{lem:highregime} and so it suffices to prove that $\wt(h) \leq W_{C',h}(h)$ for every $h > s$. For $h \leq \epsilon_0 n$ we use that $B$ is $(C,s')$-weakly-reduced for $s' = h$ (this is ok since $s \leq s'=h \leq \epsilon_0 n$) to conclude that $\wt(h) \leq W_{C,h}(h) \leq W_{C',h}(s)$. For $h > \epsilon_0 n$ we note that  

    \[
    W_{C',h}(h) 
    = (C'\sqrt{hn}/h)^{h/2} 
    =(C'\sqrt{n/h})^{h/2} 
    \geq (2q^2e^2\sqrt{n/\eps_0 h})^{h/2}
    \geq (2q^2e^2n/h)^{h/2} 
    = \tilde{W}(h) \, ,
    \]
    where the first inequality uses $C' \geq (2q^2e^2/\epsilon_0^{1/2})$ and the next inequality uses  $h>\epsilon_0n$.

    Thus in this case we have $\wt(h) \leq \tilde{W}(h) \leq W_{C',h}(h)$ as desired, concluding the proof that $B$ is $(C',s)$-strongly-reduced.
\end{proof}

Finally we show that the notion of a set being restricted is independent of the choice of centers. Recall the definition of set being restricted from \cref{def:resitrctness}. 

\begin{lemma}[Recentering]\label{lem:recenter}
	Let $\vecc = (c_1,\ldots,c_m)$ and $\vecc' = (c'_1,\ldots,c'_m)$ be two sequences of centers for the same matching $M$. Then a set $B \subseteq \Z_q^n$ is 
	$(M,\vecc)$-restricted if and only if it is $(M,\vecc')$-restricted.
\end{lemma}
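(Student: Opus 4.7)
The plan is to reduce the statement to a single observation: changing the center of an edge applies an invertible linear transformation to the corresponding block of the folded encoding, so the two encodings determine each other.

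First, it suffices to prove one direction, as the statement is symmetric in $\vecc$ and $\vecc'$. So assume $B$ is $(M,\vecc)$-restricted, witnessed by a reduced set $B_r \subseteq \Z_q^{(k-1)m}$ with $B = \{\vecx \in \Z_q^n \mid A_{\vecc}\vecx \in B_r\}$. The goal is to produce $B'_r \subseteq \Z_q^{(k-1)m}$ certifying that $B$ is $(M,\vecc')$-restricted.

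Next, I would argue edge-by-edge. Write $A_\vecc$ and $A_{\vecc'}$ as block matrices with blocks $A_\vecc^{(i)}, A_{\vecc'}^{(i)} \in \Z_q^{(k-1)\times n}$ indexed by the edges $e_1,\ldots,e_m$ of $M$. For each $i$, both $A_\vecc^{(i)}\vecx$ and $A_{\vecc'}^{(i)}\vecx$ depend only on $\vecx|_{e_i}$, and each entry is a difference $x_v - x_{c_i}$ (resp.\ $x_v - x_{c'_i}$) for some $v \in e_i$. Using the identity
\[
x_v - x_{c'_i} = (x_v - x_{c_i}) - (x_{c'_i} - x_{c_i}),
\]
I can express every entry of $A_{\vecc'}^{(i)}\vecx$ as a fixed $\Z_q$-linear combination of the entries of $A_\vecc^{(i)}\vecx$. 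In other words, there is a matrix $T_i \in \Z_q^{(k-1)\times(k-1)}$, depending only on $e_i, c_i, c'_i$ (not on $\vecx$), such that $A_{\vecc'}^{(i)}\vecx = T_i \cdot A_\vecc^{(i)}\vecx$ for every $\vecx$. By swapping the roles of $c_i$ and $c'_i$ one obtains an inverse transformation, so $T_i$ is invertible over $\Z_q$.

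Assembling these into a block-diagonal invertible map $T = \mathrm{diag}(T_1,\ldots,T_m) \in \Z_q^{(k-1)m\times(k-1)m}$ gives $A_{\vecc'}\vecx = T\cdot A_\vecc\vecx$ for all $\vecx \in \Z_q^n$. Therefore, setting $B'_r := T(B_r)$, we have
\[
\{\vecx \mid A_{\vecc'}\vecx \in B'_r\} = \{\vecx \mid T\cdot A_\vecc\vecx \in T(B_r)\} = \{\vecx \mid A_\vecc\vecx \in B_r\} = B,
\]
which shows $B$ is $(M,\vecc')$-restricted and completes the proof.

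I do not expect a serious obstacle here; the only subtlety is just bookkeeping the case $c_i = (e_i)_\ell$ with $\ell \in [k-1]$, where the row of $A_\vecc$ may look degenerate. Even in that case the row still computes a difference between two coordinates of $\vecx|_{e_i}$ (possibly $x_{c_i} - x_{c_i} = 0$), and the linear-algebraic relation above goes through verbatim, so the argument is robust to any convention the paper adopts for this edge case.
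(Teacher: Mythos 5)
Your proof is correct and takes essentially the same route as the paper: both construct a block-diagonal invertible linear transformation (you call it $T$, the paper calls it $Q$, built from permutation-like $(k-1)\times(k-1)$ blocks) satisfying $A_{\vecc'} = T\,A_\vecc$, and then push the reduced set forward via $B'_r := T(B_r)$. The only cosmetic difference is that you establish invertibility of each block abstractly (by symmetry and full row rank of $A_\vecc^{(i)}$) whereas the paper writes down the inverse blocks $P_{\pi'_t}$ explicitly; your worry about a "degenerate" row does not arise because the paper's convention always places the center at position $k$ of the edge ordering.
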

\begin{proof}
	Let $e_i^{(t)} = ((e_i^{(t)})_1, \dots, (e_i^{(t)})_k = c_t)$ (for $t=1,2,\dots,m$) be the ordering of hyperedges corresponding to centering $\vecc$, and let ${e'}_i^{(t)} = (({e'}_i^{(t)})_1, \dots, ({e'}_i^{(t)})_k = c'_t)$.
	
	Given a permutation $\pi: [k] \to [k]$, let $P_\pi$ be a $(k-1)\times (k-1)$ matrix defined as follows: For $1\leq i,j \leq k-1$, let
	\[
	(P_\pi)_{i,j} = \begin{cases} 1 &\text{if $j=\pi(i)$}\,,\\ -1 &\text{if $j=\pi(k)$}\,,\\ 0 &\text{otherwise}\,.\end{cases}
	\]
	
	For $t=1,2,\dots, m$, let $\pi_t:[k]\to [k]$ be the permutation defined by $({e'}_i^{(t)})_j = (e_i^{(t)})_{\pi(j)}$, and let $\pi'_t: [k]\to[k]$ be the permutation defined by $(e_i^{(t)})_j = ({e'}_i^{(t)})_{\pi(j)}$. Then, it is not hard to see that $A_{\vecc'} = Q \cdot A_\vecc$ and $A_\vecc = Q' \cdot A_{\vecc'}$
	where
	\[
	Q = \begin{pmatrix} P_{\pi_1} & & & \\ & P_{\pi_2} & & \\ & & \ddots & \\ & & & P_{\pi_m} \end{pmatrix}, \qquad\qquad Q' = \begin{pmatrix} P_{\pi'_1} & & & \\ & P_{\pi'_2} & & \\ & & \ddots & \\ & & & P_{\pi'_m} \end{pmatrix},
	\]
	and $QQ' = Q'Q = I$, the $(k-1)m\times (k-1)m$ identity matrix.
	
	Now, suppose $B\subseteq \F_2^n$ is $(M,\vecc)$-restricted. Let $B_r \subseteq \Z_q^{(k-1)m}$ be the corresponding reduced set satisfying $B = \{\vecx \in \Z_q^n \mid A_\vecc \vecx \in B_r\}$. Then, let $B_r' = \{Q\vecy \mid \vecy \in B_r\}$.
	
	Note that if $\vecx \in B$, then $A_{\vecc'} \vecx = Q(A_\vecc \vecx) \in B_r'$. Similarly, if $A_{\vecc'} \vecx \in B_r'$, then there is some $\vecy\in B_r$ such that $A_{\vecc'} \vecx = Q\vecy$, and so, $A_\vecc \vecx = Q' A_{\vecc'} \vecx = Q' Q \vecy = \vecy \in B_r$, implying that $\vecx \in B$. It follows that $B$ is $(M,\vecc')$-restricted with reduced set $B'_r$.
	
	In an analogous fashion, it follows that if $B$ is $(M,\vecc')$-restricted, then $B$ is also $(M,\vecc)$-restricted. This completes the proof.
\end{proof}

\subsection{Proof of the ``posterior set'' lemma}\label{sec:reduced bounded base case}

In this subsection, we prove the ``posterior set'' lemma (\cref{lem:boundedness base case}), which shows that every posterior set $B_t$ is $(M_t,C,s)$-reduced for some constant $C$. We include the statement again below for convenience.

\boundednessbasecase*

To see how the above lemma connects to posterior sets, think of $B$ as $B_t$, $M$ as $M_t$, and $\vecc$ as $\vecc_t$. Note that condition (i) of~\cref{lem:boundedness base case} holds by the definition of $B_t$. As for condition (ii), it holds when the message $S_t$ is \textit{typical} and we know by averaging argument that this is the case with high probability (see~\autoref{lem:posterior set is large} for more details).

We now turn to the proof of Lemma~\ref{lem:boundedness base case}. The overall proof follows the outline of \cite{KK19}, but we require extra care in our case, and the proof crucially depends on the ability to recenter (\autoref{lem:recenter}) and a slightly more careful probabilistic analysis.

\begin{proof}[Proof of~\cref{lem:boundedness base case}]

    Given $q$, let $\zeta_q$ be the constant from \cref{lem:hyper}. Let $\zeta_1 = \max\{1,\zeta_q\}$. Given $k$, let $\epsilon_0 = \min\{1, k/(8\zeta_1)\}$. Further let $C_1 = \sqrt{2^{7/2}\zeta_1^{1/2} ek^{3/2} q^{2k}}$ and $C_2 = (2^8\zeta_1e^2kq^{2k})^{1/2}$ and $C = \max\{2,C_1+C_2\}$. For this choice of $C$ and $\epsilon_0$ let $C'$ be the constant given by \cref{lem:weak-vs-strong}. We prove the lemma for $C_0 = C'$. 
  
	Let $M$ be a hypermatching with $m$ edges. (Note we must have $m \leq n/k$.) 
    Recall the definition of $(M,C,s)$-reducedness (\cref{def:reducedness}). The first two conditions of $(M,C,s)$-reducedness are immediate corollaries of~\cref{lem:fourier coeff of set}. In the rest of the proof we focus on showing for every $b \leq s \leq \epsilon_0 n$ and every $h\in\{1,\dots,s\}$ and $\vecv\in\Z_q^n$,
	\[
	\sum_{\substack{\vecu\in\Z_q^n\\\|\vecu+\vecv\|_0=h}}\frac{q^{n}}{|B|}\left|\widehat{\mathbf{1}_B}(\vecu)\right|\leq W_{C,s}(h) \, . \tag{Goal of~\cref{lem:boundedness base case}}
	\]
	Since this holds for every $s \in [b,\epsilon_0 n]$, by \cref{lem:weak-vs-strong}, we get that there is a $C'$ such that $B$ is $(C',s)$-reduced for every $s \in [b,\epsilon_0 n]$ and this yields the lemma given the bound above. 

	Fix an arbitrary $\vecv\in\Z_q^n$. For each $h\in\{1,\dots,s\}$, let $S_h = S_{\vecv,h} = \{\vecu\, \colon\, \|\vecu + \vecv\|_0 = h\}$, i.e., the set of Fourier coefficients in the LHS of the above inequality. We prune $S_h$ to eliminate some terms that are zero. Recall by \cref{lem:fourier coeff of set} that
	$\widehat{\one_B}(\vecw) = 0$ if $\supp(\vecw) \not\subseteq \supp(M)$, or if there exists $i \in [m]$ such that $\langle \vecw, \vece_i \rangle \not\equiv 0 \bmod{q}$.
	Let 
	\[
	T_{\vecv,h,M} = \{\vecu \in S_{\vecv,h} \, |\, \supp(\vecu) \subseteq \supp(M), \langle \vecu, \vece_i \rangle \equiv 0 \bmod{q} \ \forall i \in [m]\},
	\]
	denote the resulting set of vectors which includes all  non-zero Fourier coefficients. Roughly, our approach below is to (1) give an upper bound on the size of the set $T_{\vecv,h,M}$ and (2) bound the sum of the squares of the coefficients in this set. Once we have both these bounds, we can use the Cauchy-Schwartz inequality to conclude the desired bound. Before we undertake these steps, we make some simplifications and some refinements. 
	
	\paragraph{Step 0: Regular condition of $\vecv$.}
	First note that we can assume $\supp(\vecv) \subseteq \supp(M)$. If this is not the case, consider the vector $\widetilde{\vecv}$ given by $\widetilde{v_i} = v_i$ if $i \in \supp(M)$ and $\widetilde{v_i} = 0$ otherwise. Also, let $a = |\{i \, |\, \text{$v_i \neq 0$ and $i \not\in\supp(M)$} \}|$ be the number of nodes in the support of $\vecv$ that are not contained in the hypermatching $M$. Then note that $T_{\vecv,h,M} = T_{\tilde{\vecv},h-a,M}$. If we show that $(q^n/|B|)\cdot \sum_{\vecu \in T_{\tilde{\vecv},h-a,M}} |\widehat{\one_B}(\vecu)| \leq U_{C,s}(h-a)$ then, by the monotonicity of $U_{C,s}(\cdot)$ in the interval $[1,s]$ (see~\autoref{lem:ucs-monotonic}), it follows that $(q^n/|B|)\cdot \sum_{\vecu \in T_{\vecv,h,M}} |\widehat{\one_B}(\vecu)| \leq U_{C,s}(h)$. Thus, from now on, we assume $\supp(\vecv) \subseteq \supp(M)$.
	
	\paragraph{Step 1: A partition of $T_{\vecv,h,M}$.}
	We now further refine $T_{\vecv,h,M}$, i.e., the set of non-zero Fourier coefficients. For an integer $\ell$, let $T_{\vecv,h,\ell,M} = \{\vecu \in T_{\vecv,h,M} \, |\, \#\{i\in[m] \, |\, e_i \cap \supp(\vecu+\vecv) \not=\emptyset\} = \ell\}$ be the set $\vecu\in T_{\vecv,h,\ell,M}$'s such that the support of $\vecu+\vecv$ touches exactly $\ell$ edges. Since $\vecv,h$ and $M$ will be fixed in the rest of this proof, we simplify the notation and refer to this set as $T_\ell$. Note that $h/k \leq \ell \leq \min\{m,h\}$. Thus, the quantity we are interested in this lemma can be upper bounded as follows.
	\begin{align}
	\sum_{\substack{\vecu\in\Z_q^n\\\|\vecu+\vecv\|_0=h}}\frac{q^{n}}{|B|}\left|\widehat{\mathbf{1}_B}(\vecu)\right|
    & = \frac{q^n}{|B|} \sum_{\vecu\in S_h}\left|\widehat{\one_B}(\vecu)\right| \nonumber \\
    & = \frac{q^n}{|B|} \sum_{\vecu\in T_{\vecv,h,M}} \left|\widehat{\one_B}(\vecu)\right| \nonumber \\
    & = \sum_{\ell = h/k}^{\min\{m,h\}} \frac{q^n}{|B|} \sum_{\vecu\in T_{\ell}} \left|\widehat{\one_B}(\vecu)\right| \nonumber \\
    & \leq \sum_{\ell = h/k}^{\min\{m,h\}} \frac{q^n}{|B|} \sqrt{ |T_{\ell}| \sum_{\vecu\in T_{\ell}} \widehat{\one_B}(\vecu)^2} \label{eqn:base-case-1}
	\end{align}
	where the second equality is due to~\autoref{lem:fourier coeff of set}, the third equality is due to the partition, and the last inequality is by Cauchy-Schwarz inequality. (The reason why we partition $T_{\vecv,h,M}$ into $T_\ell$'s is that the Fourier square-mass within $T_\ell$ and the cardinality of $T_\ell$ can be properly upper bounded respectively.)
	
	\paragraph{Step 2: Upper bounding the squared Fourier mass within $T_\ell$.}
	To upper bound the squared Fourier mass within $T_\ell$, we utilize the fact that the posterior set $B$ is independent to the choice of center $\vecc$ (i.e.,~\autoref{lem:recenter}) and hypercontractivity (i.e., \cref{lem:hyper}). We stress that in the bound we establish below 
	it is crucial that the exponent of $b$ is $h-\ell$ (as opposed to the more trivial $h$, or $h(k-1)/k$). In turn this bound is obtained by using a random center $\vecc$ and this randomization is permitted at the analysis stage by \cref{lem:recenter}.
	
	\begin{claim}\label{clm:random-center}
		Let $\zeta_1 = \max\{1,\zeta_q\}$ where $\zeta_q$ is the constant from \cref{lem:hyper}. If $|B|\geq q^{n-b}$ for some $b\in\N$ then for every $1\leq\ell< h\leq b$, we have
		\[
		\sum_{\vecu \in T_\ell} \widehat{\one_{B}}(\vecu)^2 \leq k^{\ell} \left(\frac{|B|}{q^{n}}\right)^2 \left(\frac{\zeta_1 \cdot b}{h-\ell}\right)^{h-\ell} \, .
		\]
	\end{claim}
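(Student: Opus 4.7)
The plan combines three ingredients: the recentering invariance of \cref{lem:recenter}, the Fourier characterization of restricted sets in \cref{lem:fourier coeff of set}, and the hypercontractivity bound of \cref{lem:hyper} applied to $\mathbf{1}_{B_r'}$ on $\Z_q^{(k-1)m}$. By \cref{lem:recenter}, for any centering $\vecc'$ of $M$, $B$ is $(M,\vecc')$-restricted with reduced set $B_r'\subseteq\Z_q^{(k-1)m}$ of size $|B_r'|=|B|/q^{n-(k-1)m}\geq q^{(k-1)m-b}$ (since $A_{\vecc'}$ has full row rank, each fiber has the same size). By \cref{lem:fourier coeff of set}, $\widehat{\mathbf{1}_B}(\vecu)=\widehat{\mathbf{1}_{B_r'}}(\tilde{A}_{\vecc'}\vecu)$ for every $\vecu\in T_\ell$, irrespective of the choice of $\vecc'$.

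Sample $\vecc'$ uniformly (each $c'_i\in e_i$ independently). For $\vecu\in T_\ell$, let $h_i$ denote the weight of $(\vecu+\vecv)|_{e_i}$ on each of the $\ell$ touched edges (so $\sum_i h_i=h$ and each $h_i\geq 1$), and define
\begin{equation*}
E_\vecu(\vecc')=\bigl\{c'_i\in\supp(\vecu+\vecv)\cap e_i\text{ for every touched edge }e_i\bigr\}.
\end{equation*}
Then $\Pr[E_\vecu(\vecc')]=\prod_{i\text{ touched}}(h_i/k)\geq k^{-\ell}$. Using the identity $\|\tilde{A}_{\vecc'}\vecx\|_0=\|\vecx\|_0-|\{i:c'_i\in\supp(\vecx)\}|$ (valid whenever $\supp(\vecx)\subseteq\supp(M)$) applied to $\vecx=\vecu+\vecv$, the event $E_\vecu(\vecc')$ forces $\|\tilde{A}_{\vecc'}(\vecu+\vecv)\|_0=h-\ell$ exactly.

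Because $|\widehat{\mathbf{1}_B}(\vecu)|^2$ does not depend on $\vecc'$ and $\Pr[E_\vecu(\vecc')]\geq k^{-\ell}$, one has $|\widehat{\mathbf{1}_B}(\vecu)|^2\leq k^\ell\cdot\mathbb{E}_{\vecc'}[\mathbf{1}_{E_\vecu(\vecc')}\,|\widehat{\mathbf{1}_{B_r'}}(\tilde{A}_{\vecc'}\vecu)|^2]$. Summing over $\vecu\in T_\ell$, swapping sum and expectation, and using that $\vecu\mapsto\tilde{A}_{\vecc'}\vecu$ is a bijection from $\{\vecu:\supp(\vecu)\subseteq\supp(M),\,\langle\vecu,\vece_i\rangle\equiv 0\bmod q\ \forall i\}$ onto $\Z_q^{(k-1)m}$ (after fixing the $k-1$ non-center coordinates per edge, the mod-$q$ constraint uniquely determines the center coordinate), the inner sum for each $\vecc'$ is bounded above by $\sum_{\vecw:\|\vecw+\vecv_0\|_0=h-\ell}|\widehat{\mathbf{1}_{B_r'}}(\vecw)|^2$, where $\vecv_0:=\tilde{A}_{\vecc'}\vecv$. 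Applying \cref{lem:hyper} to $\mathbf{1}_{B_r'}$ with $h'=h-\ell\in\{1,\ldots,4b\}$ (valid since $1\leq\ell<h\leq b$) gives this last quantity $\leq(|B|/q^n)^2\cdot(\zeta b/(h-\ell))^{h-\ell}$ uniformly in $\vecc'$, proving the claim with $\zeta_1=\zeta$.

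The key insight is defining $E_\vecu$ using $\supp(\vecu+\vecv)$ rather than $\supp(\vecu)$: this makes $\tilde{A}_{\vecc'}(\vecu+\vecv)$ (not $\tilde{A}_{\vecc'}\vecu$) the quantity landing at weight exactly $h-\ell$, which pairs correctly with the \emph{shifted} hypercontractivity inequality in \cref{lem:hyper} at base vector $\vecv_0=\tilde{A}_{\vecc'}\vecv$. The factor $k^\ell$ is precisely the $1/\Pr[E_\vecu]$ loss from restricting to the favorable event. The recentering invariance of \cref{lem:recenter} is what legitimizes randomizing $\vecc'$ purely for analysis even though the protocol operates with one fixed $\vecc$; without this randomization, one would need a single centering making $\|\tilde{A}_\vecc(\vecu+\vecv)\|_0$ small for every $\vecu\in T_\ell$ simultaneously, which is in general impossible. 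The mod-$q$ constraint $\langle\vecu,\vece_i\rangle\equiv 0$ plays no role in the probability computation itself but is what makes $\vecu\mapsto\tilde{A}_{\vecc'}\vecu$ a bijection, permitting the clean transfer of the sum from $\Z_q^n$ to $\Z_q^{(k-1)m}$.
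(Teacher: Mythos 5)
Your proposal is correct and follows essentially the same strategy as the paper's proof: randomize the centering (legitimized by \cref{lem:recenter}), define the favorable event that every touched edge's center lands in $\supp(\vecu+\vecv)$ (probability $\geq k^{-\ell}$), observe this forces $\|\tilde{A}_{\vecc'}(\vecu+\vecv)\|_0 = h-\ell$, and then apply \cref{lem:hyper} to $\one_{B_{r,\vecc'}}$ at the shift $\tilde{A}_{\vecc'}\vecv$. You spell out a couple of details the paper leaves implicit (the injectivity of $\vecu\mapsto\tilde{A}_{\vecc'}\vecu$ on $T_\ell$, and that $q^{(k-1)m}/|B_{r,\vecc'}| = q^n/|B|$), but the argument is the same.
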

		The proof of this claim uses \cref{lem:fourier coeff of set} to relate the Fourier coefficients of the function $\one_B$ to those of $\one_{B_r}$. But note that the ``reduced set'' $B_r$ depends on the choice of the center. Furthermore the weight of the Fourier coefficient in the reduced space depends on how the centers overlap with $\supp(\vecu+\vecv)$. Specifically we have that for centers $\vecc$, $\|\widetilde{A}_{\vecc}(\vecu+\vecv)\|_0 = \|\vecu+\vecv\|_0 - t$, where $t = |\{i \in [m] \, |\, c_i \in \vecu + \vecv\}|$ is the number of centers contained in $\vecu+\vecv$. Note that $t \leq \ell$ since the number centers in $\supp(\vecu + \vecv)$ can not exceed the number of edges touching this set. The crux of this proof is that we if choose the centers randomly then there is a positive probability that all centers (of the edges that touch $\supp(\vecu+\vecv)$) are in $\supp(\vecu+\vecv)$. We argue the formal details below.
        
	\begin{proof}[Proof of \cref{clm:random-center}] 
		For a random center $\vecc$, let $A_{\vecc}$ denote the $\vecc$-centered folded encoding of $M$, and let $B_{r,\vecc}=\{A_\vecc\vecx \, |\, \vecx\in B \} \subseteq \Z_q^{(k-1)m}$. 
        For $\vecu \in T_\ell$, let $I_\vecu(\vecc) =1$ if $c_i \in \supp(\vecu+\vecv)$ for every $i\in[m]$ with $e_i\cap\supp(\vecu+\vecv)\neq\emptyset$ and $0$ otherwise. Note that $\Pr_{\vecc}[I_\vecu(\vecc)=1] \geq k^{-\ell}$. Now consider the following expression:
		\begin{align*}   
			\Exp_{\vecc} \left[\sum_{\substack{\vecw \in \Z_q^{(k-1)m}\\ \|\vecw+\widetilde{A}_\vecc\cdot \vecv\|_0 = h-\ell}} \widehat{\one_{B_{r,\vecc}}}(\vecw)^2 \right] 
			& \geq \Exp_{\vecc} \left[\sum_{\vecu \in T_{\ell}} I_\vecu(\vecc)\cdot \widehat{\one_{B}}(\vecu)^2 \right] & (\because~\autoref{lem:fourier coeff of set})\\
			& = \sum_{\vecu \in T_{\ell}} \left(\widehat{\one_{B}}(\vecu)^2 \Exp_{\vecc} \left[ I_\vecu(\vecc)  \right]\right) & (\because B\text{ and }T_{\ell}\text{ are independent to }\vecc)\\
			& \geq k^{-\ell} \sum_{\vecu \in T_{\ell}} \widehat{\one_{B}}(\vecu)^2 \, . & (\because \Pr_{\vecc}[I_\vecu(\vecc)=1] \geq k^{-\ell})
		\end{align*}
        Rearranging the above we get 
        \begin{equation}
            \label{eqn:exp-c-fourier-ub}
            \sum_{\vecu \in T_{\ell}} \widehat{\one_{B}}(\vecu)^2 \leq k^\ell \cdot \Exp_{\vecc} \left[\sum_{\substack{\vecw \in \Z_q^{(k-1)m}\\ \|\vecw+\widetilde{A}_\vecc\cdot \vecv\|_0 = h-\ell}} \widehat{\one_{B_{r,\vecc}}}(\vecw)^2 \right]. 
        \end{equation}

        On the other hand, since $B$ is $(M,\vecc)$-restricted, we have $|B|=|\{\vecx\in\Z_q^n\ |\ A_\vecc\vecx\in B_{r,\vecc}\}|=|B_{r,\vecc}|\cdot q^{n-\textsf{rank}(A_\vecc)}$. As $\textsf{rank}(A_\vecc)=(k-1)m$ and $|B|\geq q^{n-b}$, we have that $|B_{r,\vecc}|=|B|/q^{n-(k-1)m}\geq q^{(k-1)m-b}$. Hence, by \autoref{lem:hyper} (invoked with $n\gets(k-1)m$ and $B\gets B_{r,\vecc}$), for every $\vecc$ we have 
		\begin{align*}
			\sum_{\substack{\vecw \in \Z_q^{(k-1)m}\\ \|\vecw+\tilde{A}_\vecc\cdot \vecv\|_0 = h-\ell}} \widehat{\one_{B_{r,\vecc}}}(\vecw)^2 &\leq \left(\frac{|B_{r,\vecc}|}{q^{(k-1)m}}\right)^2 \left(\frac{\zeta_1 b}{h-\ell}\right)^{h-\ell} \, ,
		\end{align*}
		where $\zeta_1 = \max\{1,\zeta_q\}$ and $\zeta_q$ is the constant from \cref{lem:hyper}.
		Also, by $|B|=|B_{r,\vecc}|\cdot q^{n-(k-1)m}$, the above inequality becomes
		\begin{align*}
			\sum_{\substack{\vecw \in \Z_q^{(k-1)m}\\ \|\vecw+\tilde{A}_\vecc\cdot \vecv\|_0 = h-\ell}} \widehat{\one_{B_{r,\vecc}}}(\vecw)^2 &\leq
			\left(\frac{|B|}{q^{n}}\right)^2 \left(\frac{\zeta_1 \cdot b}{h-\ell}\right)^{h-\ell} \, .
		\end{align*}
		Taking expectations over $\vecc$ we thus get:
        
        \begin{equation}\label{eqn:exp-c-fourier-lb}
		   \Exp_{\vecc}\left[\sum_{\substack{\vecw \in \Z_q^{(k-1)m}\\ \|\vecw+\tilde{A}_\vecc\cdot \vecv\|_0 = h-\ell}} \widehat{\one_{B_{r,\vecc}}}(\vecw)^2 \right]  \leq
			\left(\frac{|B|}{q^{n}}\right)^2 \left(\frac{\zeta_1 \cdot b}{h-\ell}\right)^{h-\ell} \, . 
		\end{equation}
        Putting the inequalities \cref{eqn:exp-c-fourier-ub} and \cref{eqn:exp-c-fourier-lb} together we get
		$$\sum_{\vecu \in T_\ell} \widehat{\one_{B}}(\vecu)^2 \leq k^{\ell} \left(\frac{|B|}{q^{n}}\right)^2 \left(\frac{\zeta_1\cdot b}{h-\ell}\right)^{h-\ell},$$
		thus proving the claim.
	\end{proof}
	
	\paragraph{Step 3: Upper bounding the cardinality of $T_\ell$.}
	Next, we turn to bounding the size of the set $T_\ell$. To do so we explore the structure of the vectors in $T_{\ell}$. We start with some notation. Let 
	$E = \{i\in[m] \, |\,  \langle \vecv, \vece_i \rangle \equiv 0 \pmod{q}\}$ 
	and $O = \{i\in[m] \,  \langle \vecv, \vece_i \rangle \not\equiv 0 \pmod{q} \}$. Given a vector $\vecu$, we define $W_e = W_e(\vecu) = (\vecu+\vecv) \odot \left(\sum_{i\in E} \vece_i\right)$ and $W_o = W_o(\vecu) = (\vecu+\vecv) \odot \left(\sum_{i\in O} \vece_i\right)$, where $\odot$ is used to denote the Hadamard product (entrywise product) of two vectors.
	Let $\eta$ denote the number of edges touched by $W_e$ and let $o$ denote the number of edges touched by $W_o$.
	Note the following conditions hold when $\vecu \in T_\ell$.
	\begin{claim}\label{claim:reducedness base case combo}
		If $\vecu \in T_\ell$, then all the following conditions hold: (1) $|O| \leq h$, (2) $\eta + o = \ell$, and (3) $\eta\leq h/2$.
	\end{claim}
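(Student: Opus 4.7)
The plan is to verify each of the three parts directly from the definitions, relying on three simple observations: edges of $M$ are pairwise disjoint, $\vecu$ satisfies $\langle \vecu,\vece_i\rangle\equiv 0\pmod q$ for every edge $e_i$ (by the definition of $T_{\vecv,h,M}$), and the sets $E,O$ partition $[\alpha n]$.

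For (1), the key input is that for $i\in O$ we have $\langle\vecv,\vece_i\rangle\not\equiv 0\pmod q$ while $\langle\vecu,\vece_i\rangle\equiv 0\pmod q$. Adding gives $\langle\vecu+\vecv,\vece_i\rangle\not\equiv 0\pmod q$, so $\vecu+\vecv$ must be nonzero on at least one coordinate of $e_i$. Since the edges of the hypermatching are disjoint, distinct $i\in O$ contribute disjoint coordinates to $\supp(\vecu+\vecv)$, and hence $|O|\leq|\supp(\vecu+\vecv)|=h$.

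For (2), note that the definition of $W_e$ forces $\supp(W_e)\subseteq\bigcup_{i\in E}e_i$, so any edge that $W_e$ touches lies in $E$; in fact $\eta$ is exactly the number of $i\in E$ such that $e_i\cap\supp(\vecu+\vecv)\neq\emptyset$. Analogously, $o$ is the number of $i\in O$ with this property. Since $E\cup O=[\alpha n]$ and the edges are disjoint, $\eta+o$ equals the total number of edges of $M$ touching $\supp(\vecu+\vecv)$, which by definition of $T_\ell$ is $\ell$.

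For (3), this is the one piece of the claim that uses the arithmetic constraint in $E$, and it is the ``crucial'' observation. Fix $i\in E$ with $e_i\cap\supp(\vecu+\vecv)\neq\emptyset$: since $\langle\vecv,\vece_i\rangle\equiv 0\pmod q$ and $\langle\vecu,\vece_i\rangle\equiv 0\pmod q$, also $\langle\vecu+\vecv,\vece_i\rangle\equiv 0\pmod q$. If $\vecu+\vecv$ were nonzero on exactly one coordinate $j\in e_i$, then $\langle\vecu+\vecv,\vece_i\rangle=(\vecu+\vecv)_j\not\equiv 0\pmod q$, a contradiction. Hence each of the $\eta$ such edges contributes at least two coordinates to $\supp(W_e)\subseteq\supp(\vecu+\vecv)$, and these contributions are disjoint across edges, so $2\eta\leq|\supp(\vecu+\vecv)|=h$, i.e.\ $\eta\leq h/2$. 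No step here is an ``obstacle''; the only point where a wrong definition would derail the argument is (3), where the requirement $\langle\vecu,\vece_i\rangle\equiv 0\pmod q$ built into $T_{\vecv,h,M}$ is what forces two nonzero coordinates per $E$-edge.
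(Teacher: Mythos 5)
Your proof is correct and follows essentially the same route as the paper's: (1) every $O$-edge has nonzero inner product with $\vecu+\vecv$ and therefore hits $\supp(\vecu+\vecv)$ in a vertex disjoint from the others; (2) $\eta$ and $o$ count the $E$- and $O$-edges touched by $\vecu+\vecv$, which partition the $\ell$ touched edges; (3) each touched $E$-edge must contain at least two support coordinates because a single nonzero entry would violate $\langle \vecu+\vecv,\vece_i\rangle\equiv 0\pmod q$. The only difference is that you spell out the one-nonzero-coordinate contradiction in (3) explicitly, which the paper leaves implicit.
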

	\begin{proof}
		We prove each of the individual claims below:
		\begin{enumerate}
			\item Note that $\langle \vecu + \vecv, \vece_i \rangle \not\equiv 0 \pmod{q}$ for every $i\in O$, since $\langle \vecu, \vece_i \rangle \equiv 0 \pmod{q}$ and $\langle \vecv, \vece_i \rangle \not\equiv 0 \pmod{q}$. Therefore, $|\supp(\vecu+\vecv) \cap e_i| \geq 1$ for every $i\in O$, implying that $|O| \leq \sum_{i\in O} |\supp(\vecu+\vecv) \cap e_i| \leq \|\vecu+\vecv\|_0 = h$.
			\item Since $\vecu + \vecv$ touches $\ell$ edges, $\eta + o = \ell$.
			\item Note that $\langle \vecu + \vecv, \vece_i \rangle \equiv 0 \pmod{q}$ for every $i\in E$, since $\langle \vecu, \vece_i \rangle \equiv 0 \pmod{q}$ and $\langle \vecv, \vece_i \rangle \equiv 0 \pmod{q}$. Therefore, if $i\in E$ is touched by $W_e$ (i.e., $W_e \odot \vece_i \neq \mathbf{0}$), then it follows that $W_e$ touches it in at least two points, i.e., $|\supp(W_e\odot \vece_i)| \geq 2$ (see~\autoref{fig:cardinality}). Combined with the fact that $|\supp(W_e)| \leq \|\vecu+\vecv\|_0 = h$, we obtain $\eta \leq h/2$, as desired.
		\end{enumerate}
	\end{proof}

	\begin{figure}[ht]
		\centering
		\includegraphics[width=14cm]{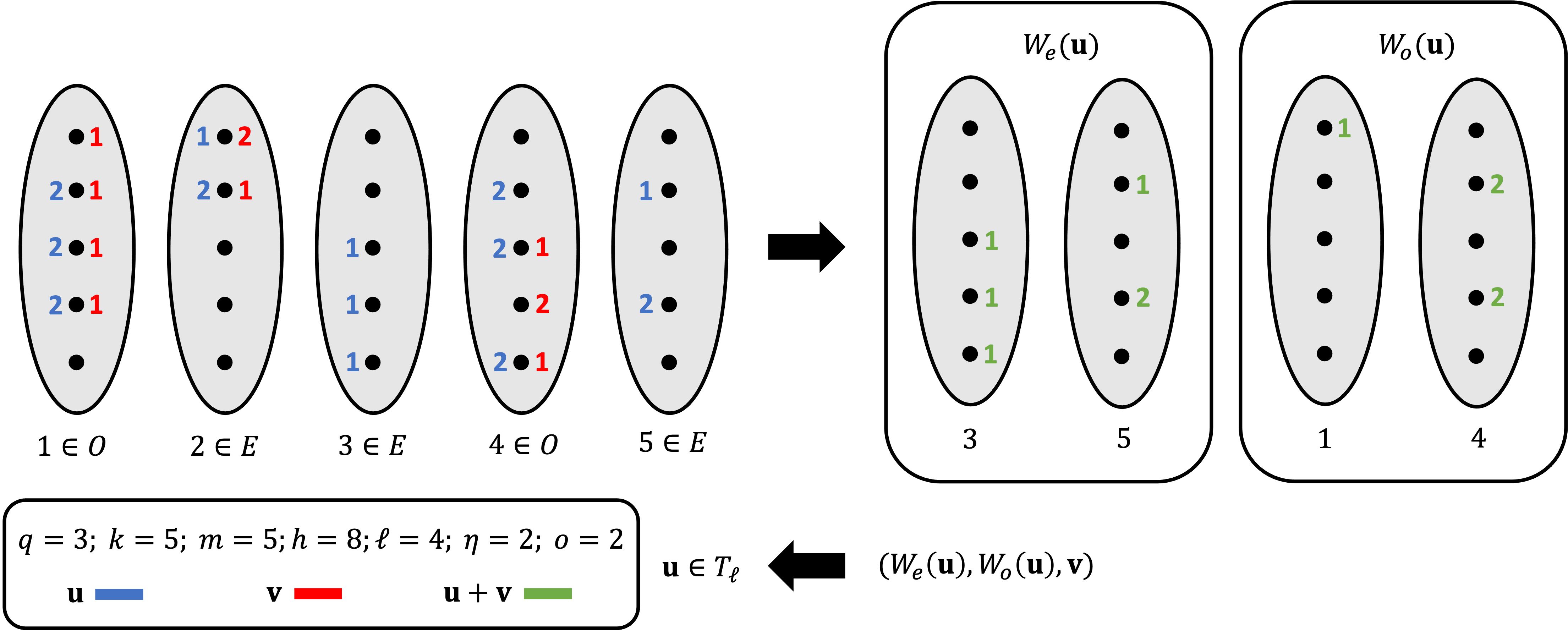}
		\caption{Upper bound the cardinality of $T_\ell$. In this example, $q=3, k=5, m=5$, and $\vecv$ is specified by integers in {\color{red!100!black!90!}red}. Note that $E=\{2,3,5\}$ and $O=\{1,4\}$. Next, we consider $h=8,\ell=4$ and a $\vecu\in T_\ell$ specified by integers in {\color{blue!60!black!90!} blue}. Note that by definition we have $\eta=o=2$. In particular, the tuple $(W_e(\vecu),W_o(\vecu))$ is described on the right and $\vecu+\vecv$ is specified by integers in {\color{green!60!black!90!} green}. It is immediate to see that $(W_e(\vecu),W_o(\vecu),\vecv)$ uniquely specifies $\vecu$ because one can subtract $W_e(\vecu)$ and $W_o(\vecu)$ by $\vecv$ to get the value of $\vecu$ in those coordinates. In the rest of the coordinates, $\vecu$ has the same values as $\vecv$. Moreover, observe that every hyperedge in $W_e(\vecu)$ should contain at least $2$ non-zero points because both $\vecu$ and $\vecv$ sum up to $0$ mod $q$ within those hyperedges.}
		\label{fig:cardinality}
	\end{figure}
Based on these restrictions on $\vecu\in T_\ell$, we can now get the following bound on $|T_\ell|$.

\begin{claim}\label{claim:T_l-ub} For every $\ell\in \{h/k,\ldots,\min\{h,m\}\}$ we have:
\[
|T_\ell| \leq (4q^{2k})^h (em/\eta^*)^{\eta^*}, \mbox{~~~~ where } \eta^*:=\min\{\ell,h/2\}\, .   
\]
\end{claim}

\begin{proof}
    Recall that each $\vecu\in T_\ell$ is uniquely specified by the pair $(W_e,W_o)$ (see~\autoref{fig:cardinality}) and it therefore suffices to count the number of distinct choices of $(W_e,W_o)$. First, we see that the number of possibilities for $W_o$ is at most $(q^k)^{|O|} \leq q^{kh}$ (since $|O|\leq h$ by the first item of~\cref{claim:reducedness base case combo}). Now, having fixed $W_o$ and $o$, consider the number of possibilities of $W_e$. We may choose $W_e$ by picking a set $F\subseteq E$ with $\eta$ edges, and then picking $|\supp(W_e)| = h - |\supp(W_o)|$ elements from the union of the edges in $F$, each of which is given a value in $\Z_q \setminus \{0\}$. Note that $F$ can be chosen in at most $\binom{|E|}{\eta} \leq \binom{m}{\eta}$ ways, after which $W_e$ can be chosen in $\leq q^{k\eta}$ ways. Finally, note that by the second and third items of~\autoref{claim:reducedness base case combo} we have $\eta\leq\min\{\ell,h/2\}=:\eta^*$. Putting these together we get:
    \[|T_\ell| \leq \sum_{\eta=0}^{\eta^*} \left\{q^{kh} \binom{m}{\eta} q^{k\eta}\right\} \leq q^{2kh} \sum_{\eta=0}^{\eta^*}  \binom{m}{\eta} ,\]
    where the second inequality uses $\eta \leq h$. Now we consider two cases based on whether $\eta^* \leq m/3$ or not. If $\eta^* > m/3$, since $\eta^* \leq \ell \leq m$ we have 
    \[
    \sum_{\eta=0}^{\eta^*}  \binom{m}{\eta} \leq \sum_{\eta=0}^{m}  \binom{m}{\eta} = 2^m \leq 2^{2h}\, ,
    \]
    where the final inequality uses $\min\{\ell,h/2\} > m/3$ to infer $m < 3h/2 < 2h$. 
    Thus in this case we get 
    \[
    |T_\ell| \leq q^{2kh} 4^h = (4q^{2k})^h \leq  (4q^{2k})^h (em/\eta^*)^{\eta^*}\, ,
    \]
    where the final inequality above uses $em \geq m \geq \eta^*$.
    In the case, $\eta^* \leq m/3$ we note that $\binom{m}{\eta} \geq 2 \binom{m}{\eta-1}$ for every $\eta \in \{0,\ldots,\eta^*\}$ and thus $\sum_{\eta=0}^{\eta^*}\binom{m}{\eta}$ is a telescoping sum bounded by $2\binom{m}{\eta^*}$ which in turn is bounded by $2(em/\eta^*)^{\eta^*} \leq 2^{2h} (em/\eta^*)^{\eta^*}$. Again the desired bound on $|T_\ell|$ follows. 
\end{proof} 
   
	\paragraph{Step 4: Completing the proof of~\cref{lem:boundedness base case}.}
	The boundedness of $B$ now follows from some straightforward (though tedious) calculations. 
    Continuing with the RHS of \cref{eqn:base-case-1}, we have: 
    
	\begin{align*}
	\sum_{\ell = h/k}^{\min\{h,m\}} \frac{q^n}{|B|} \sqrt{ |T_{\ell}| \sum_{\vecu\in T_{\ell}} \widehat{\one_B}(\vecu)^2}
		& \leq \sum_{\ell = h/k}^{\min\{h,m\}} \frac{q^n}{|B|} \sqrt{ (4q^{2k})^h\cdot  (e m/\eta^*)^{\eta^*} \cdot k^{\ell}\cdot  \left(\frac{|B|}{q^{n}}\right)^2 \cdot \left(\frac{\zeta_1 s}{h-\ell}\right)^{h-\ell} } \\
		& \mbox{~~~~~~~ (By \cref{clm:random-center} and \cref{claim:T_l-ub}, and using $b\leq s$)}\\
        & = \sum_{\ell = h/k}^{\min\{h,m\}} \sqrt{ (4q^{2k})^h\cdot  (e m/\eta^*)^{\eta^*} \cdot k^{\ell}\cdot \left(\frac{\zeta_1 s}{h-\ell}\right)^{h-\ell} } \\
        & = S_1 + S_2 
    \end{align*}
    
    where 
    $$S_1 := \sum_{\ell = h/k}^{h/2} \sqrt{ (4q^{2k})^h\cdot  (e m/\eta^*)^{\eta^*} \cdot k^{\ell}\cdot \left(\frac{\zeta_1 s}{h-\ell}\right)^{h-\ell} },$$ and 
    $$S_2 := \sum_{\ell = h/2+1}^{\min\{h,m\}} \sqrt{ (4q^{2k})^h\cdot  (e m/\eta^*)^{\eta^*} \cdot k^{\ell}\cdot \left(\frac{\zeta_1 s}{h-\ell}\right)^{h-\ell}}.$$
    Using $\eta^* = \ell$ for the regime in $S_1$ and $\eta^* = h/2$ in the $S_2$ regime we can simplify the above two sums as follows:
    \begin{align*}
        S_1 & = \sum_{\ell = h/k}^{h/2} \sqrt{ (4q^{2k})^h\cdot  (e m/\ell)^{\ell} \cdot k^{\ell}\cdot \left(\frac{\zeta_1 s}{h-\ell}\right)^{h-\ell} } \\
        & \leq \sum_{\ell = h/k}^{h/2} \sqrt{ (4ekq^{2k})^h\cdot  (m/\ell)^{\ell} \cdot \left(\frac{\zeta_1 s}{h-\ell}\right)^{h-\ell} } \\
        & \leq \sum_{\ell = h/k}^{h/2} \sqrt{ (4ekq^{2k})^h\cdot  (m/\ell)^{\ell} \cdot \left(\frac{2\zeta_1 s}{h}\right)^{h-\ell} } ~~~\mbox{(Using $\ell \leq h/2$)}\\
        & = (8\zeta_1ekq^{2k})^{h/2} \left(\frac{s}{h}\right)^{h/2}\sum_{\ell = h/k}^{h/2} \sqrt{ \left(\frac{m h }{2\zeta_1 s \ell}\right)^{\ell}  }\\
        & \leq (8\zeta_1ekq^{2k})^{h/2} \left(\frac{s}{h}\right)^{h/2}\sum_{\ell = h/k}^{h/2} \sqrt{ \left(\frac{k m}{2\zeta_1 s}\right)^{\ell}} \mbox{~~~(Using $\ell \geq h/k$)} \\
        & \leq (8\zeta_1ekq^{2k})^{h/2} \left(\frac{s}{h}\right)^{h/2}\sum_{\ell = h/k}^{h/2} \sqrt{ \left(\frac{k n}{2\zeta_1 s}\right)^{\ell}} \mbox{~~~(Using $m \leq n$)} \\
        & \leq (8\zeta_1ekq^{2k})^{h/2} \left(\frac{s}{h}\right)^{h/2}  \cdot 2\cdot \left(\frac{k n}{2\zeta_1 s}\right)^{h/4} \mbox{~~~(Using $s\leq \epsilon_0 n$. See footnote\footnotemark.)}\\     
        & = 2(2^{5/2}\zeta_1^{1/2} ek^{3/2} q^{2k})^{h/2} \left(\frac{\sqrt{sn}}{h}\right)^{h/2}  \\
        & \leq C_1^h \left(\frac{\sqrt{sn}}{h}\right)^{h/2} \, ,
        \end{align*}
for $C_1 \geq \sqrt{2^{7/2}\zeta_1^{1/2} ek^{3/2} q^{2k}}$. 
\footnotetext{Since $s\leq \epsilon_0 n \leq kn/8\zeta_1$, we have $\sqrt{\frac{kn}{2\zeta_1 s}}\geq 2$ and so the sum $\sum_{\ell=h/k}^{h/2} \sqrt{(\frac{kn}{2\zeta_1 s})^\ell}$ telescopes to at most $2\left(\frac{k n}{2\zeta_1 s}\right)^{h/4}$.}
We now turn to simplifying $S_2$. We have 
\begin{align*}
    S_2  &= \sum_{\ell = h/2+1}^{\min\{h,m\}} \sqrt{ (4q^{2k})^h\cdot  (e m/\eta^*)^{\eta^*} \cdot k^{\ell}\cdot \left(\frac{\zeta_1 s}{h-\ell}\right)^{h-\ell}} \\
    & = \sum_{\ell = h/2+1}^{\min\{h,m\}} \sqrt{ (4q^{2k})^h\cdot  (2 e m/h)^{h/2} \cdot k^{\ell}\cdot \left(\frac{\zeta_1 s}{h-\ell}\right)^{h-\ell}} \mbox{~~~(Using $\eta^*=h/2$ in this regime)}\\     
    & \leq \sum_{\ell = h/2+1}^{h} \sqrt{ (4q^{2k})^h\cdot  (2 e m/h)^{h/2} \cdot k^{\ell}\cdot \left(\frac{\zeta_1 s}{h-\ell}\right)^{h-\ell}} \\  
    & \leq (8ekq^{2k})^{h/2} (m/h)^{h/4} \sum_{\ell = h/2+1}^{h} \sqrt{\left(\frac{\zeta_1 s}{h-\ell}\right)^{h-\ell}} \\     
    & = (8ekq^{2k})^{h/2} (m/h)^{h/4} \sum_{\ell' = 0}^{h/2-1} \sqrt{\left(\frac{\zeta_1 s}{\ell'}\right)^{\ell'}}\\ 
    & \leq (8ekq^{2k})^{h/2} (m/h)^{h/4} \sum_{\ell' = 0}^{h/2-1} \sqrt{\left(\frac{\zeta_1 s}{h/2}\right)^{\ell'}e^{h/2}} \mbox{~~~(Using $(x/y)^y \leq (x/z)^y\cdot e^z, ~~\forall x>0, y\geq 1 ,z\geq1$)}\footnotemark\\
    & \leq (8e^2kq^{2k})^{h/2} (m/h)^{h/4} \sum_{\ell' = 0}^{h/2-1} \sqrt{\left(\frac{\zeta_1 s}{h/2}\right)^{\ell'}} \\
    & \leq (8e^2kq^{2k})^{h/2} (n/h)^{h/4} \zeta_1^{h/2} 4 ((2s)/h)^{h/4} \mbox{~~~(Using $m \leq n$, $\zeta_1 \geq 1$ and $h \leq s$)} \\
    & \leq C_2^h \left(\frac{\sqrt{sn}}{h}\right)^{h/2} \, , 
\end{align*}
\footnotetext{{This inequality is derived by seeing $(x/y)^y =(x/z)^y(z/y)^y$ and $(z/y)^y \leq e^z$ for every $y,z\geq 1$.}}
for $C_2 \geq (2^8\zeta_1e^2kq^{2k})^{1/2}$.
Combining the bounds on $S_1$ and $S_2$ we get 
\[
	\sum_{\substack{\vecu\in\Z_q^n\\\|\vecu+\vecv\|_0=h}}\frac{q^{n}}{|B|}\left|\widehat{\mathbf{1}_B}(\vecu)\right|
    = 	\frac{q^n}{|B|} \sum_{\vecu\in S_h}\left|\widehat{\one_B}(\vecu)\right|  \leq S_1 + S_2 \leq (C_1^h + C_2^h) (\sqrt{sn}/h)^{h/2} \leq C^h (\sqrt{sn}/h)^{h/2}\,  
    \]
for $C \geq \max\{2,C_1 + C_2\}$. 

Thus, we conclude that $B$ is $(M,C,s)$-weakly-reduced for every $s \in [b,\epsilon_0 n]$, and so by \cref{lem:weak-vs-strong}, $B$ is $(M,C_0,s)$-strongly-reduced for every $s \in [b,\epsilon_0 n]$. 
This concludes the proof of \cref{lem:boundedness base case}.
\end{proof}

\subsection{Proof: boundedness implies near uniformity}\label{sec:bounded implies uniform}

In this section we prove \cref{lem:boundedness implies uniform} which is used to prove 
condition (iii) of~\cref{lem:hybrid}.
We restate the lemma below for convenience.

\boundednessuniform*

In the following, we denote $m=\alpha n$ for simplicity. Let us start with defining a combinatorial quantity $p(h,k,m,n)$ and showing an upper bound on it.

\begin{definition}\label{def:p-one}
	Suppose $k, m, n > 0$ are integers. We define  $p(h,k,m,n)$ to be the probability that a uniformly random $k$-hypermatching $M$ on vertex set $[n]$ with $m$ hyperedges each of size $k$, the support of $M$ contains $[h]$ and further satisfies the condition that every hyperedge of $M$ contains either $0$ or at least two vertices from $[h]$. 
\end{definition}

\begin{restatable}[{\cite[Lemma~6.8]{CGSV21-conference}}]{lemma}{matchingprobfolded}\label{lem:matching prob folded}
	For every $\beta_0 > 0$ and $k$ there exists $\alpha_0 > 0$ such that for all integers $n,k$, $\alpha\in(0,\alpha_0]$, $m = \alpha n$, and $0\leq h\leq {km}$, we have
	\[
	p(h,k,m,n)\leq \left(\frac{\beta_0 h}{n}\right)^{h/2}  \, .
	\]
	Furthermore, $p(h,k,m,n)=0$ if $h>km$. 
\end{restatable}

We include a proof for convenience.
\begin{proof}
We prove the lemma for $\alpha_0 = \beta_0/(8e^3k^5)$. 
The definition of $p(\cdots)$ explores the probability that a fixed set $H = [h]$ satisfies some conditions with respect to a random matching $M$. By symmetry we can instead view it as the probability that a uniformly random set $H$ satisfies the same conditions with respect to a fixed matching $M$ with edges $e_1,\ldots,e_m$. (We abuse notation to also use $M$ to denote $\cup_{i\in[m]} e_i$, i.e.,  the subset of vertices incident to the matching.)

Given a matching $M$, let $\cF = \{H \subseteq [n] \mid |H| = h, H \subseteq M, |H \cap e_i|\ne 1, \forall i\in [m]\}$. We have $p(h,k,m,n) = |\cF|/\binom{n}h$, and so it suffices to bound $|\cF|$ from above.
Given $H \in \cF$, let $E(H) = \{i\in[m] | H \cap e_i \ne \emptyset\}$
denote the set of edges touching $H$ and let $\eta = |E(H)|$. We have that $h/k \leq \eta \leq h/2$ since every edge includes at least two vertices of $H$. To choose an $H\in \cF$ we may choose $\eta \in [h/k,h/2]$, $E \subseteq [m]$ of size $\eta$ and then choose $H$ of size $h$ from the set of vertices incident to $E$. (There are further conditions that we will ignore to get the upper bound.) Given $\eta$ there are $\binom{m}{\eta}$ ways of choosing $E$, and given $E$, there are at most $\binom{k\eta}{h}$ ways of choosing $H$ from the vertices touched by $E$. We thus get that $|\cF| \leq \sum_{\eta = h/k}^{h/2} \binom{m}{\eta} \binom{k\eta}{h}$. Applying this we now get the following inequalities: 
\begin{align*}
 p(h,k,m,n) & \leq \binom{n}h^{-1} \cdot \sum_{\eta = h/k}^{h/2} \binom{m}{\eta} \binom{k\eta}{h}  \\
            & \leq \sum_{\eta = h/k}^{h/2} \binom{m}{\eta} (e k \eta/h)^h (h/n)^h \mbox{~~~~(Using $(a/b)^b \leq \binom a b \leq (ea/b)^b$)} \\
            & \leq (e k h/n)^h \sum_{\eta = 1}^{h/2} \binom{m}{\eta}  \mbox{~~~~(Using $\eta \leq h$)}\\
            & \leq 2^h (ek h/n)^h (2ekm/h)^{h/2}\\
            & = (8 e^3 k^5 \alpha h/n)^{h/2}\, ,
\end{align*}
where the last inequality uses $\sum_{\eta = 1}^{h/2} \binom{m}{\eta} \leq 2^h (2ekm/h)^{h/2}$ for every $m$ and $h \in [km]$. (If $h \leq m$ the final term is the largest and bounded by $(2ekm/h)^{h/2}$ and so the entire sum is at most $h (2ekm/h)^{h/2} \leq 2^h (2ekm/h)^{h/2}$. If $h \in (m,km]$, then the sum is at most $2^m$ while the RHS is at least $2^h$ (in particular $2ekm/h\geq 1$).) So we have that $p(h,k,m,n) \leq (8 e^3 k^5 \alpha h/n)^{h/2} \leq (8 e^3 k^5 \alpha_0 h/n)^{h/2} = (\beta_0 n/h)^{h/2}$ since $\beta_0 =  8e^3 k^5 \alpha_0$. 
\end{proof}

The following lemma is an immediate corollary of~\cref{lem:matching prob folded} and will be useful later in the proof of~\autoref{lem:boundedness implies uniform}. 

\begin{lemma}\label{lem:boundedness implies uniform calculation}
    For every $k,q\geq 2$ there exists $\alpha_0 > 0$ such that for every $\delta \in (0,1/2)$ and $C<\infty$ there exists $\tau > 0$ and $s_0 < \infty$ such that for all integers $n$,  $s$ and $m$ satisfying $s_0\leq s\leq \tau n$, $m \leq \alpha_0 n$ we have:
	\[
	p(h,k,m,n)U_{C,s}(h)\leq\left\{\begin{array}{ll}
		\delta^{2h}     &  ,\ 1\leq h\leq s \\
		2^{-h/2}    & ,\ s<h\leq km \\
		0 & ,\ h > km
	\end{array}\right. \, .
	\]
	Specifically,
	\[
	\sum_{h=2}^{n}p(h,k,m,n)U_{C,s}(h)\leq\delta^2 \, .
	\]
\end{lemma}

\begin{proof}

    Let $\beta_0 = \frac{1}{4q^2e^2}$, and let $\alpha_0$ be as in \cref{lem:matching prob folded} for this choice of $\beta_0$. Let $\tau = \delta^8/(C^2\beta_0^2)$ and $s_0 = 4\log_2(3/\delta)$. 

    By \cref{lem:matching prob folded} and the definition of (strongly-)boundedness we have the following:
    \begin{itemize}
		\item If $1\leq h\leq s$, then
		\begin{align*}
			p(h,k,m,n)U_{C,s}(h)&\leq\left(\frac{\beta_0 h}{n}\right)^{h/2} \cdot \left(\frac{C \sqrt{sn}}{h}\right)^{h/2} 
            =\left(\frac{\beta_0 C \sqrt{s}}{\sqrt{n}}\right)^{h/2} \leq (\beta_0 C \sqrt{\tau})^{h/2} \leq \delta^{2h}, 
		\end{align*}
        where the second inequality uses $s\leq \tau n$ and the third uses $\beta_0 C \sqrt{\tau} \leq \delta^4$.
		\item If $s < h \leq km$, then 
		\begin{align*}
			p(h,k,m,n)U_{C,s}(h)&\leq\left(\frac{\beta_0 h}{n}\right)^{h/2} \cdot \left(\frac{2q^2 e^2 n}{h}\right)^{h/2} = (\beta_0 2q^2 e^2)^{h/2}
            \leq 2^{-h/2},
		\end{align*}
		where the final inequality uses $\beta_0 2q^2 e^2 \leq 1/2$. 
		\item If $h > km$, we have $p(h,k,m,n)=0$ and hence $p(h,k,m,n)U_{C,s}(h)=0$.
	\end{itemize}

	Finally, we have 
	\begin{align*}
		\sum_{h=2}^{n}p(h,k,m,n)U_{C,s}(h) &\leq \sum_{h=2}^{s} \delta^{2h} + \sum_{h=s+1}^{km} 2^{-h/2}\\
		&\leq \frac{\delta^4}{1-\delta^2} + \frac{2^{-s-1/2}}{1-(1/\sqrt{2})}\\
    	&\leq \frac{\delta^2}{2} + \frac{2^{-s-1/2}}{1-(1/\sqrt{2})}\\
		&\leq \frac{\delta^2}{2} + \frac{\delta^2}{2}\\
		&= \delta^2,
	\end{align*}
    where the second inequality uses $\delta < 1/2$ and the third uses $s \geq 4 \log_2(3/\delta)$. 
\end{proof}

Now, we are ready to prove the main lemma of this subsection.
\begin{proof}[Proof of~\cref{lem:boundedness implies uniform}] 
	Given $k,q \geq 2$ and $\delta>0$ and $C < \infty$ let $\alpha_0 = \alpha_0(k,q) > 0$ and $\tau = \tau(k,q,\delta,C) > 0$ and $s_0 = s_0(\delta)$ be as given by \cref{lem:boundedness implies uniform calculation}.

	Let $m \leq \alpha_0 n$, $s_0 \leq b \leq s\leq \tau n $, and let $B\subset \Z_q^n$ be a $(C,s)$-bounded set with $|B|\geq q^{n-b}$. The goal is to prove that with probability   at least $1-\delta$ over a uniform random choice of $k$-hypermatching $M$ on $m$ edges and a random choice of center sequence $\vecc$ the following holds for every $\vecz_0\in\Z_q^{(k-1)m}$,
	\[
	1-\delta\leq q^{(k-1)m}\Pr_{\vecx\sim \Unif(B)}[A_{\vecc}\vecx=-\vecz_0]\leq 1+\delta 
	\]
	(Recall that $A_{\vecc}$ was defined in \cref{eqn:Ac-defn}. Note that the switch from $\vecz_0$ to $-\vecz_0$ in the event described above does not alter the statement being proved since we are proving this for every vector $\vecz_0$.)
	
	Now, for a fixed $k$-hypermatching $M$ and fixed choice of centers $\vecc$, let us expand the marginal probability as follows. Let $f:\Z_q^n \to \{0,1\}$ be the indicator function of the set $B$. For a fixed $\vecz_0\in\Z_q^{(k-1)m}$, let $g = g_{A_{\vecc},\vecz_0}:\Z_q^n \to \{0,1\}$ be the function given by $g(\vecx)=\mathbf{1}_{A_{\vecc}\vecx=\vecz_0}$. Letting $g = g_{A_{\vecc},\vecz_0}$, we have
	\begin{align}
		q^{(k-1)m}\Pr_{\vecx\sim \Unif(B)}[A_{\vecc}\vecx=-\vecz_0]&=\frac{q^{(k-1)m}}{|B|}\sum_{\vecx\in\Z_q^n}f(\vecx)g(-\vecx) \nonumber\\
		&=\frac{q^{(k-1)m}}{|B|} (f\star g)(0) \nonumber\\
		&=\frac{q^{(k-1)m}}{|B|}\sum_{\substack{\vecu\in\Z_q^n}}\widehat{f\star g}(\vecu) \nonumber\\
		&=\frac{q^{(k-1)m+n}}{|B|}\sum_{\substack{\vecu\in\Z_q^n}}\widehat{f}(\vecu)\widehat{g}(\vecu) ~~~ \mbox{(By \cref{lem:convthm})} \nonumber\\
		&=1 + \frac{q^{n+(k-1)m}}{|B|}\sum_{\substack{\vecu\in\Z_q^n\\\vecu\neq0^n}}\widehat{f}(\vecu)\widehat{g}(\vecu) \label{eq:diff}\\
		&~~~ \mbox{(Since $q^{n}\widehat{f}(0)=|B|$ and $q^{n}\widehat{g}(0)=q^{n-(k-1)m}$).}\nonumber
	\end{align}

	We now analyze the Fourier coefficients of $g$ and use this to bound the right hand side above. Roughly the claim below establishes basic properties of the function $g$ that show that $g$ is also a somewhat reduced function (as in \cref{def:reducedness}). This, combined with the boundedness of $B$ allows us to establish the near-uniformity of the posterior distribution. 
	\begin{claim}\label{claim:g hat combo}
		Let $M$ be a $k$-hypermatching of size $m$, $\vecc$ be centers, and $\vecz_0\in\Z_q^{(k-1)m}$. Let $g(\vecx)=\mathbf{1}_{A_{\vecc}\vecx=\vecz_0}$. For every $\vecu\in\Z_q^n$, the following conditions hold:
		
		\begin{enumerate}
			\item If $\supp(\vecu) \not\subseteq \supp(M)$ then $\widehat{g}(\vecu)=0$.~\footnote{Recall that $\supp(\vecu) = \{i| u_i \ne 0\}$ and $\supp(M)$ is the subset of [n] consisting of vertices that are incident to some hyperedge in the matching $M$.}
			\item If there exists $i\in[m]$ such that $\langle \vecu, \vece_i \rangle \not\equiv0$ mod $q$ where $\vece_i$ denotes the $i$-th hyperedge of $M$, then $\widehat{g}(\vecu)=0$.
			\item $|\widehat{g}(\vecu)|\leq q^{-(k-1)m}$.
		\end{enumerate}
	\end{claim}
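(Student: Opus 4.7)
The approach is to prove all three parts simultaneously by explicitly computing $\widehat{g}(\vecu)$ via a direct-sum decomposition of $\Z_q^n$ compatible with the matching $M$, exactly mirroring the strategy used in the proof of \cref{lem:fourier coeff of set}. Concretely, write $\Z_q^n = W^{(0)} \oplus W^{(1)} \oplus \cdots \oplus W^{(m)}$, where $W^{(i)}$ for $i \in [m]$ is the submodule generated by the hyperedge $e_i$ and $W^{(0)}$ is the submodule generated by the unmatched coordinates. Write $\vecx = \vecx^{(0)} + \cdots + \vecx^{(m)}$ and $\vecu = \vecu^{(0)} + \cdots + \vecu^{(m)}$ accordingly. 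Since $A_\vecc$ acts blockwise on the matched coordinates and does not touch $W^{(0)}$, the constraint $A_\vecc \vecx = \vecz_0$ factors as $A_\vecc^{(i)} \vecx^{(i)} = \vecz_0^{(i)}$ for $i \in [m]$, with $\vecx^{(0)}$ free. Hence
\[
\widehat{g}(\vecu) = \frac{1}{q^n}\left(\sum_{\vecx^{(0)} \in W^{(0)}} \overline{\omega^{(\vecu^{(0)})^\top \vecx^{(0)}}}\right) \prod_{i=1}^{m} \left(\sum_{\substack{\vecx^{(i)} \in W^{(i)} \\ A_\vecc^{(i)} \vecx^{(i)} = \vecz_0^{(i)}}} \overline{\omega^{(\vecu^{(i)})^\top \vecx^{(i)}}}\right).
\]

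The outer factor evaluates to $q^{n-km}$ when $\vecu^{(0)} = \vec{0}$ and to $0$ otherwise; this is exactly part (1), since $\vecu^{(0)} = \vec{0}$ is equivalent to $\supp(\vecu) \subseteq \supp(M)$, i.e., $\vecu$ is perfectly matched by $M$. For part (2), fix any particular solution $\vecx^{*(i)}$ to $A_\vecc^{(i)} \vecx^{(i)} = \vecz_0^{(i)}$; the full solution set is $\{\vecx^{*(i)} + a \cdot \vec{1}_{e_i}: a \in \Z_q\}$ since the kernel of $A_\vecc^{(i)}$ on $W^{(i)}$ is the one-dimensional span of $\vec{1}_{e_i}$. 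Therefore the $i$-th inner sum equals $\overline{\omega^{(\vecu^{(i)})^\top \vecx^{*(i)}}} \sum_{a \in \Z_q} \overline{\omega^{a \langle \vecu, \vece_i \rangle}}$, and the geometric sum vanishes whenever $\langle \vecu, \vece_i \rangle \not\equiv 0 \pmod q$.

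For part (3), assume now that $\vecu$ satisfies the conditions in (1) and (2), so each inner factor equals $q \cdot \overline{\omega^{(\vecu^{(i)})^\top \vecx^{*(i)}}}$ (a unimodular complex number times $q$) and the outer factor equals $q^{n-km}$. Multiplying these together and dividing by $q^n$ yields a number of modulus $q^{n-km} \cdot q^m \cdot q^{-n} = q^{-(k-1)m}$, which establishes the claimed pointwise bound (in fact with equality in this case, and zero otherwise, so the bound is tight). The entire argument is mechanical once the block decomposition is set up; the only subtle point is the kernel-dimension count for $A_\vecc^{(i)}$, which follows because the $\vecc$-centered folded encoding uses $k-1$ linearly independent rows over the $k$-dimensional coordinate space of $W^{(i)}$. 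No genuine obstacle is expected; the proof is essentially a mild specialization of the calculation in \cref{lem:fourier coeff of set} to the single-fiber indicator.
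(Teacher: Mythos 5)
Your proof is correct, but it takes a genuinely different route from the paper's own. The paper proves each item by a short ad-hoc argument: for (1) and (2) it observes that $g$ is invariant under the additive shift by $a\vecdelta_i$ (resp.\ $a\vece_i$) while the character $\vecx\mapsto\omega^{\vecu^\top\vecx}$ is not, hence each coset contributes zero; for (3) it simply bounds $|\widehat{g}(\vecu)|\leq q^{-n}\|g\|_1 = q^{-n}\cdot q^{n-(k-1)m}$. You instead redo the full block decomposition $\Z_q^n = W^{(0)}\oplus\cdots\oplus W^{(m)}$ from \cref{lem:fourier coeff of set} and compute $\widehat{g}(\vecu)$ exactly as a product of per-block factors, then read off all three items from the closed form. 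Both are valid; the paper's proof is shorter and more modular, while yours yields the precise value of every Fourier coefficient (in particular, equality $|\widehat{g}(\vecu)| = q^{-(k-1)m}$ whenever the coefficient is nonzero), which is more information than the claim asks for. The one point worth flagging is that the fiber-counting step --- that the solution set of $A_\vecc^{(i)}\vecx^{(i)} = \vecz_0^{(i)}$ inside $W^{(i)}$ is a coset of $\langle\vec{1}_{e_i}\rangle$ of size exactly $q$ --- does not rely on $\Z_q$ being a field: since $A_\vecc^{(i)}$ restricted to the non-center columns is the identity, the map is surjective with kernel exactly the cyclic subgroup generated by $\vec{1}_{e_i}$, even for composite $q$. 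Your appeal to ``$k-1$ linearly independent rows'' is a little loose over $\Z_q$, but the conclusion it is being used for is correct and the paper's formulation (``which are additive shifts of each other on coordinates in $e_i$'') is the cleaner way to phrase it.
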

	\begin{proof}[Proof of~\autoref{claim:g hat combo}]
		Recall that 
		$q^{n}\widehat{g}(\vecu)=\sum_{\vecx}g(\vecx)\omega^{\vecu^\top \vecx}$.
		\begin{enumerate}
			\item           
			If $\supp(\vecu)\not\subseteq \supp(M)$, then there exists $i\in[n]$ such that $u_i\neq0$ but the $i$-th column of $A_{\vecc}$ is zero. For each $\vecx\in\Z_q^n$, for every $a\in\Z_q$ we have $g(\vecx)=g(\vecx+a\vecdelta_i)$, where $\vecdelta_i \in \Z_q^n$ denotes the coordinate vector in the $i$-th direction (i.e., $\vecdelta_i = 0^{i-1}10^{n-i}$). Also, note that $\sum_{a\in\Z_q}\omega^{\vecu^\top (\vecx+a\vecdelta_i)}=\omega^{\vecu^\top\vecx}\sum_{a\in\Z_q}\omega^{u_i\cdot a}=0$. This implies $\hat{g}(\vecu)=0$.
			
			\item Suppose $\langle \vecu, \vece_i \rangle \not\equiv0$ mod $q$. For each $\vecx\in\Z_q^n$ and $a\in\Z_q$, note that $g(\vecx)=g(\vecx+a\vece_i)$ because $a^k$ lies in the kernel of the folded matrix of this hyperedge. Second, since $\langle \vecu, \vece_i \rangle \not\equiv 0$ mod $q$, we have $\sum_{a\in\Z_q}\omega^{\vecu^\top \vecx+a\vece_i}=\omega^{\vecu^\top\vecx}\sum_{a\in\Z_q}\omega^{a\cdot\langle \vecu, \vece_i \rangle}=0$. This implies $\hat{g}(\vecu)=0$.
			
			\item By definition, we have $q^{n}\widehat{g}(\vecu)=\sum_{\vecx}\mathbf{1}_{A_{\vecc}\vecx=z_0}\omega^{\vecu^\top \vecx}$. Note that for fixed $M,\vecc,z_0$, there are at most $q^{n-(k-1)m}$ $\vecx$ such that $g(\vecx)=1$. Thus, we have $|\widehat{g}(\vecu)|\leq q^{-(k-1)m}$ as desired.
		\end{enumerate}
	\end{proof}
	
	Now, we can use~\autoref{claim:g hat combo} to further upper bound~\autoref{eq:diff} as follows. Recall that $\odot$ stands for the coordinate-wise product of vectors.
	\[
	\frac{q^{n+(k-1)m}}{|B|}\sum_{\substack{\vecu\in\Z_q^n\\\vecu\neq0^n}}\widehat{f}(\vecu)\widehat{g}(\vecu)\leq\frac{q^{n}}{|B|}\sum_{\substack{\vecu\in\Z_q^n\\\vecu\neq0^{n}\\\vecu\text{ is matched by }M\\ \|\vecu\odot\vece_i\|_1\equiv0\text{ mod }q\ \forall i\in[m]}}|\widehat{f}(\vecu)| \, .
	\]
	One key observation here is that the above bound is independent of $\vecz_0$ and therefore holds even if we take the maximum of the left hand side over all $\vecz_0\in\Z_q^{(k-1)m}$. We thus get, for every $M$ and $\vecc$:
	\[\max_{\vecz_0\in\Z_q^{(k-1)m}}\left|q^{(k-1)m}\Pr_{\substack{\vecx\sim \Unif(B)}}[A_{\vecc}\vecx=-\vecz_0]-1\right| \leq\frac{q^{n}}{|B|}\sum_{\substack{\vecu\in\Z_q^n\\\vecu\neq0^{n}\\\vecu\text{ is contained in }M\\ \|\vecu\odot\vece_i\|_1\equiv0\text{ mod }q\ \forall i\in[m]}}|\widehat{f}(\vecu)| \, 
	\]
	Finally, let us take the expectation of the above quantity over the randomness of $M$ and $\vecc$.
	
	\begin{align}
		&\Exp_{M,\vecc}\left[\max_{\vecz_0\in\Z_q^{(k-1)m}}\left|q^{(k-1)m}\Pr_{\substack{\vecx\sim \Unif(B)}}[A_{\vecc}\vecx=-\vecz_0]-1\right|\right] \leq\Exp_{M,\vecc}\left[\frac{q^{n}}{|B|}\sum_{\substack{\vecu\in\Z_q^n\\\vecu\neq0^{n}\\\vecu\text{ is contained in }M\\ \|\vecu\odot\vece_i\|_1\equiv0\text{ mod }q\ \forall i\in[m]}}|\widehat{f}(\vecu)|\right] \,.  \nonumber
		\intertext{Next, we partition the summation according to the $\ell_0$-norm of the Fourier coefficients.}
		&\leq\sum_{h=1}^n\Exp_{M,\vecc}\left[\frac{q^{n}}{|B|}\sum_{\substack{\vecu\in\Z_q^n\\\|\vecu\|_0=h\\\vecu\text{ is contained in }M\\\|\vecu\odot\vece_i\|_1\equiv0\text{ mod }q\ \forall i\in[m]}}|\widehat{f}(\vecu)|\right] \,  \nonumber
		\intertext{Observe that the event $\|\vecu\odot\vece_i\|_1\equiv 0\pmod{q}$ implies that either $\|\vecu\odot\vece_i\|_0=0$ or $\|\vecu\odot\vece_i\|_0\geq 2$ holds. Hence, the above summation can be replaced with a summation beginning at $h=2$, and the equation becomes}
		&\leq\sum_{h=2}^n\Exp_{M,\vecc}\left[\frac{q^{n}}{|B|}\sum_{\substack{\vecu\in\Z_q^n\\\|\vecu\|_0=h\\\vecu\text{ is contained in }M\\\|\vecu\odot\vece_i\|_0=0\text{ or }\|\vecu\odot\vece_i\|_0\geq2\ \forall i\in[m]}}|\widehat{f}(\vecu)|\right] \leq\sum_{h=2}^n p(h,k,m,n)\frac{q^{n}}{|B|}\sum_{\substack{\vecu\in\Z_q^n\\\|\vecu\|_0=h}}|\widehat{f}(\vecu)| \nonumber \,
		\intertext{When $B$ is $(C,s)$-bounded, we can further upper bound the above quantity as follows.}
		&\leq\sum_{h=2}^{n}p(h,k,m,n)\cdot U_{C,s}(h) \leq\delta^2, \nonumber 
	\end{align}
	where the last inequality is due to~\autoref{lem:boundedness implies uniform calculation}. Thus, when $\vecx\sim\Unif(B)$ and $U\sim\Unif(\Z_q^{(k-1)\alpha n})$, we have
	\[
	\Exp_{M,\vecc}\left[\max_{\vecz_0\in\Z_q^{(k-1)m}}\left|q^{(k-1)m}\Pr_{\substack{\vecx\sim \Unif(B)}}[A_{\vecc}\vecx=-\vecz_0]-1\right|\right]\leq\delta^2 \, . 
	\]
	By Markov's inequality, we have
	\[
	\max_{\vecz_0\in\Z_q^{(k-1)m}}\left|q^{(k-1)m}\Pr_{\substack{\vecx\sim \Unif(B)}}[A_{\vecc}\vecx=-\vecz_0]-1\right|\leq\delta
	\]
	with probability at least $1-\delta$. This yields the main part of the lemma. The consequences follow directly from the main part (since pointwise bounds on the distance between distributions imply total variation distance as well as expectation of a non-negative weight).
	
    This completes the proof of~\cref{lem:boundedness implies uniform}.
\end{proof}

\subsection{Proof of the ``induction step'' lemma}\label{sec:induction}
The goal of this section is to prove the ``induction step'' lemma. By Markov's inequality, it suffices to prove the following lemma which is the expectation version of~\cref{lem:induction step}. We first show how~\cref{lem:induction step exp} implies~\cref{lem:induction step} and then focus on proving the former in the rest of this subsection.

\begin{restatable}[Induction step in expectation]{lemma}{inductionstepexp}\label{lem:induction step exp}
	For every $q,k\in\N$ there exist $\alpha_0>0$ and $C_0>0$ such that for every $C > C_0$, there exist $\tau_0\in(0,1)$  and $C''>0$ such that the following holds:
	For every $n,m,s,h \in \N$ satisfying  $m \leq \alpha_0 n$ and  $0 <  s<\tau_0 n$ and $1 \leq h \leq s$, and every $B\subset\Z_q^n$ that 
	is $(C,s)$-strongly-bounded we have:
	\[
	\sum_{\substack{\vecu\in\Z_q^n}}\frac{q^n}{|B|}\left|\widehat{\mathbf{1}_B}(\vecu)\right|\Exp_M\left[\max_{B'} \left\{ \sum_{\substack{\vecu'\in\Z_q^n\\\|\vecu+\vecu'\|_0=h}}\frac{q^n}{|B'|}\left|\widehat{\mathbf{1}_{B'}}(\vecu')\right|\right\}\right]\leq W_{C'',s}(h) \, ,
	\]
	where the expectation is taken over a uniform random $k$-hypermatching $M$ on $m$ hyperedges, and the maximum is taken over all $B'$ that are $(M,C_0,s)$-reduced.
\end{restatable}

We first restate and prove \cref{lem:induction step} using \cref{lem:induction step exp}.
\leminductionstep*
\begin{proof}[Proof of \cref{lem:induction step}]
    Let $\alpha_0$, $C_0$ be as in \cref{lem:induction step exp}. Given $C$ and $\delta$, let $C''$ and $\tau_0$ be the constants given by 
    \cref{lem:induction step exp}. Let $C'$ be the constant from \cref{lem:weak-vs-strong} for $C = C''/\delta^2$ and $\epsilon_0 = \tau_0$. 
    We prove our lemma with these choices of parameters.
    
    For every matching $M$, fix a set $B'=B'(M)$ that is $(M,C_0,s)$-reduced and satisfies $|B'|\geq q^{n-b'}$ and $|B\cap B'|\geq(1-\delta)\cdot|B|\cdot|B'|/q^n \geq q^{n-s}$. We prove the lemma for every such fixing. (In particular $B'$ below is short for $B'(M)$.)
    
    Fix $s \leq \tau_0 n$.
	For every $h\in\{1,\dots,s\}$, by the convolution theorem (see \cref{lem:convthm}) for Fourier coefficients, we have
	\begin{align*}
		\sum_{\substack{\vecv\in\Z_q^n\\\|\vecv\|_0=h}}\frac{q^n}{|B\cap B'|}\left|\widehat{\mathbf{1}_{B\cap B'}}(\vecv)\right|
		& = \sum_{\substack{\vecv\in\Z_q^n\\\|\vecv\|_0=h}}\frac{q^n}{|B\cap B'|}\left|\sum_{\vecu\in\Z_q^n} \widehat{\mathbf{1}_B}(\vecu)\widehat{\mathbf{1}_{B'}}(\vecv-\vecu)\right| \\
		& \leq \sum_{\substack{\vecv\in\Z_q^n\\\|\vecv\|_0=h}}\frac{q^n}{|B\cap B'|}\sum_{\vecu\in\Z_q^n} \left|\widehat{\mathbf{1}_B}(\vecu)\right|\cdot\left|\widehat{\mathbf{1}_{B'}}(\vecv-\vecu)\right| \\
		& = \sum_{\vecu\in\Z_q^n} \sum_{\substack{\vecu'\in\Z_q^n\\\|\vecu'+\vecu\|_0=h}}\frac{q^n}{|B\cap B'|} \left|\widehat{\mathbf{1}_B}(\vecu)\right|\cdot\left|\widehat{\mathbf{1}_{B'}}(\vecu')\right| \\
		&= \frac{|B|\cdot|B'|}{q^n\cdot|B\cap B'|}\sum_{\substack{\vecu\in\Z_q^n}}\frac{q^n}{|B|}\left|\widehat{\mathbf{1}_B}(\vecu)\right|\sum_{\substack{\vecu'\in\Z_q^n\\\|\vecu+\vecu'\|_0=h}}\frac{q^n}{|B'|}\left|\widehat{\mathbf{1}_{B'}}(\vecu')\right|\\
		&\leq \frac{1}{1-\delta}\sum_{\substack{\vecu\in\Z_q^n}}\frac{q^n}{|B|}\left|\widehat{\mathbf{1}_B}(\vecu)\right|\sum_{\substack{\vecu'\in\Z_q^n\\\|\vecu+\vecu'\|_0=h}}\frac{q^n}{|B'|}\left|\widehat{\mathbf{1}_{B'}}(\vecu')\right| \, ,
	\end{align*}
where the first inequality above is from the triangle inequality and the second from the assumption in the lemma statement on the cardinality of $B \cap B'$.

	For $h \in [s]$, let $F(h)$ denote the event that the random matching $M$ is such that
	\[
	\frac{1}{1-\delta}\sum_{\substack{\vecu\in\Z_q^n}}\frac{q^n}{|B|}\left|\widehat{\mathbf{1}_B}(\vecu)\right|\sum_{\substack{\vecu'\in\Z_q^n\\\|\vecu+\vecu'\|_0=h}}\frac{q^n}{|B'|}\left|\widehat{\mathbf{1}_{B'}}(\vecu')\right|>\frac{1}{\delta^h} \cdot W_{C'',s}(h) =  W_{C''/\delta^2, s}(h).
	\]
    Further, for $h \in (s,\tau_0 n]$, let $F(h)$ denote the event that the random matching $M$ is such that 
	\begin{equation}
	\frac{1}{1-\delta}\sum_{\substack{\vecu\in\Z_q^n}}\frac{q^n}{|B|}\left|\widehat{\mathbf{1}_B}(\vecu)\right|\sum_{\substack{\vecu'\in\Z_q^n\\\|\vecu+\vecu'\|_0=h}}\frac{q^n}{|B'|}\left|\widehat{\mathbf{1}_{B'}}(\vecu')\right|>\frac{1}{\delta^h} \cdot W_{C'',h}(h) =  W_{C''/\delta^2, h}(h). \label{eq:Fh-high}	    
	\end{equation}

	Let $F = \cup_{h \in [\tau_0 n]} F(h)$ be the union of these events. Note that if $F$ does not hold, then, for every $s' \in [s,\tau_0n]$, $B \cap B'$ is $(C''/\delta^2,s')$-weakly-bounded, and so, by \cref{lem:weak-vs-strong}, $B\cap B'$ is $(C',s)$-strongly-bounded as desired. So we turn to bounding the probability of $F$.
	
	For $h \in [s]$, an application of Markov's inequality to \cref{lem:induction step exp} yields that 
\[
	\Pr[F(h)] \leq \frac{W_{C'',s}(h)}{(1-\delta)\cdot W_{C''/\delta^2,s}(h)} = \frac{W_{C'',s}(h)}{(1-\delta)\cdot \frac{1}{\delta^h}W_{C'',s}(h)} \leq \frac{\delta^h}{1-\delta}.
\]
	For $h \in (s,\tau_0n]$, we first note that since $B$ is $(C,s)$-bounded then it is also $(C,h)$-bounded (by Item (2) of \cref{lem:ucs-monotonic}). Similarly we also have that $B'$ is $(M,C_0,h)$-reduced. This allows us to invoke \cref{lem:induction step exp} with $s_{\cref{lem:induction step exp}} = h$ and then proceed as in the case above. Specifically for this choice of $s_{\cref{lem:induction step exp}}$ we get by \cref{lem:induction step exp} that the expected value of the LHS of \cref{eq:Fh-high} is at most $W_{C'',h}(h)/(1-\delta)$. Now an application of Markov's inequality yields:
	\[
	\Pr[F(h)] \leq \frac{W_{C'',h}(h)}{(1-\delta)\cdot W_{C''/\delta^2,h}(h)} = \frac{W_{C'',h}(h)}{(1-\delta)\cdot \frac{1}{\delta^h}W_{C'',h}(h)} \leq \frac{\delta^h}{1-\delta} \, .
	\]	
	We thus get
	$\Pr[F] \leq \sum_{h \in [\tau_0 n]} F(h) \leq \frac1{1-\delta}\sum_h \delta^h \leq 4\delta$ where the final step uses the fact that $\delta < 1/2$. We conclude that with probability at least $1-4\delta$ over the randomness of $M$, the event $F$ does not hold and $B \cap B'$ is $(C',s)$-bounded.
\end{proof}

Now we turn to proving \cref{lem:induction step exp}. The proof involves three steps. In the first step we partition the inner sum over $\vecu'$ based on a combinatorial structure that allows us to say how much the expected contribution of $\vecu'$ would be, based on a few parameters. In the second step we give bounds on these expected contributions in different cases and analyze the probability of each case. In the final step we then combine these different bounds to prove the lemma.

	\paragraph{Step 1: Partitioning the inner sum via a combinatorial structure.}
	We start by defining the following combinatorial quantity, based on intersection properties of a random $k$-hypermatching.
	
	\begin{definition} \label{def:pcomb}
		Let $n,q,k,u\in\mathbb{N}$ and $\alpha\in(0,1/k)$. Let $\vecu\in(\Z_q\backslash\{0\})^u\times0^{n-u}$ be a vector that is non-zero on exactly the first $u$ coordinates. For a $k$-hypermatching $M$ of size $m$, let $K_{\vecu}(M):=\{i\in[m]\, |\, \langle \vecu, \vece_i \rangle \not\equiv0 \pmod{q}\}$ be the set of edges with ``odd intersection'' (formally non-zero inner product mod $q$) with $\vecu$. 
		Let $E_{\vecu}(M):=\{j\in[n]\, \mid\, u_j\neq0,\, \exists i\not\in K_{\vecu}(M),\ j\in e_i\}$ denote the set of vertices in the support of $\vecu$ that are in ``even'' edges.\footnote{Informally we refer to edges as ``even'' (or ``odd'') which would be the right terminology if $q=2$. For $q \ne 2$ these words are formalized as having zero (or non-zero) inner product with $\vecu$.} Finally, let $O_{\vecu}(M):=\{j\in[n]\, \mid\, u_j\neq0,\, \exists i\in K_{\vecu}(M),\ j\in e_i\}$ be the vertices in the support of $\vecu$ from odd edges. For $o,\eta,\kappa\in\mathbb{N}$, we define 
		\begin{equation}
			p_{q,\alpha}(n,u,o,\eta,\kappa) := \max_{\vecu\in(\Z_q\backslash\{0\})^u\times0^{n-u}}\Pr_{M}\left[|K_{\vecu}(M)| =\kappa, |E_{\vecu}(M)|=\eta, |O_{\vecu}(M)|=o\right], \label{eq:matchprob}
		\end{equation}
		where $M$ is a uniformly random $k$-hypermatching of size $\alpha n$. (In other words $p_{q,\alpha}$ is the maximum probability of a vector $\vecu$ of support size $u$ having $\kappa$ odd edges, $\eta$ even vertices and $o$ odd vertices when the matching $M$ is drawn at random.)
	\end{definition}
	\cref{fig:induction} illustrates some of the parameters in the definition above.
	We remark that $p_{q,\alpha}(\cdots)$ should not be confused with the function $p(\cdots)$ defined in \cref{def:p-one}, which is a similar combinatorial quantity but not the same.

	Note that as each edge in $K_{\vecu}(M)$ contributes at least one element to $O_{\vecu}(M)$, we have $o\geq\kappa$.
    \begin{figure}[ht]
        \centering
        \includegraphics[width=10cm]{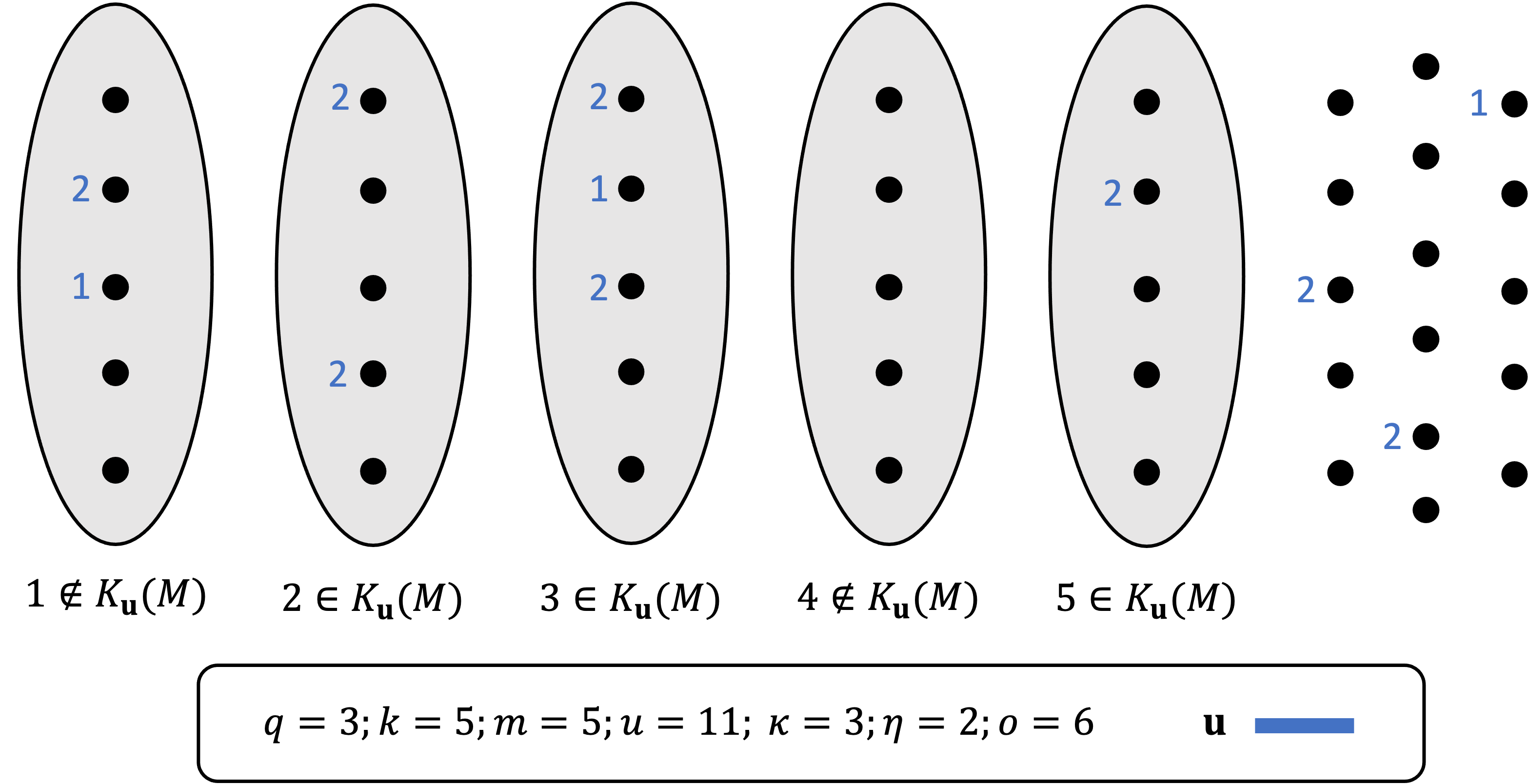}
        \caption{A graphical intuition for the parameters appeared in~\cref{def:pcomb}.}
        \label{fig:induction}
    \end{figure}
	
	We now show how to bound a certain expected value of the sum of Fourier coefficients of a fixed ``level'' from above in terms of the combinatorial quantity defined in \autoref{def:pcomb}.
	\begin{lemma}\label{lem:mcssum}
	    Let $n,q,k,u\in\N$, $\alpha\in(0,1/k)$, $0\leq s\leq n$, and $C>0$.
		For every $\vecu\in\Z_q^n$ with $u = |\supp(\vecu)|$ and $h\in[s]$, we have
		\begin{align*}
			\lefteqn{\Exp_M\left[\max_{\substack{B' \subset \Z_q^n\\ \text{$B'$ is $(M,C,s)$-reduced}}} \left\{ \sum_{\substack{\vecu'\in\Z_q^n\\\|\vecu+ \vecu'\|_0=h}} \frac{q^n}{|B'|}\left|\widehat{\one_{B'}}(\vecu')\right|\right\} \right]}\\
			& ~~~ \leq\ \sum_{o,\eta,\kappa}p_{q,\alpha}(n,u,o,\eta,\kappa)\cdot 
			(h+1) \cdot q^{k\kappa} \cdot U_{C,s}(h+o+\eta-(u+\kappa)) \, ,
		\end{align*}
	  where the summation is taken over $0 \leq o,\eta,\kappa \leq n$ satisfying conditions (1) $u \geq \eta+o$, (2) $\kappa\leq o \leq k \cdot \kappa$ and (3) $h+\eta+o - (u+\kappa) \geq 0$.
	\end{lemma}
	\begin{proof}
		As suggested by the right hand side, we consider the various possibilities for $o,\eta, \kappa$ and bound the left hand side conditioned on the event in \cref{eq:matchprob}, i.e., $|K_{\vecu}(M)| =\kappa, |E_{\vecu}(M)|=\eta$, and $|O_{\vecu}(M)|=o$.
		
		Let $u = |\supp(\vecu)|$. Consider a fixed matching $M = \{e_1,\ldots,e_m\}$ with $m = \alpha n$ and $|K_\vecu(M)| = \kappa$, $|E_\vecu(M)| = \eta$, and $|O_\vecu(M)| = o$ (see the relevant definitions in~\cref{def:pcomb}). Given $M$, let $A = \supp(\vecu) \setminus (E_\vecu(M) \cup O_\vecu(M))$ be the set of unmatched vertices of $\supp(\vecu)$. Furthermore, let $a = |A|$, so that $a = u-(\eta+o)$. For ease of notation, we drop the dependence on $\vecu$ and $M$ and simply write $E = E_\vecu(M)$ and $O = O_\vecu(M)$. We also abuse notation and often use $M$ to denote the subset of $[n]$ given by $\cup_{i \in [m]} e_i$. (The distinction is hopefully clear from context.)
		
        Note that since $|A| \geq 0$ we must have $u-(\eta+o) \geq 0$ for such a matching to exist. 
        This shows it suffices to restrict the summation to triples $(o,\eta,\kappa)$ satisfying condition (1). Note further that each edge in $K_\vecu(M)$ contributes at least one vertex,  and at most $k$ vertices, to $O$ and so $\kappa \leq o \leq k\cdot \kappa$ establishing the sufficiency of summing over triples satisfying condition (2). We now proceed to proving the rest of the lemma (and will prove sufficiency of condition (3) along the way).

		Let $B' \subset \Z_q^n$ be an $(M,C,s)$-reduced set. We give an upper bound on
		\[
		\sum_{\substack{\vecu'\in\Z_q^n\\\|\vecu+ \vecu'\|_0=h}} \frac{q^n}{|B'|}\left|\widehat{\one_{B'}}(\vecu')\right|
		\]
		in terms of the parameters $o,\eta,\kappa$, which will suffice to establish the lemma. We start by establishing some conditions that are necessary to get $\widehat{\one_B}(\vecu') \neq 0$.

We start with some more notation: For a set $S \subseteq [n]$,  we define the restriction of $\vecu$ with respect to $S$ to be the vector $\vecu|_S\in\Z_q^n$ where $(\vecu|_S)_j = \vecu_j$ if $j\in S$; otherwise $(\vecu|_S)_j = 0$. We define the closure of $S$ (with respect to the matching $M$) to be the set $\overline{S} = \cup_{\{i \in [m] \mid S \cap e_i \ne \emptyset\}} e_i$, i.e., $\overline{S}$ takes all the vertices that are contained in edges that touch $S$. (We only apply the notion of the closure to sets $S \subseteq M$.) 

		\begin{claim}\label{claim:mcssum structure}
		    For every $\vecu$ there exists a vector $\tilde{\vecu}$ such that for every vector $\vecu' \in \Z_q^n$ satisfying $\|\vecu+\vecu'\|_0 = h$, we have  $\widehat{\one_{B'}}(\vecu') \neq 0$ only if there exists $\vecz = \vecz(\vecu')\in\Z^n_q$ with $\supp(\vecz)\subseteq \overline{O}$ and 
		    $\tau = \tau(\vecu') \in [\kappa,h-a]$ such that $\|\vecu'+(\vecz+\tilde{\vecu})\|_0=h-a-\tau$. In particular, $\kappa\leq h-a$.
		\end{claim}
		
        Before proving the claim we note that the claim establishes that for there to exist $\vecu'$ such that $\|\vecu + \vecu'\|_0 = h$ and $\widehat{\one_{B'}}(\vecu') \neq 0$ we must have $h - a - \kappa \geq 0$. Combining with $a = u - \eta - o$, this allows us to restrict the summation in the RHS of \cref{lem:mcssum} to triples $(o,\eta,\kappa)$ satisfying $h + \eta + o - (u+\kappa) = h - a - \kappa \geq 0$, thereby establishing the sufficiency of condition (3).
        


        We now prove the claim.
		
		\begin{proof}
		We prove the claim for $\tilde{\vecu} := \vecu|_{[n]\setminus A}$. 
		Note that for every vector $\vecv$ and set $S \subseteq [n]$ we can write $\vecv = \vecv|_S + \vecv_{[n]\setminus S}$, and we also have $\|\vecv\|_0 =\|\vecv|_S\|_0 + \|\vecv_{[n]\setminus S}\|_0$. We use this to decompose $\vecu = \vecu|_A + \tilde{\vecu}$. 
		
		Now consider $\vecu'\in\Z_q^n$ such that $\widehat{\one_{B'}}(\vecu') \neq 0$ and $\|\vecu+\vecu'\|_0=h$. First, as $B'$ is $(M,C,s)$-reduced, by~\cref{lem:fourier coeff of set} we have $\supp(\vecu') \subseteq M$.
		Again we write $\vecu' = \vecu'|_A + \vecu'|_{[n]\setminus A}$. Since $A\cap M = \emptyset$ we must have $\vecu'|_A = 0$. Thus we get that $\tilde{\vecu}+\vecu' = \vecu|_{[n]\setminus A} + \vecu'|_{[n]\setminus A}$ and so 
		\[
		\|\tilde{\vecu}+\vecu'\|_0  = \|\vecu|_{[n]\setminus A} + \vecu'|_{[n]\setminus A}\|_0 = \|\vecu + \vecu'\|_0 - \|\vecu|_{A} + \vecu'|_{A}\|_0 = h - a,
		\]
		where the final equality uses $\|\vecu|_{A} + \vecu'|_{A}\|_0 = \|\vecu|_A\|_0$ which equals $a$ since $A \subseteq \supp(\vecu)$. 
		
        We show now that for $\vecz := - (\tilde{\vecu}+\vecu')|_{\overline{O}}$ 
		and $\tau := \|\tilde{\vecu}|_{\overline{O}}+\vecu'|_{\overline{O}}\|_0$,  we have $\|(\tilde{\vecu}+\vecz) + \vecu'\|_0 = h-a-\tau$. 
		
        Note that the definition of $\vecz$ is such that we have $(\tilde{\vecu}+\vecz+\vecu')|_{\overline{O}} = 0$. This ensures 
        \[\|(\tilde{\vecu}+\vecz) + \vecu'\|_0 = \|\tilde{\vecu}|_{[n]\setminus \overline{O}}+\vecu'|_{[n]\setminus \overline{O}}\|_0 = 
        \|\tilde{\vecu}+\vecu'\|_0 -  \|\tilde{\vecu}|_{ \overline{O}}+\vecu'|_{ \overline{O}}\|_0 = h-a-\tau\,.
        \]
        Finally, we would like to bound the range of possible values for $\tau$. For the upper bound, we have
        \[
        \tau := \|\tilde{\vecu}|_{\overline{O}}+\vecu'|_{\overline{O}}\|_0 \leq  \|\tilde{\vecu}+\vecu'\|_0 = h-a.
        \]
        For the lower bound we for claim that $\|\vecu|_{e_i}+\vecu'|_{e_i}\|_0 \geq 1$ for every edge $e_i$ with $i \in K$. This is so since $\langle\vecu,e_i\rangle \ne 0$ (definition of $K$) and $\langle\vecu',e_i\rangle = 0$ (since $\widehat{\one_{B'}}(\vecu') \neq 0$), and together they imply $\langle\vecu+\vecu',e_i\rangle \ne 0$ which can only happen if $(\vecu+\vecu')|_{e_i} \ne 0$, which in turn implies $\|\vecu|_{e_i}+\vecu'|_{e_i}\|_0 = \|(\vecu+\vecu')|_{e_i}\|_0 \geq 1$.
        
        From the above claim it follows that 
        \[ \|\tilde{\vecu}|_{\overline{O}}+\vecu'|_{\overline{O}}\|_0 = \|\vecu|_{\overline{O}}+\vecu'|_{\overline{O}}\|_0 = \sum_{i \in K} \|\vecu|_{e_i}+\vecu'|_{e_i}\|_0 \geq \sum_{i \in K} 1 = \kappa. \]

        This concludes the proof of the claim. 
		\end{proof}
		 		
		We now return to analyzing the summation in the LHS of the lemma statement. Let $\tilde{\vecu}$ be as given by \cref{claim:mcssum structure}. 
		We have:
		\begin{align}
			\sum_{\substack{\vecu'\in\Z_q^n\\\|\vecu+ \vecu'\|_0=h}} \frac{q^n}{|B'|}\left|\widehat{\one_{B'}}(\vecu')\right| 
			&= \sum_{\tau=\kappa}^{h-a} \sum_{\substack{\vecz \in \Z_q^n\\ \supp(\vecz) \subseteq \overline{O}}} \left[ \sum_{\substack{\vecu'\in\Z_q^n\\ \vecz(\vecu')=\vecz,\tau(\vecu')=\tau\\
			\|(\vecz + \tilde{\vecu})+\vecu'\|_0 = h-a-\tau}} \frac{q^n}{|B'|}|\widehat{\one_{B'}}(\vecu')| \right] \nonumber\\
	        & \mbox{ ~~~~~~~~~~(Using \cref{claim:mcssum structure})}\nonumber\\
	        &\leq \sum_{\tau=\kappa}^{h-a} \sum_{\substack{\vecz \in \Z_q^n\\ \supp(\vecz) \subseteq \overline{O}}} \left[ \sum_{\substack{\vecu'\in\Z_q^n\\ 
			\|(\vecz + \tilde{\vecu})+\vecu'\|_0 = h-a-\tau}} \frac{q^n}{|B'|}|\widehat{\one_{B'}}(\vecu')| \right] \nonumber\\
			&\leq \sum_{\tau=\kappa}^{h-a} \sum_{\substack{\vecz \in \Z_q^n\\ \supp(\vecz) \subseteq \overline{O}}} U_{C,s}(h-a-\tau) \nonumber\\
			& \mbox{ ~~~~~~~~~~(Using the $(C,s)$-reducedness of $B'$ with respect to the vector $\vecv := \tilde{\vecu}+\vecz$)} \nonumber\\
			&\leq \sum_{\tau=\kappa}^{h-a} q^{k\kappa} \cdot U_{C,s}(h-a-\tau) \nonumber\\
			&  \mbox{ ~~~~~~~~~~(Using $|\overline{O}|=k\kappa$ to get $|\{\vecz \, |\,  \supp(\vecz) \subseteq \overline{O}\}|\leq q^{k\kappa}$)}\nonumber\\
			&\leq \sum_{\tau=\kappa}^{h-a} q^{k\kappa} \cdot U_{C,s}(h-a-\kappa) \nonumber\\
			&  \mbox{ ~~~~~~~~~~(Using monotonicity of $U_{C,s}(h)$ when $h\in[s]$ by~\cref{lem:ucs-monotonic})}\nonumber\\
			&= (h-a-\kappa+1) \cdot q^{k\kappa} \cdot U_{C,s}(h-a-\kappa) \nonumber\\
			&\leq (h+1) \cdot q^{k\kappa}\cdot U_{C,s}(h+\eta+o-(u+\kappa)). \nonumber
		\end{align}
This proves the lemma.		
	\end{proof}

	\paragraph{Step 2: Useful inequalities about the boundedness parameters and the combinatorial structure.}
	
	In order to quantify the upper bound in~\autoref{lem:mcssum}, we need to obtain an upper bound for the combinatorial quantity $p_{q,\alpha}(n,u,o,\eta,\kappa)$.
	\begin{restatable}{lemma}{qbound}\label{lem:q-bound}
		For every $q, k\in\N$ there exists a constant $C$ such that for every $\alpha\in (0,1/k]$ and every $n,u,\kappa,o,\eta\in\N$ we have:
		\[
		p_{q,\alpha}(n,u,o,\eta,\kappa) \leq \alpha^{(o+\eta)/k}\cdot C^u \cdot (n/\kappa)^\kappa \cdot (u/\sqrt{n\eta})^\eta \cdot (u/n)^o \, . 
		\]
	\end{restatable}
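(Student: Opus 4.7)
The plan is to decouple the contribution of the even-touching edges (those $i$ with $\langle \vecu, \vece_i \rangle \equiv 0 \bmod q$ that touch $\supp(\vecu)$) from the odd-touching edges (the $\kappa$ edges in $K_\vecu$), and to apply the already-proved combinatorial estimate \cref{lem:matching prob folded} as a black box for the even part. Fix $\vecu \in (\Z_q\setminus\{0\})^u \times \{0\}^{n-u}$. Because $\vecu$ is nonzero on its support, any even touching edge must share at least two vertices with $\supp(\vecu)$; in particular, the number $\rho$ of even touching edges satisfies $\rho \leq \eta/2$. Let $R$ denote the (random) set of $\eta$ vertices of $\supp(\vecu)$ covered by even touching edges. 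Union-bounding over the $\binom{u}{\eta}$ choices of $R$, the event ``$R$ is covered by edges of $M$ whose intersection with $R$ has size $0$ or at least $2$'' has probability at most $(4\eta^{1/2} k \alpha^{1/k})^\eta/n^{\eta/2}$ by \cref{lem:matching prob folded}.

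Combining the union bound with the estimate above yields
\[
\binom{u}{\eta}\cdot\frac{(4\eta^{1/2}k\alpha^{1/k})^\eta}{n^{\eta/2}} \leq \alpha^{\eta/k}\cdot\left(\frac{C' u}{\sqrt{n\eta}}\right)^\eta
\]
for a $k$-dependent constant $C'$, producing two of the five target factors. For the remaining odd-touching contribution, I choose the set of $o$ odd-touched vertices in $\binom{u}{o}$ ways and partition them among the $\kappa$ odd edges (each required to be nonempty) in at most $\kappa^o$ ways. For each such distribution with part sizes $s_1,\dots,s_\kappa$ (summing to $o$, each $\geq 1$), the probability that the random matching realizes these $\kappa$ edges with the prescribed intersection with $\supp(\vecu)$ is bounded by a standard counting: completing each edge with $k-s_i$ vertices from $[n]\setminus\supp(\vecu)$, and relating to the total matching count $n!/((k!)^m m!(n-km)!)$, one extracts a factor $\alpha^{\kappa}$ from the ``choose which $\kappa$ edges are designated as odd touching'' step and a factor $((n-u)/n)^{k\kappa - o} \leq 1$ from the free completions. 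After absorbing polynomial corrections into constants and using Stirling, the odd contribution simplifies to at most $\alpha^{o/k}\cdot(n/\kappa)^\kappa\cdot(u/n)^o\cdot C''^u$. Multiplying the even and odd bounds yields the claimed product, and summing over the (at most $\eta/2+1$) values of $\rho$ contributes only a polynomial factor.

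The main obstacle lies in justifying the decoupling: the ``residual'' matching formed after removing both the $\rho$ even touching and the $\kappa$ odd touching edges must itself be a uniformly random matching on the untouched vertices, and the two conditional events (on even and odd edges) must compose correctly. The cleanest route is to follow the counting structure in the proof of \cref{lem:matching prob folded}: enumerate valid matchings directly via the chain ``choose $R$; choose the $o$ odd-touched vertices; choose even-edge completions from $[n]\setminus\supp(\vecu)$; choose odd-edge completions from $[n]\setminus\supp(\vecu)$; complete with a random matching on the rest'' and divide by the total matching count. Telescoping Stirling estimates then produce the target bound, with the $\alpha^{\kappa+\rho}$ prefactor combining into $\alpha^{(o+\eta)/k}$ through the fact that $\kappa+\rho \geq (o+\eta)/k$. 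All universal constants, binomial slack, and the $O(\eta)$ loss from summing over $\rho$ absorb into $C^u$ using the free inequality $o + \eta \leq u$, so they are dominated by choosing $C$ sufficiently large in terms of $k$. I expect this to be a mechanical, if tedious, chain of inequalities following the template of \cite[Lemma~6.8]{CGSV21-conference}.
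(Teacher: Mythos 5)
Your decoupling heuristic does not rigorously yield the joint bound, and this is the crux of the matter. Union-bounding over the $\binom{u}{\eta}$ choices of $R$ and invoking \cref{lem:matching prob folded} does upper bound $\Pr_M[|E_\vecu(M)|=\eta]$, and a parallel argument bounds a marginal involving $\kappa,o$, but multiplying those two marginal bounds does not bound $\Pr_M[|K_\vecu(M)|=\kappa, |E_\vecu(M)|=\eta, |O_\vecu(M)|=o]$ --- the even-edge structure and the odd-edge structure live on the \emph{same} random matching and are not independent. You flag this yourself, and the informal ``extract $\alpha^\kappa$ from choosing which edges are odd, then complete freely'' step for the odd part is where a proof attempt would actually stall: $\alpha^\kappa$ arrives as $\binom{m}{\kappa}$-type slack in a global count-and-divide, not as a conditional probability factor that can be multiplied onto the even-part bound.

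Your fallback plan (enumerate valid matchings for a fixed $\vecu$ and divide by the total matching count) would work, but it is not what the paper does, and the paper's route is genuinely simpler. The paper first observes that by symmetry one may \emph{fix} the hypermatching $M$ and draw $\vecu$ uniformly over vectors of support size $u$; then the quantity to bound becomes a count of vectors $\vecu$ with the prescribed intersection structure. The even/odd-sum constraints are then used only through their structural consequences --- each odd edge contributes at least one point to $O$ (so $\kappa\leq o$), each even touching edge contributes at least two points to $E$ (so $\eta/k \leq d \leq \eta/2$) --- and the counting of $\vecu$'s respecting this support structure (while deliberately over-counting by ignoring the actual $\Z_q$-sums) is a single chain of binomials: choose $K$, then $D$, then $O\subseteq$ the $K$-edges, then $E\subseteq$ the $D$-edges, then the rest of $\supp(\vecu)$, then values. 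No decoupling lemma is needed, and $p(h,k,m,n)$ is not invoked at all. Your plan, enumerating matchings for a fixed $\vecu$, would have to track how the even/odd designation interacts with the $\Z_q$-values of $\vecu$ on each edge, which is bulkier. In short: the right-hand-side shape and the reductions $\kappa \geq o/k$, $\rho\geq\eta/k$ in your sketch are correct, but the symmetry swap is the missing ingredient that makes the bookkeeping close.
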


\begin{proof} We prove the lemma for $C = 2qe^3k$.
We start by establishing some (significant amount of) notation for the proof. The proof consists of two steps: (i) upper bounding $p_{q,\alpha}(\cdots)$ by $\sum_{d=\eta/\kappa}^{\eta/2}N_q(u,d,o,\eta,\kappa)/\binom{n}{u}$ where $N_q(\cdots)$ is a certain well-defined combinatorial quantity and (ii) upper bounding $N_q(\cdots)$.

\paragraph{Step (i) of the proof for~\cref{lem:q-bound}.}
For $\vecu = (\vecu_1,\ldots,\vecu_n) \in \Z_q^n$, let $\supp(\vecu)\subseteq[n]$ denote the subset of non-zero coordinates of $\vecu$. Further, for $i\in\Z_q$, let $\supp_i(\vecu)$ denote the subset $\{j \in [n]\mid\ \vecu_j = i\}$. 
Now given non-negative integers $u_1,\ldots,u_{q-1}$ and $u = u_1 + \cdots +u_{q-1}$, let $S_{u_1,\ldots,u_{q-1}} = \{\vecu \in \Z_q^n \mid |\supp_i(\vecu)| = u_i \, \forall i \in [q-1]\}$ and let $S_u = \{\vecu\in \Z_q^n \mid |\supp(\vecu)| = u \}$. 

Given a vector $\vecu \in \Z_q^n$ and hypermatching $M$ containing $m = \alpha n$ hyperedges $e_1,\ldots,e_m$ where each $e_i$ is viewed as a subset of $[n]$ of size $k$, we define four associated sets below. Let: 
\begin{itemize}
		\item $K = \{i \in [m] \mid \langle \vecu, \vece_i \rangle \not\equiv 0 \pmod{q}\}$,
		\item $O = \{j\in\supp(\vecu) \mid \exists i\in K, j\in e_i\}$,  
		\item $E = \{j\in\supp(\vecu) \mid \exists i\in [m]\setminus K, j\in e_i \}$, and
		\item $D = \{i \in [m]: E \cap e_i \neq \emptyset\}.$
\end{itemize}	

Note that $p_{q,\alpha}(\cdots)$ bounds the maximum over $\vecu$ with $|\supp(\vecu)|=u$ of the probability, over a random hypermatching $M$, that $|K|=\kappa$, $|O|=o$ and $|E| = \eta$. By symmetry however we can fix the matching $M$ and consider the maximum, over $u_1,\ldots,u_{q-1}$ s.t. $u_1 + \cdots +u_{q-1} = u$, of the probability that $|K|=\kappa$, $|O|=o$ and $|E| = \eta$, when $\vecu$ is chosen uniformly from $S_{u_1,\ldots,u_{q-1}}$. In notation, we have
\[
	p_{q,\alpha}(n,u,o,\eta,\kappa)  = \max_{\{u_1,\ldots,u_{q-1} \mid u_1 + \cdots + u_{q-1} = u\}}\left\{\Pr_{\vecu\in S_{u_1,\ldots,u_{q-1}}} [\mathcal{E}(\vecu, o, \eta, \kappa)]\right\} 
\]
where $\mathcal{E}(\vecu, o, \eta, \kappa)$ is the event that $|K| = \kappa$, $|O| = o$, and $|E| = \eta$.
Now let $\cE_d(\vecu, o, \eta, \kappa)$ denote the event that $|K| = \kappa$, $|O|=0$, $|E|=\eta$, and $|D| = d$. Note that each hyperedge in $D$ contributes at least two elements to $E$ (since $\langle \vecu, \vece_i \rangle \equiv 0 \pmod{q}$ for $i\in D$). Hence, $d\leq\eta/2$. Moreover, as each edge in $D$ can contribute at most $k$ elements to $E$, we also have $d \geq \eta/k$. Thus we get:
\begin{align*}
		p_{q,\alpha}(n,u,o,\eta,\kappa) & = \max_{\{u_1,\ldots,u_{q-1} \mid u_1 + \cdots + u_{q-1} = u\}}\left\{\sum_{d=\eta/k}^{\eta/2} \Pr_{\vecu\in S_{u_1,\ldots,u_{q-1}}} [\mathcal{E}_d(\vecu,o,\eta,\kappa)]\right\}\\
		& \leq  \sum_{d=\eta/k}^{\eta/2} \left( \max_{\{u_1,\ldots,u_{q-1} \mid u_1 + \cdots + u_{q-1} = u\}}\left\{ \Pr_{\vecu\in S_{u_1,\ldots,u_{q-1}}} [\mathcal{E}_d(\vecu,o,\eta,\kappa)]\right\} \right)\,. \\
\end{align*}

Define $T^+((u_1,\ldots,u_{q-1}),d,o,\eta,\kappa)$ to be the set $\{\vecu_0 \in S_{u_1,\ldots,u_{q-1}} \mid |K| = \kappa, |O|=o, |E| = \eta, |D| = d\}$,
let $T_q(u,d,o,\eta,\kappa) = \cup_{\{u_1,\ldots,u_{q-1} \mid u_1 + \cdots + u_{q-1} = u\}}T^+((u_1,\ldots,u_{q-1}),d,o,\eta,\kappa)$. Intuitively, $T_q$ is the set that contains all the possible $\vecu_0$ in the event $\mathcal{E}_d$ while $T^+$ forms a partition for $T_q$. For every $u_1,\ldots,u_{q-1}$ we have 
\[
\Pr_{\vecu\in S_{u_1,\ldots,u_{q-1}}} [\mathcal{E}_d(\vecu,o,\eta,\kappa)] = \sum_{\vecu_0 \in T^+((u_1,\ldots,u_{q-1}),d,o,\eta,\kappa)} \Pr_{\vecu\in S_{u_1,\ldots,u_{q-1}}} [\vecu = \vecu_0].
\]
The final probability above $\Pr_{\vecu\in S_{u_1,\ldots,u_{q-1}}} [\vecu = \vecu_0]$ is upper bounded by $1/\binom{n}u$. ($\vecu$ is chosen by picking disjoint sets $U_1,\ldots,U_{q-1}$ uniformly subject to $|U_i|=u_i$. The event $\vecu = \vecu_0$ holds iff $U_i = \supp_i(\vecu_0)$ which in turn happens only if 
$\cup_i U_i = \supp(\vecu_0)$ which in turn happens with probability $1/\binom{n}u$.) Finally let $N_q(u,d,o,\eta,\kappa)=|T_q(u,d,o,\eta,\kappa)|$.
We thus have 
\begin{align}
		p_{q,\alpha}(n,u,o,\eta,\kappa) 
		& \leq  \sum_{d=\eta/k}^{\eta/2} \left( \max_{\{u_1,\ldots,u_{q-1} \mid u_1 + \cdots + u_{q-1} = u\}}\left\{ \Pr_{\vecu\in S_{u_1,\ldots,u_{q-1}}} [\mathcal{E}_d(\vecu,o,\eta,\kappa)]\right\} \right)\nonumber\\
		& \leq  \sum_{d=\eta/k}^{\eta/2} \left( \max_{\{u_1,\ldots,u_{q-1} \mid u_1 + \cdots + u_{q-1} = u\}}\left\{ \sum_{\vecu_0 \in T^+((u_1,\ldots,u_{q-1}),d,o,\eta,\kappa)} \Pr_{\vecu\in S_{u_1,\ldots,u_{q-1}}} [\vecu = \vecu_0].\right\} \right)\nonumber\\
		& \leq  \sum_{d=\eta/k}^{\eta/2} \left( \max_{\{u_1,\ldots,u_{q-1} \mid u_1 + \cdots + u_{q-1} = u\}}\left\{ |T^+((u_1,\ldots,u_{q-1}),d,o,\eta,\kappa)| \cdot \frac1{\binom{n} u} \right\} \right)\nonumber\\
		& \leq  \sum_{d=\eta/k}^{\eta/2} \left( \max_{\{u_1,\ldots,u_{q-1} \mid u_1 + \cdots + u_{q-1} = u\}}\left\{ N_q(u,d,o,\eta,\kappa) \cdot \frac1{\binom{n} u} \right\} \right)\nonumber\\
		& =  \sum_{d=\eta/k}^{\eta/2} \frac{N_q(u,d,o,\eta,\kappa)}{ \binom{n} u } \, . \label{eq:qcountingbound}
\end{align}
Thus to upper bound $p_{q,\alpha}(\cdots)$ it suffices to upper bound $N_q(\cdots)$. 

\paragraph{Step (ii) of the proof for~\cref{lem:q-bound}.}
A vector $\vecu \in T_q(u,d,o,\eta,\kappa)$ can be specified by specifying the sets $O$, $E$, $\supp(\vecu)- (O \cup E)$, and then by specifying $\vecu|_{\supp(\vecu)}$ i.e., the restriction of $\vecu$ to $\supp(\vecu)$. There are $(q-1)^u$ choices of $\vecu|_{\supp(\vecu)}$. So we turn to counting the number of possible $O$'s and $E$'s. $O$ may be specified by first specifying $K$ and then selecting $O$ from $\cup_{i \in K} e_i$. (There are further restrictions on the choices of $O$ which we will ignore to get an upper bound.) There are $\binom{m}{\kappa}$ choices of $K$ and at most $\binom{k\kappa}{o}$ choices of $O$ given $K$. Similarly for $E$ we have at most $\binom{m}{d}$ choices of $D$ and then at most $\binom{kd}{\eta}$ choices of $E$ given $D$. 
Finally, there are at most $\binom{n-km}{u-o-\eta}$ choices of $\supp(\vecu)\setminus (O\cup E)$, since they must be a set of $u-o-\eta$ vertices outside the $m$ edges of  our hypermatching. Putting all this together we get the following upper bound on $N_q(\cdots)$:
	\[
	N_q(u,d,o,\eta,\kappa) \leq \binom{\alpha n}{\kappa} \binom{\alpha n}{d} \binom{k\kappa}{o} \binom{kd}{\eta} \binom{n(1-\alpha k)}{u-o-\eta}(q-1)^u.
	\]
	Using the bounds $\left(\frac{a}{b}\right)^b \leq \binom{a}{b} \leq \left(\frac{ea}{b}\right)^b$, we have that
	\begin{align*}
		\frac{N_q(u,d,o,\eta,\kappa)}{\binom{n}{u}} 
		&\leq \left( \frac{e\alpha n}{\kappa}\right)^\kappa \left(\frac{e\alpha n}{d}\right)^d \left(\frac{ek\kappa}{o}\right)^o \left(\frac{ekd}{\eta}\right)^\eta \frac{(n(1-\alpha k))^{u-o-\eta}}{(u-o-\eta)!}(q-1)^u\cdot\left(\frac{u}{n}\right)^u\\
		&\leq n^{\kappa + d - \eta - o} \kappa^{-\kappa} d^{-d} \frac{u^u}{(u-o-\eta)!} \left(\alpha^{\kappa + d} e^{\kappa + d + o + \eta} k^{o+\eta} (q-1)^u  \left(\frac{\kappa}{o}\right)^o \left(\frac{d}{\eta}\right)^\eta (1-k\alpha)^{u-o-\eta} \right) \, .
	\end{align*}
	Recall from~\cref{lem:mcssum} and step (i) of the proof that $\kappa \leq o$, $2d \leq \eta$, and $o+\eta \leq u$. Hence, we have that $0\leq\kappa/o,d/\eta\leq1$ and $e^{\kappa+d+o+\eta} k^{o+\eta} (q-1)^u \leq e^{2u} \cdot ((q-1)k)^u \leq (q e^2 k)^u$. Moreover,
	\begin{align*}
		\frac{u^{u-o-\eta}}{(u-o-\eta)!} \leq e^u.
	\end{align*}
	Therefore, letting $C_k = q e^3 k$, we have 
	\begin{align*}
		\frac{N_q(u,d,o,\eta,\kappa)}{\binom{n}{u}} &\leq C_k^u \alpha^{\kappa+d} \cdot n^{\kappa + d - \eta - o} \kappa^{-\kappa} d^{-d} u^{o+\eta} = \left(\frac{\alpha n}{d}\right)^d \cdot C_k^u \alpha^\kappa \cdot n^{\kappa - \eta - o} \kappa^{-\kappa} u^{o+\eta} \, .
	\end{align*}
	Hence, by \eqref{eq:qcountingbound}, we have that for $C=2C_k = 2qe^3k$,
	\begin{align*}
		p_{q,\alpha}(n,u,o,\eta,\kappa) &\leq C_k^u \alpha^\kappa \cdot n^{\kappa - \eta - o} \kappa^{-\kappa} u^{o+\eta} \sum_{d=\eta/k}^{\eta/2} \left(\frac{\alpha n}{d}\right)^d \\
		&\leq C_k^u \alpha^{\kappa + \frac{\eta}{k}} \cdot n^{\kappa - \eta - o} \kappa^{-\kappa} u^{o+\eta} \sum_{d=\eta/k}^{\eta/2} \left(\frac{n}{d}\right)^d \\
		&\leq C_k^u \alpha^{\frac{\eta+o}{k}} \cdot n^{\kappa - \eta - o} \kappa^{-\kappa} u^{o+\eta} \cdot \frac{\eta}{2} \left(\frac{2n}{\eta}\right)^{\eta/2} \mbox{~~~(Using $o \leq k \cdot \kappa$ from \cref{lem:mcssum})}\\
		&\leq \alpha^{\frac{\eta+o}{k}} \cdot C^u \cdot (n/\kappa)^\kappa \cdot (u/\sqrt{n\eta})^\eta \cdot (u/n)^o,
	\end{align*}
	where the second-to-last inequality follows from the fact that $n/d \geq e$ and $x^{1/x}$ is a decreasing function of $x$ on $x \in (e,\infty)$. This completes the proof of~\autoref{lem:q-bound}. 
\end{proof}

Finally, we prove an additional inequality about the boundedness parameters. This will simplify the final proof of~\autoref{lem:induction step exp}.

\begin{restatable}{lemma}{qsimplify}\label{lem:q-simplify}
		For every $q,k\in\N$, there exists $\alpha_0\in(0,1/k)$ so that the following holds. For every  $C_1,C_2>0$ there exists $\epsilon_0 > 0$ and $C_3>0$ such that for every $\alpha\in (0,\alpha_0)$, $\epsilon\in(0,\epsilon_0)$ and $s,n,u,h,\eta,o,\kappa\in\N$ with $s =  \epsilon n \leq \epsilon_0 n$ and $h\in[s]$ and $u \in [n]$,  we have 
		\[
		U_{C_1,s}(u) \cdot p_{q,\alpha}(n,u,o,\eta,\kappa) \cdot h \cdot q^{k\kappa} \cdot U_{C_2,s}(h+\eta+o-(u+\kappa)) \leq 4^{-u-2} W_{C_3,s}(h),
		\]
        for every $0 \leq o,\eta,\kappa \leq n$ satisfying (1) $u \geq \eta+o$, (2) $\kappa\leq o \leq k \cdot \kappa$ and (3) $h+\eta+o - (u+\kappa) \geq 0$.
\end{restatable}

\begin{proof}
Given $q$ and $k$, 
let $C$ be the constant from \autoref{lem:q-bound}. 
Let $C_0 = \sqrt{2}\cdot 4e \cdot q\cdot C \cdot q^k$, and
let  $\alpha_0 = 1/(e^2C_0)^k$. 
Now given $C_1,C_2$, let $C_4 = 4\cdot \sqrt{C_1}\cdot C \cdot q^k$, $C_5 = \max\{1,2\sqrt{C_2},2C_2\}$ and 
$C_6 = e C_5$ 
(where $e$ is the base of the natural logarithm). 
Now let $\epsilon_0 = \min\left\{\frac{1}{(e^2C_4)^4},\frac{1}{(2 e^2C_4)^4},(e^{-2}/C_4)^{16}\right\}$ 
and $C_3 = \max\{(16 e^3 C_4 C_6)^2,(16C_6/\epsilon_0^{1/4})^2,(32C_6/\alpha_0^{1/k})^2,256C_6^2,(16 C_6/\alpha_0^{1/k})^2\}$.
We prove the lemma for this choice of $\alpha_0$, $\epsilon_0$ and $C_3$. 
Note in particular that this choice of $\alpha_0$ depends only on $q$ and $k$ but not on $C_1$ and $C_2$ (as required). 

Let $h'=h+\eta+o-(u+\kappa)$. By the conditions (1) and (3) in the lemma statement we have $0 \leq h' \leq h$. 

We divide the analysis into five cases depending on the choice of $u$. (The cases differ first because $U_{C_1,s}(u)$ differs in behavior depending on whether $u \leq s$ or not. Further differences arise in the analysis depending on the relationship between $u$ and $h$, as also how close $u$ is to $s$.)
The five cases are: (1) $1 \leq u \leq h$, (2) $h < u \leq s$, (3) $s < u \leq 16s$, (4) $16s < u \leq \sqrt{\epsilon}n$, and (5) $\sqrt{\epsilon}n < u \leq n$. 

\paragraph{Case 1: $1 \leq u \leq h$:} 

Expanding the definition of $U_{C_1,s}(u)$, $U_{C_2,s}(h')$, $W_{C_3,s}(h)$ and invoking the upper bound on $p_{q,\alpha}(n,u,o,\eta,\kappa)$ from \cref{lem:q-bound}, we have that it suffices to prove that:

\begin{align}
		& \left(C_1^{u/2} ((s n)/u^2)^{u/4}\right) \left(\alpha^{(o+\eta)/k} C^u (n/\kappa)^\kappa (u/\sqrt{n\eta})^\eta (u/n)^o\right) \left(C_2^{h'/2} (s n/h'^2)^{h'/4}\right) \cdot h \cdot q^{k\kappa} \nonumber\\ 
		& ~~~~~~~~~~ \leq 4^{-u-2}\cdot C_3^{h/2} \cdot (s n/h^2)^{h/4} = 4^{-u-2}\cdot U_{C_3,s}(h)\, , \label{eq:case1}
\end{align}
We multiply the LHS above by $4^{u+2} (h^2/sn)^{h/4}$ and show it is upper bounded by $C_3^{h/2}$:

\begin{align*}
L_1
& :=  16 (16C_1)^{u/2} (sn/u^2)^{u/4} \cdot \alpha^{(o+\eta)/k} C^u (n/\kappa)^\kappa (u/\sqrt{n\eta})^\eta (u/n)^o \cdot C_2^{h'/2} (sn/h'^2)^{h'/4}\cdot h \cdot q^{k\kappa} \cdot (h^2/(sn))^{h/4} \\
		& \leq  16 C_4^u C_5^{h} \cdot (sn/u^2)^{u/4} \cdot (n/\kappa)^\kappa \cdot (u/\sqrt{n\eta})^\eta \cdot (u/n)^o  \cdot  (sn/h'^2)^{h'/4}\cdot (h^2/(sn))^{h/4}\\
		&  \mbox{~~~~~~~~~~~~~~~(Using $\alpha \leq \alpha_0 \leq 1$, $h'\leq h$, $h \leq 2^h$, $\kappa \leq o \le u$, $C_4 \geq 4\sqrt{C_1}\cdot C \cdot q^k$, $C_5 \geq 2\sqrt{C_2} $)}\\
		& \leq  16 C_4^u C_6^{h}  \cdot (sn/u^2)^{u/4} \cdot (n/\kappa)^\kappa \cdot (u/\sqrt{n\eta})^\eta \cdot (u/n)^o  \cdot  (sn/h^2)^{-(h-h')/4}\\
		&  \mbox{ ~~~~~~~~~~~~~~~(Using $(h/h')^{h'/2} \leq e^h$ and $C_6 \geq eC_5$ )}\\
		& =  16 C_4^u C_6^{h}  \cdot (sn/u^2)^{u/4} \cdot (n/\kappa)^\kappa \cdot (u/\sqrt{n\eta})^\eta \cdot (u/n)^o  \cdot  (sn/h^2)^{-(u+\kappa-(\eta+o))/4}\\
		& =  16 C_4^u C_6^{h}  \cdot (h^2/u^2)^{u/4} \cdot (h^2n^3/(s\kappa^4))^{\kappa/4} \cdot (su^4/(n\eta^2h^2))^{\eta/4} \cdot (su^4/n^3h^2)^{o/4}\\
		& = 16 C_4^u C_6^{h}  \cdot (h^2/u^2)^{u/4} \cdot (h^2n^2/(\epsilon\kappa^4))^{\kappa/4} \cdot (\epsilon u^4/(\eta^2h^2))^{\eta/4} \cdot (\epsilon u^4/(n^2h^2))^{o/4}\\
		& =:  S_1.
\end{align*}

Thus far we have not used $u \leq h$. (We have only used $u \leq s$ and this was to establish our goal as \cref{eq:case1}.) We now use $u \leq h$ to analyze $S_1$. 
\begin{eqnarray*}
S_1 & = & 16 C_4^u C_6^{h}  \cdot (h^2/u^2)^{u/4} \cdot (h^2n^2/(\epsilon\kappa^4))^{\kappa/4} \cdot (\epsilon u^4/(\eta^2h^2))^{\eta/4} \cdot (\epsilon u^4/(n^2h^2))^{o/4}\\
& = & 16 C_4^u C_6^{h}  \cdot \epsilon^{(-\kappa+\eta+o)/4} \cdot (h^2/u^2)^{u/4} \cdot (h^2n^2/\kappa^4)^{\kappa/4} \cdot (u^4/(\eta^2h^2))^{\eta/4} \cdot (u^4/(n^2h^2))^{o/4}\\
& & \mbox{ ~~~~~~~~~~~~~~~(Collecting $\epsilon$ terms)}\\
& \leq  &  16 C_4^u C_6^{h}  \cdot (h^2/u^2)^{u/4} \cdot (h^2n^2/(\kappa^4))^{\kappa/4} \cdot (u^4/(\eta^2h^2))^{\eta/4} \cdot (u^4/(n^2h^2))^{o/4}  \\
& & \mbox{ ~~~~~~~~~~~~~~~(Using $\eta\geq 0$, $o \geq \kappa$ and $\epsilon \leq 1$)}\\
& \leq & 16 (e^2 C_4)^u C_6^{h}  \cdot (h/u)^{u/2} \cdot (h n/(u^2))^{\kappa/2} \cdot (u/h)^{\eta/2} \cdot (u^2/(nh))^{o/2}\\
& & \mbox{ ~~~~~~~~~~~~~~~(Using $(u/\kappa)^\kappa \leq e^u$ and $(u/\eta)^\eta \leq e^u$)}\\
& = & 16 (e^2 C_4)^u C_6^{h}  \cdot (h/u)^{u/2} \cdot (u/h)^{\eta/2} \cdot (u^2/(nh))^{(o-\kappa)/2}\\
& \leq & 16 (e^2 C_4)^u C_6^{h} \cdot (h/u)^{u/2} \cdot (u/h)^{\eta/2} \cdot (u/h)^{(o-\kappa)/2}\\
& & \mbox{ ~~~~~~~~~~~~~~~(Using $u \leq n$ and $o \geq \kappa$)}\\
& = & 16 (e^2 C_4)^u C_6^{h}  \cdot (h/u)^{(u-\eta-o+\kappa)/2} \\
& \leq & 16 (e^2 C_4 C_6)^{h}  \cdot (h/u)^{(u-\eta-o+\kappa)/2} \mbox{~~~(Using $u \leq h$ and $e^2 C_4 \geq C_4 \geq 1$)}\\
& \leq & 16 (e^2 C_4 C_6)^{h} \cdot (h/u)^{u/2} \\
& & \mbox{ ~~~~~~~~~~~~~~~(Using $h \geq u$ and $u\geq u-o+\kappa - \eta$ since $o \geq \kappa$ and $\eta\geq 0$)}\\
& \leq & 16 (e^2 C_4 C_6)^{h} \cdot e^h \\
& & \mbox{ ~~~~~~~~~~~~~~~(Using $(h/u)^{u/2} \leq e^{h/2} \leq e^{h}$.)}\\
& \leq & 16 (\frac1{256} C_3)^{h/2} \mbox{ ~~~(Using $C_3 \geq (16 e^3 C_4 C_6)^2$)} \\
& \leq & C_3^{h/2} \mbox{ ~~~(Using $h\geq 1$).}
\end{eqnarray*}
This yields \cref{eq:case1} in the range $u \in [h]$. 

\paragraph{Case 2: $h < u \leq s$:} Here again our goal is to prove \cref{eq:case1} and we still have $L_1 \leq S_1$. We proceed as follows: 
\begin{eqnarray*}
		S_1
		& = & 16 C_4^u C_6^{h}  \cdot (h^2/u^2)^{u/4} \cdot (h^2n^2/(\epsilon\kappa^4))^{\kappa/4} \cdot (\epsilon u^4/(\eta^2h^2))^{\eta/4} \cdot (\epsilon u^4/(n^2h^2))^{o/4}\\
		&  \leq & 16 C_4^u C_6^{h} \epsilon^{(u-h)/4}  \cdot (h^2/u^2)^{u/4} \cdot (h^2n^2/(\kappa^4))^{\kappa/4} \cdot (u^4/(\eta^2h^2))^{\eta/4} \cdot (u^4/(n^2h^2))^{o/4}\\
		& & \mbox{ ~~~~~~~~~~~~~~~(Collecting $\epsilon$ terms and using $\eta+o-\kappa \geq u - h$)}\\
		& \leq & 16 C_4^u C_6^{h} \epsilon_0^{(u-h)/4}  \cdot (h^2/u^2)^{u/4} \cdot (h^2n^2/(\kappa^4))^{\kappa/4} \cdot (u^4/(\eta^2h^2))^{\eta/4} \cdot (u^4/(n^2h^2))^{o/4}\\
		& & \mbox{ ~~~~~~~~~~~~~~~(Using $\epsilon \leq \epsilon_0$ and $h \leq u$)}\\
		& \leq & 16 (e^2\epsilon_0^{1/4}C_4)^u (C_6/\epsilon_0^{1/4})^{h}  \cdot (h/u)^{u/2} \cdot (h n/(u^2))^{\kappa/2} \cdot (u/h)^{\eta/2} \cdot (u^2/(nh))^{o/2}\\
		& & \mbox{ ~~~~~~~~~~~~~~~(Using $(u/\kappa)^\kappa \leq e^u$ and $(u/\eta)^\eta \leq e^u$)}\\
		& = & 16 (e^2\epsilon_0^{1/4}C_4)^u (C_6/\epsilon_0^{1/4})^{h}  \cdot (h/u)^{u/2} \cdot (u/h)^{\eta/2} \cdot (u^2/(nh))^{(o-\kappa)/2}\\
		& \leq & 16 (e^2\epsilon_0^{1/4}C_4)^u (C_6/\epsilon_0^{1/4})^{h}  \cdot (h/u)^{u/2} \cdot (u/h)^{\eta/2} \cdot (u/h)^{(o-\kappa)/2}\\
	    & & \mbox{ ~~~~~~~~~~~~~~~(Using $u \leq n$ and $o \geq \kappa$)}\\
	    & = & 16 (e^2\epsilon_0^{1/4}C_4)^u (C_6/\epsilon_0^{1/4})^{h}  \cdot (h/u)^{(u-\eta-o+\kappa)/2} \\
	    & \leq & 16 (e^2\epsilon_0^{1/4}C_4)^u  (C_6/\epsilon_0^{1/4})^h \mbox{ ~~~~(Since $h \leq u$ and $u-\eta-o+\kappa \geq u-\eta-o \geq 0$)}\\
	    & \leq & 16 (C_6/\epsilon_0^{1/4})^{h} \mbox{~~~(Using $\eps_0 \leq \frac{1}{(e^2C_4)^4}$)}\\
	    & \leq & C_3^{h/2} \mbox{ ~~~(Using $C_3\geq (16C_6/\epsilon_0^{1/4})^2$, $h \geq 1$)}\,.
\end{eqnarray*}
This concludes \cref{eq:case1} in Case 2. 

\paragraph{Case 3: $s < u \leq 16s$:} The form for $U_{C_1,s}(u)$ now changes and forces a change in our goal. Using $U_{C_1,s}(u) \leq C_1^{u/2} (n/u)^{u/4}$ our new goal becomes: 
\begin{align}
		& \left(C_1^{u/2} (n/u)^{u/4}\right) \left(\alpha^{(o+\eta)/k} C^u (n/\kappa)^\kappa (u/\sqrt{n\eta})^\eta (u/n)^o\right) \left(C_2^{h'/2} (s n/h'^2)^{h'/4}\right) \cdot h \cdot q^{k\kappa} \nonumber\\ 
		& ~~~~~~~~~~ \leq 4^{-u-2}\cdot C_3^{h/2} \cdot (s n/h^2)^{h/4} = 4^{-u-2}\cdot U_{C_3,s}(h)\, , \label{eq:case3}
\end{align}
Again multiplying the LHS by $4^{u+2} (h^2/sn)^{h/4}$ we get the quantity $L_3$ below which we show to be upper bounded by $C_3^{h/2}$. We have:

\begin{align*}
L_3
& :=  16 (16C_1)^{u/2} (n/u)^{u/4} \cdot \alpha^{(o+\eta)/k} C^u (n/\kappa)^\kappa (u/\sqrt{n\eta})^\eta (u/n)^o \cdot C_2^{h'/2} (sn/h'^2)^{h'/4}\cdot h \cdot q^{k\kappa} \cdot (h^2/(sn))^{h/4} \\
& = \text{\small $(u/s)^{u/4} 16 (16C_1)^{u/2} (sn/u^2)^{u/4} \cdot \alpha^{(o+\eta)/k} C^u (n/\kappa)^\kappa (u/\sqrt{n\eta})^\eta (u/n)^o \cdot C_2^{h'/2} (sn/h'^2)^{h'/4}\cdot h \cdot q^{k\kappa} \cdot (h^2/(sn))^{h/4}$} \\
& = (u/s)^{u/4} L_1 \\
& \leq 16^{u/4} L_1 
\end{align*}

We now use the fact that the inequality $L_1\leq S_1$ in Case 1, did not use $u \leq s$. We thus conclude $L_3 \leq 16^{u/4} L_1 \leq 16^{u/4} S_1$. Similarly the inequality $S_1 \leq 16 (e^2\epsilon_0^{1/4}C_4)^u  (C_6/\epsilon_0^{1/4})^h$ from Case 2 also did not use $u \leq s$. So we get $L_3 \leq 16^{u/4} S_1 \leq 16^{u/4} \cdot 16 (e^2\epsilon_0^{1/4}C_4)^u  (C_6/\epsilon_0^{1/4})^h$ which we simplify below. We have:
\begin{align*}
L_3 & \leq 16^{u/4} \cdot 16 (e^2\epsilon_0^{1/4}C_4)^u  (C_6/\epsilon_0^{1/4})^h \\
& = 16 (2 e^2 \epsilon_0^{1/4}C_4)^u (C_6/\epsilon_0^{1/4})^h \\
& \leq  16 (C_6/\epsilon_0^{1/4})^{h} \mbox{~~~(Using $\eps_0 \leq \frac{1}{(2 e^2C_4)^4}$)}\\
& \leq  C_3^{h/2} \mbox{ ~~~(Using $C_3\geq (16C_6/\epsilon_0^{1/4})^2$, $h \geq 1$)}\,.
\end{align*}
This concludes Case 3.

\paragraph{Case 4: $16s < u \leq \sqrt{\epsilon}n$:} Here again it suffices to prove \cref{eq:case3} which is equivalent to proving
$L_3 \leq C_3^{h/2}$. We have 
\begin{align*}
L_3
& =  \text {\small $16 (16C_1)^{u/2} (un/u^2)^{u/4} \cdot \alpha^{(o+\eta)/k} C^u (n/\kappa)^\kappa (u/\sqrt{n\eta})^\eta (u/n)^o \cdot C_2^{h'/2} (sn/h'^2)^{h'/4}\cdot h \cdot q^{k\kappa} \cdot (h^2/(sn))^{h/4}$} \\
& =  \text {\small $16 (16C_1)^{u/2} (u/s)^{u/4} (sn/u^2)^{u/4} \cdot \alpha^{(o+\eta)/k} C^u (n/\kappa)^\kappa (u/\sqrt{n\eta})^\eta (u/n)^o \cdot C_2^{h'/2} (sn/h'^2)^{h'/4}\cdot h \cdot q^{k\kappa} \cdot (h^2/(sn))^{h/4}$} \\
& \leq  \text {\small $16 (16C_1)^{u/2} \epsilon^{-u/8} (sn/u^2)^{u/4} \cdot \alpha^{(o+\eta)/k} C^u (n/\kappa)^\kappa (u/\sqrt{n\eta})^\eta (u/n)^o \cdot C_2^{h'/2} (sn/h'^2)^{h'/4}\cdot h \cdot q^{k\kappa} \cdot (h^2/(sn))^{h/4}$} \\
&  ~~~~~~~~~~~~~~~~~ \mbox{(Using $s=\epsilon n$ and $u \leq \sqrt{\epsilon}n$ yielding $u/s \leq \sqrt{\epsilon}/\epsilon = \epsilon^{-1/2}$)}\\
		& \leq   16 (C_4/\epsilon^{1/8})^u C_5^{h} \cdot (sn/u^2)^{u/4} \cdot (n/\kappa)^\kappa \cdot (u/\sqrt{n\eta})^\eta \cdot (u/n)^o  \cdot  (sn/h'^2)^{h'/4}\cdot (h^2/(sn))^{h/4}\\
		&  \mbox{ ~~~~~~~~~~~~~~~\small (Using $\alpha \leq \alpha_0 \leq 1$, $h'\leq h$, $h \leq 2^h$, $\kappa \leq o \le u$, $C_4 \geq 4\sqrt{C_1}\cdot C \cdot q^k$, $C_5 \geq \max\{1,2C_2,2\sqrt{C_2}\} $)}\\
		& \leq  16 (C_4/\epsilon^{1/8})^u C_6^{h}  \cdot (sn/u^2)^{u/4} \cdot (n/\kappa)^\kappa \cdot (u/\sqrt{n\eta})^\eta \cdot (u/n)^o  \cdot  (sn/h^2)^{-(h-h')/4}\\
		&  \mbox{ ~~~~~~~~~~~~~~~(Using $(h/h')^{h'/2} \leq e^h)$ and $C_6 \geq eC_5$ )}\\
		& =  16 (C_4/\epsilon^{1/8})^u C_6^{h}  \cdot (sn/u^2)^{u/4} \cdot (n/\kappa)^\kappa \cdot (u/\sqrt{n\eta})^\eta \cdot (u/n)^o  \cdot  (sn/h^2)^{-(u+\kappa-(\eta+o))/4}\\
		& =  16 (C_4/\epsilon^{1/8})^u C_6^{h}  \cdot (h^2/u^2)^{u/4} \cdot (h^2n^3/(s\kappa^4))^{\kappa/4} \cdot (su^4/(n\eta^2h^2))^{\eta/4} \cdot (su^4/n^3h^2)^{o/4}\\
		& =  16 (C_4/\epsilon^{1/8})^u C_6^{h}  \cdot (h^2/u^2)^{u/4} \cdot (h^2n^2/(\epsilon\kappa^4))^{\kappa/4} \cdot (\epsilon u^4/(\eta^2h^2))^{\eta/4} \cdot (\epsilon u^4/(n^2h^2))^{o/4}\\
		& = 16 C_4^u C_6^{h} \cdot \epsilon^{-u/8-\kappa/4+\eta/4+o/4  } \cdot (h^2/u^2)^{u/4} \cdot (h^2n^2/\kappa^4)^{\kappa/4} \cdot (u^4/(\eta^2h^2))^{\eta/4} \cdot (u^4/(n^2h^2))^{o/4}\\
		&  \mbox{ ~~~~~~~~~~~~~~~(Collecting $\epsilon$ terms)}\\
		&  \leq 16 C_4^u C_6^{h} \epsilon^{u/8-h}  \cdot (h^2/u^2)^{u/4} \cdot (h^2n^2/(\kappa^4))^{\kappa/4} \cdot (u^4/(\eta^2h^2))^{\eta/4} \cdot (u^4/(n^2h^2))^{o/4}\\
		&  \mbox{ ~~~~~~~~~~~~~~~(Using $\eta+o-\kappa \geq u - h$)}\\
		& \leq 16 C_4^u C_6^h \epsilon^{u/16} \cdot (h^2/u^2)^{u/4} \cdot (h^2n^2/(\kappa^4))^{\kappa/4} \cdot (u^4/(\eta^2h^2))^{\eta/4} \cdot (u^4/(n^2h^2))^{o/4}\\
		&  \mbox{ ~~~~~~~~~~~~~~~(Using $u \geq 16s \geq 16h$ in the form $h \leq u/16$ to conclude $u/8-h \geq u/16$.)}\\
		& \leq 16 C_4^u C_6^{h} \epsilon_0^{u/16}  \cdot (h^2/u^2)^{u/4} \cdot (h^2n^2/(\kappa^4))^{\kappa/4} \cdot (u^4/(\eta^2h^2))^{\eta/4} \cdot (u^4/(n^2h^2))^{o/4}\\
		&  \mbox{ ~~~~~~~~~~~~~~~(Using $\epsilon \leq \epsilon_0$)}\\
		& \leq 16 (e^2\epsilon_0^{1/16}C_4)^u C_6^{h}  \cdot (h/u)^{u/2} \cdot (h n/(u^2))^{\kappa/2} \cdot (u/h)^{\eta/2} \cdot (u^2/(nh))^{o/2}\\
		& \mbox{ ~~~~~~~~~~~~~~~(Using $(u/\kappa)^\kappa \leq e^u$ and $(u/\eta)^\eta \leq e^u$)}\\
        & = 16 (e^2\epsilon_0^{1/16}C_4)^u C_6^{h}  \cdot (h/u)^{(u+\kappa-\eta-o)/2} \cdot (u/n)^{(o-\kappa)/2}\\
        & \leq 16(e^2\epsilon_0^{1/16}C_4)^u C_6^{h}  \cdot (h/u)^{(u+\kappa-\eta-o)/2} \mbox{~~~~~~~~ (Using $u \leq n$)}\\
        &\leq 16 (e^2\epsilon_0^{1/16}C_4)^u C_6^{h} \mbox{~~~~~~(Using $h \leq u$ and $u+\kappa-\eta - o \geq \kappa \geq 0$)} \\
        & \leq 16 C_6^h \mbox{~~~~~~~~ (Using $\epsilon_0\leq (e^{-2}/C_4)^{16}$)}\\
        & \leq 16 (C_3^{1/2}/16)^h \mbox{~~~~~~~~ (Using $C_3 \geq 256 C_6^2$)}\\
        & \leq C_3^{h/2} \mbox{~~~~~~~~ (Using $h\geq 1$).}\\
\end{align*}
This establishes \cref{eq:case3} in Case 4. 

\paragraph{Case 5: $\sqrt{\epsilon}n < u \leq n$:} 
Here we use $U_{C_1,s}(u) \leq (2q^2e^2n/u)^{u/2}$. 
With this modification we need to prove:
\begin{align}
		& \left((2q^2e^2)^{u/2} (n/u)^{u/2}\right) \left(\alpha^{(o+\eta)/k} C^u (n/\kappa)^\kappa (u/\sqrt{n\eta})^\eta (u/n)^o\right) \left(C_2^{h'/2} (s n/h'^2)^{h'/4}\right) \cdot h \cdot q^{k\kappa} \nonumber\\ 
		& ~~~~~~~~~~ \leq 4^{-u-2}\cdot C_3^{h/2} \cdot (s n/h^2)^{h/4} = 4^{-u-2}\cdot U_{C_3,s}(h)\, , \label{eq:case5}
\end{align}
Multiplying the LHS by $4^{u+2} (h^2/sn)^{h/4}$ we get the term $L_5$ defined below which we wish to upper bound by $C_3^{h/2}$. 
\begin{align*}
	L_5 & := \text{\small $16 \cdot 4^u (2q^2 e^2)^{u/2} (n/u)^{u/2} \alpha^{(o+\eta)/k} C^u (n/\kappa)^\kappa (u/\sqrt{n\eta})^\eta (u/n)^o \cdot C_2^{h'/2} (sn/h'^2)^{h'/4} \cdot h \cdot q^{k\kappa} \cdot (h^2/(sn))^{h/4}$} \\
		& \leq  16 C_0^u C_5^{h} \alpha^{(o+\eta)/k} \cdot (n/u)^{u/2} \cdot (n/\kappa)^\kappa \cdot (u/\sqrt{n\eta})^\eta \cdot (u/n)^o  \cdot  (sn/h'^2)^{h'/4}\cdot (h^2/(sn))^{h/4}\\
		&  \mbox{ ~~~~~~~~~~~~~~~(Using $h'\leq h \leq 2^h$, $\kappa \leq u$, $C_0 \geq \sqrt{2}\cdot 4e \cdot q\cdot C \cdot q^k$, $C_5 \geq 2\sqrt{C_2} $)}\\
		& \leq 16  C_0^u C_6^{h} \alpha^{(o+\eta)/k} \cdot (n/u)^{u/2} \cdot (n/\kappa)^\kappa \cdot (u/\sqrt{n\eta})^\eta \cdot (u/n)^o  \cdot  (sn/h^2)^{-(h-h')/4}\\
		&  \mbox{ ~~~~~~~~~~~~~~~(Using $(h/h')^{h'/2} \leq e^h$ and $C_6 \geq eC_5$ )}\\
		& =  16 C_0^u C_6^{h} \alpha^{(o+\eta)/k} \cdot (n/u)^{u/2} \cdot (n/\kappa)^\kappa \cdot (u/\sqrt{n\eta})^\eta \cdot (u/n)^o  \cdot  (sn/h^2)^{-(u+\kappa-(\eta+o))/4}\\
		& = 16  C_0^u C_6^{h} \alpha^{(o+\eta)/k} \cdot (nh^2/(su^2))^{u/4} \cdot (h^2n^3/(s\kappa^4))^{\kappa/4} \cdot (su^4/(n\eta^2h^2))^{\eta/4} \cdot (su^4/n^3h^2)^{o/4}\\
		& = 16  C_0^u C_6^{h} \alpha^{(o+\eta)/k} \cdot (h^2/(\epsilon u^2))^{u/4} \cdot (h^2n^2/(\epsilon\kappa^4))^{\kappa/4} \cdot (\epsilon u^4/(\eta^2h^2))^{\eta/4} \cdot (\epsilon u^4/(n^2h^2))^{o/4}\\
		&= 16 C_0^u C_6^{h} \alpha^{(o+\eta)/k} \cdot \epsilon^{(-u-\kappa+\eta+o)/4} \cdot (h^2/( u^2))^{u/4} \cdot (h^2n^2/(\kappa^4))^{\kappa/4} ( u^4/(\eta^2h^2))^{\eta/4}  (u^4/(n^2h^2))^{o/4} \\
		& \leq 16 (e^2C_0)^u C_6^{h} \alpha^{(o+\eta)/k} \cdot \epsilon^{(-u-\kappa+\eta+o)/4} \cdot (h/u)^{u/2} \cdot (h n/(u^2))^{\kappa/2} \cdot (u/h)^{\eta/2} \cdot (u^2/(nh))^{o/2}\\
		& \mbox{ ~~~~~~~~~~~~~~~(Using $(u/\kappa)^\kappa \leq e^u$ and $(u/\eta)^\eta \leq e^u$)}\\
		& = 16 (e^2C_0)^u C_6^{h} \alpha^{(o+\eta)/k} \cdot \epsilon^{(-u-\kappa+\eta+o)/4} \cdot (h/u)^{u/2} \cdot (u/h)^{\eta/2} \cdot (u^2/(nh))^{(o-\kappa)/2}\\
		& \leq 16 (e^2C_0)^u C_6^{h} \alpha^{(o+\eta)/k} \cdot \epsilon^{(-u-\kappa+\eta+o)/4} \cdot (h/u)^{u/2} \cdot (u/h)^{\eta/2} (u/h)^{(o-\kappa)/2}\\
		& \mbox{ ~~~~~~~~~~~~~~~(Using $u \leq n$ and $o \geq \kappa$)}\\
		& \leq 16 (e^2C_0)^u C_6^{h} \alpha^{(u-h)/k} \cdot \epsilon^{(-u-\kappa+\eta+o)/4} \cdot (h/u)^{(u-\eta-o+\kappa)/2}\\
		& \mbox{ ~~~~~~~~~~~~~~~(Using $\alpha \leq 1$ and $u-h \leq \eta + o$)}\\
		& \leq 16 (e^2C_0)^u C_6^{h} \alpha_0^{(u-h)/k} \cdot \epsilon^{(-u-\kappa+\eta+o)/4} \cdot (h/u)^{(u-\eta-o+\kappa)/2}\\
		& \mbox{ ~~~~~~~~~~~~~~~(Using $\alpha \leq \alpha_0$ and $u-h \geq 0$)}\\
		& = 16 (\alpha_0^{1/k}e^2C_0)^u (C_6/\alpha_0^{1/k})^{h} (h^2/(\epsilon u^2))^{(u-\eta-o+\kappa)/4}\\
		& \leq 16 (\alpha_0^{1/k}e^2C_0)^u (C_6/\alpha_0^{1/k})^{h}\\
		& \mbox{~~~~~(Using $h \leq s = \epsilon n$ and $u \geq \sqrt{\epsilon}n$ to conclude $h^2/(\epsilon u^2)\leq 1$. Also using $u-\eta-o+\kappa \geq 0$)}\\
		& \leq 16 (C_6/\alpha_0^{1/k})^{h}  ~~~\mbox{(Using $\alpha_0 \leq 1/(e^2C_0)^k$)} \\
		& \leq 16 (C_3^{1/2}/16)^h~~~\mbox{(Using $C_3 \geq (16 C_6/\alpha_0^{1/k})^2$)} \\
		& \leq C_3^{h/2} ~~~\mbox{(Using $h \geq 1$).}\,
\end{align*}
This concludes the analysis of Case 5 and proves the lemma.
\end{proof}

	\paragraph{Step 3: Proof of~\cref{lem:induction step exp}.}

We are now ready to combine the ingredients from the previous steps to prove \cref{lem:induction step exp}. 
\begin{proof}[Proof of~\autoref{lem:induction step exp}]

	Let $\alpha_0$ be the as given by \cref{lem:q-simplify}. Let
	$\epsilon_0$ and $C_3$ be the parameters given by \cref{lem:q-simplify} for $C_1 = C$ and $C_2 = C_0$. 
	We prove the lemma for $C'' = C_3$ and $\tau_0 = \epsilon_0$.
	
	Let $\alpha \leq \alpha_0$ and $m=\alpha n$. For every $s \leq \epsilon_0 n$, we prove that the LHS in the lemma statement is upper bounded by $W_{C_3,s}(h)$ for every $h \in [s]$. 
	In the following, for every matching $M$ of size $m$, we fix a $B' = B'(M)$ that is $(M,C_0,s)$-reduced. (The inequalities hold for every such fixing.)
	We have 
	\begin{align}
		\lefteqn{\sum_{\substack{\vecu\in\Z_q^n}}\frac{q^n}{|B|}\left|\widehat{\mathbf{1}_B}(\vecu)\right|\Exp_M\left[\sum_{\substack{\vecu'\in\Z_q^n\\\|\vecu+\vecu'\|_0=h}}\frac{q^n}{|B'|}\left|\widehat{\mathbf{1}_{B'}}(\vecu')\right|\right]} \nonumber\\
		& =\ \sum_{u=0}^n \sum_{\substack{\vecu\in\Z_q^n\\ |\supp(\vecu)| = u}}\frac{q^n}{|B|}\left|\widehat{\mathbf{1}_B}(\vecu)\right|\Exp_M\left[\sum_{\substack{\vecu'\in\Z_q^n\\\|\vecu+\vecu'\|_0=h}}\frac{q^n}{|B'|}\left|\widehat{\mathbf{1}_{B'}}(\vecu')\right|\right] \nonumber\\
		&\leq\ \sum_{u=0}^n \sum_{\substack{\vecu\in\Z_q^n\\ |\supp(\vecu)| = u}}\frac{q^n}{|B|}\left|\widehat{\mathbf{1}_B}(\vecu)\right| \cdot \sum_{o,\eta,\kappa} p_{q,\alpha}(n,u,o,\eta,\kappa) \cdot h \cdot 2^{k\kappa} \cdot U_{C_0,s}(h+o+\eta-(u+\kappa)) \label{eq:qsummation}\\
		& \leq\ \sum_{u=0}^n \sum_{o,\eta,\kappa} U_{C,s}(u) \cdot p_{q,\alpha}(n,u,o,\eta,\kappa) \cdot h \cdot 2^{k\kappa} \cdot U_{C_0,s}(h+o+\eta-(u+\kappa)) \label{eq:bbd}\\
		& \leq\ \sum_{u=0}^n \sum_{o,\eta,\kappa} 4^{-u-2} W_{C_3,s}(h) \label{eq:qubound}\\
		& \leq\ \sum_{u=0}^n (u+1)^3 \cdot 4^{-u-2} \cdot W_{C_3,s} \nonumber\\
		& \leq\ W_{C_3,s}(h), \nonumber
	\end{align}
	where \eqref{eq:qsummation} follows from \autoref{lem:mcssum} and the fact that $B'$ is $(M,C_0,s)$-reduced, \eqref{eq:bbd} follows from the fact that $B$ is $(C,s)$-bounded, and \eqref{eq:qubound} follows from Lemma~\ref{lem:q-simplify} for $C_3$ as defined above. This proves the lemma.
\end{proof}

\section*{Acknowledgments}

Thanks to Raghuvansh Saxena for pointing out errors in previous versions of this paper including a significant error in a previous proof of \cref{lem:q-simplify}. Thanks to Noah Singer for many valuable comments on the paper including pointing out the use of inconsistent and ambiguous notation and some significant errors (that are hopefully fixed in this version). Thanks to an anonymous conference referee for pointing out some errors in the proof of \cref{lem:hyper}. Thanks to the anonymous SICOMP referees for the careful reading of the paper and for the helpful and detailed comments.

\bibliography{mybib}
\bibliographystyle{alpha}

\end{document}